\numberwithin{equation}{section}
\tikzstyle{roundedbox} = [
\tikzstyle{arrow} = [thick, ->, >=Stealth]
\pgfplotsset{compat=newest}
\numberwithin{equation}{section}
\definecolor{darkgreen}{cmyk}{0.6,0,0.8,0}
\DeclareMathOperator{\E}{\mathbb{E}}
\DeclareMathOperator{\R}{\mathbb{R}}
\DeclareMathOperator{\N}{\mathbb{N}}
\DeclareMathOperator{\Z}{\mathbb{Z}}
\DeclareMathOperator{\bS}{\mathbb{S}}
\DeclareMathOperator{\Pb}{\mathbb{P}}
\DeclareMathOperator*{\diag}{diag}
\DeclareMathOperator*{\tr}{tr}
\DeclareMathOperator*{\Tr}{Tr}
\DeclareMathOperator{\poly}{poly}
\DeclareMathOperator{\nnz}{nnz}
\DeclareMathOperator{\sym}{sym}
\DeclareMathOperator{\cov}{Cov}
\DeclareMathOperator{\spec}{spec}
\def \etc {,\ldots,}
\newcommand{\cZ}{\mathcal{Z}}
\DeclarePairedDelimiter{\norm}{\lVert}{\rVert}
\newcommand{\norml}[1]{\left\lVert#1\right\rVert}
\DeclarePairedDelimiter{\abs}{\lvert}{\rvert}
\DeclarePairedDelimiter{\ip}{\langle}{\rangle}
\DeclarePairedDelimiter{\paren}{(}{)}
\DeclarePairedDelimiter{\sqbr}{[}{]}
\newtheorem{theorem}{Theorem}[section]
\newtheorem{proposition}[theorem]{Proposition}
\newtheorem{corollary}[theorem]{Corollary}
\newtheorem{lemma}[theorem]{Lemma}
\newtheorem{conjecture}[theorem]{Conjecture}
\newtheorem{remark}[theorem]{Remark}
\newtheorem{definition}[theorem]{Definition}
\begin{document}

\title[Optimal Subspace Embeddings]{
Optimal Subspace Embeddings: Resolving Nelson-Nguyen Conjecture Up to Sub-Polylogarithmic Factors
}
\date{}
\author{Shabarish Chenakkod}
\author{Micha{\l} Derezi\'nski}
\author{Xiaoyu Dong}
\thanks{Partially supported by DMS 2054408 and CCF 2338655. The authors are very grateful for the generous help and support of Mark Rudelson throughout the duration of this work.}
\address{University of Michigan, Ann Arbor, MI, USA}
\email{shabari@umich.edu, derezin@umich.edu}
\address{National University of Singapore, Singapore}
\email{xdong@nus.edu.sg}

\begin{abstract}
We give a proof of the conjecture of Nelson and Nguyen [FOCS 2013] on the optimal dimension and sparsity of oblivious subspace embeddings, up to sub-polylogarithmic factors: For any $n\geq d$ and $\epsilon\geq d^{-O(1)}$, there is a random $\tilde O(d/\epsilon^2)\times n$ matrix $\Pi$ with $\tilde O(\log(d)/\epsilon)$ non-zeros per column such that for any $A\in\mathbb{R}^{n\times d}$, with high probability, $(1-\epsilon)\|Ax\|\leq\|\Pi Ax\|\leq(1+\epsilon)\|Ax\|$ for all $x\in\mathbb{R}^d$, where $\tilde O(\cdot)$ hides only sub-polylogarithmic factors in $d$. Our result in particular implies a new fastest sub-current matrix multiplication time reduction of size $\tilde O(d/\epsilon^2)$ for a broad class of $n\times d$ linear regression tasks.

A key novelty in our analysis is a matrix concentration technique we call \emph{iterative decoupling}, which we use to fine-tune the higher-order trace moment bounds attainable via existing random matrix universality tools [Brailovskaya and van Handel, GAFA 2024]. 

\end{abstract}

\maketitle

\thispagestyle{empty}
\newpage
\setcounter{page}{1}

\section{Introduction.}

Dimensionality reduction techniques are some of the most powerful tools for designing efficient algorithms in computational linear algebra, numerical optimization, and related areas. A fundamental primitive in this context is matrix approximation, where given a large matrix $A$, we seek to efficiently construct a smaller matrix $\tilde A$ that approximately preserves the linear algebraic properties of $A$ such as its rank, norm, singular values, etc. Most of these properties can be naturally encapsulated by a single one, the \emph{subspace embedding} property, which requires that $\|\tilde Ax\|\approx_\varepsilon\|Ax\|$ for all vectors $x$ of appropriate dimension and a given $\varepsilon\in(0,1)$. Here, we use $a\approx_\varepsilon b$ to denote $(1-\varepsilon)b\leq a\leq (1+\varepsilon)b$. Subspace embeddings have become a workhorse of computational linear algebra, leading to fast algorithms for linear regression, low-rank approximation, clustering, linear programming, and more (see \cite{woodruff2014sketching,martinsson2020randomized,randlapack_book,derezinski2024recent} for comprehensive overviews).

A key advantage of subspace embeddings is that they can be both simple and efficient: an approximation $\tilde A$ obtained by multiplying $A$ with a sparse random matrix will, under appropriate conditions on sparsity and dimension, with high probability attain the subspace embedding property in nearly linear time. Such sparse random matrices are often called \emph{oblivious}, since their distribution need not depend on the input $A$. The first such distribution was proposed by Clarkson and Woodruff \cite{clarkson2013low}. Concretely, given $n\geq d\geq 1$, a matrix $\Pi\in \R^{m\times n}$ with exactly one uniformly placed non-zero random sign entry in each column (CountSketch) is an oblivious subspace embedding (OSE), i.e., for every $A\in\R^{n\times d}$, $\tilde A=\Pi A$ satisfies the subspace embedding property with good probability, as long as $m\geq O(d^2/\varepsilon^2)$ (see \cite{nelson2013osnap} for a simple proof). The sparsity pattern of the CountSketch implies that $\Pi A$ can be computed in $O(\nnz(A))$ time, i.e., proportional to the number of non-zero entries in $A$.
While this running time is the best one could hope for, the embedding dimension $m$ incurs an undesirable quadratic dependence on $d$. Shortly after this first result, Nelson and Nguyen \cite{nelson2013osnap} conjectured that a linear dependence on $d$ is possible without sacrificing too much in sparsity. 
\begin{conjecture}[Nelson and Nguyen, FOCS 2013 \cite{nelson2013osnap}]\label{c:nn}
    For any $n\geq d$ and $\varepsilon\in(0,1)$, there is a random  $\Pi\in\R^{m\times n}$ with $O(\log(d)/\varepsilon)$ non-zero entries per column and dimension $m=O(d/\varepsilon^2)$ such that for any $A\in\R^{n\times d}$, with probability $1-d^{-O(1)}$, $\|\Pi Ax\|\approx_\varepsilon\|Ax\|$ for all $x\in\R^d$.
\end{conjecture}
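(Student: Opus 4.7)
The plan is to follow the OSNAP-style template but push the concentration estimates hard enough to kill all sub-polylogarithmic slack. Concretely, define $\Pi\in\R^{m\times n}$ with $m=Cd/\epsilon^2$ and $s=C\log(d)/\epsilon$ non-zeros per column, where each column is independently a $\pm s^{-1/2}$-scaled sparse Rademacher vector, either via $s$ independent hash-and-sign collisions (Clarkson-Woodruff style block) or via the sparse Johnson-Lindenstrauss construction of \cite{nelson2013osnap} with $s$ disjoint blocks of size $m/s$. First I would reduce from the all-$x$ statement to the single matrix statement that $\|\Pi^\top\Pi - I_d\|_{\mathrm{op}} \leq \epsilon$, where without loss of generality $A=U$ has orthonormal columns; this is standard and loses nothing in the constants. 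The entire problem then becomes a sharp operator norm bound on $M:=\Pi U$, namely $\|M^\top M - I\|_{\mathrm{op}}\leq \epsilon$ with failure probability $d^{-c}$, uniformly over $n\geq d$ and all $U$ with $U^\top U=I$.

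Next I would set up the moment method at exactly the critical scale. Expand $\E\,\Tr(M^\top M - I)^{2p}$ for $p\asymp \log d$ as a sum over closed walks/diagrams. Because $\Pi$ has $s$ non-zeros per column, each column contributes combinatorial factors of order $\|U_{\text{row}}\|^2\cdot s^{-1}$ per edge, so the leading term is the free/Gaussian model whose $2p$-th trace moment is controlled by $(\sqrt{d/m} + p/\sqrt{m})^{2p}$, producing the clean $\epsilon$-scale $\sqrt{d/m}\asymp \epsilon$ with no logarithmic loss. The non-Gaussian corrections are precisely where polylog factors tend to enter; I would control them by appealing to the Brailovskaya-van Handel universality framework, which tells us that the trace moments of $M^\top M - I$ agree with those of the matching Gaussian model up to an error controlled by a $(2,4)$-cumulant sum plus graphical cover corrections.

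The key step, and the main novelty that needs to be pushed further than in the body of the paper, is iterative decoupling to sharpen these universality corrections down to genuine $O(\cdot)$ order. I would iterate decoupling inside each diagram: identify the spine walk, peel off independent pendant edges, resample them against an independent copy, and absorb the resulting cross terms into a smaller graph whose covers have strictly lower combinatorial weight. Done recursively $O(\log\log d)$ times, each layer kills one factor of $\log d$ in the universality remainder while costing only constant factors in the sparsity budget; the recursion bottoms out when the remaining cumulants are at the free probability scale, at which point the trace moment matches the Gaussian benchmark to within multiplicative $1+o(1)$. Combined with Markov's inequality on the $2p$-th moment with $p=\Theta(\log d)$, this yields the desired $1-d^{-O(1)}$ tail at the sharp embedding dimension $m=O(d/\epsilon^2)$ and sparsity $s=O(\log(d)/\epsilon)$.

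The hardest part will be the last step: making the iterative decoupling terminate with truly constant (not sub-polylogarithmic) overhead. Each decoupling pass introduces a combinatorial bookkeeping term from the symmetrization over the spine, and naively this multiplies by a factor growing like $\log\log d$ or $(\log d)^{o(1)}$ per layer, which is exactly the source of the $\tilde O(\cdot)$ in the main theorem. Closing this gap likely requires either a non-recursive, one-shot decoupling identity tuned to the sparse Bernoulli cumulant structure, or a refined cover-counting argument that absorbs the pendant edges into the spine without enlarging the diagram's genus. I expect that exact $O$-bounds (resolving the conjecture in full rather than up to sub-polylogarithmic factors) hinge on this combinatorial sharpening, which is why this paper stops at $\tilde O(\cdot)$; my proposed plan above is the natural route that would close the remaining gap if such a one-shot decoupling identity can be established.
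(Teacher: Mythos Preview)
The statement you are trying to prove is Conjecture~\ref{c:nn}, which the paper does \emph{not} prove; Theorem~\ref{t:main} only resolves it up to sub-polylogarithmic factors. So there is no ``paper's own proof'' to compare against, and your proposal must be judged on whether it actually closes the gap the paper leaves open. It does not.

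Your plan is essentially the paper's own argument (reduction to $\|(\Pi U)^\top \Pi U - I\|$, moment method at order $q\asymp\log d$, Brailovskaya--van~Handel universality for the leading term, iterative decoupling to whittle down the universality error), but with a quantitatively incorrect description of how iterative decoupling behaves. You claim that each layer ``kills one factor of $\log d$'' at the cost of ``only constant factors in the sparsity budget,'' and that running $O(\log\log d)$ layers therefore removes all polylog slack. That is not what happens. After $k$ rounds the paper obtains (Lemma~\ref{lem:univseq} and Lemma~\ref{lem:averuniv})
\[
\|Y\|_{2q\cdot 2^k}\le C_k\Big(K + K^{\frac{2k-1}{2k+1}}\big(q^2 R(I_d)\big)^{\frac{2}{2k+1}}\Big),\qquad C_k=\exp(\exp(ck)),
\]
so the bad exponent decays only like $2/(2k+1)$, not geometrically, and the constant blows up \emph{doubly exponentially} in $k$. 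The doubly exponential growth is not an artifact of sloppiness: it comes from the step where you solve the recursion $\|Y\|\le c^k(\cdots)\|Y\|^{1-(2k+1)/2^{k+1}}$ for $\|Y\|$, which raises $c^k$ to the power $2^{k+1}/(2k+1)$. This forces $k\lesssim \log\log\log\log(d/\delta)$, not $k\asymp\log\log d$, and that is exactly why the paper ends up with $\theta=\log(d)^{5/(k-1/2)}=\log(d)^{o(1)}$ slack in both dimension and sparsity rather than $O(1)$.

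Your final paragraph correctly identifies that removing this slack is the crux, but the proposed fixes---a ``one-shot decoupling identity tuned to the sparse Bernoulli cumulant structure'' or a ``refined cover-counting argument''---are not arguments, they are names for the missing idea. Moreover, even ignoring the constant blowup, the paper explains in Section~\ref{subsubsec:whtnotoptimaldimension} a structural obstruction: at the exactly optimal dimension $m=\Theta(d/\varepsilon^2)$ one has $pm\varepsilon\asymp\sqrt{pmpd}$, so the inequality $C_k(K+\cdots)\le pm\varepsilon$ forces $C_k=O(1)$ and the second term to be dominated by $K$, which collapses back to the $k=0$ bound. The paper's iterative decoupling only helps when you allow $\theta>O(1)$ slack in $m$; it cannot, as stated, yield anything at $m=O(d/\varepsilon^2)$ exactly. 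Your plan does not address this, so it is not a proof of the conjecture.
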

Nelson and Nguyen matched the conjecture up to polylogarithmic factors, giving an embedding $\Pi$ with $O(\log^3(d)/\varepsilon)$ non-zeros per column and dimension $m=O(d\log^8(d)/\varepsilon^2)$. Further works improved on the polylogarithmic factors in sparsity and dimension \cite{bourgain2015toward,cohen2016nearly,chenakkod2024optimal,chenakkod2025optimal} (see Table \ref{tab:comparison}), and gave lower bounds supporting the conjecture \cite{nelson2014lower,li2022lower,li2022sparsity}. Notably, Cohen \cite{cohen2016nearly} matched the conjecture in terms of sparsity while attaining sub-optimal dimension, $m=O(d\log(d)/\varepsilon^2)$, by adapting a variant of the matrix Chernoff bound \cite{troppmatrixconc}. Later, Chenakkod, Derezi\'nski, Dong, and Rudelson \cite{chenakkod2024optimal} matched the conjecture in terms of the dimension while attaining sub-optimal sparsity, $O(\log^4(d)/\varepsilon^6)$ non-zeros per column, by leveraging recent progress in matrix concentration via Gaussian universality of Brailovskaya and van Handel \cite{brailovskaya2022universality}. This latter approach was further refined by Chenakkod, Derezi\'nski, and Dong \cite{chenakkod2025optimal} to attain $O(\log^2(d)/\varepsilon+\log^3(d))$ sparsity. However, the original conjecture has remained a tantalizing open question that tests the fundamental limits of current matrix concentration tools used widely across mathematics and computer science.
\begin{table}\centering
\caption{Progress towards resolving the Nelson-Nguyen conjecture (Conjecture \ref{c:nn}, \cite{nelson2013osnap}). For simplicity, we hide the big-O notation and sub-polylogarithmic factors. We note that the initial CountSketch guarantee given by \cite{clarkson2013low} was worse. We state the optimized guarantee given by \cite{nelson2013osnap,meng2013low}.}\vspace{2mm}
\begin{tabular}{l||c|c}
     & Dimension $m$ & Sparsity (nnz per column) \\
    \hline\hline
    \textbf{Conjecture \ref{c:nn}} \cite{nelson2013osnap} & $d/\varepsilon^2$ & $\log(d)/\varepsilon$
    \\
    \hline
    Clarkson \& Woodruff \cite{clarkson2013low} (and \cite{nelson2013osnap,meng2013low}) &$d^2/\varepsilon^2$ & $1$ \\
Nelson \& Nguyen \cite{nelson2013osnap} & $d\log^8(d)/\varepsilon^2$ & $\log^3(d)/\varepsilon$ \\
Bourgain et al. \cite{bourgain2015toward}&  $d\log^2(d)/\varepsilon^2$ & $\log^4(d)/\varepsilon^2$\\
Cohen \cite{cohen2016nearly}& $d\log(d)/\varepsilon^2$ &
$\log(d)/\varepsilon$\\
Chenakkod et al. \cite{chenakkod2024optimal}& $d/\varepsilon^2$
& $\log^4(d)/\varepsilon^6$
                \\
Chenakkod et al. \cite{chenakkod2025optimal}& $d/\varepsilon^2$ & $\log^2(d)/\varepsilon +\log^3(d)$\\
\hline
\textbf{Our result (Thm.\ \ref{t:main})} & $d/\varepsilon^2$ & $\log(d)/\varepsilon$
\end{tabular}
\label{tab:comparison}
\end{table}

\subsection{Our Results.}

The main result of this paper is the following theorem, which resolves the conjecture of Nelson and Nguyen up to sub-polylogarithmic factors in $d$ when $\varepsilon$ is inverse polynomially bounded in $d$.

\begin{theorem}[Nelson-Nguyen up to sub-polylogarithmic factors]\label{t:main}
    For any $n\geq d$ and $\varepsilon \geq d^{-O(1)}$, there is a random $\Pi\in\R^{m\times n}$ with $\tilde O(\log(d)/\varepsilon)$\footnote{Here and throughout, the notation $\tilde O(\cdot)$ hides only sub-polylogarithmic factors in $d$, i.e., $\log^{o(1)}(d)$.} non-zeros per column and dimension $m=\tilde O(d/\varepsilon^2)$ such that for any $A\in\R^{n\times d}$, with probability $1-d^{-O(1)}$, $\|\Pi Ax\|\approx_\varepsilon\|Ax\|$ for all $x\in\R^d$. 
\end{theorem}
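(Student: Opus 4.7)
My plan is to set up the standard matrix concentration problem for sparse oblivious subspace embeddings, reduce it to bounding a high trace moment, apply the Brailovskaya--van Handel universality theorem \cite{brailovskaya2022universality} as in \cite{chenakkod2024optimal,chenakkod2025optimal}, and close the residual polylogarithmic gap via an iterative decoupling refinement of the universality error. Concretely, I would take a standard OSNAP-type $\Pi$: partition $[m]$ into $s=\tilde O(\log d/\epsilon)$ row blocks of equal size $m/s$, and independently for each column $i\in[n]$ and each block, place one entry of magnitude $1/\sqrt{s}$ at a uniformly random row with a uniformly random sign. Each column $x_i=\Pi e_i$ then has $\|x_i\|=1$ deterministically, so after reducing to $A\in\R^{n\times d}$ with orthonormal columns, it suffices to prove that
\[
  M \;:=\; A^\top\Pi^\top\Pi A - I_d \;=\; \sum_{i\neq j} \ip{x_i,x_j}\, a_i a_j^\top
\]
satisfies $\|M\|\le\epsilon$ with probability $1-d^{-O(1)}$, where $a_i^\top$ is the $i$-th row of $A$. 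I would bound $\|M\|$ through $\|M\|^{2p}\le\tr M^{2p}$ at $p=\Theta(\log d)$, closing the tail by Markov.

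To estimate $\E\,\tr M^{2p}$ I would invoke \cite{brailovskaya2022universality}, which replaces the trace moment of a non-commutative polynomial in independent random matrices by the matching moment in a Gaussian model, plus an explicit error controlled by higher cumulants of the atoms $\ip{x_i,x_j}\,a_i a_j^\top$. The Gaussian surrogate has entries of variance $\Theta(1/m)$ and operator norm of the optimal order $\sqrt{d/m}\asymp\epsilon$ when $m=\tilde O(d/\epsilon^2)$. The obstruction, already encountered in \cite{chenakkod2024optimal,chenakkod2025optimal}, is that estimating the higher cumulants directly from the sparsity pattern of $\Pi$ costs polylogarithmic factors which cannot be fully absorbed in the regime $s=\tilde O(\log d/\epsilon)$, leaving exactly the polylogarithmic gap to the Nelson--Nguyen target that this paper aims to close.

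The main technical effort, and the hardest step, is to shave these residual polylogarithmic factors via \emph{iterative decoupling}. The idea I would pursue is to exploit the two-stage structure of $\Pi$ (placement versus sign randomness) and, inside the expansion of $\E\,\tr M^{2p}$ into a sum over index sequences weighted by products of $\ip{x_{i_k},x_{j_k}}$'s and $\ip{a_{j_k},a_{i_{k+1}}}$'s, to replace whole blocks of position variables by independent ghost copies one generation of the induced index multigraph at a time. Each decoupling round should eliminate one additional layer of dependence while contributing a cleanly computable multiplicative factor, so that the surviving conditional expectation becomes combinatorially simple enough to match the Gaussian model up to an $\tilde O(\epsilon)$ error. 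The crucial combinatorial bookkeeping---showing that the iteration terminates in $O(1)$ rounds at the claimed sparsity, and that every round genuinely contributes a polylogarithmic gain rather than merely trading one for another---is where I expect the bulk of the new arguments to live.

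Once a sufficiently sharp bound on $\E\,\tr M^{2p}$ is in place, the rest is routine: taking $p=\Theta(\log d)$ and applying Markov to $\|M\|^{2p}$ yields the desired $d^{-O(1)}$ tail, the hypothesis $\epsilon\ge d^{-O(1)}$ is exactly what is needed to afford $\log(1/\epsilon)=O(\log d)$ throughout the calculation, and the $\log^{o(1)}(d)$ slack hidden in $\tilde O(\cdot)$ absorbs the accumulated error from the iterated decouplings.
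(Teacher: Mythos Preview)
Your high-level framework is right---OSNAP, reduction to trace moments at $p=\Theta(\log d)$, initial decoupling to $\|(S_1U)^\top S_2U\|_{2q}$, Brailovskaya--van Handel universality, and then an iterative refinement of the universality error---and this is exactly the paper's skeleton. But two specific claims in your plan are where the real content lives, and both are off.

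First, the iterative decoupling is not the combinatorial multigraph procedure you describe. Working ``inside the expansion of $\E\,\tr M^{2p}$'' and decoupling index blocks generation by generation is essentially the na\"ive full decoupling the paper explicitly rejects in Section~\ref{subsubsec:itergen}: the decoupling constants blow up with $q$. The paper's mechanism is operator-theoretic and lives at the level of the $L_q(S_q)$ norms. One maintains a bound of the form $\|VY\|_{2q}\le\Phi_k(V)$ for \emph{all} deterministic $V$; then writes $\|VY\|_{4q}=\|Y^\top V^\top VY\|_{2q}^{1/2}$, decouples only the quadratic to get diagonal terms plus $\|Y_1^\top V^\top VY_2\|_{2q}$, and conditionally applies $\Phi_k$ with $W=Y_1^\top V^\top V$. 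The key lemma (Lemma~\ref{lem:Rsimplify}) is that $\|R(Y_1^\top V^\top V)\|_{L_{2r}}\lesssim K\cdot\|V\|\cdot R(V)$: each round replaces a power of the bad universality residual $q^2R(I)$ by a power of the good quantity $K$. The constants grow only with the number of rounds $k$, not with $q$.

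Second, your assertion that ``the iteration terminates in $O(1)$ rounds at the claimed sparsity'' is exactly the step that fails. After $k$ rounds the bound is $\|Y\|_{2q\cdot 2^k}\lesssim C_k\big(K+K^{(2k-1)/(2k+1)}(q^2R(I))^{2/(2k+1)}\big)$, and the exponent of the residual never reaches zero for finite $k$. With $k=O(1)$ and optimal dimension $m=\Theta(d/\epsilon^2)$ you recover only Corollary~\ref{cor:osebasic}, i.e., sparsity $\log^{5/2}(d)/\epsilon+\log^4(d)$. The missing idea is a dimension--sparsity tradeoff: one must allow sub-polylogarithmic slack $\theta=\log(d)^{5/(k-1/2)}$ in $m$, and simultaneously let $k$ grow like $\log\log\log\log(d)$ so that $\theta^{k/2-1/4}$ kills the extra $\log^{5/2}$ while $C_k=\exp\exp(O(k))$ remains $\log^{o(1)}(d)$. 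Section~\ref{subsubsec:whtnotoptimaldimension} explains why the tradeoff is unavoidable: at $\theta=O(1)$ the target $pm\epsilon\asymp\sqrt{pmpd}\asymp K$, so demanding $K^{(2k-1)/(2k+1)}(q^2R(I))^{2/(2k+1)}\le pm\epsilon$ collapses to $q^2R(I)\le K$, which is exactly the $k=0$ requirement and gains nothing.
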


A more detailed version of our result is provided in Theorem \ref{thm:osedecoup}, which gives a parameterized family of subspace embedding guarantees that offer different trade-offs between the dimension and the sparsity of $\Pi$, including one that exactly attains the optimal dimension but with a slightly worse sparsity bound. Carefully balancing those trade-offs, we recover Theorem \ref{t:main}.

A key novelty in our analysis is a matrix concentration technique we call \emph{iterative decoupling}, which provides fine-grained bounds on higher-order trace moments of random matrices. In this approach, we recursively reduce the order of the matrix moment we wish to bound by decoupling the matrix power into a product of independent copies. While decoupling is a ubiquitous technique in probability theory, it tends to result in an exponential blow up when applied to a product of sufficiently many components. However, we show that when deployed recursively through a careful schedule, it can lead to improved bounds for the higher-order trace moments that naturally arise in matrix concentration analysis. In order to deploy this technique, we further develop several higher-order moment bounds for norm quantities associated with subspace embeddings, which should be of independent interest.

\paragraph{\textbf{Implications for Linear Regression.}}
As discussed, subspace embeddings have numerous algorithmic implications in linear algebra and optimization (see, e.g., \cite{woodruff2014sketching,martinsson2020randomized,randlapack_book} for detailed overviews). The most immediate one arises in the context of linear regression, where one often deals with minimizing a generalized least squares objective of the form $\|Ax-b\|^2+g(x)$ over some domain $\mathcal C\subseteq\R^d$, where $A\in\R^{n\times d}$, $ b\in\R^d$, $g(x)$ is, e.g., the scaled $l_1$-norm (Lasso regression, \cite{tibshirani1996regression}), and $\mathcal C$ is, e.g., the $l_p$-norm ball \cite{raskutti2011minimax}, non-negativity constraints \cite{boutsidis2009random}, or $\R^d$ (see \cite{pilanci2015randomized} for further discussion).

When $n\gg d$, we can reduce this optimization problem to solving a much smaller instance by computing  $\tilde A=SA$ and  $\tilde b=Sb$ for $S\in\R^{m\times n}$. If $S$ is a subspace embedding for the extended data matrix $[A\mid b]\in\R^{n\times (d+1)}$, then this smaller instance retains an $\varepsilon$-approximation of the objective, since $\|\tilde Ax-\tilde b\|\approx_\varepsilon\|Ax-b\|$ for all $x$. It follows that an $\varepsilon$-approximate solution to the sketched problem yields an $O(\varepsilon)$-approximate solution to the original problem (e.g., see \cite{chenakkod2024optimal}). 

Applying Theorem \ref{t:main} together with the above argument, we obtain the fastest sub-current matrix multiplication time reduction of size $\tilde O(d/\varepsilon^2)$ for this class of linear regression objectives.

\begin{corollary}[Fast reduction for linear regression]
    Given $A\in\R^{n\times d}$, $b\in\R^n$, $\varepsilon\in(0,1)$, constraint set $\mathcal C\subseteq\R^d$, and regularizer $g:\mathcal{C}\rightarrow \R_{\geq 0}$, let $f(x) = \|Ax-b\|^2+g(x)$ for $x\in\mathcal C$. We can reduce the task of finding $\hat x\in\mathcal C$ such that $f(\hat x)\le (1+\varepsilon)\min_{x\in\mathcal C}f(x)$ to an instance $\tilde A\in\R^{m\times d},\tilde b\in\R^m, \tilde\varepsilon=\Theta(\varepsilon)$ of the same task where $m = \tilde O(d/\varepsilon^2)$ in time $\tilde O(\nnz(A)\log(d)/\varepsilon)$ with probability $1-d^{-O(1)}$.
\end{corollary}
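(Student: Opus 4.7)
The plan is to apply Theorem \ref{t:main} once, to the augmented matrix $[A \mid b] \in \R^{n \times (d+1)}$ with approximation parameter $\epsilon' = c\epsilon$ for a small absolute constant $c$, and then invoke the usual sketch-and-solve reduction. The main theorem produces a sparse $\Pi \in \R^{m \times n}$ with $m = \tilde O(d/\epsilon^2)$ and $\tilde O(\log(d)/\epsilon)$ non-zeros per column for which, with probability $1 - d^{-O(1)}$, the map $y \mapsto \|\Pi [A \mid b] y\|$ is an $\epsilon'$-approximation of $y \mapsto \|[A \mid b] y\|$ on all of $\R^{d+1}$. Plugging in $y = (x^\top, -1)^\top$ then yields $\|\tilde A x - \tilde b\| \approx_{\epsilon'} \|Ax - b\|$ for every $x \in \R^d$, where $\tilde A = \Pi A$ and $\tilde b = \Pi b$; squaring propagates this to $\|\tilde A x - \tilde b\|^2 \approx_{O(\epsilon')} \|Ax - b\|^2$.

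Next I would add $g(x) \geq 0$ to both sides: because $g$ appears identically in $f$ and in its sketched counterpart $\tilde f(x) = \|\tilde A x - \tilde b\|^2 + g(x)$, and because the multiplicative error only sits on the nonnegative quadratic part, a two-sided comparison gives $\tilde f(x) \approx_{O(\epsilon')} f(x)$ uniformly in $x \in \mathcal C$. A standard calculation then shows that any $(1+\tilde\epsilon)$-approximate minimizer $\hat x$ of $\tilde f$ over $\mathcal C$ satisfies $f(\hat x) \leq \frac{(1+\tilde\epsilon)(1+O(\epsilon'))}{1-O(\epsilon')}\, \min_{x \in \mathcal C} f(x)$, and choosing $\tilde \epsilon, \epsilon' = \Theta(\epsilon)$ with small enough hidden constant makes this at most $1+\epsilon$, as required.

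Finally I would bound the running time. Computing $\tilde A = \Pi A$ costs $O(\nnz(A) \cdot s)$, where $s = \tilde O(\log(d)/\epsilon)$ is the per-column sparsity of $\Pi$ — each nonzero $A_{ij}$ incurs at most $s$ scalar multiplications, one per nonzero in the $i$th column of $\Pi$ — and $\tilde b = \Pi b$ similarly costs $O(n \cdot s)$. After deleting zero rows of $A$ (so that $n \leq \nnz(A)$) the total becomes $\tilde O(\nnz(A) \log(d)/\epsilon)$, matching the claim. There is no real obstacle here: the entire content of the corollary lives in Theorem \ref{t:main}, and the only conceptual point is that the embedding must be applied to the \emph{extended} matrix $[A \mid b]$, not to $A$ alone, so that the residual vector $Ax - b$ lies in the preserved column span, with the extra column absorbed into the $\tilde O(\cdot)$ notation in $m$.
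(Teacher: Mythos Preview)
Your proposal is correct and matches the paper's approach exactly: the paper does not give a standalone proof of this corollary but states that it follows from applying Theorem~\ref{t:main} to the extended matrix $[A\mid b]$ together with the standard sketch-and-solve argument (which it cites to \cite{chenakkod2024optimal}). You have simply written out that argument explicitly, including the running-time accounting and the observation that the regularizer $g$ passes through the multiplicative approximation unchanged.
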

In addition to the above listed prior works on oblivious subspace embeddings, we can compare this guarantee to other approaches, which use information about the matrix $A$ to optimize the running time, via techniques such as leverage score sampling \cite{chepurko2022near,cherapanamjeri2023optimal,chenakkod2024optimal,chenakkod2025optimal}. Unfortunately, all of these reductions incur additional $\poly(d/\varepsilon)$ costs, including the $d^\omega$ cost of matrix multiplication (here, $\omega\approx 2.372$ denotes the current matrix multiplication exponent \cite{alman2025more}). The best known $\tilde O(d/\varepsilon^2)$ reduction of this type attains $O(\nnz(A)\log(d) + d^\omega + d^2\log^3(d)/\varepsilon)$ running time \cite{chenakkod2025optimal}. 

\subsection{Further Related Work.}

A number of subspace embedding techniques have been considered in addition to sparse random matrices. In particular, \cite{sarlos2006improved} was the first to propose the concept of a subspace embedding, and showed that it can be attained using the fast Johnson-Lindenstrauss transform introduced by \cite{ailon2009fast}, which is closely related to the subsampled randomized Hadamard transform \cite{tropp2011improved}. For example, in the latter, we first apply an orthogonal transformation that randomly mixes the rows of the input $A\in\R^{n\times d}$, and then we select a uniformly random sample of $m$ rows from the transformed matrix. A yet different approach by \cite{drineas2006sampling} uses i.i.d.\ importance sampling to select a size $m$ row sample from the original input, without the transformation step. Here, the importance weights are chosen based on the approximate leverage scores of the rows \cite{drineas2012fast,li2013iterative}. Interestingly, unlike sparse random matrices, the i.i.d\ row sampling approaches cannot attain linear dependence of $m$ on $d$ due to the coupon collector problem (they require at least $m\geq d\log d$). However, \cite{less-embeddings} showed that leverage score sampling can be incorporated into sparse random matrices, which has led to fast algorithms for subspace embeddings in certain regimes \cite{chenakkod2025optimal}, as mentioned earlier.

Finally, there have been works analyzing sparse random matrices as embeddings with somewhat different and incomparable claims \cite{cartis2021hashing,tropp2025comparison,less-embeddings,gaussianization}. Most notably, \cite{cartis2021hashing} obtained a linear in $d$ sparse subspace embedding under additional assumptions on the aspect ratio and leverage score distribution of the input, whereas \cite{tropp2025comparison} showed that an $O(d/\varepsilon^2)$-dimensional embedding with sparsity $O(\log(d)/\varepsilon^2)$ satisfies the lower $\varepsilon$-approximation, i.e., $(1-\varepsilon)\|Ax\|\leq\|\Pi Ax\|$.

\section{Main Results and Techniques.}

In this section, we give detailed versions of our main results (Section~\ref{s:main-result}) and provide an overview of our proof techniques (Section~\ref{sec:overview}).

\subsection{Main Result.}
\label{s:main-result}

The matrices $\Pi$ that we work with are part of the well studied family of subspace embeddings termed OSNAP (\cite{nelson2013osnap, chenakkod2025optimal}). Given dimensions $m$ and $n$, and a sparsity parameter $p\in[0,1]$, these embeddings are defined as $m \times n$ random matrices formed by stacking $pm$ many $(1/p) \times n$ CountSketch matrices vertically followed by scaling with $1/\sqrt{pm}$, leading to a construction with $pm$ non-zero entries per column (where we assume that $pm$ divides $m$). In other words, each column of $\Pi$ is a stack of $pm$ many subcolumns each of length $1/p$ with exactly one $\pm 1/\sqrt{pm}$ entry in a random position in each subcolumn. We give a formal definition in Section \ref{subsec:osnapprops}. 

Following usual practice (\cite{nelson2013osnap, cohen2016nearly, chenakkod2024optimal, chenakkod2025optimal}), we frame our subspace embedding result in terms of bounding the extreme singular values of matrices of the form $\Pi U$ for $n \times d$ matrices $U$ with orthonormal columns. In particular, for given $\varepsilon, \delta > 0$, we show, 
\begin{align}
\Pb(1-\varepsilon\leq s_{\min}(\Pi U)\leq s_{\max}(\Pi U)\leq 1+\varepsilon)\geq 1-\delta,\label{eq:ose-equiv}
 \end{align}
 for any $n \times d$ matrices $U$ with orthonormal columns when dimension $m$ and sparsity $pm$ are above specified thresholds. Here, $s_{\min}$ and $s_{\max}$ denote the smallest and largest singular values.

 We now state our parameterized family of theorems that establish \eqref{eq:ose-equiv} for different values of $m$ and $pm$, explaining the trade-off between dimension and sparsity that arises from these guarantees.
 
 \begin{theorem}[High Probability Bounds for the Embedding Error of OSNAP]\label{thm:osedecoup}
Let $\Pi$ be an $m \times n$ matrix distributed according to the OSNAP distribution with parameter $p$ as in Definition \ref{def:osnap} and let $U$ be an arbitrary $n \times d$ deterministic matrix such that $U^TU=I$. Let $0 < \delta, \varepsilon < 1$ and $d>10$. Then, there exist constants $c_{\ref{thm:osedecoup}.1}, c_{\ref{thm:osedecoup}.2}, c_{\ref{thm:osedecoup}.3} >0$ such that when,
\begin{itemize}
    \item $k$ is an arbitrary positive integer,
    \item $\theta$ is a positive parameter such that $\theta \ge c_{\ref{thm:osedecoup}.1} \exp(\exp(c_{\ref{thm:osedecoup}.2}k))^2$,
\end{itemize}
we have,
\begin{equation}\label{osepro}
    \begin{aligned}
\Pb \left( 1 - \varepsilon  \leq s_{\min}(\Pi U)   \leq s_{\max}(\Pi U) \leq 1 + \varepsilon \right) \geq 1-\delta
\end{aligned}
\end{equation}
when $m=\theta\cdot(d+\log(d/\delta))/\varepsilon^2$ and the number of non-zeros per column of $\Pi$ satisfies:
\begin{align}
    \begin{aligned} \label{eq:sparsitylb}
    {pm } \ge {\exp(\exp(c_{\ref{thm:osedecoup}.3}k))}\left(\frac{1}{\varepsilon^{1+\frac{1}{\log(d/\delta)}}}\Big(\theta \log(d/\delta)+\frac{\log(d/\delta)^{5/2}}{\theta ^{k/2-1/4}}\Big)+\frac{ \log(d/\delta)^{4}}{\theta ^{k+1/2}}\right)
\end{aligned}
\end{align}

\end{theorem}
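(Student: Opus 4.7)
The plan is to reduce the singular value guarantee \eqref{osepro} to a high-probability operator-norm bound on the symmetric matrix $M := U^T \Pi^T \Pi U - I \in \R^{d \times d}$, since $\|M\| \leq 2\varepsilon - \varepsilon^2$ implies the two-sided singular value inequalities. Writing $\Pi$ as a vertical stack of $pm$ independent $(1/p) \times n$ CountSketch blocks $S_1,\dots,S_{pm}$ scaled by $1/\sqrt{pm}$, I would decompose $M = \sum_{j=1}^{pm} X_j$ into independent mean-zero symmetric matrices $X_j = \tfrac{1}{pm}(U^T S_j^T S_j U - I)$. To control $\|M\|$ I would then use the trace moment method
\begin{equation*}
\Pb(\|M\| \geq \varepsilon) \;\leq\; \varepsilon^{-2L}\,\E\,\Tr(M^{2L}),
\end{equation*}
choosing the moment order $L \asymp \log(d/\delta)$ so that $d^{1/(2L)} = O(1)$ converts the moment bound into the desired tail bound with failure probability $\delta$.

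\textbf{Iterative decoupling of the trace moment.} The core task is to bound $\E\,\Tr(M^{2L})$ sharply, since a direct application of the matrix Rosenthal inequality or the Brailovskaya--van Handel universality principle loses polylogarithmic factors precisely because the $2L$ copies of $M$ appearing in the trace share the same summands. Expanding the trace gives a sum over multi-indices $(j_1,\dots,j_{2L})$ whose summands $\E\,\Tr(X_{j_1}\cdots X_{j_{2L}})$ cannot be factorized across repeats. The approach advertised in the introduction is iterative decoupling: at each recursive step I would pick a schedule that splits the positions of the word $X_{j_1}\cdots X_{j_{2L}}$ into two halves and replaces one half with an independent copy of the $X_j$'s conditionally on the other half, paying only a bounded constant per move. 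Recursing to depth $k$ produces roughly $2^k$ independent blocks of correlated factors; each block is short enough that its trace moment can be controlled using the paper's higher-order norm-moment bounds for OSNAP sketches combined with the Brailovskaya--van Handel universality result, converting it into an essentially Gaussian quantity whose moments are explicit.

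\textbf{Balancing the parameters.} After this recursive factorization I expect the resulting moment bound to decompose into exactly three contributions, giving the three terms on the right side of \eqref{eq:sparsitylb}: a leading term of order $\theta\log(d/\delta)/\varepsilon^{1+1/\log(d/\delta)}$ coming from the main Gaussian contribution, a sub-leading term scaling like $\log(d/\delta)^{5/2}/(\theta^{k/2-1/4}\,\varepsilon^{1+1/\log(d/\delta)})$ from a single level of coupling residuals, and a final $\log(d/\delta)^{4}/\theta^{k+1/2}$ term from the fully decoupled remainder. Raising $k$ sharpens the last two terms but inflates a cumulative constant by a double exponential $\exp(\exp(ck))$; the assumption $\theta \geq c\exp(\exp(ck))^2$ is exactly what is needed to absorb this blowup. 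Finally, setting $L \asymp \log(d/\delta)$ and substituting $m = \theta(d + \log(d/\delta))/\varepsilon^2$ with the sparsity bound \eqref{eq:sparsitylb} into $\varepsilon^{-2L}\E\,\Tr(M^{2L}) \leq \delta$ delivers \eqref{osepro}; Theorem \ref{t:main} follows by choosing $k = \Theta(\log\log\log d)$ and $\theta = \log^{o(1)}(d)$.

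\textbf{Main obstacle.} The delicate point will be designing the recursive decoupling schedule so the combinatorial cost stays at $\exp(\exp(ck))$ rather than $\exp(cL)$: each decoupling move must be tuned so that the decoupling constants and the extra Schatten-norm losses they introduce at the boundary between the conditioned and conditionally independent halves remain bounded, and so that the residual correlated blocks at the base of the recursion fit the higher-order OSNAP norm moment inequalities that need to be developed alongside the argument.
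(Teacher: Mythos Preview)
Your high-level strategy matches the paper: reduce \eqref{osepro} to the trace-moment bound $\E[\tr(M^{2q})]^{1/2q}\le\varepsilon$ with $q=\log(d/\delta)$ via Markov, then sharpen the Brailovskaya--van Handel universality error by iterative decoupling to depth $k$, with the double-exponential constant absorbed by the slack $\theta$. But the schedule you describe---splitting the word $X_{j_1}\cdots X_{j_{2L}}$ into halves and replacing one half by an independent copy---is not the paper's mechanism, and your sketch is missing the specific ingredient that makes each iteration actually shrink the error.

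The paper first performs a \emph{single} decoupling $\|X^TX-pmI\|_{2q}\le C\|Y\|_{2q}$ with $Y=X_1^TX_2$ (Lemma~\ref{lem:decoup}), and then iterates on $\|VY\|_{2q}$ for arbitrary deterministic $V$. Each step \emph{doubles} the moment order: write $\|VY\|_{4q}^2=\|Y^TV^TVY\|_{2q}$, decouple only the degree-two outer factors to $Y_1^TV^TVY_2$ (so the diagonal remainder is a sum of independent rank-one PSD matrices, handled by matrix Rosenthal in Lemmas~\ref{lem:decoupfirstlayer}--\ref{lem:decoupsecondlayer}), condition on $Y_1$, and apply the previous bound with $W=Y_1^TV^TV$. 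The step that drives the improvement---absent from your sketch---is Lemma~\ref{lem:Rsimplify}: the universality residual obeys $\|R(Y_1^TV^TV)\|_{L_{2r}}\le K\,\|V\|\,R(V)$, which swaps a copy of $Y$ inside $R$ for the target quantity $K$ and is what produces the exponent pattern $K^{(2k-1)/(2k+1)}(q^2R(I))^{2/(2k+1)}$ of Lemma~\ref{lem:univseq}. A half-splitting schedule does not obviously yield this, and the paper's overview explicitly explains why na\"ive multi-factor decoupling blows up. Two smaller mismatches: the paper decomposes into the rank-one summands $Z_{(l,\gamma)}U$ rather than your block-level $X_j=\tfrac1{pm}(U^TS_j^TS_jU-I)$, which is what makes the residual functionals $R_1,R_2$ explicitly computable (Lemma~\ref{lem:R1R2}); and all three terms in \eqref{eq:sparsitylb} arise from the single requirement $K+K^{\alpha_k}(q^2R(I))^{1-\alpha_k}\le pm\varepsilon$---the $\theta q$ term from $K$, and the $q^{5/2}$ and $q^4$ terms from the $\sqrt{q}$ and $\sqrt{pd}$ pieces of $R(I)\approx (pm)^{1/2q}\sqrt{pd+q}$ respectively---rather than from a Gaussian/residual/remainder trichotomy.
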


Observe that the lower bounds on $m$ and $pm$ are parameterized by two different quantities - a positive integer $k$, and a parameter $\theta$. Roughly speaking, $k$ describes the number of iterations used in our iterative decoupling technique, whereas $\theta$ represents the multiplicative slack that we allow in the embedding dimension $m$. By varying the choice of $k$ and $\theta$, we are able to obtain subspace embedding guarantees with different dimension and sparsity bounds.

In the simplest case, choosing $k=1$ and $\theta = O(1)$, which effectively corresponds to no iterative decoupling, we obtain the following result. (Here $k=1$ technically means we do one step of iterative decoupling but this is just for the convenience of the proof. Under the choice $\theta = O(1)$, choosing $k=1$ and $k=0$ give the same embedding dimension and sparsity requirements.)

\begin{corollary}\label{cor:osebasic}
    An $m \times n$ matrix $\Pi$ with the OSNAP distribution satisfies \eqref{osepro} when 
    \[ m=\frac{c_{\ref{cor:osebasic}.1}(d+\log(d/\delta))}{\varepsilon^2} \text{ and } 
    pm \ge c_{\ref{cor:osebasic}.2}\left(\frac{\log(d/\delta)^{5/2}}{\varepsilon^{1+\frac{1}{\log(d/\delta)}}}+\log(d/\delta)^4\right)
    \]
\end{corollary}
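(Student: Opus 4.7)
The plan is to obtain the corollary as a direct specialization of Theorem \ref{thm:osedecoup} by choosing the two free parameters optimally for this regime. I would set $k=1$ — the smallest positive integer allowed — and fix $\theta$ equal to the smallest universal constant satisfying the hypothesis $\theta \ge c_{\ref{thm:osedecoup}.1}\exp(\exp(c_{\ref{thm:osedecoup}.2}))^2$. With this choice $\theta$ becomes a universal constant, and the dimension requirement $m = \theta(d+\log(d/\delta))/\varepsilon^2$ from the theorem immediately matches the claimed bound by setting $c_{\ref{cor:osebasic}.1} := \theta$.

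For the sparsity bound, I would substitute $k=1$ and the fixed $\theta$ into \eqref{eq:sparsitylb}. With $k=1$, the relevant $\theta$-exponents inside the parenthesis are $1$, $-1/4 = -(k/2-1/4)$, and $-3/2 = -(k+1/2)$, all of which collapse to universal constants. Thus the bracket reduces to a constant multiple of
\[
\frac{\log(d/\delta)}{\varepsilon^{1+1/\log(d/\delta)}} + \frac{\log(d/\delta)^{5/2}}{\varepsilon^{1+1/\log(d/\delta)}} + \log(d/\delta)^4.
\]
Because $d>10$ and $\delta<1$ give $\log(d/\delta) \ge 1$, the first summand is dominated by the second, and I would merge the prefactor $\exp(\exp(c_{\ref{thm:osedecoup}.3}))$ with the $\theta$-dependent constants into a single universal constant $c_{\ref{cor:osebasic}.2}$ to recover the stated bound.

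Since the entire derivation is a substitution, there is no real obstacle: the only elementary check is $\log(d/\delta) \le \log(d/\delta)^{5/2}$, which follows from $d>10$. Conceptually, the content of this corollary is the parameter choice itself rather than any proof-level difficulty: taking $k=1$ forgoes the savings that iterative decoupling would grant on $\theta$ (so one pays a constant-factor slack in the dimension), but in exchange yields the cleanest closed-form sparsity bound of the parameterized family — one that already improves substantially over prior works achieving the same optimal $O((d+\log(d/\delta))/\varepsilon^2)$ dimension. The fuller power of Theorem \ref{thm:osedecoup}, trading $k$ against $\theta$ to push towards $O(\log(d)/\varepsilon)$ sparsity, would only be needed to recover the main Theorem \ref{t:main}.
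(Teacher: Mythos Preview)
Your proposal is correct and matches the paper's approach exactly: the paper obtains Corollary~\ref{cor:osebasic} by specializing Theorem~\ref{thm:osedecoup} with $k=1$ and $\theta$ equal to the minimal admissible constant (cf.\ the discussion before the corollary and special case~(2) of Corollary~\ref{cor:momenttheta}), then absorbing the $\theta\log(d/\delta)$ term into the $\log(d/\delta)^{5/2}$ term just as you do.
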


Note that when $\varepsilon \geq d^{-O(1)}$, then $(1/\varepsilon)^{1+\frac1{\log(d/\delta)}} = O(1/\varepsilon)$.
This recovers the optimal embedding dimension exactly while attaining correct sparsity up to polylogarithmic factors, thus obtaining a claim similar to that of \cite[Theorem 7]{chenakkod2025optimal}, but we give a much simpler proof of this result, as discussed in Section \ref{subsubsec:overview-osnap}.

Observe that the lower bound on $pm$ in \eqref{eq:sparsitylb} has three terms, with only one term having the correct dependence on $\log(d/\delta)$ (i.e., matching Conjecture \ref{c:nn}). This is where our parameterized bounds come in: We can choose $\theta$ and $k$ so that this becomes the dominant term. In particular, the natural choice of $\theta = C \log(d/\delta)^\frac{5}{k - 1/2}$, leads to both $\frac{\log(d/\delta)^{5/2}}{\theta ^{k/2-1/4}}$ and $\frac{ \log(d/\delta)^{4}}{\theta ^{k+1/2}}$ becoming $O(1)$. We are able to choose this value of $\theta$ when $C \log(d/\delta)^\frac{5}{k - 1/2} \ge  c_{\ref{thm:osedecoup}.1} \exp(\exp(c_{\ref{thm:osedecoup}.2}k))^2$ (See Corollary \ref{cor:momenttheta} for more details). This condition is satisifed when $k$ grows sufficiently slowly with $d/\delta$, in particular when $k= \lceil c \cdot h \circ h \circ h (d/\delta) \rceil $ with $h(x) = \max \{ \log(x) , 1 \}$ for a fixed small $c$ (note that this is $\Omega( \log \log \log(d/\delta))$)  and leads to the following corollary.

\begin{corollary}\label{cor:osesubpolylogopt}
    An $m \times n$ matrix $\Pi$ with the OSNAP distribution satisfies \eqref{osepro} when 
    \[ m=\frac{c_{\ref{cor:osesubpolylogopt}.1}\log(d/\delta)^\frac{5}{k - 1/2}(d+\log(d/\delta))}{\varepsilon^2} \text{ and } 
    pm \ge c_{\ref{cor:osesubpolylogopt}.2}\log(d/\delta)^{\frac{5}{k - 1/2} + \frac{1}{k}} \cdot \frac{\log(d/\delta)}{\varepsilon^{1+\frac{1}{\log(d/\delta)}}}
    \]
    where $ k= \lceil c_{\ref{cor:osesubpolylogopt}.3}  h \circ h \circ h (d/\delta)  \rceil $ for $h(x) = \max \{ \log(x) , 1 \}$.
\end{corollary}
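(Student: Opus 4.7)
The plan is to apply Theorem~\ref{thm:osedecoup} with $k = \lceil h\circ h\circ h\circ h(d/\delta) + 1\rceil$ and with $\theta$ tuned so that the two ``error'' contributions to the sparsity bound~\eqref{eq:sparsitylb}, namely $\log(d/\delta)^{5/2}/\theta^{k/2-1/4}$ and $\log(d/\delta)^{4}/\theta^{k+1/2}$, are absorbed into the leading $\theta\log(d/\delta)$ term. The exponents dictate the choice $\theta := C\log(d/\delta)^{5/(k-1/2)}$ for a sufficiently large absolute constant $C$; this choice immediately delivers the dimension bound $m = \theta(d+\log(d/\delta))/\varepsilon^2$ asserted in the corollary and reduces the remaining work to two independent checks.

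First, plugging $\theta$ into the two error terms and using the identities $(k/2-1/4)/(k-1/2) = 1/2$ and $(k+1/2)/(k-1/2) = 1 + 2/(2k-1)$, one computes
\[
\frac{\log(d/\delta)^{5/2}}{\theta^{k/2-1/4}} = C^{-(k/2-1/4)},\qquad
\frac{\log(d/\delta)^{4}}{\theta^{k+1/2}} = C^{-(k+1/2)}\log(d/\delta)^{-1-10/(2k-1)}.
\]
Both quantities are bounded by an absolute constant, and both are dominated by $\theta\log(d/\delta)/\varepsilon^{1+1/\log(d/\delta)}$ (using $\varepsilon<1$ and $\log(d/\delta)\geq 1$). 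Thus~\eqref{eq:sparsitylb} simplifies to $pm \geq c\cdot \exp(\exp(c_{\ref{thm:osedecoup}.3}k))\,\theta\log(d/\delta)/\varepsilon^{1+1/\log(d/\delta)}$ for some absolute $c>0$.

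What remains, and what I expect to be the main obstacle, is to verify two quantitative conditions tied to the choice of $\theta$: the admissibility hypothesis $\theta \geq c_{\ref{thm:osedecoup}.1}\exp(\exp(c_{\ref{thm:osedecoup}.2}k))^2$ demanded by Theorem~\ref{thm:osedecoup}, and the absorption of the leftover prefactor $\exp(\exp(c_{\ref{thm:osedecoup}.3}k))$ into the factor $\log(d/\delta)^{5/(k-1/2)}$ so that the sparsity bound displayed in the corollary actually results. Both reduce to showing $\exp(\exp(c'k)) = O\bigl(\log(d/\delta)^{5/(k-1/2)}\bigr)$ for a single absolute constant $c'$. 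Taking logarithms twice, this is the same as comparing $\exp(c'k)$ with $\log\log(d/\delta)/(k-1/2)$. Since the definition of $k$ gives $k - 1/2 \leq \log^{(4)}(d/\delta) + O(1)$, we have $\exp(c'k) \leq O\bigl((\log^{(3)}(d/\delta))^{c'}\bigr)$, while $\log^{(2)}(d/\delta)/(k-1/2) \geq \log^{(2)}(d/\delta)/(\log^{(4)}(d/\delta)+O(1))$. Because $\log^{(2)}(d/\delta) = \exp(\log^{(3)}(d/\delta))$ dominates every polynomial in $\log^{(3)}(d/\delta)$, the desired inequality holds for all sufficiently large $d/\delta$. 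The fourfold iterated logarithm in the definition of $k$ is calibrated precisely so that this comparison has slack; the residual small-$d/\delta$ regime (where $h$ saturates at $1$ and $k$ is an absolute constant) can be absorbed into the constants $c_{\ref{cor:osesubpolylogopt}.1}$ and $c_{\ref{cor:osesubpolylogopt}.2}$.
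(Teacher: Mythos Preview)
Your proposal is correct and follows essentially the same route as the paper: choose $\theta = C\log(d/\delta)^{5/(k-1/2)}$, check that this kills the two error terms in~\eqref{eq:sparsitylb} (your exponent arithmetic is exactly what the paper uses), and then verify via the double-exponential vs.\ iterated-logarithm comparison that this $\theta$ satisfies the admissibility constraint and that the residual $\exp(\exp(ck))$ prefactor is dominated by $\log(d/\delta)^{5/(k-1/2)}$, with the small-$d/\delta$ regime handled by enlarging the constants. The paper carries out the same argument (in terser form before the corollary and in detail in the proof of Corollary~\ref{cor:momenttheta}, special case~(1)); your treatment of the asymptotic comparison is in fact more transparent than the paper's.
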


 Theorem \ref{t:main} now follows from Corollary \ref{cor:osesubpolylogopt}.  

\begin{proof}[Proof of Theorem \ref{t:main}]
We choose $k$ as in Corollary \ref{cor:osesubpolylogopt}. Observe that when $\delta= d^{-O(1)}$, we have $k=\Omega(\log\log\log(d))$, so $\log(d/\delta)^\frac{5}{k - 0.5} = \log(d)^{o(1)}$, which means that $m= O((d\log(d)^{o(1)}/\varepsilon^2)$. Furthermore,  when $\varepsilon \ge d^{-O(1)}$, then  $(1/\varepsilon)^\frac{1}{\log(d/\delta)} = O(1)$, so $pm = O(\log(d)^{1+o(1)}/\varepsilon)$.
\end{proof}

\subsection{Overview of the Techniques.}\label{sec:overview}

In this section, we provide an overview of our key technical contributions that lead to the proof of Theorem \ref{thm:osedecoup}. First, we give a very high-level proof sketch (Section \ref{subsubsec:momest}), and then we explain our key technique, iterative decoupling, in greater depth (Sections \ref{subsubsec:itergen} and \ref{subsubsec:overview-osnap}), with further details provided in Section \ref{subsubsec:whtnotoptimaldimension}. We use the symbols $C, c, C_1, c_1, \ldots$ to denote absolute constants whose values can change from line to line.

\subsubsection{High-level proof sketch.} \label{subsubsec:momest}

We start by reducing the problem of showing \eqref{eq:ose-equiv} to a bound on the moments, following \cite{nelson2013osnap, chenakkod2025optimal} and then using Markov's inequality. Concretely, it suffices to show that
\begin{align*}
    \left( \E[\tr((\Pi U)^T(\Pi U)-I_{d })^{2q}] \right)^{\frac{1}{2q}} \le \varepsilon\quad\text{for}\quad q=\log(d/\delta).
\end{align*}
Here $\tr(\cdot)$ is the normalized trace $\tr(M)=\frac {1}{d}\Tr(M)$.  For convenience, we work with a scaled version of $\Pi$ so that all of its entries are in $\{0,1,-1\}$, i.e., $S:=\sqrt{pm} \Pi$. The equivalent requirement is
\begin{align*}
    \left( \E[\tr((S U)^T(S U)-pm \cdot I_{d })^{2q}] \right)^{\frac{1}{2q}} \le pm\varepsilon.
\end{align*}
 Standard calculations (see \cite{nelson2013osnap,chenakkod2024optimal, chenakkod2025optimal}) show that to get optimal requirement of the embedding dimension $m$ and the sparsity level $pm$ as conjectured in \cite{nelson2013osnap}, it suffices to get the bound 
\begin{align*}
    \left( \E[\tr((S U)^T(S U)-pm I_{d })^{2q}] \right)^{\frac{1}{2q}} =O(\sqrt{pmpd+q^2})
\end{align*}
since $\sqrt{pmpd+q^2} = O(pm\varepsilon)$ when $m \ge c_1d/\varepsilon^2$ and $pm \ge c_2q/\varepsilon$, and recall that $q=\log(d/\delta)$.

It was observed by \cite{cohen2016nearly} and \cite{chenakkod2025optimal} that by using decoupling, one has
\begin{align*}
    \left( \E[\tr((S U)^T(S U)-pmI_{d })^{2q}] \right)^{\frac{1}{2q}}  \le C\left( \E[\tr|(S_1 U)^T(S_2 U)|^{2q}] \right)^{\frac{1}{2q}}=C\norm{(S_1 U)^T(S_2 U)}_{2q}
\end{align*}
for some constant $C$ where $S_1$ and $S_2$ are i.i.d.\ copies of $S$ and we define the absolute value $|M|=\sqrt{M^TM}$ and the norm
\begin{align*}
	\norm{M}_q = \begin{cases}
	\big(\E[\tr |M|^q]\big)^{\frac{1}{q}} & \text{if }1\le q<\infty\\
	\| \norm{M}_{op} \|_\infty & \text{if }q=\infty
	\end{cases}
\end{align*}
Therefore, the OSE problem is reduced to proving
\begin{equation}\label{eq:optmom}
    \begin{aligned}
    \norm{(S_1 U)^T(S_2 U)}_{2q} = O(\sqrt{pmpd+q^2})
\end{aligned}
\end{equation}
For convenience, we define
\begin{align*}
    K=&K(m,d,p,q):=\left(pmpd+(pm)^{1/(q)}q(pd+q) \right)^{1/2}  \approx \sqrt{pmpd+q^2}
\end{align*}
to be the almost correct moment bound for $\norm{(S_1 U)^T(S_2 U)}_{2q}$ (for technical reasons, we introduce some extra errors in $K$, but these errors are not significant). To better understand $K$, we observe that if the embedding dimension and sparsity satisfy the assumptions from Conjecture \ref{c:nn}, then using $q=\log(d/\delta)$, we have $pmpd=(pm\varepsilon)^2=Cq^2$, so $K \approx \sqrt{pmpd} \approx q$.

Thus, our goal is to show
\begin{equation}\label{eq:optmomK}
    \begin{aligned}
    \norm{(S_1 U)^T(S_2 U)}_{2q} \le C K .
\end{aligned}
\end{equation}

\paragraph{\textbf{Moment estimates obtained from iterative decoupling.}}
Using our iterative decoupling technique (explained in the following sections), we obtain a sequence of moment estimates of the following form, 
\begin{align} \norm{(S_1 U)^T(S_2 U)}_{2q \cdot 2^k} \le C_k(K+K^{\alpha_k}(q^{2}R(I_d))^{1-\alpha_k}),\qquad\alpha_k =\max\Big\{0,\frac{2k-1}{2k+1}\Big\}\label{eq:high-level-moment-estimates}
\end{align}
where $k=0,1,2,...$ corresponds to the number of steps of iterative decoupling, and $R(I_d)$ is a quantity that arises from Gaussian universality analysis (which we will ignore for now). Note that when $k=0$, the second term in the bound is of the order at least $q^{2}$, whereas $K \approx q$ (under  the conjectured sparsity), so the second term dominates and the estimate is suboptimal.

However, as $k$ increases, the exponent in $K^{\alpha_k}$ increases and the exponent in $(q^2R(I_d))^{1-\alpha_k}$ decreases, so the estimates get better in the sense that they can be (approximately) bounded by a power of $q$ that approaches 1 under the condition that $K \approx q$ which holds for the conjectured sparsity (see Figure \ref{fig:q-exponent-vs-k}). Ideally, we would take $k$ to infinity and recover \eqref{eq:optmomK}, but unfortunately the constant $C_k$ blows up very quickly. Still, for any $k$, by requiring \eqref{eq:high-level-moment-estimates} to be $O(pm\varepsilon)$ (which was our original goal) we get the sequence of estimates in Theorem \ref{thm:osedecoup}. In particular, via simple algebra it suffices to have,
\begin{align*}
   m \ge C_k'q^{\frac{5}{k-1/2}} \frac{d+\log(1/\delta)}{\varepsilon^2}
\qquad\text{and}\qquad   pm \ge C_k'q^{\frac{5}{k-1/2}} \frac{q}{\varepsilon}.
\end{align*}

\begin{figure}[htbp]
  \centering

  \begin{tikzpicture}[x=0.9cm,y=4cm] % x- & y-scales; tweak to resize
  %---------------- axes ----------------
  \draw[->] (0,0) -- (11,0) node[below right] {$k$};
  \draw[->] (0,0) -- (0,1) node[left] {Exponent of $q$};

  %---------------- grid ----------------
  % vertical grid lines (k = 0 … 10)
  \foreach \k in {0,...,10}
    \draw[gray!35, thin] (\k,0) -- (\k,1);
  % horizontal grid lines (y = 0,0.2,…,1.8)
  \foreach \y in {0,0.2,...,1}
    \draw[gray!35, thin] (0,\y) -- (10.8,\y);

  %---------------- ticks & labels -------
  % x-ticks at integers 0…10
  \foreach \k in {0,...,10}
    \draw (\k,0) -- ++(0,-0.04)
           node[below, font=\scriptsize] {\k};
  % y-ticks every 0.2
  \foreach \y in {1,1.2, 1.4, 1.6}
    \draw (0,\y-0.8) -- ++(-0.12,0)
          node[left, font=\scriptsize] {\y};

  %---------------- asymptote ------------
  \draw[dashed, red, thick] (0,0.2) -- (10.8,0.2)
        node[right, font=\small, black] {$\text{exponent}=1$};

  %---------------- data points & lines --
  \foreach \k in {1,...,10}{
     \pgfmathsetmacro{\y}{(2*\k+3)/(2*\k+1)-0.8}   % E(k)
     % emphasised point
     \fill[blue] (\k,\y) circle (2.6pt);
     % connect to previous integer point
     \ifnum\k>1
        \pgfmathsetmacro{\yp}{(2*(\k-1)+3)/(2*(\k-1)+1)-0.8}
        \draw[blue, thick] (\k-1,\yp) -- (\k,\y);
     \fi
  }
  
\end{tikzpicture}

  \caption{Plotting the exponent in front of $q=\log(d/\delta)$ as a function of \(k\) in $
      \norm{(S_1 U)^T(S_2 U)}_{2q \cdot 2^k} \le C_k(K+K^{\alpha_k}(q^{2}R(I_d))^{1-\alpha_k}) \approx O(q^{1+\frac{2}{2k+1}})
  $ under the condition that $pmpd=Cq^2$ (which holds for the conjectured sparsity).}
  \label{fig:q-exponent-vs-k}
\end{figure}
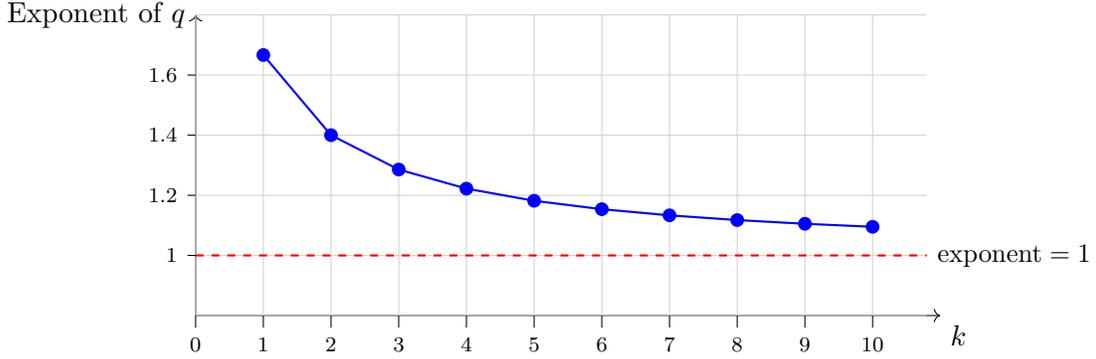

\subsubsection{Towards iterative decoupling: Intuition in a general random matrix model.}
\label{subsubsec:itergen}

Before formally explaining the idea of iterative decoupling that we use to obtain the sequence of moment estimates \eqref{eq:high-level-moment-estimates}, we first explain the motivation for iterative decoupling in a simplified setting. Suppose that we have a symmetric random matrix $M=\sum_lZ_l$ where $Z_1,...,Z_n$ are independent random matrices and we want to estimate the norm $\norm{M}_{2q}$ (here we consider a general situation where $M$ need not be OSNAP).

\paragraph{\textbf{First attempt: Na\"ive full decoupling.}} We first examine a simple and na\"ive strategy of decoupling higher-order moments, to see how it fails. We have the following decomposition:
\begin{align*}
    \norm{M}_{2q}^{2q}&=\tr(M^{2q})
    =\norm{M^{q}}_2^2
    =\norm{(\sum \limits_{l}Z_l)^{2q}}
    \\&=\Big\|\underbrace{\sum \limits_{(l_1,...,l_{q}):\exists k_1,k_2( k_1 \ne k_2,l_{k_1} = l_{k_1})}Z_{l_1} \cdots Z_{l_{q}}}_{T_{\text{diagonal}}}+\underbrace{\sum \limits_{(l_1,...,l_{q}):\forall k_1,k_2( k_1 \ne k_2 \Rightarrow l_{k_1}  \ne l_{k_1})}Z_{l_1} \cdots Z_{l_{q}}}_{T_{\text{off-diagonal}}}\Big\|_2^2
    \\ &\le  (\norm{T_{\text{diagonal}}}_2+\norm{T_{\text{off-diagonal}}}_2)^2
\end{align*}
By decoupling \cite{de2012decoupling} and Jensen's inequality, letting $Z_{1,l_i},Z_{2,l_i},...$ denote i.i.d.\ copies of $Z_{l_i}$ we have
\begin{align*}
    \norm{T_{\text{off-diagonal}}}_2 \le \Big\|\sum \limits_{(l_1,...,l_{q}):\forall k_1,k_2( k_1 \ne k_2 \Rightarrow l_{k_1}  \ne l_{k_1})}Z_{1,l_1} \cdots Z_{q,l_{q}}\Big\|_2
     \le  C({q})\norm{M_1 \cdots M_{q}}_2
\end{align*}
where $M_1,...,M_{q}$ are i.i.d.\ copies of $M$ and the constant $C({q})$ is a rapidly growing function of $q$,
\begin{align*}
    C(q) =2^{q}\bigl(q^{\,q}-1\bigr)\bigl((q-1)^{\,q-1}-1\bigr)\times\cdots\times 3
\end{align*}
Since $M_1,...,M_{q}$ are independent, the quantity
\begin{align*}
    \norm{M_1 \cdots M_{q}}_2^2=&\E\tr((M_1 \cdots M_{q})^TM_1 \cdots M_{q})
\end{align*}
is completely determined by the second moments of $M$ which can be calculated explicitly.
Therefore, by estimating the term
$\norm{T_{\text{diagonal}}}_2$
we can get a bound for $\norm{M}_{2q}$. However, there are two issues:
\begin{itemize}

    \item First, the term $\norm{T_{\text{diagonal}}}_2$
is very complicated, as it has many different types of components, e.g., terms with different numbers of collisions of indices, which makes it difficult to control.
\item Second, the constant $C(q)$ blows up very rapidly when $q \to \infty$, and it is far too large even when $q$ is logarithmic in the dimensions of $M$ (which is the case in the OSNAP analysis).

\end{itemize}

\paragraph{\textbf{Alternative approach: Moment bounds via Gaussian universality.}} 

Another approach to bounding the higher-order moments of $M$ is via Gaussian universality analysis, i.e., comparing these moments to the moments of a corresponding Gaussian random matrix.
We state here the universality result of Brailovskaya and van Handel, which forms the starting point of our analysis.

\begin{proposition}[Theorem 2.9, \cite{brailovskaya2022universality}]\label{prop:bvh}
    Let $M=\sum_{i=1}^nZ_{i}$ be a sum of independent symmetric matrices $Z_i$. Let $G$ be a gaussian random matrix with the same mean and covariance profile as $M$. Then,
    \begin{equation*}
        \abs*{ \norm{M}_{2q} - \norm{G}_{2q} } \le C\tilde{R}_{2q}(M)q^2
        \qquad\text{where}\qquad \tilde{R}_{2q}(M) = \paren*{\sum_{i=1}^n \norm{Z_i}_{2q}^{2q}}^\frac{1}{2q}.
    \end{equation*}
\end{proposition}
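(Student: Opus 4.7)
The plan is to prove this universality statement by a matrix Lindeberg replacement argument applied to trace moments. Since $M$ and $G$ are symmetric, $\|M\|_{2q}^{2q} = \E\tr M^{2q}$ and $\|G\|_{2q}^{2q} = \E\tr G^{2q}$, so an elementary mean-value estimate on $x \mapsto x^{1/(2q)}$, combined with a rough a-priori control of $\|M\|_{2q} \vee \|G\|_{2q}$, reduces the problem to a bound on the difference of trace moments $|\E\tr M^{2q} - \E\tr G^{2q}|$.

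I would then interpolate between $M$ and $G$ by the sequence $M_0=M,\, M_1,\ldots,\, M_n=G$, in which the first $i$ summands of $M$ have been replaced by independent Gaussian copies $G_j$ with matching mean and covariance. Writing $M_i = W_i + X_i$ with $W_i$ the part independent of the $i$-th coordinate and $X_i \in \{Z_i,G_i\}$, the telescoping identity
\begin{equation*}
    \E\tr M^{2q} - \E\tr G^{2q} = \sum_{i=1}^n \Bigl( \E\tr(W_i + Z_i)^{2q} - \E\tr(W_i + G_i)^{2q} \Bigr)
\end{equation*}
reduces the task to a single-summand swap. Taylor expanding $X\mapsto \tr((W_i + X)^{2q})$ at $X=0$, the constant, linear, and quadratic terms match in expectation because $Z_i$ and $G_i$ agree through their second moments; only the cubic and higher-order Taylor residuals contribute to each $i$.

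The main obstacle is extracting the sharp $q^2$ combinatorial constant from these residuals. The $k$-th derivative of $X\mapsto \tr((W+X)^{2q})$ is a cyclic sum over roughly $\binom{2q}{k}$ insertion patterns of $X$ into powers of $W$, and a naive term-by-term bound via tracial H\"older's inequality, controlling each monomial by $\|W\|_{2q}^{2q-k}\|X\|_{2q}^{k}$, produces a leading factor $q^3$ at the cubic level rather than the desired $q^2$. To shave off the extra $q$, I would exploit cyclicity of the trace to collapse equivalent insertion patterns, use the vanishing of the centered Gaussian third moment to cancel the pure third-order part of the Gaussian side, and treat the higher-order Taylor tail via a smoothing or resummation argument so that its contribution is dominated by the (now improved) cubic error. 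Summing the per-index estimates yields $\sum_i \|Z_i\|_{2q}^{2q} = \tilde R_{2q}(M)^{2q}$ by definition, and combining with H\"older in the remaining $\|W_i\|_{2q}^{2q-3}$ factor along the interpolation path before taking the $1/(2q)$-th root produces the claimed bound $\bigl|\|M\|_{2q}-\|G\|_{2q}\bigr|\le C q^2 \tilde R_{2q}(M)$. The technically hardest point is uniformly dominating the higher-order Taylor tail so that it is absorbed into the cubic contribution rather than inflating the combinatorial cost.
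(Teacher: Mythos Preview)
The paper does not prove this proposition: it is stated as Theorem~2.9 of \cite{brailovskaya2022universality} and used as a black-box input to the analysis. There is therefore no proof in the paper to compare your proposal against.

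That said, your outline is the correct strategy and matches the approach in \cite{brailovskaya2022universality}: a Lindeberg swap on the summands, Taylor expansion of $X\mapsto\tr((W+X)^{2q})$, cancellation of zeroth through second order terms by moment matching, and tracial H\"older to control the remainder. You have also correctly identified the genuine difficulty, namely that a naive bound on the cubic term gives $\binom{2q}{3}\sim q^3$ and one must work harder to obtain $q^2$. Your proposed fix (exploiting cyclicity to collapse insertion patterns and using Gaussian third-moment vanishing) is in the right spirit but is not quite how the sharp constant is obtained; in \cite{brailovskaya2022universality} the improvement comes from a more careful organization of the interpolation and remainder estimates rather than from cyclicity alone, and your sketch does not make clear how the full Taylor tail (orders $3$ through $2q$) is uniformly dominated by the cubic contribution without reintroducing extra powers of $q$. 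If you want to fill in the argument, that resummation step is where the real work lies.
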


Without loss of generality, we can assume that $q=2^k$. Let $M_1, M_2$ be i.i.d.\ copies of $M=\sum_{i=1}^nZ_{i}$ and $G_1,G_2$ be i.i.d.\ copies of the corresponding gaussian models for $M$. We obtain
\begin{align}
    \norm{M}_{2q} \le \norm{G}_{2q}+C\tilde{R}_{2q}(M)q^2=\Theta+C \tilde{R}_{2q}(M)q^2 \label{eq:initialunivgeneral}
\end{align}
and we write $\Theta=\norm{G}_{2q}$ as the moment estimate from the gaussian model. In many cases, e.g., the OSNAP case and other models with sparse entries, the error term $\tilde{R}_{2q}(M)q^2$ dominates $\Theta$, so it would be desirable to improve this error term. More precisely, our goal is to show 
\begin{align*}
    \norm{M}_{2q} \le O(\Theta+\text{error})
\end{align*}
with the error term as small as possible. (In our OSNAP model, $\Theta=\sqrt{pmpd}$ and we want to show that the error term is as small as $q$ so that $\Theta+\text{error}=O(\sqrt{pmpd}+q)=O(\sqrt{pmpd+q^2})$). 

\paragraph{\textbf{Iterative decoupling to fine-tune universality error.}}
We now show how to improve the term $\tilde{R}_{2q}(M)q^2$ by iterative decoupling. In the first step, we write
\begin{align*}
    \norm{M}_{2q}=&\norm{M^2}_q^{1/2}
    =\Big\|\sum_{l}Z_l^2+\sum_{l_1 \ne l_2}Z_{l_1} \cdot Z_{l_2}\Big\|_q^{1/2}
    \le
    {\Big\lVert \!\!\!\underbrace{\sum_{l} Z_l^{2}}_{(\text{diagonal term})} \!\!\!\Big\rVert_{q}^{1/2}}
\;+\;
    {\Big\lVert\!\!\! \underbrace{\sum_{l_1 \neq l_2} Z_{l_1}\,Z_{l_2} }_{(\text{off-diagonal term})}\!\!\!\Big\rVert_{q}^{1/2}}
\end{align*}
Since the diagonal term is a sum of independent random matrices, we can use standard tools such as matrix Bernstein or Matrix Rosenthal to get an easy (but suboptimal) bound. Fortunately, the diagonal term is generally of smaller order than $M^2$ (because most terms $Z_{l_1}Z_{l_2}$ are in the off-diagonal term), so these suboptimal bounds are sufficient for our purpose. Thus, suppose that we have a bound $ \norm{\sum_{l}Z_l^2} \le \Theta^2$.

For the off-diagonal term, we use decoupling and obtain
\begin{align*}
    \norm{\sum_{l_1 \ne l_2}Z_{l_1} \cdot Z_{l_2}}_q \le C\norm{M_1 \cdot M_2}_q
\end{align*}

Now, we condition on $M_2$ and use Proposition \ref{prop:bvh} to obtain
\begin{align*}
    \norm{M_1 \cdot M_2}_q \le \norm{G_1 \cdot M_2}_q + C\norm{\tilde R_q(M_1M_2|M_2)}_{q}\,q^2
\end{align*}
where
\begin{align*}
    \tilde R_q(M_1M_2|M_2)=\left(\sum_{i=1}^n \E[ \tr(|Z_{1,i} \cdot M_2|^{q})|M_2]\right)^{1/q}
\end{align*}

For the term $\norm{G_1 \cdot M_2}_q$, we condition on $G_1$ and use Proposition \ref{prop:bvh} again to obtain
\begin{align*}
    \norm{G_1 \cdot M_2}_q \le \norm{G_1 \cdot G_2}_q + C\norm{ \tilde R_q(G_1M_2|G_1)}_{q}\,q^2
\end{align*}

In summary, we have
\begin{align*}
    \norm{M_1 \cdot M_2}_q \le &\norm{G_1 \cdot G_2}_q + C\norm{\tilde R_q(G_1M_2|G_1)}_{q}q^2+C\norm{\tilde R_q(M_1M_2|M_2)}_{q}\,q^2
    \\ \le & \Theta^2 + C\norm{\tilde R_q(G_1M_2|G_1)}_{q}q^2+C\norm{\tilde R_q(M_1M_2|M_2)}_{q}\,q^2
\end{align*}
where we use H\"older inequality to claim that $\norm{G_1 \cdot G_2}_q \le \Theta^2$.

For the sake of this illustration, assume that $\norm{\tilde R_{q}(M_1M_2|M_2)}$ and $\norm{\tilde R_q(G_1M_2|G_1)}$ are easier to estimate (we show this is true for the OSNAP model) and we have
\begin{align*}
    \norm{\tilde R_{q}(M_1M_2|M_2)}_{q} \le \Theta \cdot \tilde{R}_{2q}(M)
    \qquad\text{and}\qquad\norm{\tilde R_{q}(G_1M_2|G_1)}_{q} \le \Theta \cdot \tilde{R}_{2q}(M).
\end{align*}
Putting this all together, we obtain
\begin{align}
    \norm{M}_{2q} &\le 
        \Big\lVert\sum_{l} Z_l^{2}\Big\rVert_{q}^{1/2}
\;+\;
    {\Big\lVert\sum_{l_1 \neq l_2} Z_{l_1}\,Z_{l_2}\Big\rVert_{q}^{1/2}}
 \nonumber\\
 &\le \Theta+(\Theta^2+ C\norm{\tilde R_q(G_1M_2|G_1)}_{q}q^2+C\norm{\tilde R_q(M_1M_2|M_2)}_{q}q^2)^{1/2} \nonumber
    \\
    &\le \Theta+(\Theta^2+ C\Theta \tilde{R}_{2q}(M)q^2+C\Theta \tilde{R}_{2q}(M)q^2)^{1/2} \nonumber
    \\ &\le  2\Theta +2\sqrt{C}\sqrt{\Theta}\sqrt{\tilde{R}_{2q}(M)q^2} 
\label{eq:improveduniv}
\end{align}
Comparing the last line with the initial moment estimate \eqref{eq:initialunivgeneral} from universality, we see that the error term is changed from $\tilde{R}_{2q}(M)q^2$ to $\sqrt{\Theta}\sqrt{\tilde{R}_{2q}(M)q^2}$. In other words, square root of $\tilde{R}_{2q}(M)q^2$ is replaced by square root of $\Theta$ and this is an improvement when $\tilde{R}_{2q}(M)q^2$ is of higher order~than~$\Theta$.

To carry this step further, we can choose to not apply the universality result to $\norm{M_1 \cdot M_2}_q$, but do decoupling on it again. More precisely, we condition on $M_1$ and write
\begin{align*}
    \norm{M_1 \cdot M_2}_q&=\norm{M_2 \cdot M_1 \cdot M_1 \cdot M_2}_{q/2}^{1/2}
    \\&\le    
    {\Big\lVert \underbrace{\sum_{l} Z_l\cdot M_1 \cdot M_1 \cdot Z_l}_{(\text{diagonal term})} \Big\rVert_{q/2}^{1/2}}
\;+\;
    {\Big\lVert \underbrace{\sum_{l_1 \neq l_2} Z_{l_1} \cdot M_1 \cdot M_1 \cdot Z_{l_2}}_{(\text{off-diagonal term})} \Big\rVert_{q/2}^{1/2}}
\end{align*}

By conditioning on $M_1$, we can still estimate the diagonal term by matrix Bernstein or Matrix Rosenthal. And for the off-diagonal, we have, by decoupling,
\begin{align*}
    \left\lVert \sum_{l_1 \neq l_2} Z_{l_1} \cdot M_1 \cdot M_1 \cdot Z_{l_2} \right\rVert_{q/2}=&\norml{\norm{\sum_{l_1 \neq l_2} Z_{l_1} \cdot M_1 \cdot M_1 \cdot Z_{l_2}|M_1}_{q/2}}_{L_{q/2}(\Pb)}
    \\ \le & \norml{\norm{M_3 M_1 M_1 M_4|M_1}_{q/2}}_{L_{q/2}(\Pb)}
    \\ = & \norm{M_3 M_1 M_1 M_4}_{q/2}
\end{align*}
where we define $\norm{A|B}_{2q}=(\E[(\tr|A|^{2q})|B])^{\frac{1}{2q}}$ for any random matrices $A$ and $B$.

Then we can use Proposition \ref{prop:bvh} conditionally (first on $M_1$ and $M_3$, and then on $M_1$) to estimate $\norm{M_3 M_1 M_1 M_4}_{q/2}$. In the next step, we will further condition on $M_3M_1M_1$ and decouple $M_4$ to get a moment of the form  $\norm{M_5M_1M_1M_3M_3 M_1 M_1 M_6}_{q/4}$. The calculations become complicated here so we do not go into details in this overview. But finally we will further improve the moment estimate for $\norm{M}_{2q}$ after each step. Figure \ref{fig:itergen} shows a diagram that summarizes three steps of iterative decoupling.

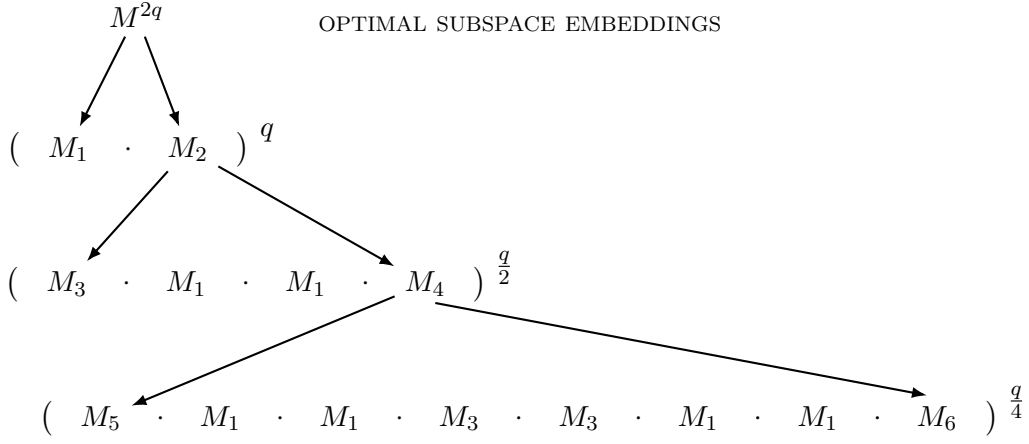
\begin{figure}[t]    
\centering
\vspace{-1cm}\begin{tikzpicture}[
    >=Latex,
    arrow/.style = {-{Latex[length=2mm]}, thick},
    every node/.style = {font=\normalsize}
]

%------------------ first row ------------------
\node (Top) {$M^{2q}$};

%------------------ second row -----------------
\node (M1)   [below left = 1.2cm and 0cm of Top] {$M_{1}$};
\node (dot12)[right = 0.20cm of M1]                {$\cdot$};
\node (M2)   [right = 0.20cm of dot12]             {$M_{2}$};

\node (lpar12)[left  = 0.10cm of M1]     {$\bigl($};
\node (rpar12)[right = 0.10cm of M2]     {$\bigr)$};
\node          at ($(rpar12.north east)+(0.10,-0.15)$) {$q$};

%------------------ third row ------------------
\node (M3)   [below left = 1.2cm and 0.8cm of M2]  {$M_{3}$};
\node (dot31)[right = 0.20cm of M3]                {$\cdot$};
\node (M1a)  [right = 0.20cm of dot31]             {$M_{1}$};
\node (dot32)[right = 0.20cm of M1a]               {$\cdot$};
\node (M1b)  [right = 0.20cm of dot32]             {$M_{1}$};
\node (dot33)[right = 0.20cm of M1b]               {$\cdot$};
\node (M4)   [right = 0.20cm of dot33]             {$M_{4}$};
\node (M4arrhead) [below = -0.2cm of M4] {};

\node (lpar3)[left  = 0.10cm of M3]      {$\bigl($};
\node (rpar3)[right = 0.10cm of M4]      {$\bigr)$};
\node          at ($(rpar3.north east)+(0.10,-0.15)$) {$\tfrac{q}{2}$};

%------------------ fourth row ------------------
\node (M5)   [below left = 1.2cm and 3.5cm of M4]  {$M_{5}$};
\node (dot41)[right = 0.20cm of M5]                {$\cdot$};
\node (M1c)  [right = 0.20cm of dot41]             {$M_{1}$};
\node (dot42)[right = 0.20cm of M1c]               {$\cdot$};
\node (M1d)  [right = 0.20cm of dot42]             {$M_{1}$};
\node (dot43)[right = 0.20cm of M1d]               {$\cdot$};
\node (M3a)  [right = 0.20cm of dot43]             {$M_{3}$};
\node (dot44)[right = 0.20cm of M3a]               {$\cdot$};
\node (M3b)  [right = 0.20cm of dot44]             {$M_{3}$};
\node (dot45)[right = 0.20cm of M3b]               {$\cdot$};
\node (M1e)  [right = 0.20cm of dot45]             {$M_{1}$};
\node (dot46)[right = 0.20cm of M1e]               {$\cdot$};
\node (M1f)  [right = 0.20cm of dot46]             {$M_{1}$};
\node (dot47)[right = 0.20cm of M1f]               {$\cdot$};
\node (M6)   [right = 0.20cm of dot47]             {$M_{6}$};
\node (M6arrhead)   [above = -0.15cm of M6]             {};

\node (lpar4)[left  = 0.10cm of M5]      {$\bigl($};
\node (rpar4)[right = 0.10cm of M6]      {$\bigr)$};
\node          at ($(rpar4.north east)+(0.10,-0.15)$) {$\tfrac{q}{4}$};

%------------------ arrows ---------------------
\draw[arrow] (Top) -- (M1);
\draw[arrow] (Top) -- (M2);

\draw[arrow] (M2) -- (M3);
\draw[arrow] (M2) -- (M4);

\draw[arrow] (M4) -- (M5);
\draw[arrow] (M4arrhead) -- (M6arrhead);

\end{tikzpicture}
\caption{First three steps of iterative decoupling for general random matrices.}
\label{fig:itergen}
\end{figure}

We can see that this new method avoids the issues of the na\"ive full decoupling method we described previously. First, in the na\"ive full decoupling method, the diagonal term \begin{align*}
\sum \limits_{(l_1,...,l_{q}):\exists k_1,k_2( k_1 \ne k_2,l_{k_1} = l_{k_1})}Z_{l_1} \cdots Z_{l_{q}}
\end{align*} consists of collision terms that arise in power $2q$ and these are very complicated and hard to deal with. However, in this new method the diagonal terms at each step, e.g.,
\begin{align*}
    \sum_{l} Z_{l}^2 \text{\ \ \ \  and \ \ \ \ } \sum_{l} Z_l\cdot M_1 \cdot M_1 \cdot Z_l,
\end{align*}  only include collision terms that arise in power $2$. 
By conditioning on appropriate random matrices, these diagonal terms are sums of independent random matrices and therefore are easy~to~analyze. 

Second, in the na\" ive full decoupling method, the constant factor $C(q)$ depends on the moment $q$ and since we need to choose $q=\log(d/\delta)$, this constant blows up rapidly. In our new method, after each time of decoupling, we still collect a constant factor, so our moment estimate for $\norm{M}_{2q}$ still includes a factor $C_k$. However, this constant $C_k$ depends only on the number of steps $k$ for which we do iterative decoupling (after which, we stop and apply the universality result), and we have seen that we can choose $k$ however we wish, e.g., a fixed large constant or a function growing extremely slowly with $q$, so that the factor $C_k$ does not blow up.

\subsubsection{Iterative Decoupling for OSNAP.}
\label{subsubsec:overview-osnap}

We now specialize this paradigm to our specific setting. For convenience, let $X_1=S_1U$, $X_2=S_2U$, and $Y=X_1^TX_2$. Recall that in step 2 of the general iterative decoupling argument, we condition on $M_1$ and decouple $M_2$ to obtain
\begin{align*}
    \norm{M_1 \cdot M_2}_{q} \le \text{(diagonal term)}+\norm{M_3M_1M_1M_4}_{q/2}^\frac{1}{2}
\end{align*}
And in the next step, we will further condition on $M_3M_1M_1$ and decouple $M_4$.
To simplify the argument (in the specialized situation for $Y$), we work with a matrix of the form $VY$ and its norm $\norm{VY}_{2q}$ where $V$ is a fixed matrix, and then we will plug in the random matrices that we condition on in each step  into $V$. In this way, we can do iterative decoupling by induction.

We start from some initial moment estimate results derived by applying Proposition \ref{prop:bvh} and improve this moment estimate iteratively (see Figure \ref{fig:overview} for an overview of the steps).  More precisely, assume that at step $k$ we have the moment estimate
\begin{align*}
    \norm{VY}_{2q} \le \Phi_{k}(V)
\end{align*}
Then in step $k+1$, we write 
\begin{align*}
    \norm{VY}_{4q}=&(\E\tr((Y^TV^TVY)^{2q}))^{\frac{1}{4q}}= \norm{Y^TV^TVY}_{2q}^{1/2}
\end{align*}
Now, $Y^TV^TVY = X_2^TX_1V^TVX_1^TX_2$ contains degree two terms in both $X_1$ and $X_2$, which are independent. Thus, we can decouple the two $X_1$ factors while conditioning on $X_2$ and then decouple the two $X_2$ factors similarly. This gives us the estimate
\begin{align*}
    \norm{Y^TV^TVY}_{2q} &\le \norm{\text{diagonal terms}}_{2q}+4\norm{X_2^TX_1V^TVX_3^TX_4}_{2q} \\
    &\le \norm{\text{diagonal terms}}_{2q}+4\norm{Y_1^TV^TVY_2}_{2q}
\end{align*}

Then, by conditioning on $Y_1$, we can estimate $\norm{Y_1^TV^TVY_2}_{2q}$ by applying the previous moment estimate to $\norm{WY_2}_{2q}$ where $W=Y_1^TV^TV$.

We get,
\begin{align*}
\norm{Y_1^TV^TVY_2}_{2q} \le \norm{\Phi_k(W)}_{L_{2q}}=\norm{\Phi_k(Y_1^TV^TV)}_{L_{2q}}
\end{align*}

Plugging this into the previous inequalities, we have
\begin{align*}
    \norm{VY}_{4q}=&(\E\tr((Y^TV^TVY)^{2q}))^{\frac{1}{4q}}
\\=&\norm{Y^TV^TVY}_{2q}^{1/2}
\\ \le & \norm{\text{diagonal terms}}^{1/2}_{2q}+2\norm{Y_1^TV^TVY_2}_{2q}^{1/2}
\\ \le & \norm{\text{diagonal terms}}^{1/2}_{2q}+2\norm{\Phi_k(Y_1^TV^TV)}_{L_{2q}}^{1/2}
\end{align*}

After estimating the diagonal terms and simplifying $\norm{\Phi_k(Y_1^TV^TV)}_{L_{2q}}$, the last line will become a function of $V$, and we denote this function by $\Phi_{k+1}(V)$.

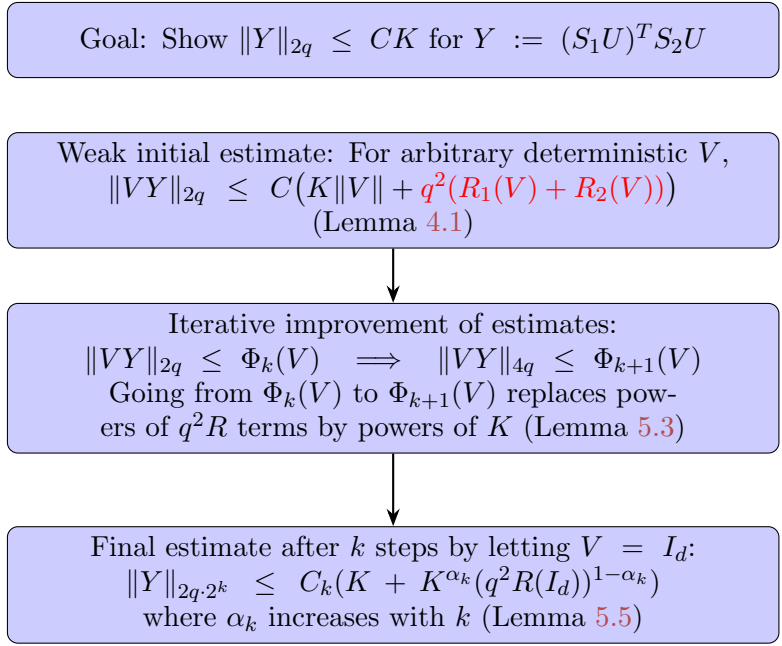
\begin{figure}[!ht]
\centering
\begin{tikzpicture}
\node (box1) [roundedbox] {Goal: Show $\norm{Y}_{2q} \le CK$ for $Y:= (S_1U)^TS_2U$};
\node (box2) [roundedbox, below of=box1, node distance=2cm] {
Weak initial estimate: For arbitrary deterministic $V$, \\
$\norm{VY}_{2q} \le C\paren*{ K\norm{V} + \textcolor{red}{q^2(R_1(V)+R_2(V))} }$ \\
(Lemma \ref{lem:initialuniv})};
\node (box3) [roundedbox, below of=box2, node distance=2.5cm] {Iterative improvement of estimates: \\
$\norm{VY}_{2q} \le \Phi_k(V) \implies \norm{VY}_{4q} \le \Phi_{k+1}(V)$ \\
Going from $\Phi_{k}(V)$ to $\Phi_{k+1}(V)$ replaces powers of $q^2 R$ terms by powers of $K$
(Lemma \ref{lem:iterdec})};
\node (box4) [roundedbox, below of=box3, node distance=2.75cm] {Final estimate after $k$ steps by letting $V=I_d$:\\
$\norm{Y}_{2q \cdot 2^k} \le C_k(K+K^{\alpha_k}(q^{2}R(I_d))^{1-\alpha_k})$ \\
where $\alpha_k$ increases with $k$
(Lemma \ref{lem:univseq})
};

\draw [arrow] (box2) -- (box3);
\draw [arrow] (box3) -- (box4);

\end{tikzpicture}
\caption{Overview of the iterative decoupling argument for OSNAP.}\label{fig:overview}
\end{figure}

By using this method, we can iteratively improve the moment estimate for $VY$. And in the final estimate, we will plug in $V=I_d$ to obtain an estimate for the moment of $Y=X_1^TX_2$, which is what we need for the analysis of OSE properties.

To illustrate our argument, we show details of the dynamics in this iterative process in the first two steps. In step 0, we apply Proposition \ref{prop:bvh} in two steps as described in Section \ref{subsubsec:itergen} and obtain an upper bound
\begin{align*}
    \norm{VY}_{2q}=&\norm{VX_1^TX_2}_{2q}
   \\ \le& C(\norm{VG_1^TG_2}_{2q} + q^{2}R(V))\\\le& C(\sqrt{pmpd}\norm{V} + q^{2}R(V))
   \\\le& C(K\norm{V} + q^{2}R(V))
\end{align*}
where $R(V)$ is a function of $V$ ($R$ is motivated by $\tilde R$ but has different meaning; see Lemma \ref{lem:initialuniv} for details the precise definition of $R$). Plugging in $V=I_d$, we have
\begin{align*}
    \norm{Y}_{2q}=&\norm{X_1^TX_2}_{2q}
   \\ \le& C(\norm{G_1^TG_2}_{2q} + q^{2}R(I_d))\\\le& C(\sqrt{pmpd} + q^{2}R(I_d))
   \\\le& C(K + q^{2}R(I_d))
\end{align*}
where we can bound $R(I_d) \approx (pm)^{1/q}\sqrt{pd+q}$ (see Lemma \ref{lem:R1R2} for details).

Right at this stage, we are able to obtain the first of our family of results, Corollary \ref{cor:osebasic}. Recall from Section \ref{subsubsec:momest} that we want $\norm{Y}_{2q}$ to be $ O(pm\varepsilon)$. It suffices to have $m \ge c_1d/\varepsilon^2$ and $pm \ge q/\varepsilon$ for $K$ to be $O(pm\varepsilon)$. For the second term, $q^2(pm)^{1/q}\sqrt{pd+q}$ to be $O(pm\varepsilon)$, we need $pm \ge (q^{5/2}/\varepsilon)+q^4$. This bound is similar to \cite[Theorem 7]{chenakkod2025optimal}, with slightly worse powers of $\log$ in the sparsity but much easier to obtain.

Next, in step 1, following the framework previously described, we have
\begin{align*}
\norm{VY}_{4q}
\le & (\norm{\text{diagonal terms 1}}_{2q}+4\norm{Y_1^TV^TVY_2}_{2q})^{1/2}&
\end{align*}
Then we set $W=Y_1^TV^TV$ and apply the previous moment estimate to $\norm{WY_2}_{2q}$ when conditioning on $W$ as we described before. We have
\begin{align*}
\norm{Y_1^TV^TVY_2}_{2q}=&\norm{\norm{WY_2|W}_{2q}}_{L_{2q}}
\\=&\norm{C(K\norm{W} + q^{2}R(W))}_{L_{2q}}
\\=&\norm{C(K\norm{Y_1^TV^TV} + q^{2}R(Y_1^TV^TV))}_{L_{2q}}
\\ \le &C \left( K\norm{Y_1}_{2q}\norm{V}^2+ q^{2}\norm{R(Y_1^TV^TV))}_{L_{2q}}\right)
\end{align*}
where we recall the definition $\norm{WY_2|W}_{2q}=(\E[(\tr|WY_2|^{2q})|W])^{\frac{1}{2q}}$ and use independence between $W$ and $Y_2$.

The key reason that each time the moment estimate can be improved is because we can ``replace" $Y_1$ in $\norm{R(Y_1^TV^TV))}_{L_{2q}}$ by the desired factor 
$K$. More precisely, (see Lemma \ref{lem:Rsimplify}) we have
\begin{align*}
    \norm{R(Y_1^TV^TV)}_{L_{2q}} \le K \norm{V} R(V)
\end{align*}
This is possible because of the special properties of the OSNAP model, e.g., it can be written as sum of independent rank 1 matrices, and we have strong bound for $\norm{\norm{Yu}_2}_{L_{2q}(\Pb)}=\norm{\norm{(S_1 U)^T(S_2 U)u}_2}_{L_{2q}(\Pb)}$ for an arbitrary fixed vector $u$ (see Section \ref{sec:vecmombd}). 

Therefore, we have
\begin{align*}
&\norm{VY}_{4q}\\\le & (\norm{\text{diagonal terms 1}}_{2q}+4\norm{Y_1^TV^TVY_2}_{2q})^{1/2}
\\\le & (\norm{\text{diagonal terms 1}}_{2q}+C \left( K\norm{Y_1}_{2q}\norm{V}^2+ q^{2}K \norm{V} R(V)\right)^{1/2}
\\\le & (\norm{\text{diagonal terms 1}}_{2q}+C \left( K^{1/2}\norm{Y_1}_{2q}^{1/2}\norm{V}+ K^{1/2}(q^{2} \norm{V} R(V))^{1/2}\right)
\end{align*}

In addition, we can bound
\begin{align*}
    \norm{\text{diagonal terms 1}}_{2q}^{1/2} \le (K+\sqrt{K}\sqrt{\norm{Y}_{2q}}) \norm{V}
\end{align*}
by using standard tools such as Matrix Rosenthal (see Lemma \ref{lem:decoupfirstlayer} and Lemma \ref{lem:decoupsecondlayer} for details).

Plugging in $V=I_d$, and simplifying,

\begin{align*}
\norm{Y}_{4q} \le 
C\left( K+K^{1/2}(q^{2}  R(V))^{1/2} \right)
\end{align*}
From the last line, we see that after first time decoupling, we can replace square root of the bad factor $q^{2}R(I_d)$ by square root of the good factor $K$.

The calculations of the later steps become more complicated and the formal inductive argument is shown in the proof of Lemma \ref{lem:iterdec}. In general, for large $k$, after $k$ iterative decoupling steps, our moment bound becomes
\begin{align*}
    \norm{Y}_{2q \cdot 2^k} \le C_k(K+K^{\frac{2k-1}{2k+1}}(q^{2}R(I_d))^{\frac{2}{2k+1}})
\end{align*}
See Lemma \ref{lem:univseq} for details. Note that this formula gives a weaker result than we have seen for $k=1$ because in the formal iterative decoupling argument in Lemma \ref{lem:iterdec}, we introduce some artificial extra terms in the moment bound $\Phi_k$ when $k$ is small to  to keep a convenient form of $\Phi_k$ for all $k$, but these minor modifications do not affect the long run behavior.

\subsubsection{Obtaining the Subspace Embedding Guarantee.}
\label{subsubsec:whtnotoptimaldimension}

Recall that to get the OSE property, it suffices to require
\begin{align*}
    C\norm{X_1^TX_2}_{2q} \le pm \varepsilon
\end{align*}
where $q=\log(d)$ and the constant $C$ comes from the initial decoupling step.

Since our iterative decoupling attains the bound
\begin{align*}
    \norm{X_1^TX_2}_{2q} \le \norm{Y}_{2q \cdot 2^k} \le C_k(K+K^{\frac{2k-1}{2k+1}}(q^{2}R(I_d))^{\frac{2}{2k+1}})
\end{align*}
it suffices to require
\begin{equation}\label{eq:momenttoose}
    C'_k(K+K^{\frac{2k-1}{2k+1}}(q^{2}R(I_d))^{\frac{2}{2k+1}}) \le pm \varepsilon
\end{equation}

Since $K \approx \sqrt{pmpd+q^2}$, roughly speaking we need
\begin{align*}    C'_k\Big(\sqrt{pmpd+q^2}+\big(\sqrt{pmpd+q^2}\big)^{\frac{2k-1}{2k+1}}\big(q^{2}R(I_d)\big)^{\frac{2}{2k+1}}\Big) \le pm \varepsilon.
\end{align*}

When $pmpd\ge q^2$, this requirement becomes
\begin{align*}
    C''_k\Big(\sqrt{pmpd}+\big(\sqrt{pmpd}\big)^{\frac{2k-1}{2k+1}}\big(q^{2}R(I_d)\big)^{\frac{2}{2k+1}}\Big) \le pm \varepsilon
\end{align*}

Under optimal embedding dimension, $m=\theta\frac{d}{\varepsilon^2}$ with $\theta=O(1)$, we have $pm \varepsilon=C'''\sqrt{pmpd}$, and so \eqref{eq:momenttoose} can only be satisfied when $C''_k=O(1)$ and
\begin{align*}
    pm\varepsilon = C'''\sqrt{pmpd} \ge q^{2}R(I_d)
\end{align*}
leading to the same sparsity guarantee that we would attain using the initial moment estimate. This is why we need to trade off the embedding dimension in order to improve the sparsity via iterative decoupling.

Formally, by Thoerem \ref{thm:osedecoup}, when $m=\theta\cdot(d+\log(d/\delta))/\varepsilon^2$, the requirement (\ref{eq:momenttoose}) reduces to 
\begin{align*}
    {pm } \ge {\exp(\exp(c_{\ref{thm:osedecoup}.3}k))}\left(\frac{1}{\varepsilon^{1+\frac{1}{\log(d/\delta)}}}\Big(\theta \log(d/\delta)+\frac{\log(d/\delta)^{5/2}}{\theta ^{k/2-1/4}}\Big)+\frac{ \log(d/\delta)^{4}}{\theta ^{k+1/2}}\right)
\end{align*}

In order to remove the extra $\log(d/\delta)$ factors, we need $\theta ^{k/2-1/4}$ to cancel out with $\log(d/\delta)^{5/2}$ and $\theta ^{k+1/2}$ to cancel out with $ \log(d/\delta)^{4}$, and this works only when $\theta$ grows with $\log(d/\delta)$ (but it may grow very slowly).

\subsection{Outline of the Paper.}
Section \ref{sec:prelim} details the basic facts from the existing literature that we will use in this paper in addition to proving some fundamental estimates for our main proof. Section \ref{sec:initialmom} establishes the initial weak moment estimate that is later improved by iterative decoupling. Section \ref{sec:iterdecoup} proves a sequence of moment estimates by iterative decoupling. Section \ref{sec:finalres} establishes the implications of the moment estimates for lower bounds on sparsity. Sections \ref{sec:diagterms} and \ref{sec:vecmombd} contain some of the longer proofs of the intermediate steps required in our main argument. In appendix \ref{sec:decoup}, we formally prove the decoupling results used throughout the paper.

\section{Preliminaries.} \label{sec:prelim}

\subsection{Notation.} \label{subsec:notation}
The following notation and terminology will be used in the paper. The notation $[n]$ is used for the set $\{1,2,...,n\}$ and the notation $\operatorname{P}([n])$ denotes the set of all partitions of $[n]$. Also, for two integers $a$ and $b$ with $a \le b$, we use the notation $[a:b]$ for the set $\{k \in \Z:a \le k \le b\}$. For $x \in \R$, we use the notation $\lfloor x \rfloor$ to denote the greatest integer less than or equal to $x$ and $\lceil x \rceil$ to denote the least integer greater than or equal to  $x$. In $\R^n$ (or $\R^m$ or $\R^d$), the $l$th coordinate vector is denoted by $e_l$. 

We use the notation $\mathbb{P}$ for the standard probability measure, and the notation $\mathbb{E}$ for the expectation with respect to this standard probability measure. To simplify the notation, we follow the convention from \cite{brailovskaya2022universality} and use the notation $\E [X]^{\alpha}$ for $(\E(X))^{\alpha}$, i.e., when a functional is followed by square brackets, it is applied before any other operations. The covariance of two random variables $X$ and $Y$ is denoted by $\cov(X,Y)$. The standard $L_q$ norm of a random variable $\xi$ is denoted by $\norm{\xi}_{L_q(\Pb)}$, for $1 \le q \le \infty$. When there is no possibility of confusion, we will use the short notation $\norm{\xi}_{q}$ for the $L_q$ norm of $\xi$. For convenience of iterated integration, we introduce the following notations. Let $X,Y$ be two random elements taking values in measurable spaces $\mathscr{S}_1$ and $\mathscr{S}_2$. Let $\phi:\mathscr{S}_1 \times \mathscr{S}_2 \to \R$ be a measurable function. Assume that $\phi(X,Y)$ is integrable. Define the function $\psi_2:\mathscr{S}_2 \to \R$ such that $\psi_2(y)=\E\phi(X,y)$. Then the notation $\E_X(\phi(X,Y))$ stands for $\psi_2(Y)$. Similarly, we define the function $\psi_1:\mathscr{S}_1 \to \R$ such that $\psi_1(x)=\E\phi(x,Y)$ and the notation $\E_Y(\phi(X,Y))$ stands for the random variable $\psi_1(X)$.

All matrices considered in this paper are real valued and the space of $m \times n$ matrices with real valued entries is denoted by $M_{m \times n}(\mathbb{R})$. Also, for a matrix $X \in M_{d \times d}$, the notation $\Tr (X)$ denotes the trace of the matrix $X$, and $\tr (X) = \frac{1}{d} \Tr (X)$ denotes the normalized trace. The $d \times d$ identity matrix is denoted by $I_d$. We write the operator norm of a matrix $X$ as $\norm{X}$, and it is also denoted by $\norm{X}_{op}$ in some places where other norms appear for clarity. The spectrum of a matrix $X$ is denoted by $\spec(X)$. $\norm{X}_{S_q}$ denotes the normalised Schatten norm, $\norm{X}_{S_{q}}=\big(\tr |X|^{q}\big)^{\frac{1}{q}}$, where $\abs{X} = (X^TX)^{1/2}$. For a $d \times d$ matrix $M$, following \cite{brailovskaya2022universality}, we define the absolute value $|M|=\sqrt{M^TM}$ and normalized trace $\tr(M)=\frac {1}{d}\Tr(M)$. Let $L_q(S_q^d)$ be the normed vector space of $d\times d$ random matrices $M$ with norm
\begin{align*}
	\norm{M}_q = \begin{cases}
	\big(\E[\tr |M|^q]\big)^{\frac{1}{q}} & \text{if }1\le q<\infty\\
	\| \norm{M}_{op} \|_\infty & \text{if }q=\infty
	\end{cases}
\end{align*}
More precisely, the space $L_q(S_q^d)$ consists of random matrices $M$ with $\norm{M}_q$ well defined (which means $\big(\E[\tr |M|^q]\big)<\infty$ for $1\le q<\infty$ and $\| \norm{M}_{op} \|_\infty$ for $q=\infty$). For any random matrix $A$ and any random element $B$, we define the conditional $L_q(S_q^d)$-norm $\norm{A|B}_{q}=(\E[(\tr|A|^{q})|B])^{\frac{1}{q}}$ conditioning on $B$ when the conditional expectation $\E[(\tr|A|^{q})|B]$ exists. Although this notation is valid for general random matrices $A$ and $B$, in this paper, we only need the special case where the random matrix $A=f(B,C)$ is a measurable function of two independent random elements $B$ and $C$. In this case, we have $\norm{f(B,C)|B}_{q}=(\E[(\tr|f(B,C)|^{q})|B])^{\frac{1}{q}}=(\E_{C}[(\tr |f(B,C)|^{q})])^{\frac{1}{q}}$.

Throughout the paper, the symbols $c_1, c_2, ...$, and $Const, Const', ...$ denote absolute constants. 

\bigskip 

\subsection{Measure Concentration.} \label{subsec:measconc}

We first list the standard measure concentration inequalities that we will need.

\begin{lemma}[Theorem 15.10 from \cite{boucheron2013concentration}]\label{lem:rosenthal1}
Let $\xi=\sum_{j=1}^n\zeta_i$ where $\zeta_1,...,\zeta_n$ are independent and nonnegative random variables. Then the exists a constant $c_{\ref{lem:rosenthal1}}$ such that, for all integers $q \ge 1$, we have
\begin{align*}
    (\E (\xi^q))^{1/q} \le 2 \E( \xi)+ c_{\ref{lem:rosenthal1}}q (\E ((\max \limits_{j=1,...,n} \zeta_j)^q))^{1/q}
\end{align*}
\end{lemma}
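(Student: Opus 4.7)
The plan is to derive this Rosenthal/Bernstein-type moment bound from the upper-tail Efron--Stein moment inequality (Chapter 15 of \cite{boucheron2013concentration}), together with a self-bounding reduction that exploits non-negativity of the summands. The whole argument takes three short steps; only the second step relies on a non-trivial external input.

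\textbf{Step 1 (self-boundedness of the Efron--Stein statistic).} Let $M:=\max_{j}\zeta_j$ and, for each $i$, let $\zeta_i'$ be an independent copy of $\zeta_i$. Set $\xi_i:=\xi-\zeta_i+\zeta_i'$ and define the upper-tail Efron--Stein quantity
\[
V^+ := \sum_{i=1}^{n}\E'_{i}\!\bigl[(\xi-\xi_i)_{+}^{2}\bigr],
\]
where $\E'_{i}$ integrates only over $\zeta_i'$. Because $\zeta_i,\zeta_i'\ge 0$, the contribution of the event $\{\zeta_i'>\zeta_i\}$ to the positive part is zero, while on $\{\zeta_i'\le\zeta_i\}$ we have $(\xi-\xi_i)_{+}=\zeta_i-\zeta_i'\le\zeta_i$. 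Summing over $i$ and using $\zeta_i^{2}\le M\zeta_i$ yields the pointwise inequality
\[
V^{+} \le \sum_{i}\zeta_i^{2} \le M\cdot\xi.
\]

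\textbf{Step 2 (Efron--Stein moment inequality).} The upper-tail Efron--Stein moment bound (Theorem 15.5 of \cite{boucheron2013concentration}) states that for any integer $q\ge 2$ there is an absolute constant $C>0$ such that
\[
\bigl\lVert(\xi-\E\xi)_{+}\bigr\rVert_{q} \;\le\; \sqrt{C\,q\,\lVert V^{+}\rVert_{q/2}}.
\]
Plugging in the bound from Step 1 and applying H\"older (with $2/q=1/q+1/q$),
\[
\lVert V^{+}\rVert_{q/2} \;\le\; \lVert M\cdot\xi\rVert_{q/2} \;\le\; \lVert M\rVert_{q}\,\lVert\xi\rVert_{q}.
\]

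\textbf{Step 3 (closing the loop).} Write $a:=\lVert\xi\rVert_{q}$. Combining Steps 1--2 with the triangle inequality $\lVert\xi\rVert_{q}\le\E\xi+\lVert(\xi-\E\xi)_{+}\rVert_{q}$ gives
\[
a \;\le\; \E\xi+\sqrt{C\,q\,\lVert M\rVert_{q}\,a}.
\]
By AM--GM, $\sqrt{(Cq\lVert M\rVert_{q})\cdot a}\le \tfrac{1}{2}(Cq\lVert M\rVert_{q})+\tfrac{1}{2}a$, so
\[
a \;\le\; \E\xi+\tfrac{1}{2}Cq\lVert M\rVert_{q}+\tfrac{1}{2}a,
\]
and rearranging yields $a\le 2\E\xi+Cq\lVert M\rVert_{q}$, which is exactly the desired inequality with $c_{\ref{lem:rosenthal1}}=C$. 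The case $q=1$ is immediate by Jensen. (For odd $q\ge 3$, one either works with the real-valued version of Step~2 or interpolates between the integer moments via H\"older.)

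\textbf{Main obstacle.} The non-trivial ingredient is the upper-tail Efron--Stein moment inequality invoked in Step 2, whose proof uses the modified logarithmic Sobolev inequality for sums of independent random variables together with a Herbst-style iteration; this is precisely the content of Chapter 15 of \cite{boucheron2013concentration} and is what must be accepted as a black box (or re-derived) in order to keep the above argument short. Everything else is elementary: self-boundedness $V^{+}\le M\xi$ encodes non-negativity, H\"older reduces $\lVert M\xi\rVert_{q/2}$ to a product of marginal $L_{q}$ norms, and the final AM--GM step produces the sharp constant $2$ in front of $\E\xi$ and the linear-in-$q$ dependence on $\lVert M\rVert_{q}$.
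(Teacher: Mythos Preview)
The paper does not prove this lemma; it is simply quoted as Theorem~15.10 from \cite{boucheron2013concentration}. Your argument is correct and is essentially the derivation given in that reference: the key point is that for nonnegative summands the Efron--Stein statistic satisfies the self-bounding inequality $V^{+}\le M\xi$, after which the general moment inequality (Theorem~15.5 there) combined with H\"older and the quadratic/AM--GM closing step gives the result with the constant $2$ in front of $\E\xi$.
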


\begin{lemma}[Lemma 6.2.2. in \cite{vershynin2018high}]\label{lem:6.2.2}
Let $X, X' \sim \mathcal{N}(0, I_n)$ be independent and let $A = (a_{ij})$ be an $n \times n$ matrix. Then the exist constants $c_{\ref{lem:6.2.2}.1}$ and $c_{\ref{lem:6.2.2}.2}$ such that
\begin{align*}
\mathbb{E}\bigl[\exp\bigl(\lambda\,X^T A\,X'\bigr)\bigr]
&\;\le\; \exp\bigl(c_{\ref{lem:6.2.2}.1}^2\,\lambda^2\,\|A\|_F^2\bigr)
\end{align*}
for all $\lambda$ satisfying 
\begin{align*}
|\lambda| \;\le\; \frac{c_{\ref{lem:6.2.2}.2}}{\|A\|}.
\end{align*}
\end{lemma}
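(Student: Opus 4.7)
The plan is to condition on $X'$ first, exploit Gaussianity of $X$ to reduce to the MGF of a weighted chi-squared, and then estimate that via a careful Taylor bound on $-\log(1-x)$.

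First, conditioning on $X'$, the random variable $X^TAX' = \langle X, AX'\rangle$ is, given $X'$, a mean-zero Gaussian with variance $\|AX'\|^2$. Hence the standard Gaussian MGF gives
\begin{align*}
\mathbb{E}\bigl[\exp(\lambda X^TAX')\,\big|\,X'\bigr] \;=\; \exp\!\Bigl(\tfrac{\lambda^2}{2}\|AX'\|^2\Bigr).
\end{align*}
Taking expectation over $X'$ reduces everything to bounding $\mathbb{E}\exp\!\bigl(\tfrac{\lambda^2}{2}X'^T(A^TA)X'\bigr)$, a quadratic form in one Gaussian vector.

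Next, I would diagonalize $A^TA$ via its spectral decomposition $A^TA=V\Sigma^2 V^T$ with $\Sigma^2=\mathrm{diag}(\sigma_1^2,\dots,\sigma_n^2)$ where $\sigma_i$ are the singular values of $A$. Since $V$ is orthogonal and $X'\sim\mathcal N(0,I_n)$, the vector $\tilde X:=V^TX'$ is again $\mathcal N(0,I_n)$, so $X'^T(A^TA)X'=\sum_i\sigma_i^2\tilde X_i^2$. By independence of coordinates and the chi-squared MGF $\mathbb{E}\exp(t\tilde X_i^2)=(1-2t)^{-1/2}$ valid for $t<1/2$, we obtain
\begin{align*}
\mathbb{E}\exp(\lambda X^TAX') \;=\; \prod_{i=1}^n\bigl(1-\lambda^2\sigma_i^2\bigr)^{-1/2},
\end{align*}
which is finite exactly when $\lambda^2\sigma_i^2<1$ for every $i$, i.e.\ when $|\lambda|\|A\|<1$.

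To convert this product into the desired exponential of $\|A\|_F^2$, I would restrict to $|\lambda|\le c/\|A\|$ for a small constant $c$ (say $c\le 1/\sqrt{2}$) so that $\lambda^2\sigma_i^2\le 1/2$ for all $i$, and then use the elementary bound $-\log(1-x)\le 2x$ for $x\in[0,1/2]$. Applying this coordinatewise with $x=\lambda^2\sigma_i^2$ yields
\begin{align*}
-\tfrac{1}{2}\sum_{i=1}^n\log\!\bigl(1-\lambda^2\sigma_i^2\bigr) \;\le\; \sum_{i=1}^n\lambda^2\sigma_i^2 \;=\; \lambda^2\|A\|_F^2,
\end{align*}
which gives the claimed bound with absolute constants $C,c$. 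The only real obstacle is being careful about the range of $\lambda$: one needs to require $|\lambda|\|A\|\le c$ strictly less than $1$ (not just $<1$) so that the logarithm bound holds with a constant independent of $\lambda$, which is exactly what the hypothesis $|\lambda|\le c/\|A\|$ provides. Everything else is a standard computation, and the value of $C$ can be tracked explicitly (one obtains $C=1$ with the optimized choice $c=1/\sqrt 2$).
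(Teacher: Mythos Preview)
Your proof is correct and follows the standard approach (condition on $X'$, reduce to a chi-squared MGF, diagonalize, and bound $-\log(1-x)$). The paper does not actually prove this lemma; it simply cites it from Vershynin's book, so there is no in-paper proof to compare against.
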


\begin{lemma}[Lemma 6.2.3. in \cite{vershynin2018high}]\label{lem:6.2.3}
Consider independent mean-zero sub-gaussian random vectors $X, X' \in \mathbb{R}^n$ 
with $\|X\|_{\psi_2} \le K$ and $\|X'\|_{\psi_2} \le K$. 
Consider also independent random vectors $g, g' \sim \mathcal{N}(0,I_n)$. 
Let $A$ be an $n \times n$ matrix. 
Then there exists a constant $c_{\ref{lem:6.2.3}}$ such that
\begin{align*}
\mathbb{E}\bigl[\exp\bigl(\lambda\,X^T A\,X'\bigr)\bigr]
\;\le\;
\mathbb{E}\bigl[\exp\bigl(c_{\ref{lem:6.2.3}}\,K^2\,\lambda\,g^T A\,g'\bigr)\bigr]
\end{align*}
for any $\lambda \in \mathbb{R}$.
\end{lemma}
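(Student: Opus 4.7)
The plan is a two-stage sub-Gaussian-to-Gaussian comparison: first replace $X$ by a Gaussian conditional on $X'$, then replace $X'$ by a Gaussian conditional on the resulting Gaussian $g$. The engine is the identity that the sub-Gaussian MGF upper bound has exactly the functional form of a Gaussian bilinear MGF, which lets us rewrite an upper bound as another expectation.

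\emph{Step 1 (condition on $X'$, bound MGF of $X$).} Conditioning on $X'$, the random variable $\lambda X^T A X' = \langle \lambda A X', X\rangle$ is a linear functional of the mean-zero sub-Gaussian vector $X$ with $\|X\|_{\psi_2}\le K$. By the standard sub-Gaussian MGF bound, there exists an absolute constant $c_1$ such that
\begin{equation*}
\mathbb{E}_X\bigl[\exp(\lambda X^T A X')\,\big|\,X'\bigr] \;\le\; \exp\bigl(c_1 K^2 \lambda^2 \|AX'\|_2^2\bigr).
\end{equation*}

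\emph{Step 2 (rewrite the bound as a Gaussian MGF).} For $g\sim \mathcal N(0,I_n)$ independent of $X'$, the Gaussian MGF formula gives
\begin{equation*}
\mathbb{E}_g\bigl[\exp(\mu\, g^T A X')\,\big|\,X'\bigr] \;=\; \exp\bigl(\tfrac12 \mu^2 \|AX'\|_2^2\bigr).
\end{equation*}
Choosing $\mu = c_2 K \lambda$ with $c_2=\sqrt{2 c_1}$ matches the two right-hand sides exactly, so
\begin{equation*}
\mathbb{E}_X\bigl[\exp(\lambda X^T A X')\,\big|\,X'\bigr] \;\le\; \mathbb{E}_g\bigl[\exp(c_2 K \lambda\, g^T A X')\,\big|\,X'\bigr].
\end{equation*}
Taking expectation over $X'$ and using independence of $g$ from $(X,X')$ yields
\begin{equation*}
\mathbb{E}\bigl[\exp(\lambda X^T A X')\bigr] \;\le\; \mathbb{E}\bigl[\exp(c_2 K \lambda\, g^T A X')\bigr].
\end{equation*}

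\emph{Step 3 (repeat for $X'$, conditioning on $g$).} Now the bilinear form on the right is $\langle c_2 K \lambda A^T g,\, X'\rangle$, which conditional on $g$ is a linear functional of the sub-Gaussian vector $X'$ with $\|X'\|_{\psi_2}\le K$. The same sub-Gaussian MGF bound gives
\begin{equation*}
\mathbb{E}_{X'}\bigl[\exp(c_2 K \lambda\, g^T A X')\,\big|\,g\bigr] \;\le\; \exp\bigl(c_1 K^2 (c_2 K \lambda)^2 \|A^T g\|_2^2\bigr) \;=\; \exp\bigl(c_1 c_2^2\, K^4 \lambda^2 \|A^T g\|_2^2\bigr).
\end{equation*}
For independent $g'\sim \mathcal N(0,I_n)$, the Gaussian MGF formula again gives
\begin{equation*}
\mathbb{E}_{g'}\bigl[\exp(\nu\, g^T A g')\,\big|\,g\bigr] \;=\; \exp\bigl(\tfrac12 \nu^2 \|A^T g\|_2^2\bigr),
\end{equation*}
so setting $\nu = c_2^2\, K^2 \lambda = 2 c_1 K^2 \lambda$ matches the two expressions and gives
\begin{equation*}
\mathbb{E}_{X'}\bigl[\exp(c_2 K \lambda\, g^T A X')\,\big|\,g\bigr] \;\le\; \mathbb{E}_{g'}\bigl[\exp(C K^2 \lambda\, g^T A g')\,\big|\,g\bigr]
\end{equation*}
with $C=2c_1$. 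Taking expectation over $g$ and chaining with Step 2 completes the proof.

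\emph{Anticipated difficulty.} There is no deep obstacle; the argument is essentially two applications of the scalar sub-Gaussian MGF estimate glued together by the observation that the Gaussian MGF of a linear form is precisely $\exp(\tfrac12 \text{variance})$, which lets one ``undo'' the upper bound as a genuine Gaussian expectation. The only mild care needed is to track that the constant produced in the second stage, $\nu = c_2^2 K^2 \lambda$, picks up the full $K^2$ factor (not just $K$), which is exactly what the statement requires. Because we never invoke Lemma~\ref{lem:6.2.2}'s truncated MGF bound but rather the genuine sub-Gaussian MGF, no restriction on $\lambda$ is needed, consistent with the ``for any $\lambda\in\mathbb{R}$'' conclusion.
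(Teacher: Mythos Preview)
The paper does not supply its own proof of this lemma; it is simply quoted from Vershynin's textbook as a black box. Your argument is correct and is precisely the standard two-stage Gaussian replacement (condition on one factor, use the sub-Gaussian MGF bound, then recognize the resulting $\exp(\text{const}\cdot\|v\|_2^2)$ as a Gaussian MGF), which is also how Vershynin proves it.
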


\begin{lemma}
[Theorem 2.3 in \cite{boucheron2013concentration}]\label{thm:subgamma-moments}
Let $X$ be a centered random variable. Suppose there exist constants $v>0$ and $c\ge 0$ such that 
\begin{align*}
\mathbb{P}\bigl(\{X > \sqrt{2vt} +ct\} \cup \{-X > \sqrt{2vt} + ct\bigr\}) 
\le e^{-t}
\quad\text{for all}\; t>0.
\end{align*}
Then for every integer $q \ge 1$, 
\begin{align*}
\mathbb{E}\bigl[X^{2q}\bigr] 
\le
q!(8v)^{q} +(2q)!\bigl(4c^{2}\bigr)^{q}.
\end{align*}
\end{lemma}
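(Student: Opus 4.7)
The plan is to bound $\mathbb{E}[X^{2q}]=\mathbb{E}[|X|^{2q}]$ (since $2q$ is even) by exploiting the sub-gamma tail directly via a stochastic-domination argument, which avoids all messy changes of variables. The hypothesis says exactly that $\mathbb{P}(|X|>\sqrt{2vt}+ct)\le e^{-t}$ for every $t>0$, so the whole game is to compare $|X|$ with the canonical sub-gamma variable
\[
Y := \sqrt{2vE}+cE,\qquad E\sim\mathrm{Exp}(1).
\]

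First I would verify stochastic domination $|X|\preceq Y$. The map $t\mapsto s(t)=\sqrt{2vt}+ct$ is continuous and strictly increasing from $0$ to $\infty$ (assuming $v>0$ or $c>0$; if both vanish the claim is trivial), hence invertible, and by construction $\mathbb{P}(Y>s(t))=\mathbb{P}(E>t)=e^{-t}$, matching the given tail bound on $|X|$ at every level. Hence $\mathbb{P}(|X|>s)\le \mathbb{P}(Y>s)$ for all $s\ge 0$, and applying this to the increasing function $g(x)=x^{2q}$ yields $\mathbb{E}[X^{2q}]\le\mathbb{E}[Y^{2q}]$.

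Next I would expand $Y^{2q}$ using the convexity inequality $(a+b)^{2q}\le 2^{2q-1}(a^{2q}+b^{2q})$, giving
\[
\mathbb{E}[Y^{2q}]\le 2^{2q-1}\bigl((2v)^q\,\mathbb{E}[E^q] + c^{2q}\,\mathbb{E}[E^{2q}]\bigr)
= 2^{2q-1}(2v)^q q! + 2^{2q-1}c^{2q}(2q)!,
\]
since $\mathbb{E}[E^k]=k!$. The factor bookkeeping is then $2^{2q-1}(2v)^q=\tfrac{1}{2}(8v)^q$ and $2^{2q-1}c^{2q}=\tfrac{1}{2}(4c^2)^q$, so the right-hand side is $\tfrac{1}{2}q!(8v)^q+\tfrac{1}{2}(2q)!(4c^2)^q\le q!(8v)^q+(2q)!(4c^2)^q$, which is the stated bound.

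The proof has no real obstacle beyond constant tracking; the only conceptual point is recognizing that the two-sided sub-gamma tail hypothesis is exactly a one-sided tail bound on $|X|$ matching a standard sub-gamma variable, so no delicate optimization over the split between the Gaussian ($\sqrt{2vt}$) and exponential ($ct$) regimes is needed. As an alternative, one can prove the same inequality by writing $\mathbb{E}[|X|^{2q}]=\int_0^\infty 2qs^{2q-1}\mathbb{P}(|X|>s)\,ds$ and splitting the integral at the point where $\sqrt{2vt}=ct$, but the domination approach above gives the sharpest constants with the least computation.
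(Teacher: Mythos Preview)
Your proof is correct. The paper itself does not prove this lemma; it simply cites it as Theorem~2.3 in Boucheron--Lugosi--Massart, so there is no in-paper argument to compare against. Your stochastic-domination argument (comparing $|X|$ to $Y=\sqrt{2vE}+cE$ with $E\sim\mathrm{Exp}(1)$, then using $(a+b)^{2q}\le 2^{2q-1}(a^{2q}+b^{2q})$ and $\mathbb{E}[E^k]=k!$) is clean, the constants are tracked correctly, and it in fact yields the stated bound with an extra factor of $\tfrac12$ to spare.
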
 

\begin{lemma}\label{lem:subgaussian-chaos}
Let $X, X' \in \mathbb{R}^n$ be independent, mean-zero sub-gaussian random vectors with subgaussian norm bounded by $1$, 
and let $A$ be an $n\times n$ matrix. 
Then there exist a constant $c_{\ref{lem:subgaussian-chaos}}$, such that for any integer $q \ge 1$, we have
\begin{align*}
\|\,X^T A X'\|_{L_q(P)} 
\le c_{\ref{lem:subgaussian-chaos}}(\sqrt{q}\|A\|_F +q\|A\|).
\end{align*}
\end{lemma}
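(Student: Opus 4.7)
The plan is to derive this as a standard Hanson–Wright type bound by chaining the three tools in Lemmas \ref{lem:6.2.2}, \ref{lem:6.2.3}, and Theorem \ref{thm:subgamma-moments}: pass from subgaussian chaos to Gaussian chaos at the level of the MGF, read off a subgamma tail, and convert the tail into a moment bound.

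First, I would combine Lemma \ref{lem:6.2.3} and Lemma \ref{lem:6.2.2} to show that $X^T A X'$ is subgamma. Indeed, by Lemma \ref{lem:6.2.3} there is an absolute constant $C_1$ so that $\E[\exp(\lambda X^T A X')] \le \E[\exp(C_1\lambda\, g^T A g')]$ for all $\lambda \in \R$, where $g,g'\sim \cN(0,I_n)$ are independent. Applying Lemma \ref{lem:6.2.2} with the parameter $C_1\lambda$ in place of $\lambda$, this is bounded by $\exp(C_2^2\lambda^2\|A\|_F^2)$ whenever $|\lambda|\le c/\|A\|$, for some absolute constants $C_2,c>0$.

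Next I would convert this MGF estimate into a subgamma tail. By a standard Chernoff/optimization argument (taking $\lambda=\min\{t/(2C_2^2\|A\|_F^2),\,c/\|A\|\}$), the bound on the moment generating function yields
\begin{equation*}
\Pb\bigl(X^TAX' > \sqrt{2vt}+ct'\bigr) + \Pb\bigl(-X^TAX' > \sqrt{2vt}+ct'\bigr) \le 2e^{-t}
\end{equation*}
for all $t>0$, with $v=O(\|A\|_F^2)$ and $c'=O(\|A\|)$. This puts $X^TAX'$ (which has mean zero by independence and mean-zero-ness of $X,X'$) into the framework of Theorem \ref{thm:subgamma-moments}.

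Then I would apply Theorem \ref{thm:subgamma-moments} to obtain, for every integer $q\ge 1$,
\begin{equation*}
\E\bigl[(X^T A X')^{2q}\bigr] \le q!(8v)^q + (2q)!(4c'^2)^q.
\end{equation*}
Taking the $(2q)$-th root and using Stirling's bounds $(q!)^{1/(2q)}\le C\sqrt q$ and $((2q)!)^{1/(2q)}\le Cq$ yields
\begin{equation*}
\|X^TAX'\|_{L_{2q}(\Pb)} \le C'\bigl(\sqrt q\,\|A\|_F + q\,\|A\|\bigr).
\end{equation*}
Finally, for the stated bound at exponent $q$ (rather than $2q$), I use monotonicity $\|\cdot\|_{L_q(\Pb)}\le \|\cdot\|_{L_{2q}(\Pb)}$, which only loses a harmless absolute factor after adjusting the constant. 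No step here is genuinely difficult; the mildest technical point is just ensuring the bookkeeping of absolute constants through the three lemmas, and checking that the truncation at $|\lambda|\le c/\|A\|$ in Lemma \ref{lem:6.2.2} produces exactly the linear-in-$t$ tail term that is responsible for the $q\|A\|$ contribution in the final moment bound.
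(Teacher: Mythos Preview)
Your proposal is correct and follows essentially the same route as the paper: combine Lemmas \ref{lem:6.2.3} and \ref{lem:6.2.2} to get the MGF bound, use the Chernoff method to extract a subgamma tail, and then invoke Lemma \ref{thm:subgamma-moments} to convert to moments. The only cosmetic differences are that the paper phrases the tail optimization slightly differently and applies Lemma \ref{thm:subgamma-moments} separately to the positive and negative parts of $X^TAX'$ before combining via the triangle inequality, whereas you apply it directly to the centered variable $X^TAX'$ (absorbing the harmless factor $2$ in the tail bound into the constants); your version is arguably cleaner.
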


\begin{proof}

By the Lemma \ref{lem:6.2.2} and \ref{lem:6.2.3}, there is a universal constant $c_1,c_2$ such that for any $|\lambda| \;\le\; \frac{c_1}{\|A\|}$, we have
\begin{align*}
\mathbb{E}\Bigl[\exp\bigl(\lambda\,X^T A\,X'\bigr)\Bigr]
\;\le\;
\exp\Bigl(c_2\,\lambda^2\,\|A\|_F^2\Bigr),
\end{align*}

By standard Chernoff method, for any $|\lambda| \;\le\; \frac{c_1}{\|A\|}$, we get
\begin{align*}
    \Pb(X^T AX'>t) \le \exp(-\lambda t/2+c_2\,\lambda^2\,\|A\|_F^2)
\end{align*}

Optimizing over $\lambda$, we have
\begin{align*}
    \Pb(X^T AX'>t) \le \exp(-c_3\min(\frac{t^2}{\|A\|_F^2},\frac{t}{\norm{A}}))
\end{align*}

Let $t=\sqrt{2 \|A\|_F^2 r} + \norm{A}r$.

Then we have
\begin{align*}
    \frac{t^2}{\|A\|_F^2} \ge \frac{2 \|A\|_F^2 r}{\|A\|_F^2}=2r
\end{align*}
and
\begin{align*}
    \frac{t}{\norm{A}} \ge \frac{\norm{A}r}{\|A\|}=r
\end{align*}

Therefore, we have
\begin{align*}
    \min(\frac{t^2}{\|A\|_F^2},\frac{t}{\norm{A}}) \ge \min(2r,r) \ge r
\end{align*}

Therefore, we have
\begin{align*}
\mathbb{P}\Bigl(|Z_+|>\sqrt{2 \|A\|_F^2 r} + \norm{A}r\Bigr)\le e^{-r}
\end{align*}
 where $Z_+=\max\{Z,0\}$.

Therefore, by Lemma \ref{thm:subgamma-moments}, we have
\begin{align*}
    \norm{Z_+}_{L_q(\Pb)} \le c_4(\sqrt{q}\|A\|_F +q\|A\|)
\end{align*}
for some constant $c_4$.

Similarly, we have
\begin{align*}
    \norm{Z_-}_{L_q(\Pb)} \le c_5(\sqrt{q}\|A\|_F +q\|A\|)
\end{align*}
for some constant $c_5$.

By triangle inequality, we have
\begin{align*}
    \norm{Z}_{L_q(\Pb)} \le \norm{Z_+}_{L_q(\Pb)}+ \norm{Z_-}_{L_q(\Pb)} \le c_6(\sqrt{q}\|A\|_F +q\|A\|)
\end{align*}
for some constant $c_6$.

\end{proof}

\bigskip

\subsection{Basic Facts of the $L_q(S_q^d)$ Space.}

We need the following H\"older inequality in $L_q(S_q^d)$ from \cite{brailovskaya2022universality}.

\begin{lemma}[Lemma 5.3. in \cite{brailovskaya2022universality}]
\label{lem:holdervanhandel}
Let $1\le 
\beta_1,\ldots,\beta_k\le\infty$ satisfy $\sum_{i=1}^k\frac{1}{\beta_i}=1$. Then
$$
	|\E[\tr Y_1\cdots Y_k]| \le
	\|Y_1\|_{\beta_1}\cdots \|Y_k\|_{\beta_k}
$$
for any $d\times d$ random matrices $Y_1,\ldots,Y_k$.
\end{lemma}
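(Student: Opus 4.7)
The plan is to reduce the claim to a two-step application of Hölder-type inequalities: first a deterministic noncommutative Hölder for the normalized Schatten norms, applied pointwise on the probability space, and then the classical scalar Hölder inequality in $L^p(\Pb)$. Observe that $\|M\|_{\beta_i}$ as defined in the excerpt is exactly $\big(\E[\|M\|_{S_{\beta_i}}^{\beta_i}]\big)^{1/\beta_i}$ when $\beta_i<\infty$ (with the appropriate essential-sup interpretation for $\beta_i=\infty$), so the target inequality has the clean structure of Hölder once one peels off the expectation and the trace in the correct order.

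First I would establish the pointwise (deterministic) inequality
\begin{equation*}
    |\tr(Y_1\cdots Y_k)|\le \|Y_1\|_{S_{\beta_1}}\cdots\|Y_k\|_{S_{\beta_k}}
\end{equation*}
for every realization of the $Y_i$'s. This is the classical noncommutative Hölder inequality for the unnormalized Schatten norms $\|\cdot\|_{\tilde S_q}=(\Tr|\cdot|^q)^{1/q}$, and it transfers to the normalized versions with no loss because the $d$-factors balance: rescaling gives $\|M\|_{S_q}=d^{-1/q}\|M\|_{\tilde S_q}$ and $\tr=d^{-1}\Tr$, so the product of $d^{1/\beta_i}$ factors combines with the $d^{-1}$ from $\tr$ into $d^{-1+\sum 1/\beta_i}=d^0=1$ under the assumption $\sum_i 1/\beta_i=1$.

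Next I would take expectations of the pointwise inequality and apply scalar Hölder's inequality to the product of real-valued random variables $\|Y_i\|_{S_{\beta_i}}$ with exponents $\beta_1,\ldots,\beta_k$:
\begin{equation*}
    |\E[\tr(Y_1\cdots Y_k)]|\le \E\bigl[\|Y_1\|_{S_{\beta_1}}\cdots\|Y_k\|_{S_{\beta_k}}\bigr]\le \prod_{i=1}^{k}\bigl(\E\|Y_i\|_{S_{\beta_i}}^{\beta_i}\bigr)^{1/\beta_i}=\prod_{i=1}^{k}\|Y_i\|_{\beta_i}.
\end{equation*}
The case where some $\beta_i=\infty$ is handled by replacing the corresponding factor with $\|\,\|Y_i\|_{op}\,\|_\infty$ and bounding $\|Y_i\|_{S_{\beta_i}}\le\|Y_i\|_{op}$ deterministically before applying Hölder to the remaining finite exponents.

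There is no real obstacle here beyond bookkeeping; the only subtlety worth confirming explicitly is the normalization, since the paper works with normalized trace and normalized Schatten norms throughout. The identity $-1+\sum_i 1/\beta_i=0$ is what makes the normalized and unnormalized versions of the inequality coincide, and this is precisely why the hypothesis $\sum 1/\beta_i=1$ is the right one in both the noncommutative and the probabilistic layers of the argument.
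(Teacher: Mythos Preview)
Your proposal is correct and is the standard two-layer argument: deterministic noncommutative H\"older for (normalized) Schatten norms pointwise, followed by scalar H\"older over the probability space, with the normalization bookkeeping handled exactly as you describe. The paper itself does not supply a proof of this lemma; it simply quotes it as Lemma~5.3 of \cite{brailovskaya2022universality}, so there is nothing to compare against beyond noting that your argument is the expected one.
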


\subsection{Spectrum of Gaussian Matrices.}
Here we collect results about the spectrum of various Gaussian models that will be used later in conjunction with universality results.

   \begin{lemma}[(2.3), \cite{rudelson2010non}]\label{lem:Gaussianspectrum}
    For $m>d$, let $G$ be an $m \times d$ matrix whose entries are independent standard normal variables. Then,
    \begin{align*}
        \Pb(\sqrt{m}-\sqrt{d}-t \leq s_{\min}(G) \leq s_{\max}(G) \leq \sqrt{m}+\sqrt{d}+t) \geq 1 - 2e^{-t^2/2}
    \end{align*}
\end{lemma}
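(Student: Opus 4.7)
The plan is to reduce the claim to Gaussian concentration around expected values that are in turn controlled by Gaussian comparison inequalities. I would prove a one-sided tail bound with rate $e^{-t^2/2}$ separately for $s_{\max}(G)$ and $s_{\min}(G)$ and take a union bound to obtain the factor of $2$ in the final probability.

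First, I would observe that both $G \mapsto s_{\max}(G)$ and $G \mapsto s_{\min}(G)$ are $1$-Lipschitz as real-valued maps on $(\R^{m \times d}, \|\cdot\|_F)$, since Weyl's inequality gives $|s_k(A) - s_k(B)| \leq \|A - B\|_{\mathrm{op}} \leq \|A - B\|_F$. Because $G$ has i.i.d.\ $\mathcal{N}(0,1)$ entries, the standard Gaussian concentration inequality (Borell--TIS) for Lipschitz functions of a Gaussian vector immediately yields
\begin{align*}
\Pb\bigl(s_{\max}(G) > \E s_{\max}(G) + t\bigr) &\leq e^{-t^2/2},\\
\Pb\bigl(s_{\min}(G) < \E s_{\min}(G) - t\bigr) &\leq e^{-t^2/2}.
\end{align*}
Thus the remaining task is to show $\E s_{\max}(G) \leq \sqrt{m} + \sqrt{d}$ and $\E s_{\min}(G) \geq \sqrt{m} - \sqrt{d}$.

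Second, I would control both expectations via Gordon's Gaussian min--max comparison theorem. Using the variational formulas $s_{\max}(G) = \sup_{x,y} X_{x,y}$ and $s_{\min}(G) = \inf_x \sup_y X_{x,y}$ with $X_{x,y} := y^T G x$ indexed over $S^{d-1} \times S^{m-1}$, I would compare $X$ with the auxiliary centered Gaussian process $Y_{x,y} := g^T x + h^T y$, where $g \sim \mathcal{N}(0, I_d)$ and $h \sim \mathcal{N}(0, I_m)$ are independent. A direct second-moment calculation of increments verifies the comparison hypotheses needed both for Slepian's inequality (upper bound on a sup) and for Gordon's min--max inequality (lower bound on an inf-sup), giving
\begin{align*}
\E s_{\max}(G) &\leq \E \sup_{x,y}\bigl(g^T x + h^T y\bigr) = \E\|g\| + \E\|h\|,\\
\E s_{\min}(G) &\geq \E \inf_x \sup_y\bigl(g^T x + h^T y\bigr) = \E\|h\| - \E\|g\|.
\end{align*}
Combining these with the standard facts $\E\|g\| \leq \sqrt{d}$, $\E\|h\| \leq \sqrt{m}$ and the monotonicity of $n \mapsto \sqrt{n} - \E\|\mathcal{N}(0, I_n)\|$ (which follows from log-concavity of the chi-distribution and yields $\sqrt{m} - \E\|h\| \leq \sqrt{d} - \E\|g\|$), I obtain the two expectation bounds. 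Union-bounding the two one-sided tails then produces the claimed probability $1 - 2 e^{-t^2/2}$.

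The main subtlety will be carefully verifying the increment-variance inequalities required for Gordon's min--max form, since, unlike Slepian, the signs of the off-diagonal covariance differences matter; once that is set up, the remaining ingredients are routine applications of classical Gaussian concentration and norm estimates.
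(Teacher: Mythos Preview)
The paper does not prove this lemma; it simply quotes it from \cite{rudelson2010non} as a known fact, so there is no in-paper argument to compare against. Your outline is exactly the classical proof one finds in the cited source (and in Davidson--Szarek): Gordon/Slepian comparison for $\E s_{\max}$ and $\E s_{\min}$, followed by Gaussian concentration for $1$-Lipschitz functionals on $(\R^{m\times d},\|\cdot\|_F)$, then a union bound.

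One small point worth tightening: the justification you give for $\sqrt{m}-\E\|h\|\le \sqrt{d}-\E\|g\|$ via ``log-concavity of the chi-distribution'' is not quite the right hook. What you actually need is that $n\mapsto \sqrt{n}-\E\chi_n$ is nonincreasing, which is most cleanly obtained from the explicit formula $\E\chi_n=\sqrt{2}\,\Gamma((n+1)/2)/\Gamma(n/2)$ and standard Gamma-function ratio inequalities (or, equivalently, from the fact that $\mathrm{Var}(\chi_n)\le 1$ together with a coupling argument). This is a well-known elementary fact, so the gap is purely expository rather than mathematical.
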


\begin{corollary}[Corollary 6.8  in \cite{chenakkod2025optimal} (Full version)] \label{cor:gaussianmom}
    Let $G$ be an $m \times d$ matrix whose entries are independent normal random variables with variance $\frac{1}{m}$. Let $\varepsilon<\frac{1}{6}$ and $q \in \mathbb{N} \le m \varepsilon^2 $. Then, there exists $c_{\ref*{cor:gaussianmom}} > 1$ such that for $m \geq \frac{c_{\ref*{cor:gaussianmom}}d}{\varepsilon^2}$,
    \[ \E[\tr(G^TG - I_d)^{2q}]^\frac{1}{2q} \leq  \varepsilon \]
    
\end{corollary}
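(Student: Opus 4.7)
The plan is to reduce the normalized trace moment of $(G^T G - I_d)^{2q}$ to a moment of its operator norm, and then bound the latter by combining Gaussian concentration of singular values with the sub-gamma moment estimate in Theorem \ref{thm:subgamma-moments}.

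First, observe that $(G^TG - I_d)^{2q}$ is symmetric positive semidefinite with eigenvalues $(s_i(G)^2 - 1)^{2q} \le \norm{G^TG - I_d}_{op}^{2q}$, so $\tr(G^TG - I_d)^{2q} \le \norm{G^TG - I_d}_{op}^{2q}$ pointwise (since $\tr$ is the normalized trace). It therefore suffices to prove
\[\E\sqbr*{\norm{G^TG - I_d}_{op}^{2q}}^{1/(2q)} \le \varepsilon.\]

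Second, applying Lemma \ref{lem:Gaussianspectrum} to $\sqrt{m}\,G$, which has i.i.d.\ standard normal entries, and rescaling by $\sqrt{m}$ yields, for every $t \ge 0$,
\[\Pb\paren*{\max(s_{\max}(G) - 1,\ 1 - s_{\min}(G)) \le \sqrt{d/m} + t/\sqrt{m}} \ge 1 - 2 e^{-t^2/2},\]
with the lower bound on $s_{\min}(G)$ understood as trivial (i.e.\ $s_{\min}(G) \ge 0$) whenever the right-hand side exceeds $1$. Writing $\Delta = \sqrt{d/m} + t/\sqrt{m}$ and using the factorization $|s_i(G)^2 - 1| = |s_i(G) - 1|\,(s_i(G) + 1) \le \Delta\,(2 + \Delta)$, I deduce
\[\Pb\paren*{\norm{G^TG - I_d}_{op} > 2\Delta + \Delta^2} \le 2 e^{-t^2/2}.\]

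Third, integrate this tail bound. Let $X = \norm{G^TG - I_d}_{op}$ and $a = 3\sqrt{d/m}$. In the dominant regime $\Delta \le 1$ the inequality $2\Delta + \Delta^2 \le 3\Delta = a + 3t/\sqrt{m}$ gives $\Pb((X - a)_+ > 3t/\sqrt{m}) \le 2 e^{-t^2/2}$, which after the substitution $s = t^2/2$ (and absorbing the factor $2$ by shifting $s \mapsto s + \log 2$) places $(X - a)_+$ in the sub-gamma class of Theorem \ref{thm:subgamma-moments} with variance proxy $v = O(1/m)$ and scale $c = 0$. That theorem then yields $\norm{(X - a)_+}_{L_{2q}} \le C_1 \sqrt{q/m}$ for every integer $q \ge 1$. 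The complementary regime $\Delta > 1$ requires $t$ of order $\sqrt{m}$ or larger, so the tail contribution is exponentially small in $m$ and is absorbed into the constant. Combining via the triangle inequality,
\[\norm{X}_{L_{2q}} \le a + C_1\sqrt{q/m} = 3\sqrt{d/m} + C_1\sqrt{q/m}.\]
Under $m \ge c_{\ref*{cor:gaussianmom}} d/\varepsilon^2$ the first summand is at most $3\varepsilon/\sqrt{c_{\ref*{cor:gaussianmom}}}$, and under $q \le m\varepsilon^2$ the second is at most $C_1\varepsilon$, so both are bounded by constant multiples of $\varepsilon$. The main technical point is the careful tracking of absolute constants so that the sum is genuinely $\le \varepsilon$ rather than merely $O(\varepsilon)$: one chooses $c_{\ref*{cor:gaussianmom}}$ large enough to drive the first term well below $\varepsilon/2$, and then relies on the sharpest available form of Theorem \ref{thm:subgamma-moments} (together with Stirling's estimate on $(q!)^{1/(2q)}$) to ensure $C_1 \sqrt{q/m} \le \varepsilon/2$ under $q \le m\varepsilon^2$. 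This constant optimization is the principal obstacle; the rest is routine integration against the Gaussian tail.
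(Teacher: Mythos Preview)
The paper does not supply its own proof of this statement—it is quoted from \cite{chenakkod2025optimal} and is not actually invoked anywhere in the present paper—so there is no in-paper proof to compare against. On its own merits, your reduction to the operator norm via the pointwise bound $\tr(G^TG-I_d)^{2q}\le\|G^TG-I_d\|_{op}^{2q}$, and your derivation of the tail bound from Lemma~\ref{lem:Gaussianspectrum}, are both correct.

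The gap is exactly where you flag it, but it is real and cannot be closed by ``the sharpest available form of Theorem~\ref{thm:subgamma-moments}.'' Integrating the tail $\Pb\bigl((X-a)_+>u\bigr)\le 2e^{-mu^2/18}$ yields $\E[(X-a)_+^{2q}]\le 2\,q!\,(18/m)^q$, and since $(q!)^{1/(2q)}\ge\sqrt{q/e}$ for every $q\ge1$, the upper bound this produces on $\|(X-a)_+\|_{L_{2q}}$ is at least $\sqrt{18/e}\,\sqrt{q/m}\approx 2.57\sqrt{q/m}$. Even replacing your crude linearization $2\Delta+\Delta^2\le 3\Delta$ by the optimal $2\Delta+\Delta^2\le(2+o(1))\Delta$ only improves this to $\sqrt{8/e}\approx 1.72$; the factor $2$ is intrinsic to the passage from $s-1$ to $s^2-1$ and survives any refinement of the moment-from-tail conversion. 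Thus under the fixed hypothesis $q\le m\varepsilon^2$ your method delivers $\|X\|_{L_{2q}}\le(3/\sqrt{c_{\ref*{cor:gaussianmom}}}+C_1)\,\varepsilon$ with $C_1>1$, and enlarging $c_{\ref*{cor:gaussianmom}}$ shrinks only the first summand, never the second. The operator-norm route via Lemma~\ref{lem:Gaussianspectrum} is simply too lossy in its constants for the stated hypothesis; a genuine proof needs either a sharper concentration inequality for $\|G^TG-I_d\|_{op}$ directly, or a different argument for the normalized trace moment that does not pass through the operator norm.
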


\begin{lemma}[Trace Moment of Decoupled Gaussian Model, Lemma 6.9  in \cite{chenakkod2025optimal} (Full version) ]\label{cor:indgaussianmom}
    Let $G_1$ and $G_2$ be independent $m \times d$ random matrices with i.i.d. Gaussian entries. Then for any $q \le O(d)$, there exists $c_{\ref*{cor:indgaussianmom}}>0$ such that 
    \[ \norm{G_1^TG_2}_{2q} \le c_{\ref{cor:indgaussianmom}}\sqrt{\max\{m,q\}}\sqrt{\max\{d,q\}}\]
    
\end{lemma}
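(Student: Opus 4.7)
The plan is to exploit the orthogonal invariance of the Gaussian distribution by performing the SVD on one factor, which breaks the symmetry between $G_1$ and $G_2$ and reduces the claim to bounding the operator-norm moments of two independent Gaussian matrices of different aspect ratios. I will focus on the regime $m \ge d$ that is the one relevant to the subspace embedding analysis; the complementary regime follows by a symmetric argument carried out on the SVD of $G_2$, with appropriate tracking of the $m/d$ factor coming from $\tr = \frac{1}{d}\Tr$.

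First I would write the thin SVD $G_1 = U \Lambda V^T$ with $U \in \R^{m \times d}$ having orthonormal columns, $\Lambda \in \R^{d \times d}$ diagonal containing the singular values of $G_1$, and $V \in \R^{d \times d}$ orthogonal. Since $G_2$ is rotationally invariant and independent of $G_1$, the matrix $W := U^T G_2 \in \R^{d \times d}$ has i.i.d.\ standard Gaussian entries and is independent of $G_1$. Therefore $G_1^T G_2 \stackrel{d}{=} V \Lambda W$, and combining the orthogonality of $V$ with the cyclic property of the trace yields
\[
\norm{G_1^T G_2}_{2q}^{2q} = \frac{1}{d}\,\E\bigl[\Tr\bigl((\Lambda W W^T \Lambda)^q\bigr)\bigr].
\]
Because $\Lambda W W^T \Lambda$ is $d \times d$ positive semidefinite, its $q$-th trace moment is at most $d \cdot \norm{\Lambda W W^T \Lambda}_{op}^q \le d \cdot s_{\max}(G_1)^{2q}\, s_{\max}(W)^{2q}$, and independence of $G_1$ and $W$ factorises the expectation:
\[
\norm{G_1^T G_2}_{2q}^{2q} \le \E[s_{\max}(G_1)^{2q}]\cdot \E[s_{\max}(W)^{2q}].
\]

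To finish, I would invoke Lemma \ref{lem:Gaussianspectrum}, which supplies sub-Gaussian tails for $s_{\max}(G_1)-(\sqrt m+\sqrt d)$ and for $s_{\max}(W)-2\sqrt d$, and convert these tails to moment estimates via Lemma \ref{thm:subgamma-moments}, obtaining $(\E s_{\max}(G_1)^{2q})^{1/(2q)} \le C(\sqrt m+\sqrt d+\sqrt q)$ and $(\E s_{\max}(W)^{2q})^{1/(2q)} \le C(\sqrt d+\sqrt q)$. Taking $(2q)$-th roots and using $m \ge d$ gives
\[
\norm{G_1^T G_2}_{2q} \le C'(\sqrt m+\sqrt d+\sqrt q)(\sqrt d+\sqrt q) \le C''\sqrt{\max\{m,q\}}\sqrt{\max\{d,q\}},
\]
where the final inequality is a routine case split over $q \le d$, $d \le q \le m$, and $q \ge m$. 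The main obstacle is that the bound is strictly smaller than $(\sqrt m+\sqrt d+\sqrt q)^2$ in the tall-regime $q \le d \le m$, so a symmetric application of $\norm{M}_{2q} \le \norm{M}_{op} \le s_{\max}(G_1) s_{\max}(G_2)$ loses a factor of order $\sqrt{m/d}$. The SVD step is precisely what breaks this symmetry: one factor retains the tall-Gaussian operator norm $\sqrt m+\sqrt d+\sqrt q$, while the other is absorbed into the square Gaussian $W$ with operator norm only $\sqrt d+\sqrt q$. Making this rigorous reduces to verifying that the random rotation $U^T$ (which depends on $G_1$) nevertheless produces an iid Gaussian when applied to $G_2$, which is immediate from the conditional rotational invariance of $G_2$.
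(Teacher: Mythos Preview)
The paper does not give its own proof of this lemma; it is quoted as Lemma~6.9 from \cite{chenakkod2025optimal} and used as a black box. So there is no ``paper's proof'' to compare against, only the question of whether your argument is sound.

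Your treatment of the regime $m\ge d$ is clean and correct. The key idea---absorbing one factor into a $d\times d$ Gaussian $W=U^TG_2$ via the thin SVD of $G_1$---is exactly what is needed to avoid the loss that the naive bound $\|G_1^TG_2\|_{op}\le s_{\max}(G_1)s_{\max}(G_2)$ incurs. The independence of $W$ from $G_1$ follows from conditional rotation invariance as you note, and the moment conversion from the sub-Gaussian tails of Lemma~\ref{lem:Gaussianspectrum} is standard. The final case split over the relative sizes of $q,d,m$ is routine. Since everywhere the paper applies this lemma one has $m\ge d$ (indeed $m=\Theta(d/\varepsilon^2)$) and $q\le m$, your proof fully covers the paper's needs.

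One caveat: your throwaway claim that the complementary regime $m<d$ ``follows by a symmetric argument carried out on the SVD of $G_2$'' is not justified by the mechanism you describe. If $m<d$ then either SVD reduces $\Tr(|G_1^TG_2|^{2q})$ to $\Tr((\Lambda \tilde W\tilde W^T\Lambda)^q)$ for an $m\times m$ matrix, and the crude bound $m\cdot s_{\max}(G_1)^{2q}s_{\max}(\tilde W)^{2q}$ gives
\[
\norm{G_1^TG_2}_{2q}\ \lesssim\ (m/d)^{1/(2q)}(\sqrt d+\sqrt q)^2,
\]
since both operator norms are now of order $\sqrt d$. For $q\ge 2$ and $m\ll d$ the prefactor $(m/d)^{1/(2q)}$ is too weak to bring this down to $\sqrt{md}$, so a genuinely different idea (e.g.\ interpolating with the Frobenius norm or a more careful spectral argument) is required in that regime. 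This does not affect the paper, but you should drop the parenthetical claim or supply the missing argument.
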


\subsection{Oblivious Subspace Embeddings.} \label{subsec:osnapprops}

Here, we define the OSNAP distribution that we work with and show some of its important properties that we shall use later. The definition is the same as \cite{chenakkod2025optimal} (Full version), which we reproduce here for convenience.

 \begin{definition}[OSNAP]\label{def:osnap} 
 An $m \times n$ random matrix $S$ is called an unscaled oblivious sparse norm-approximating projection with independent subcolumns distribution (unscaled OSNAP)
  with parameters $p, \varepsilon, \delta \in (0,1]$ such that $s=pm$ divides $m$, if 
 \[ S = \sum_{l=1}^n \sum_{\gamma=1}^s \xi_{(l,\gamma)} e_{\mu_{(l, \gamma)}} e_l ^\top \]
 where $\{ \xi_{(l,\gamma)} \}_{l \in [n], \gamma \in [s]}$ is a collection of independent Rademacher random variables, $\{ \mu_{(l,\gamma)} \}_{l \in [n], \gamma \in [s]}$ is a collection of independent random variables such that each $\mu_{(l,\gamma)}$ is uniformly distributed in $[(m/s)(\gamma-1)+1:(m/s)\gamma]$ and $e_{\mu_{(l, \gamma)}}$ and $e_l$ represent basis vectors in $\R^m$ and $\R^n$ respectively, with $\{ \xi_{(l,\gamma)} \}_{l \in [n], \gamma \in [s]}$ being independent with $\{ \mu_{(l,\gamma)} \}_{l \in [n], \gamma \in [s]}$.

 In this case, $\Pi = (1/\sqrt{pm})S$ is called an OSNAP with parameters $p, \varepsilon, \delta$. 
 \end{definition}

 Thus, each column of the OSNAP matrix $\Pi$ has $pm$ many non-zero entries, and the sparsity level can be varied by setting the parameter $p \in [0,1]$ appropriately.

 \begin{remark}
     The assumption that $\{ \xi_{(l,\gamma)} \}_{l \in [n], \gamma \in [s]}$ and $\{ \mu_{(l,\gamma)} \}_{l \in [n], \gamma \in [s]}$ are fully independent can be relaxed to log-wise independence. For further details, we refer the reader to the proof of \cite[Theorem 3.2]{chenakkod2025optimal} (Full Version).
 \end{remark}

From \cite{chenakkod2025optimal} (Full Version), we have the following result about the covariance structure of random OSNAP matrices.

\begin{lemma}[Lemma 6.1 in \cite{chenakkod2025optimal} (Full Version)] 
Let $p = p_{m,n} \in (0,1]$ and $S=\{s_{ij}\}_{i \in [m], j \in [n]}$ be a $m \times n$ random matrix as in the unscaled OSNAP distribution. Then, $\E(s_{ij})=0$ and $\operatorname{Var}(s_{ij})=p$ for all $i \in [m], j \in [n]$, and $\cov(s_{i_1 j_1},s_{i_2 j_2})=0$ for any $\{i_1,i_2\} \subset [m], \{j_1,j_2\} \subset [n]$ and $ (i_1,j_1)\neq (i_2,j_2) $.
\end{lemma}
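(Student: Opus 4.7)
The plan is to unpack the definition of the OSNAP matrix and compute the three quantities directly; no concentration or covering arguments are needed, so this is really a bookkeeping exercise on the joint law of the $\xi$'s and $\mu$'s.

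First I would observe that, from Definition \ref{def:osnap}, the entry $s_{ij}$ can be written as the single-column contribution
\begin{equation*}
s_{ij} \;=\; \sum_{\gamma=1}^{s} \xi_{(j,\gamma)}\, \ind\{\mu_{(j,\gamma)} = i\},
\end{equation*}
and, crucially, for each row index $i\in[m]$ there is exactly one block index $\gamma(i) = \lceil is/m\rceil$ such that $i$ lies in the range $[(m/s)(\gamma-1)+1:(m/s)\gamma]$ of possible values of $\mu_{(j,\gamma)}$. Hence the sum collapses to a single term $s_{ij} = \xi_{(j,\gamma(i))}\,\ind\{\mu_{(j,\gamma(i))} = i\}$. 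From this and independence of the $\xi$'s from the $\mu$'s, $\E[s_{ij}] = \E[\xi_{(j,\gamma(i))}] \cdot \Pb[\mu_{(j,\gamma(i))} = i] = 0$, since Rademacher variables have zero mean. For the variance, $\E[s_{ij}^2] = \E[\xi_{(j,\gamma(i))}^2]\cdot \Pb[\mu_{(j,\gamma(i))} = i] = 1 \cdot (s/m) = p$, using that $\mu_{(j,\gamma)}$ is uniform on a block of size $m/s$.

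For the covariance claim, I would split into cases depending on how $(i_1,j_1)$ and $(i_2,j_2)$ relate. If $j_1 \neq j_2$, then $s_{i_1 j_1}$ and $s_{i_2 j_2}$ are functions of disjoint index sets of $\xi$'s and $\mu$'s (the construction uses independent randomness across columns), so they are independent and the covariance vanishes. If $j_1 = j_2 =: j$ but $i_1 \neq i_2$, I would further split on whether $\gamma(i_1) = \gamma(i_2)$. When $\gamma(i_1)\neq \gamma(i_2)$, the two entries depend on disjoint $(\xi,\mu)$ variables within column $j$, and independence again gives zero covariance. When $\gamma(i_1) = \gamma(i_2) =: \gamma$, both entries share $\xi_{(j,\gamma)}$ and $\mu_{(j,\gamma)}$, and
\begin{equation*}
s_{i_1 j}\, s_{i_2 j} \;=\; \xi_{(j,\gamma)}^{2}\,\ind\{\mu_{(j,\gamma)} = i_1\}\,\ind\{\mu_{(j,\gamma)} = i_2\} \;=\; 0,
\end{equation*}
since $i_1 \neq i_2$ forces the product of indicators to vanish. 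Combined with the mean-zero property, $\cov(s_{i_1 j}, s_{i_2 j}) = 0$.

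I do not anticipate any real obstacle: the proof is essentially a consequence of (i) the subcolumn block structure ensuring that each $(i,j)$ is touched by at most one $(\xi,\mu)$ pair, (ii) Rademacher symmetry giving zero mean, and (iii) distinct row indices within the same subcolumn being mutually exclusive under the uniform location $\mu$. The only care needed is in writing out the case analysis cleanly so that the reader sees why the covariance vanishes both ``across columns'' (by independence of randomness in different $j$'s) and ``within a column'' (either by independence across subcolumns or by mutual exclusivity within a subcolumn).
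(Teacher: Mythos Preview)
Your proof is correct and complete. The paper does not actually supply its own proof of this lemma---it is stated there as a quotation of \cite[Lemma 6.1]{chenakkod2025optimal} and used without argument---so there is nothing to compare against; your direct computation from Definition~\ref{def:osnap} (collapse the sum to the unique $\gamma(i)$, then case-split on $j_1\neq j_2$, different subcolumns, and same subcolumn with mutually exclusive indicators) is exactly the natural verification.
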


We now derive bounds on the moments of various quantities associated to OSNAP matrices using results from Section \ref{subsec:measconc}.

\begin{lemma}\label{lem:xfrobnorm}
    Let $S$ be an $m \times n$ matrix distributed according to the unscaled OSNAP distribution with parameter $p$. Let $U$ be an arbitrary $n \times d$ deterministic matrix such that $U^TU=I_d$. Let $X=SU$. Let $v \in \R^d$ be a vector with $\norm{v^T}_{l_2([d])}=1$. Let $q \ge 1$ be an integer. Then there exists constants $c_{\ref{lem:xfrobnorm}.1}, c_{\ref{lem:xfrobnorm}.2} > 0$ such that,
    \[ \E \sqbr*{\norm{X}^{2q}_F}^\frac{1}{2q} \le c_{\ref{lem:xfrobnorm}.1}(\sqrt{pmd} + \sqrt{q}) \]
    and,
    \[ \E \sqbr*{\norm{Xv}^{2q}_2}^\frac{1}{2q} \le c_{\ref{lem:xfrobnorm}.2}(\sqrt{pm} + \sqrt{q}) \]
\end{lemma}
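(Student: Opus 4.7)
The plan is to prove both bounds in parallel by exposing the Rademacher chaos structure of $\|X\|_F^2$ and $\|Xv\|_2^2$ around their deterministic means. Decompose $S = \sum_{\gamma=1}^{pm} S^{(\gamma)}$, where $S^{(\gamma)}$ uses only the rows indexed $[(m/pm)(\gamma-1)+1:(m/pm)\gamma]$; these row bands are disjoint, so the $\{S^{(\gamma)}\}$ are mutually independent and $\|SY\|_F^2 = \sum_\gamma \|S^{(\gamma)} Y\|_F^2$ for any matrix or vector $Y$. A direct computation gives $(S^{(\gamma)})^T S^{(\gamma)} = I_n + R^{(\gamma)}$, where $R^{(\gamma)}$ has zero diagonal and $R^{(\gamma)}_{l_1,l_2} = \xi_{l_1,\gamma}\xi_{l_2,\gamma}\mathbbm{1}[\mu_{l_1,\gamma} = \mu_{l_2,\gamma}]$ for $l_1 \ne l_2$. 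Setting $w = Uv$ (a unit vector since $U^TU = I_d$) and summing over $\gamma$ yields
\begin{align*}
    \|X\|_F^2 = pmd + \Xi^T \mathcal B\, \Xi, \qquad \|Xv\|^2 = pm + \Xi^T \mathcal A\, \Xi,
\end{align*}
where $\Xi = (\xi_{l,\gamma})$ is an $n\cdot pm$-dimensional Rademacher vector and $\mathcal B, \mathcal A$ are block-diagonal over $\gamma$ with blocks $B^{(\gamma)}, A^{(\gamma)}$ depending only on the hash locations $\mu$. Both have zero diagonal, so these chaoses are automatically centered.

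Second, I would bound the operator and Frobenius norms of $\mathcal B$ and $\mathcal A$ \emph{deterministically}, uniformly over $\mu$. The key observation is that, writing $P_r^{(\gamma)}$ for the coordinate projection onto the hash bucket $T_r^{(\gamma)} = \{l : \mu_{l,\gamma} = r\}$, each block $B^{(\gamma)}$ is itself block-diagonal with blocks $(UU^T)_{T_r^{(\gamma)}, T_r^{(\gamma)}} - \operatorname{diag}((UU^T)_{T_r^{(\gamma)}, T_r^{(\gamma)}})$, i.e., principal submatrices of the orthogonal projection $UU^T$ minus their diagonals. Since $\|UU^T\| = 1$, this gives $\|B^{(\gamma)}\| \le 2$ uniformly in $\mu$, and likewise $\|A^{(\gamma)}\| \le 2$. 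Since zeroing out entries can only decrease Frobenius norm, $\|B^{(\gamma)}\|_F^2 \le \|UU^T\|_F^2 = d$ and $\|A^{(\gamma)}\|_F^2 \le \|ww^T\|_F^2 = 1$. Summing over the $pm$ blocks gives $\|\mathcal B\|_F^2 \le pmd$, $\|\mathcal A\|_F^2 \le pm$, and $\|\mathcal B\|, \|\mathcal A\| \le 2$.

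Third, since $\tr(\mathcal B) = \tr(\mathcal A) = 0$, standard Rademacher decoupling gives $\|\Xi^T \mathcal B \Xi\|_{L_q} \le C\|\Xi^T \mathcal B \Xi'\|_{L_q}$ for an i.i.d.\ copy $\Xi'$, and then Lemma~\ref{lem:subgaussian-chaos} applied conditionally on $\mu$ (the norm bounds above being deterministic) gives
\begin{align*}
    \|\Xi^T \mathcal B\, \Xi\|_{L_q} \le c\bigl(\sqrt{q}\,\|\mathcal B\|_F + q\,\|\mathcal B\|\bigr) \le c\bigl(\sqrt{q\,pmd} + q\bigr),
\end{align*}
and analogously $\|\Xi^T \mathcal A \Xi\|_{L_q} \le c(\sqrt{q\,pm} + q)$. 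By the triangle inequality in $L_q$ and AM-GM ($2\sqrt{q\,pmd} \le q + pmd$), $\|\|X\|_F^2\|_{L_q} \le c'(pmd + q)$, whence $\|\|X\|_F\|_{L_{2q}} = \|\|X\|_F^2\|_{L_q}^{1/2} \le c_1(\sqrt{pmd}+\sqrt{q})$, and the identical argument gives $\|\|Xv\|\|_{L_{2q}} \le c_2(\sqrt{pm}+\sqrt{q})$.

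The main obstacle I anticipate is establishing the \emph{deterministic} operator norm bound on $B^{(\gamma)}$. A naive estimate would only control $\|B^{(\gamma)}\|$ by the maximum bucket load $\max_r |T_r^{(\gamma)}|$, a random quantity that grows with $n$. The block-diagonal decomposition induced by the hash partition, combined with $\|UU^T\| = 1$, is what collapses this to $O(1)$ uniformly in $\mu$, preserving the sub-Gaussian $\sqrt{q}$ scaling in the chaos bound (instead of a spurious sub-exponential $q$) and so matching the Gaussian-style concentration $\sqrt{pmd} + \sqrt{q}$.
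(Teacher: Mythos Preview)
Your proposal is correct and follows essentially the same architecture as the paper's proof: write $\|X\|_F^2$ (resp.\ $\|Xv\|_2^2$) as its deterministic mean plus a diagonal-free Rademacher quadratic form $\Xi^T\mathcal B\,\Xi$, condition on the hash locations $\mu$, decouple, and apply Lemma~\ref{lem:subgaussian-chaos} with deterministic bounds on $\|\mathcal B\|$ and $\|\mathcal B\|_F$.

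The one substantive difference is your justification of the operator-norm bound. The paper asserts $\|A_\gamma\|\le 1$ by appealing to the principle that zeroing entries of a matrix with nonnegative entries cannot increase its spectral norm; but $UU^T$ need not have nonnegative entries, so that argument as written does not apply. Your route---observing that the zeroing pattern is, after the hash-bucket permutation, block-diagonal with blocks $(UU^T)_{T_r,T_r}-\operatorname{diag}$, and then using that principal submatrices of the projection $UU^T$ have norm $\le 1$ while the removed diagonal also has norm $\le 1$---yields $\|B^{(\gamma)}\|\le 2$ cleanly and uniformly in $\mu$. This is the right way to close that step, and the constant $2$ versus $1$ is immaterial for the lemma.
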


\begin{proof}
    Note that,
    \begin{align*}
        \norm{X}^2_F &= \Tr (X^T X) \\
        &= \Tr \paren*{\sum_{l_1, l_2 =1}^n \sum_{\gamma=1}^{pm} u_{l_1}e_{\mu_{(l_1, \gamma)}}^T \xi_{(l_1, \gamma)} \xi_{(l_2, \gamma)} e_{\mu_{(l_2, \gamma)}} u_{l_2}^T } \\
        &= \Tr \paren*{\sum_{l =1}^n \sum_{\gamma=1}^{pm} u_{l} u_{l}^T } + \Tr \paren*{\sum_{\substack{l_1, l_2 =1 \\ l_1 \neq l_2}}^n \sum_{\gamma=1}^{pm} u_{l_1}e_{\mu_{(l_1, \gamma)}}^T \xi_{(l_1, \gamma)} \xi_{(l_2, \gamma)} e_{\mu_{(l_2, \gamma)}} u_{l_2}^T } \\
        &= pmd + \sum_{\substack{l_1, l_2 =1 \\ l_1 \neq l_2}}^n \sum_{\gamma=1}^{pm} \xi_{(l_1, \gamma)} \xi_{(l_2, \gamma)} \ip{e_{\mu_{(l_1, \gamma)}}, e_{\mu_{(l_2, \gamma)}}} \ip{u_{l_1}, u_{l_2}}  \\
    \end{align*}

    Thus, 
\[ \E \sqbr*{\norm{X}^{2q}_F}^\frac{1}{2q} \le \sqrt{pmd} + \E \sqbr*{ \paren*{\sum_{\substack{l_1, l_2 =1 \\ l_1 \neq l_2}}^n \sum_{\gamma=1}^{pm} \xi_{(l_1, \gamma)} \xi_{(l_2, \gamma)} \ip{e_{\mu_{(l_1, \gamma)}}, e_{\mu_{(l_2, \gamma)}}} \ip{u_{l_1}, u_{l_2}}}^q }^\frac{1}{2q} \]

The latter term is a quadratic form of the random vector formed by embedding $\{ \xi_{(l, \gamma)} \}_{l \in [n], \gamma \in [pm]}$ in $\R^{pmn}$ with $\xi_{(l, \gamma)}$ going in the $((\gamma-1)pm + l)\textsuperscript{th}$ coordinate. Call this random vector $\xi$. Then the latter term is $\E \sqbr{ \paren{\xi^T A \xi}^q }^\frac{1}{2q}$, where $A$ is a $pmn \times pmn$ block diagonal matrix with $n \times n$ blocks $A_1 \etc A_{pm}$ on the diagonal, with $[A_\gamma]_{l_1,l_2} =  \mathbbm{1}_{l_1 \neq l_2} \ip{e_{\mu_{(l_1, \gamma)}}, e_{\mu_{(l_2, \gamma)}}} \ip{u_{l_1}, u_{l_2}} $ for $l_1,l_2 \in [n]$ and $\gamma \in [pm]$. After conditioning on $\{\mu_{(l,\gamma)}\}_{l \in [n], \mu \in [pm]}$ and decoupling as in the proof of the Hanson-Wright inequality (e.g., Theorem 6.2.1 in \cite{vershynin2018high}), we can apply Lemma \ref{lem:subgaussian-chaos} to obtain, 
\[ \E \sqbr*{ \paren{\xi^T A \xi}^q \big\vert \{\mu_{(l,\gamma)}\}_{l \in [n], \mu \in [pm]} } \le c_1^q\paren*{\sqrt{q}\norm{A}_F + q\norm{A}}^q \]
for some constant $c_1$
Observe that each matrix $A_\gamma$ is the matrix $UU^T$ with some entries being made zero. Since the spectral norm of a matrix with nonnegative entries can only decrease when some entries are made zero, $\norm{A_\gamma} \le \norm{UU^T} = 1$, which means $\norm{A} \le 1$. By the same argument, $\norm{A}_F^2 = \sum_{\gamma=1}^{pm} \norm{A_\gamma}_F^2 \le pmd$. So,
\begin{align*}
    \E \sqbr*{ \paren{\xi^T A \xi}^q \big\vert \{\mu_{(l,\gamma)}\}_{l \in [n], \mu \in [pm]} } &\le c_1^q\paren*{\sqrt{q}\sqrt{pmd} + q}^q \\
    &\le  c_1^q\paren*{pmd + 2q}^q \\
    \E \sqbr*{ \paren{\xi^T A \xi}^q }^\frac{1}{q} &\le c_1\paren*{pmd + 2q}
\end{align*}
Combining with the above inequality for $\E \sqbr*{\norm{X}^{2q}_F}^\frac{1}{2q}$, we get, 
\[ \E \sqbr*{\norm{X}^{2q}_F}^\frac{1}{2q} \le c_2(\sqrt{pmd} + \sqrt{q}) \]
for some constant $c_2$

Similarly,
\begin{align*}
        \norm{Xv}^2_F &= \ip{Xv, Xv} \\
        &= \sum_{l_1, l_2 =1}^n \sum_{\gamma=1}^{pm}  \xi_{(l_1, \gamma)} \xi_{(l_2, \gamma)} \ip{v, u_{l_1}} \ip{e_{\mu_{(l_1, \gamma)}}, e_{\mu_{(l_2, \gamma)}}} \ip{v, u_{l_2}}  \\
        &= \sum_{l =1}^n \sum_{\gamma=1}^{pm} \ip{v, u_{l_1}}^2  + \sum_{\substack{l_1, l_2 =1 \\ l_1 \neq l_2}}^n \sum_{\gamma=1}^{pm} \xi_{(l_1, \gamma)} \xi_{(l_2, \gamma)} \ip{v, u_{l_1}} \ip{e_{\mu_{(l_1, \gamma)}}, e_{\mu_{(l_2, \gamma)}}} \ip{v, u_{l_2}}  \\
    \end{align*}

In this case we have, $[A_\gamma]_{i,j} =  \mathbbm{1}_{i \neq j} \ip{e_{\mu_{(l_1, \gamma)}}, e_{\mu_{(l_2, \gamma)}}} \ip{u_{l_1}, v} \ip{u_{l_2}, v} $, and so the matrix $A_\gamma$ is the matrix $(Uv)(Uv)^T$ with some entries made zero. It still holds that $\norm{A} \le 1$ and $\norm{A}_F^2 \le pm$. Following the previous argument, we get,
\[ \E \sqbr*{\norm{Xv}^{2q}_2}^\frac{1}{2q} \le c_3(\sqrt{pm} + \sqrt{q}) \]
for some constant $c_3$.

\end{proof}

\begin{lemma}[Moments of quantities based on a single row of OSNAP]\label{lem:rowbounds}
    Let $S$ be an $m \times n$ matrix distributed according to the fully independent unscaled OSNAP distribution with parameter $p$. Let $U$ be an arbitrary $n \times d$ deterministic matrix such that $U^TU=I_d$. Let $X=SU$. Let $v \in \R^d$ be a vector with $\norm{v^T}_{l_2([d])}=1$. Let $q \ge 1$ be an integer. Then there exists constants $c_{\ref{lem:rowbounds}.1}, c_{\ref{lem:rowbounds}.2} > 0$ such that,
    \[ \E \sqbr*{\norm{e_{1}^TX}^{2q}}^\frac{1}{2q} \le c_{\ref{lem:rowbounds}.1}(\sqrt{pd} + \sqrt{q}) \]
    and,
    \[ \E \sqbr*{\norm{e_{1}^TXv}^{2q}_2}^\frac{1}{2q} \le c_{\ref{lem:rowbounds}.2}(\sqrt{p} + \sqrt{q}) \]

\end{lemma}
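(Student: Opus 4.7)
The plan is to adapt the strategy of Lemma~\ref{lem:xfrobnorm} to a single row, where the new complication is that the ``diagonal'' contribution to $\|e_1^TX\|^2$ becomes random rather than deterministic. From the OSNAP definition, row $1$ only lies in block $\gamma=1$, so
\[
    e_1^T X \;=\; \sum_{l=1}^n \xi_{(l,1)}\,\mathbbm{1}\{\mu_{(l,1)}=1\}\,u_l^T \;=\; \sum_{l=1}^n \xi_l B_l u_l^T,
\]
where $\xi_l$ are independent Rademacher variables, $B_l$ are independent $\mathrm{Bernoulli}(p)$ indicators, and the two families are mutually independent.

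For the first bound I would expand
\[
    \|e_1^TX\|^2 \;=\; \underbrace{\sum_l B_l\|u_l\|^2}_{=:D} \;+\; \underbrace{\sum_{l_1\ne l_2} \xi_{l_1}\xi_{l_2} B_{l_1}B_{l_2}\langle u_{l_1},u_{l_2}\rangle}_{=:T},
\]
and apply Minkowski to get $\|\|e_1^TX\|^2\|_{L_q}\le\|D\|_{L_q}+\|T\|_{L_q}$. Since $D$ is a sum of bounded ($\|u_l\|^2\le 1$) independent variables with mean $pd$ and variance at most $pd$, Rosenthal's inequality (Lemma~\ref{lem:rosenthal1}) gives $\|D\|_{L_q}\le C(pd+q)$. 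For the off-diagonal $T$ I would condition on $B$: after a standard decoupling it is a Rademacher chaos $\xi^TA\xi'$ with $A_{l_1l_2}=\mathbbm{1}\{l_1\ne l_2\}B_{l_1}B_{l_2}\langle u_{l_1},u_{l_2}\rangle$, so Lemma~\ref{lem:subgaussian-chaos} yields $\|T\|_{L_q|B}\le C(\sqrt{q}\|A\|_F+q\|A\|)$. Noting that $A+\mathrm{diag}=(BU)(BU)^T$ with $(BU)^T(BU)=U^TBU\preceq U^TU=I$, I get $\|A\|\le 2$ and $\|A\|_F^2\le\mathrm{Tr}(U^TBU)=D$. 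Taking $L_q$-norms over $B$ and using $\|D\|_{L_{q/2}}^{1/2}\le \|D\|_{L_q}^{1/2}\le C(\sqrt{pd}+\sqrt{q})$ delivers $\|T\|_{L_q}\le C(\sqrt{pdq}+q)$, and AM-GM gives $\|\|e_1^TX\|^2\|_{L_q}\le C(pd+q)$; taking the square root yields the claim.

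For the second bound, set $\alpha_l=\langle u_l,v\rangle$ and write $(e_1^TXv)^2=D'+T'$ analogously with $D'=\sum_l B_l\alpha_l^2$. The key new observation is that $D'\le \sum_l\alpha_l^2=\|Uv\|^2\le 1$ almost surely. Combined with $p\le 1$, this gives $\|D'\|_{L_q}\le 1\le p+q$ and $\sqrt{q}\,\|D'\|_{L_q}\le\sqrt{q}\le q\le p+q$ for any integer $q\ge 1$. The chaos $T'$ is controlled by Lemma~\ref{lem:subgaussian-chaos} as before: writing $A'_{l_1l_2}=\mathbbm{1}\{l_1\ne l_2\}\beta_{l_1}\beta_{l_2}$ for $\beta_l=B_l\alpha_l$, the rank-one identity $A'+\mathrm{diag}(\beta_l^2)=\beta\beta^T$ yields $\|A'\|_F\le\|\beta\|^2=D'$ and $\|A'\|\le 2$. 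This gives $\|T'\|_{L_q}\le C(p+q)$, hence $\|(e_1^TXv)^2\|_{L_q}\le C(p+q)$, and the second bound follows on taking the square root.

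The main technical point, and the reason this lemma does not reduce to an immediate consequence of Lemma~\ref{lem:xfrobnorm}, is the handling of the now-random diagonal. For the second bound in particular, the deterministic upper bound $D'\le 1$ arising from $\|Uv\|\le 1$ is essential: without it, a naive Rosenthal-only estimate on $\|D'\|_{L_q}$ would produce a spurious $q^{3/2}$ term and spoil the target scaling $\sqrt{p}+\sqrt{q}$.
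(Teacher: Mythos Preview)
Your proof is correct, but the paper takes a markedly shorter route. Rather than splitting $\|e_1^TX\|^2$ into diagonal and off-diagonal pieces, the paper observes that $\|e_1^TX\|$ and $|e_1^TXv|$ are convex $1$-Lipschitz functions of the entries of the first row of $S$ (which are i.i.d.\ bounded random variables), applies Talagrand-style convex concentration (\cite[Theorem 6.10]{boucheron2013concentration}) to obtain subgaussian tails above the means $\sqrt{pd}$ and $\sqrt{p}$, and then converts tails to moments. This sidesteps the random-diagonal issue entirely and yields both bounds in a couple of lines.

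Your approach, by contrast, is a direct extension of the Lemma~\ref{lem:xfrobnorm} machinery: Rosenthal for the diagonal, decoupling plus the Rademacher chaos bound (Lemma~\ref{lem:subgaussian-chaos}) for the off-diagonal, together with the structural observations $\|A\|\le 2$, $\|A\|_F^2\le D$, and (for the second bound) the crucial deterministic estimate $D'\le\|Uv\|^2=1$. The advantage of your route is that it uses only tools already assembled in the paper's preliminaries and maintains methodological continuity with Lemma~\ref{lem:xfrobnorm}; the paper's route is more elegant but imports an additional concentration inequality not otherwise stated in Section~\ref{subsec:measconc}.
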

\begin{proof}
Note that $S_{1,1}$ (which is the $(1,1)$th entry of $S$) is non-zero when the one hot distribution on the column submatrix $S_{[1:m/s]\times{1}}$ has it's non zero entry on the first row. The probability that this happens is $1/(m/s) = s/m = p$. Moreover, the columns of $S$ are independent. Thus, we conclude that for looking at the distribution of $e_{1}^TX$ we may assume that $S$ has independent 
$\pm 1$ entries sparsified by independent Bernoulli $p$ random variables, i.e., $s_{i,j}=\delta_{i,j} \xi_{i,j}$ where $\delta_{i,j}$ are Bernoulli random variables taking value 1 with probability $p \in (0,1]$, $\xi_{i,j}$ are random variables with $\Pb(\xi_{i,j}=1)=\Pb(\xi_{i,j}=-1)=1/2$ and the collection $\{\delta_{i,j}, \xi_{i,j} \}_{i \in [m], j \in [n]}$ is independent.

Now, $\norm{e_{1}^TX}$ is a convex 1-Lipschitz function of the entries of  the first row of $S$, and
\begin{align*}
    \E_S[\norm{e_{1}^TX}] &\le  \E_S[\norm{e_{1}^TX}^2]^\frac{1}{2} \\
    &= \sqrt{ pd}
\end{align*}
By \cite[Theorem 6.10]{boucheron2013concentration}, we have, for $\lambda > 0$,
\begin{align*}
    \Pb_{S} ( \norm{e_{1}^TX} \ge \sqrt{ pd } + \lambda ) \le \exp(-c_1\lambda^2)
\end{align*}
for some $c_1>0$. Similarly, since $\abs{e_{1}^TXv}$ is a convex and 1-Lipschitz as a function of the entries of the first row of $S$, 
\begin{align*}
    \Pb_{S} ( \abs{e_{1}^TXv} \ge \sqrt{ p } + \lambda ) \le \exp(-c_1\lambda^2)
\end{align*}

Thus $\norm{e_{1}^TX}$ and $\abs{e_{1}^TXv}$ have a subgaussian tail. The claims now follow by a standard moment computation of subgaussian random variables.

\end{proof}

\begin{corollary}[Norm of a Random Row in OSNAP]\label{lem:rownormbound}
    Let $S$ be a fully independent unscaled OSNAP distribution with parameter $p$. Let $U$ be an $n \times d$ matrix such that $U^TU=I_d$. Let $\mu$ be a random variable uniformly distributed in $
    J \subset [m]$ and independent of $S$ and $G$. Then, there exists $c_{\ref{lem:rownormbound}}> 0$ such that for any positive integer $q>0$, we have
    \begin{align*}
        \E_{\mu, S} [ \norm{e_{\mu}^TSU}^{q} ]^\frac{1}{q} \le c_{\ref{lem:rownormbound}}\sqrt{\max\{pd,q \}} 
    \end{align*}

\end{corollary}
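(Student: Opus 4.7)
The plan is to reduce to Lemma \ref{lem:rowbounds} via the row-symmetry of the OSNAP distribution, and then handle the averaging over $\mu$ using independence of $\mu$ and $S$.

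First, I would observe that by the construction in Definition \ref{def:osnap}, for any fixed row index $i \in [m]$, the distribution of the $i$-th row of $S$ is the same as that of the first row: within each of the $s = pm$ blocks of $m/s$ consecutive rows, the index $\mu_{(l,\gamma)}$ is uniform, so the marginal probability that $S_{i,j}$ is nonzero equals $p$ for every $i$ and, more importantly, row $i$ has the same joint distribution as row $1$ (up to relabeling within its block). In the argument of Lemma \ref{lem:rowbounds} only this marginal structure is used to reduce to a Bernoulli$(p)$ sparsification of a sign matrix. Hence, for every fixed $i \in J \subset [m]$ and every integer $r \geq 1$,
\begin{equation*}
    \E_S \bigl[\,\norm{e_i^T S U}^{2r}\,\bigr]^{\frac{1}{2r}} \;\leq\; c_{\ref{lem:rowbounds}.1}\bigl(\sqrt{pd}+\sqrt{r}\bigr).
\end{equation*}

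Next, for a general positive integer $q$, I would pass from the $2r$-th moment to the $q$-th moment by monotonicity of $L_r$-norms: set $r = \lceil q/2 \rceil$, so that $2r \geq q$ and $r \leq q$. By Jensen's inequality,
\begin{equation*}
    \E_S\bigl[\norm{e_i^T S U}^{q}\bigr]^{\frac{1}{q}} \;\leq\; \E_S\bigl[\norm{e_i^T S U}^{2r}\bigr]^{\frac{1}{2r}} \;\leq\; c_{\ref{lem:rowbounds}.1}\bigl(\sqrt{pd}+\sqrt{r}\bigr) \;\leq\; c_{\ref{lem:rowbounds}.1}\bigl(\sqrt{pd}+\sqrt{q}\bigr).
\end{equation*}

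Finally, I would use the independence of $\mu$ from $S$ together with Fubini to write
\begin{equation*}
    \E_{\mu,S}\bigl[\norm{e_\mu^T S U}^{q}\bigr] \;=\; \frac{1}{|J|}\sum_{i \in J} \E_S\bigl[\norm{e_i^T S U}^{q}\bigr] \;\leq\; \max_{i \in J}\E_S\bigl[\norm{e_i^T S U}^{q}\bigr] \;\leq\; \bigl(c_{\ref{lem:rowbounds}.1}(\sqrt{pd}+\sqrt{q})\bigr)^{q}.
\end{equation*}
Taking $q$-th roots and using $\sqrt{pd}+\sqrt{q} \leq 2\sqrt{\max\{pd,q\}}$ yields the claim with $c_{\ref{lem:rownormbound}} = 2c_{\ref{lem:rowbounds}.1}$.

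The only non-routine point is the first step, i.e., verifying that the argument of Lemma \ref{lem:rowbounds} carries over from row $1$ to an arbitrary row $i$; this is essentially a symmetry observation about the OSNAP construction, since the definition treats the $m/s$ positions inside each block exchangeably. All subsequent steps (passing from $2r$ to $q$ by Jensen, averaging over an independent $\mu$, and combining the two terms) are straightforward.
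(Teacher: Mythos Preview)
Your proposal is correct and follows essentially the same approach as the paper: reduce to even moments (the paper says ``by H\"older's inequality'' where you use Jensen), use that the rows of $SU$ are identically distributed to reduce to row $1$, and invoke Lemma \ref{lem:rowbounds}. The paper's proof is just a terser version of what you wrote, with the only cosmetic difference being that it uses the exact equality $\frac{1}{|J|}\sum_{j\in J}\E_S[\norm{e_j^TSU}^q]=\E_S[\norm{e_1^TSU}^q]$ from identical row distributions rather than your $\max$ bound.
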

\begin{proof}
    By Hölder's inequality, it suffices to prove these bounds for moments of the order of the smallest even integer bigger than $q$, so without loss of generality, we may assume that $q$ is an even integer. 
    \begin{align*}
        \E_{\mu, S} [ \norm{e_{\mu}^TSU}^{q}] =& \frac{1}{\abs{J}}\sum_{j \in J} \E_{S(t)} [ \norm{e_{j}^TSU}^{q}] \\
        & = \E_{S} [ \norm{e_{1}^TS(t)U}^{q}]
    \end{align*}
    since rows of $S(t)U$ are identically distributed. The claim now follows from Lemma \ref{lem:rowbounds}.

\end{proof}

\section{Initial Moment Estimate.} \label{sec:initialmom}

Recall that, to get the optimal embedding dimension and sparsity requirement, it suffices to bound
\begin{equation}\label{eq:optmomcopy}
    \begin{aligned}
    \norm{(S_1 U)^T(S_2 U)}_{2q} = O(\sqrt{p\max\{m,q\}p\max\{d,q\}+q^2})
\end{aligned}
\end{equation}

To this end, we design the following iterative decoupling method to get almost optimal estimate for $\norm{(S_1 U)^T(S_2 U)}_{2q}$. For convenience, let $Y=X_1^TX_2$. Briefly speaking, we work with the matrix with the form $VY$ and its norm $\norm{VY}_{2q}$ where $V$ is a fixed matrix. We start from some initial moment estimate results derived by applying Proposition \ref{prop:bvh} and improve this moment estimate iteratively. Each time, we write 
\begin{align*}
    \norm{VY}_{4q}=&(\E\tr((Y^TV^TVY)^{2q}))^{\frac{1}{4q}}
\\=&\norm{Y^TV^TVY}_{2q}^{1/2}
\end{align*}
and then use decoupling to estimate
\begin{align*}
    \norm{Y^TV^TVY}_{2q} \le \norm{\text{diagonal terms}}_{2q}+\norm{Y_1^TV^TVY_2}_{2q}
\end{align*}
Then, by conditioning on $Y_1$, we can estimate $\norm{Y_1^TV^TVY_2}_{2q}$ by applying the previous universality result to $\norm{WY_2}_{2q}$ where $W=Y_1^TV^TV$. By using some basic tools, each time the universality result is improved.

By using this method, we can iteratively improve the moment estimate for $VY$. And in the final estimate, we will plug in $V=I_d$ to obtain an estimate for the moment of $Y=X_1^TX_2$, which is what we need for the analysis of OSE properties.

To make the above argument precise, we start with the following initial moment estimate, which follows directly from Proposition \ref{prop:bvh}.

\begin{lemma}[Initial Moment Estimate]\label{lem:initialuniv}
Let $S$ be an $m \times n$ matrix distributed according to the fully independent unscaled OSNAP distribution with parameter $p$ and standard independent summands $\{Z_{(l,\gamma)}\}_{(l,\gamma) \in \Xi}$. Let $\{(S_{\lambda},\{Z_{\lambda,(l,\gamma)}\}_{(l,\gamma) \in \Xi})\}_{\lambda=0,1,2,...}$ be independent copies of $(S,\{Z_{(l,\gamma)}\}_{(l,\gamma) \in \Xi})$. Let $U$ be an arbitrary $n \times d$ deterministic matrix such that $U^TU=I_d$.
Let $X=SU$, and $X_1,X_2,...$ be i.i.d. copies of $X$. Let $Y=X_{1}^TX_{2}$. For any $d \times d$ deterministic matrix $V$ and positive integer $r$, define
\begin{align*}
    R_{1,2r}(V)=(\sum\limits_{(l,\gamma) \in \Xi}(\norm{VX_1^TZ_{2,(l,\gamma)}U}_{2r}^{2r}))^{1/(2r)}
\end{align*}
and
\begin{align*}
    R_{2,2r}(V)=(\sqrt{pd+2r}) (\sum\limits_{(l,\gamma) \in \Xi}\frac{1}{d}\norm{Vu_l}_{l_2([d])}^{2r})^{1/(2r)}
\end{align*}
Then there exists a constant $c_{\ref{lem:initialuniv}}>1$ such that for any $n \times d$ deterministic matrix $V$, and $\log(d) \le q \le r$, we have
\begin{equation}\label{eq:initmom}
    \begin{aligned}
    &\norm{VY}_{2q} \\\le&  c_{\ref{lem:initialuniv}}(\sqrt{p\max\{m,q\}p\max\{d,q\}}\norm{V} + q^{2}R_{2,2r}(V) + q^{2}R_{1,2r}(V))
\end{aligned}
\end{equation}

\end{lemma}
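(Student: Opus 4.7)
The plan is to invoke the Brailovskaya--van Handel universality result (Proposition \ref{prop:bvh}) twice in succession, each time conditionally, so as to successively Gaussianize $X_1$ and $X_2$. Writing $VY = VX_1^T S_2 U = \sum_{(l,\gamma)\in\Xi} VX_1^T Z_{2,(l,\gamma)} U$ and conditioning on $X_1$, the summands become independent. Applying Proposition \ref{prop:bvh} conditionally on $X_1$ and then taking the $L_{2q}$ norm in $X_1$ via Minkowski yields
\begin{equation*}
\|VX_1^T S_2 U\|_{2q} \le \|VX_1^T G_2 U\|_{2q} + C q^2 R_{1,2q}(V),
\end{equation*}
where $G_2$ is the $m\times n$ Gaussian matrix matching the covariance of $S_2$ (i.e.\ iid $\mathcal{N}(0,p)$ entries). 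The error collapses into $R_{1,2q}(V)$ after the iterated expectation $\E_{X_1}\E_{Z_2}[\tr|VX_1^T Z_{2,(l,\gamma)}U|^{2q}]$ is recognized as $\|VX_1^T Z_{2,(l,\gamma)}U\|_{2q}^{2q}$.

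Next, to handle the residual term $\|VX_1^T G_2 U\|_{2q}$, I would decompose $VX_1^T G_2 U = \sum_{(l,\gamma)\in\Xi} VU^T Z_{1,(l,\gamma)}^T G_2 U$ as a sum of matrices that are independent conditionally on $G_2$, and apply Proposition \ref{prop:bvh} a second time with the roles of Gaussian and non-Gaussian factors reversed. This produces
\begin{equation*}
\|VX_1^T G_2 U\|_{2q} \le \|VG_1^T G_2 U\|_{2q} + Cq^2 \Big(\sum_{(l,\gamma)\in\Xi} \|VU^T Z_{1,(l,\gamma)}^T G_2 U\|_{2q}^{2q}\Big)^{1/(2q)}.
\end{equation*}
Exploiting the rank-one structure $Z_{1,(l,\gamma)} = \xi_{(l,\gamma)} e_{\mu_{(l,\gamma)}} e_l^T$, each summand equals $\pm(Vu_l)(e_{\mu_{(l,\gamma)}}^T G_2 U)$, a rank-one matrix whose normalized Schatten norm factors as $d^{-1/(2q)}\|Vu_l\|\cdot\|e_{\mu_{(l,\gamma)}}^T G_2 U\|_{L_{2q}(\mathbb{P})}$. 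Since $U^TU = I$, the row $e_\mu^T G_2 U$ is a $d$-dimensional Gaussian with iid $\mathcal{N}(0,p)$ entries, so a standard chi-squared moment estimate gives $\|e_\mu^T G_2 U\|_{L_{2q}(\mathbb{P})} \le C\sqrt{pd+q}$. Summing over $(l,\gamma)\in\Xi$ then matches this error term with $Cq^2 R_{2,2q}(V)$.

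For the fully Gaussianized term $\|VG_1^T G_2 U\|_{2q}$, submultiplicativity of the normalized Schatten $L_{2q}(S_{2q}^d)$-norm (via Lemma \ref{lem:holdervanhandel} with exponents $(\infty,2q)$ applied twice) gives $\|VG_1^T G_2 U\|_{2q} \le \|V\|\cdot\|G_1^T G_2\|_{2q}$. After rescaling $G_i = \sqrt{p}\tilde G_i$ with $\tilde G_i$ standard Gaussian, Lemma \ref{cor:indgaussianmom} yields $\|G_1^T G_2\|_{2q} \le C\sqrt{p\max\{m,q\}\,p\max\{d,q\}}$. Combining the three bounds gives the claim at moment level $r=q$, and the passage to general $r\ge q$ follows by a standard Hölder/monotonicity argument applied to the sums defining $R_{1,2r}$ and $R_{2,2r}$, together with Schatten monotonicity $\|\cdot\|_{2q}\le\|\cdot\|_{2r}$; the factor $\sqrt{pd+2r}$ in $R_{2,2r}$ precisely accommodates the extra dependence on $r$. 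The main technical obstacle is the careful bookkeeping across the two conditional applications of universality --- in particular, verifying that the Minkowski step correctly collapses iterated conditional $L_{2q}$-norms into the defining unconditional forms of $R_{1,2r}(V)$ and $R_{2,2r}(V)$, and handling the slight moment-order mismatch introduced by working at level $2q$ inside Proposition \ref{prop:bvh} while quoting the result in terms of $R_{\cdot,2r}$.
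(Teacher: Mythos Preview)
Your proposal is correct and follows essentially the same route as the paper: two conditional applications of Proposition~\ref{prop:bvh} (first conditioning on $X_1$ to Gaussianize $X_2$, then conditioning on $G_2$ to Gaussianize $X_1$), identifying the two error terms with $R_{1,2r}(V)$ and $R_{2,2r}(V)$ via the rank-one structure of $Z_{(l,\gamma)}$, and bounding the fully Gaussianized term through Lemma~\ref{cor:indgaussianmom}. The paper's ``formal argument'' with the auxiliary functions $f_2,f_3,f_4$ is exactly your Minkowski/iterated-expectation step, and the moment-order mismatch you flag at the end is handled in the paper by the inequality $\|f_4(X_1)\|_{L_{2q}(\mathbb P)}\le\|f_4(X_1)\|_{L_{2r}(\mathbb P)}$ applied to an $f_4$ already defined at level $2r$.
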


\begin{proof}

We have $\norm{VY}_{2q}=\norm{VX_1^TX_2}_{2q}$. By conditioning on $X_1$, we first estimate $\norm{VW^TX_2}_{2q}$ for some fixed matrix $W$, and later we plug in $W=X_1$ and uncondition on $X_1$ using the fact that $\norm{VX_1X_2}_{2q}^{2q}$ can be computed using iterated integration.

Applying Proposition \ref{prop:bvh} to $\sym(VW^TX_2)$, we have
\begin{align*}
    &\norm{VW^TX_2}_{2q}
    \\ \le & \norm{VW^TG_2}_{2q} + c_2(\sum\limits_{(l,\gamma) \in \Xi}(\norm{VW^TZ_{2,(l,\gamma)}U}_{2r}^{2r}))^{1/(2r)}q^{2}
\end{align*}
where $\Xi := [n]\times[pm]$ denotes the set of index pairs $(l,\gamma)$.

Plugging in $W=X_1$ and conditioning on $X_1$, we have
\begin{align*}
    \norm{VY}_{2q}=&\norm{VX_1^TX_2}_{2q}
    \\ \le & \norm{VX_1^TG_2}_{2q} + c_2q^{2}\left(\E_{X_1}(\sum\limits_{(l,\gamma) \in \Xi}(\E_{\{Z_{2,(l,\gamma)}\}_{(l,\gamma) \in \Xi}} \tr(VX_1^TZ_{2,(l,\gamma)}U)^{2r}))^{\frac{1}{2r} \cdot (2q)}\right)^{1/2q}
    \\ \le & \norm{VX_1^TG_2}_{2q} + c_2q^{2}\left(\E_{X_1}(\sum\limits_{(l,\gamma) \in \Xi}(\E_{\{Z_{2,(l,\gamma)}\}_{(l,\gamma) \in \Xi}} \tr(VX_1^TZ_{2,(l,\gamma)}U)^{2r}))^{\frac{1}{2r} \cdot (2r)}\right)^{1/2r}
    \\ = & \norm{VX_1^TG_2}_{2q} + c_2R_{1,2r}(V)q^{2}
\end{align*}
where we use the fact that $\norm{\xi}_{L_{2q}(\Pb)} \le \norm{\xi}_{L_{2r}(\Pb)}$ for any random variable $\xi$ because $q \le r$.

Formally, we have the following argument.

Let $f_2(W)=\norm{VW^TX_2}_{2q}$. We have
\begin{align*}
    \norm{VY}_{2q}=\norm{VX_1^TX_2}_{2q} = \norm{f_2(X_1)}_{L_{2q}(\Pb)} 
\end{align*}

We have
\begin{align*}
    f_2(W)=&\norm{VW^TX_2}_{2q}
    \\ \le & \norm{VW^TG_2}_{2q} + c_2(\sum\limits_{(l,\gamma) \in \Xi}(\norm{VW^TZ_{2,(l,\gamma)}U}_{2r}^{2r}))^{1/(2r)}q^{2}
\end{align*}

Let $f_3(W)=\norm{VW^TG_2}_{2q} $ and $f_4(W)=(\sum\limits_{(l,\gamma) \in \Xi}(\norm{VW^TZ_{1,(l,\gamma)}U}_{2r}^{2r}))^{1/(2r)}$.

Therefore, we have
\begin{align*}
    \norm{VX_1^TX_2}_{2q}  = &\norm{f_2(X_1)}_{L_{2q}(\Pb)} 
\\  \le & \norm{f_3(X_1)+c_2q^{2}f_4(X_1)}_{L_{2q}(\Pb)} 
\\ \le & \norm{f_3(X_1)}_{L_{2q}(\Pb)} + c_2q^{2}\norm{f_4(X_1)}_{L_{2q}(\Pb)} 
\end{align*}

We have
\begin{align*}
    \norm{f_3(X_1)}_{L_{2q}(\Pb)} = & (\E_{X_1}((\E_{G_2} \tr (|VX_1^TG_2|^{2q}))^{1/(2q)})^{2q})^{1/(2q)} \\ = &\norm{VX_1^TG_2}_{2q}
\end{align*}
and
\begin{align*}
    \norm{f_4(X_1)}_{L_{2q}(\Pb)} \le &\norm{f_4(X_1)}_{L_{2r}(\Pb)} \\= & (\E_{X_1}((\sum\limits_{(l,\gamma) \in \Xi}(\E_{X_2}\tr((VW^TZ_{1,(l,\gamma)}U)^{2r})))^{1/(2r)})^{2r})^{1/(2r)} \\ = & (\E_{X_1}\sum\limits_{(l,\gamma) \in \Xi}\E_{X_2}\tr((VW^TZ_{1,(l,\gamma)}U)^{2r}))^{1/(2r)} 
    \\ = & (\sum\limits_{(l,\gamma) \in \Xi}\E_{X_1}\E_{X_2}\tr((VW^TZ_{1,(l,\gamma)}U)^{2r}))^{1/(2r)}
    \\=& R_{1,2r}(V)
\end{align*}

To estimate $\norm{VX_1^TG_2}_{2q}$, we condition on $G_2$ and estimate $\norm{VX_1^TW}_{2q}$ for some fixed matrix $W$. Applying Proposition \ref{prop:bvh} to $\sym(VX_1^TW)$, we have
\begin{align*}
    &\norm{VX_1^TW}_{2q} 
    \\ \le & \norm{VG_1^TW}_{2q} + c_2 (\sum\limits_{(l,\gamma) \in \Xi}(\norm{VU^TZ_{1,(l,\gamma)}W}_{2q}^{2q}))^{1/(2q)}q^{2}
\end{align*}

Plugging in $W=G_2$ and unconditioning on $G_2$ as before, we have
\begin{align*}
    &\norm{VX_1^TG_2}_{2q} 
    \\ \le & \norm{VG_1^TG_2}_{2q} + c_2 (\sum\limits_{(l,\gamma) \in \Xi}(\norm{VU^TZ_{1,(l,\gamma)}G_2}_{2r}^{2r}))^{1/(2r)}q^{2}
\end{align*}
where we use again the fact that $\norm{\xi}_{L_{2q}(\Pb)} \le \norm{\xi}_{L_{2r}(\Pb)}$ for any random variable $\xi$ because $q \le r$.

For the first term in the last line, by Lemma \ref{cor:indgaussianmom}, we have
\begin{align*}
    \norm{VG_1^TG_2}_{2q}
     \le &\norm{V}\norm{G_1^TG_2}_{2q}
    \\ \le &\sqrt{p\max\{m,q\}p\max\{d,q\}}\norm{V}
\end{align*}
and for the second term, we have
\begin{align*}
    &(\sum\limits_{(l,\gamma) \in \Xi}(\norm{VU^TZ_{1,(l,\gamma)}G_1}_{2r}^{2r}))^{1/(2r)} 
    \\ = & (\sum\limits_{(l,\gamma) \in \Xi}\E_{G_2}\E_{X_1}\tr((VU^TZ_{1,(l,\gamma)}G_2)^{2r}))^{1/(2r)}
\end{align*}

We want to estimate
\begin{align*}
    \E_{G_2}\E_{X_1}\tr((VU^TZ_{1,(l,\gamma)}G_2)^{2r})=\E_{X_1}\E_{G_2}\tr((VU^TZ_{1,(l,\gamma)}G_2)^{2r})
\end{align*}

By definition of OSNAP, we have
\begin{align*}
    U^TZ_{1,(l,\gamma)}=u_le_{\mu_{l,\gamma}}^T
\end{align*}

For any vector in $\bS^{m-1}$, we know that $v^TG_2$ is a random vector with i.i.d. gaussian coordinates of variance $p$. Therefore, we have
\begin{align*}
    \norm{\norm{v^TG_2}_{l_2([d])}}_{L_{2r}(\Pb)} \le c_3 \sqrt{pd+2r}
\end{align*}

Let $f_8(v)=\norm{\norm{v^TG_2}_{l_2([d])}}_{L_{2r}(\Pb)}$.

We have
\begin{align*}
&\E_{X_1}\E_{G_2}\tr((VU^TZ_{1,(l,\gamma)}G_2)^{2r}) \\ = & \E_{X_1}\E_{G_2}\tr((Vu_le_{\mu_{l,\gamma}}^TG_2)^{2r})
\\ \le &\E_{X_1}\E_{G_2}\frac{1}{d}(\norm{V u_l}_{l_2([d])}\norm{e_{\mu_{l,\gamma}}^TG_2}_{l_2([d])})^{2r}
\\ = & \frac{1}{d}\norm{V u_l}_{l_2([d])}^{2r}\E_{X_1} (f_8(e_{\mu_{l,\gamma}}^T))^{2r}
\\ \le & \frac{1}{d}\norm{V u_l}_{l_2([d])}^{2r} (c_3\sqrt{pd+2r})^{2r}
\end{align*}

Therefore, we have
\begin{align*}
    & (\sum\limits_{(l,\gamma) \in \Xi}\E_{G_2}\E_{X_1}\tr((VU^TZ_{1,(l,\gamma)}G_2)^{2r}))^{1/(2r)}
    \\ \le & (\sum\limits_{(l,\gamma) \in \Xi}\frac{1}{d}\norm{V u_l}_{l_2([d])}^{2r} (c_3\sqrt{pd+2r})^{2r})^{1/(2r)}
    \\ \le & c_3(\sqrt{pd+2r}) (\sum\limits_{(l,\gamma) \in \Xi}\frac{1}{d}\norm{Vu_l}_{l_2([d])}^{2r})^{1/(2r)}
    \\=& c_3R_{2,2r}(V)
\end{align*}

\end{proof}

Before moving on to improve this moment estimate, we first check what OSE results the initial moment estimate can give. To this end, we plug in $V=I_d$ into the estimate (\ref{eq:iterassum}) and get
\begin{align*}
    &\norm{Y}_{2q} \\\le&  c_{\ref{lem:initialuniv}}(\sqrt{p\max\{m,q\}p\max\{d,q\}} + q^{2}R_{2,2r}(I_d) + q^{2}R_{1,2r}(I_d))
\end{align*}
And the next lemma gives the estimates for $R_{1,2r}(I_d)$ and $R_{2,2r}(I_d)$.

\begin{lemma}[Estimates of $R_{1,2r}$ and $R_{2,2r}$]\label{lem:R1R2}
We have
\begin{align*}
    R_{1,2r}(I_d) \le (pm)^{1/(2r)}(c_{\ref{lem:R1R2}.1}\sqrt{pd+2r})
\end{align*}
and
\begin{align*}
    R_{2,2r}(I_d) \le (pm)^{1/(2r)}(c_{\ref{lem:R1R2}.2}\sqrt{pd+2r})
\end{align*}
where $R_{1,2r}(V)$ and $R_{2,2r}(V)$ are defined as in Lemma \ref{lem:initialuniv}.
\end{lemma}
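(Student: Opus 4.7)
The plan is to handle the two bounds separately, treating $R_{2,2r}(I)$ first since it is essentially deterministic, and then reducing $R_{1,2r}(I)$ to a very similar deterministic sum after exploiting the rank-one structure of the summands $Z_{2,(l,\gamma)} U$.

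For $R_{2,2r}(I)$, plugging $V = I$ into the definition gives
\[
R_{2,2r}(I) = \sqrt{pd+2r}\,\Big(\sum_{(l,\gamma)\in\Xi}\tfrac{1}{d}\|u_l\|^{2r}\Big)^{1/(2r)},
\]
where $u_l = U^T e_l\in\R^d$ denotes the $l$-th row of $U$. Since $U$ has orthonormal columns, each $\|u_l\|\le 1$ and $\sum_{l=1}^n \|u_l\|^2 = \tr(UU^T)\cdot d = d$, so
\[
\sum_{l=1}^n \|u_l\|^{2r} \le \sum_{l=1}^n \|u_l\|^{2} \le d.
\]
As the summand does not depend on $\gamma$ and $\gamma$ runs over $[pm]$, summing over $\Xi = [n]\times[pm]$ contributes a factor of $pm$, giving $\sum_{(l,\gamma)}\|u_l\|^{2r}\le pmd$. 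Taking the $1/(2r)$-th power yields the claimed bound on $R_{2,2r}(I)$.

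For $R_{1,2r}(I)$, the key observation is that each summand $Z_{2,(l,\gamma)} U$ has rank one. Using the OSNAP representation $Z_{2,(l,\gamma)} = \xi_{2,(l,\gamma)} e_{\mu_{2,(l,\gamma)}} e_l^T$, one gets $Z_{2,(l,\gamma)} U = \xi_{2,(l,\gamma)}\, e_{\mu_{2,(l,\gamma)}} u_l^T$ and hence
\[
X_1^T Z_{2,(l,\gamma)} U = \xi_{2,(l,\gamma)}\,(X_1^T e_{\mu_{2,(l,\gamma)}})\,u_l^T,
\]
which is a rank-one $d\times d$ matrix whose unique nonzero singular value equals $\|X_1^T e_{\mu_{2,(l,\gamma)}}\|\cdot\|u_l\|$. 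Therefore $\tr|X_1^T Z_{2,(l,\gamma)} U|^{2r} = \tfrac{1}{d}\,\|X_1^T e_{\mu_{2,(l,\gamma)}}\|^{2r}\,\|u_l\|^{2r}$. Since $S_1$ and $(S_2,\mu_{2,(l,\gamma)})$ are independent and $\mu_{2,(l,\gamma)}$ is uniform on the $\gamma$-th block of rows, while the rows of $X_1=S_1U$ are identically distributed, we can apply Corollary \ref{lem:rownormbound} (or directly Lemma \ref{lem:rowbounds}) to obtain
\[
\E\bigl[\|X_1^T e_{\mu_{2,(l,\gamma)}}\|^{2r}\bigr]^{1/(2r)} \le c(\sqrt{pd}+\sqrt{r}) \le c'\sqrt{pd+2r}.
\]
Combining with the deterministic bound $\sum_{(l,\gamma)\in\Xi}\|u_l\|^{2r}\le pmd$ established above, we get
\[
R_{1,2r}(I)^{2r} = \sum_{(l,\gamma)\in\Xi}\|X_1^T Z_{2,(l,\gamma)}U\|_{2r}^{2r} \le \frac{C(\sqrt{pd+2r})^{2r}}{d}\cdot pmd = C(pm)(\sqrt{pd+2r})^{2r},
\]
and taking the $1/(2r)$-th power gives $R_{1,2r}(I)\le c_{\ref{lem:R1R2}.1}(pm)^{1/(2r)}\sqrt{pd+2r}$, as required.

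I do not expect any serious obstacles. The main subtlety is just bookkeeping the independence of $S_1$ and $S_2$, which lets us pull out the row-norm moment from Lemma \ref{lem:rowbounds} cleanly, and recognizing that $Z_{2,(l,\gamma)} U$ is rank one so that the Schatten $2r$-norm collapses to a product of two vector norms. Both of those, together with the orthonormality of $U$, reduce the problem to the same deterministic sum $\sum_{(l,\gamma)}\|u_l\|^{2r}\le pmd$, which is the common engine of both bounds.
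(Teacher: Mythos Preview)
Your proposal is correct and follows essentially the same approach as the paper: for $R_{2,2r}(I)$ you use $\|u_l\|\le 1$ to reduce $\sum_l\|u_l\|^{2r}$ to $\sum_l\|u_l\|^2=d$, and for $R_{1,2r}(I)$ you exploit the rank-one structure of $X_1^T Z_{2,(l,\gamma)}U$ to factor the Schatten moment into a row-norm moment (handled by Lemma~\ref{lem:rowbounds}/Corollary~\ref{lem:rownormbound}) times the same deterministic sum. The only cosmetic slip is the expression $\tr(UU^T)\cdot d=d$; the correct identity is simply $\sum_l\|u_l\|^2=\|U\|_F^2=\Tr(U^TU)=d$.
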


\begin{proof}
To estimate $R_{1,2r}(I_d)$, we observe that
\begin{align*}
    R_{1,2r}(I_d)=&(\sum\limits_{(l,\gamma) \in \Xi}(\norm{X_1^TZ_{2,(l,\gamma)}U}_{2r}^{2r}))^{1/(2r)}
    \\=&(\sum\limits_{(l,\gamma) \in \Xi}(\norm{U^TS_1^T\xi_{2,(l,\gamma)}e_{\mu_{2,(l,\gamma)}}e_l^TU}_{2r}^{2r}))^{1/(2r)}
    \\=&(\sum\limits_{(l,\gamma) \in \Xi}(\frac{1}{d}\E(\norm{U^TS_1^Te_{\mu_{2,(l,\gamma)}}}_2^{2r})\norm{u_l}_2^{2r}))^{1/(2r)}
\end{align*}

By Lemma \ref{lem:rownormbound}, we know that
\begin{align*}
    \E(\norm{U^TS_1^Te_{\mu_{2,(l,\gamma)}}}_2^{2r}) \le (c_1\sqrt{pd+2r})^{2r}
\end{align*}
for some constant $c_1$.

Therefore, we have
\begin{align*}
    R_{1,2r}(I_d)
    =&(\sum\limits_{(l,\gamma) \in \Xi}(\frac{1}{d}\E(\norm{U^TS_1^Te_{\mu_{2,(l,\gamma)}}}_2^{2r})\norm{u_l}_2^{2r}))^{1/(2r)}
    \\ \le & (\sum\limits_{(l,\gamma) \in \Xi}(\frac{1}{d}(c_1\sqrt{pd+2r})^{2r}\norm{u_l}_2^{2r}))^{1/(2r)}
    \\ \le & (\sum\limits_{(l,\gamma) \in \Xi}(\frac{1}{d}(c_1\sqrt{pd+2r})^{2r}\norm{u_l}_2^{2}))^{1/(2r)}
    \\=& (pmd\frac{1}{d}(c_1\sqrt{pd+2r})^{2r})^{1/(2r)}
    \\=& (pm)^{1/(2r)}(c_1\sqrt{pd+2r})
\end{align*}

To estimate $R_{2,2r}(I_d)$, we calculate
\begin{align*}
    R_{2,2r}(I_d)=&(\sqrt{pd+2r}) (\sum\limits_{(l,\gamma) \in \Xi}\frac{1}{d}\norm{u_l}_{l_2([d])}^{2r})^{1/(2r)}
    \\ \le & (\sqrt{pd+2r}) (\sum\limits_{(l,\gamma) \in \Xi}\frac{1}{d}\norm{u_l}_{l_2([d])}^{2})^{1/(2r)}
    \\ = & (\sqrt{pd+2r}) (pmd\frac{1}{d})^{1/(2r)}
    \\ = & (\sqrt{pd+2r}) (pm)^{1/(2r)}
\end{align*}

\end{proof}

In summary, from initial moment estimates, we have
\begin{align}
    \norm{Y}_{2q} =  O(\sqrt{p\max\{m,q\}p\max\{d,q\}} + q^{2}(pm)^{1/(2r)}\sqrt{pd+2r}) \label{eq:bvhmombound}
\end{align}

Comparing this estimate with our final objective
\begin{align*}
    &\norm{Y}_{2q} \\=&  O(\sqrt{p\max\{m,q\}p\max\{d,q\}+q^2})
    \\=&  O(\sqrt{p\max\{m,q\}p\max\{d,q\}}+q)
\end{align*}
we see that the main issue is the term $q^{2}(pm)^{1/(2r)}\sqrt{pd+2r}$ which will introduce extra $\log (d/\delta)$ factors in the sparsity requirements. It is this term that we improve in subsequent estimates by iterative decoupling.

\section{Iteratively Improved Moment Estimate.}\label{sec:iterdecoup}

Now we explain in detail how one can remove the term $q^{2}(pm)^{1/(2r)}\sqrt{pd+2r}$, by the iterative decoupling argument. The iterative decoupling argument starts with using the following bound
\begin{align*}
    \norm{Y^TV^TVY}_{2q} \le \norm{\text{diagonal terms}}_{2q}+\norm{Y_1^TV^TVY_2}_{2q}
\end{align*}
to transform $\norm{Y^TV^TVY}_{2q}$ into diagonal terms and a decoupled term. Furthermore, we need to obtain a good bound for the diagonal terms. These objectives are achieved by the following first layer and second layer decoupling lemmas (Lemma \ref{lem:decoupfirstlayer} and \ref{lem:decoupsecondlayer}) whose proofs can be found in Section \ref{sec:diagterms}.

\begin{lemma}[First Layer Decoupling]\label{lem:decoupfirstlayer}
Let $S$, $p$, $\{Z_{(l,\gamma)}\}_{(l,\gamma) \in \Xi}$, $\{(S_{\lambda},\{Z_{\lambda,(l,\gamma)}\}_{(l,\gamma) \in \Xi})\}_{\lambda=0,1,2,...}$, $U$, $X$, $X_1,X_2,...$, and $Y_1,Y_2,...$ be as in Lemma \ref{lem:initialuniv}.
Then for any $d \times d$ deterministic matrix $V$, we have
\begin{align*}
    &\norm{Y_1^TV^TVY_1}_{2q} \\ \le&
    \norm{V}^2 c_{\ref{lem:decoupfirstlayer}}(pmpd+q^2+(pm)^{1/(q)}q(pd+q))\\&+4\norm{X_3^TX_1V^TVX_1^TX_4}_{2q}
\end{align*}

\end{lemma}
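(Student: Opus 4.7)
The plan is to write $Y_1^T V^T V Y_1 = X_2^T A X_2$ where $A := X_1 V^T V X_1^T \succeq 0$, expand $X_2$ into its OSNAP summands, separate the diagonal and off-diagonal pieces of the resulting double sum, and bound each separately via the triangle inequality. Concretely, writing $X_2 = \sum_{(l,\gamma)} \xi_{2,(l,\gamma)} e_{\mu_{2,(l,\gamma)}} u_l^T$ with $u_l = U^T e_l$, I would decompose $X_2^T A X_2 = T_D + T_O$, where $T_D = \sum_{(l,\gamma)} (e_{\mu_{2,(l,\gamma)}}^T A e_{\mu_{2,(l,\gamma)}}) u_l u_l^T$ comes from the pairs with $(l,\gamma) = (l',\gamma')$ and $T_O$ collects the rest.

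For $T_O$, I would apply the standard decoupling inequality (proved in Appendix \ref{sec:decoup}) conditionally on $X_1$ to replace the two copies of $S_2$ hidden in $T_O$ with independent copies $S_3, S_4$ at the cost of a factor of $4$. Since $X_3^T A X_4 = X_3^T X_1 V^T V X_1^T X_4$, this produces $\norm{T_O}_{2q} \le 4 \norm{X_3^T X_1 V^T V X_1^T X_4}_{2q}$, matching the off-diagonal term on the right-hand side of the claim.

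For $T_D$, I would condition on $X_1$ and treat the mean and the fluctuation separately. The mean is easy: since each $\mu_{2,(l,\gamma)}$ is uniform on its block of size $1/p$, $\E[e_{\mu_{2,(l,\gamma)}}^T A e_{\mu_{2,(l,\gamma)}} \mid X_1] = p\, \tr(P_\gamma A)$ where $P_\gamma$ projects onto the $\gamma$-th row block; summing in $\gamma$ and using $\sum_l u_l u_l^T = I_d$ gives $\E[T_D \mid X_1] = p \norm{V X_1^T}_F^2 \, I_d$, whose $L_{2q}$ norm is bounded by Lemma \ref{lem:xfrobnorm} and contributes the $\norm{V}^2(pmpd + pq)$ piece. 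For the fluctuation, I would exploit the block-diagonal structure $T_D = U^T \diag(\beta_1, \ldots, \beta_n) U$ with $\beta_l = \sum_\gamma c_{l,\gamma}$, where each $\beta_l$ is (conditionally on $X_1$) a sum of $pm$ independent scalars satisfying $0 \le c_{l,\gamma} \le \norm{V}^2 \norm{e_{\mu_{2,(l,\gamma)}}^T X_1}_2^2$. Applying a Rosenthal-type inequality (Lemma \ref{lem:rosenthal1}) to $\beta_l - \E[\beta_l \mid X_1]$ and then aggregating over $l$, combined with the row-norm moment bound of Corollary \ref{lem:rownormbound}, yields the $(pm)^{1/q} q (pd+q)$ contribution, with the $(pm)^{1/q}$ factor arising from taking an $L_q$-norm of a maximum over the $pm$ conditionally independent row indices $\{\mu_{2,(l,\gamma)}\}_\gamma$.

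The main obstacle I anticipate is the two-level conditional bookkeeping needed to land on exactly the form $pmpd + q^2 + (pm)^{1/q} q(pd+q)$. Specifically, once we have the Rosenthal-type bound for $T_D$ inside the conditional expectation on $X_1$, we must take an outer $L_{2q}$ norm in $X_1$ and feed in the Frobenius- and row-norm moment bounds of Lemmas \ref{lem:xfrobnorm} and \ref{lem:rownormbound}, keeping a $\norm{V}^2$ factor pulled cleanly out front. The scalar reduction via $T_D = U^T \diag(\beta_l) U$ is what makes this tractable, but care is needed to extract the maximum term correctly so as to produce the $(pm)^{1/q}$ improvement that is essential in the sparse regime $pm \ll m$.
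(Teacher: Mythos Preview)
Your proposal is correct and matches the paper's proof in all the main steps: condition on $X_1$ (setting $W=VX_1^T$), decouple the off-diagonal part in $X_2$ via Lemma~\ref{decgeneral} to produce the $4\norm{X_3^TX_1V^TVX_1^TX_4}_{2q}$ term, and then bound the diagonal sum using a Rosenthal-type inequality together with the Frobenius- and row-norm moment estimates of Lemmas~\ref{lem:xfrobnorm} and~\ref{lem:rowbounds}.

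The one structural difference is in how the diagonal term $T_D$ is treated. The paper applies the \emph{matrix} Rosenthal inequality (Corollary~7.4 of \cite{mackey2014matrix}) directly to the full PSD sum $\sum_{(l,\gamma)} (WZ_{2,(l,\gamma)}U)^T(WZ_{2,(l,\gamma)}U)$, so that the tail term $\big(\sum_{(l,\gamma)} \norm{\cdot}_{2q}^{2q}\big)^{1/(2q)}$ already aggregates over $l$ via $\sum_l \norm{u_l}^{4q}\le \sum_l \norm{u_l}^2=d$, yielding the $(pm)^{1/(2q)}$ factor with no dependence on $n$. Your route---scalar Rosenthal on each $\beta_l$ and then ``aggregate over $l$''---works provided the aggregation is itself done at the matrix level (e.g.\ Matrix Rosenthal on $\sum_l \beta_l u_l u_l^T$); but if you attempt it through the operator-norm bound $\norm{U^T\diag(\beta)U}_{op}\le \max_l \beta_l$ followed by a union bound over $l$, you will pick up an extraneous $n^{1/(2q)}$ that is not controlled by any hypothesis in the paper. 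The single-step matrix Rosenthal is precisely what makes the $l$-aggregation free.
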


\begin{lemma}[Second Layer Decoupling]\label{lem:decoupsecondlayer}
Let $S$, $p$, $\{Z_{(l,\gamma)}\}_{(l,\gamma) \in \Xi}$, $\{(S_{\lambda},\{Z_{\lambda,(l,\gamma)}\}_{(l,\gamma) \in \Xi})\}_{\lambda=0,1,2,...}$, $U$, $X$, $X_1,X_2,...$, and $Y_1,Y_2,...$ be as in Lemma \ref{lem:initialuniv}.
Then for any $d \times d$ deterministic matrix $V$, we have
\begin{align*}
    &\norm{X_3^TX_1V^TVX_1^TX_4}_{2q} \\ \le&
    \norm{V}^2 c_{\ref{lem:decoupsecondlayer}}(pmpd+pd \norm{Y_1}_{2q}+(pm)^{1/(q)}q(pd+q))\\&+4\norm{X_3^TX_5V^TVX_6^TX_4}_{2q}
\end{align*}

\end{lemma}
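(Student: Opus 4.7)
The plan is to mirror the argument of Lemma \ref{lem:decoupfirstlayer}, now decoupling the two copies of $X_1 = S_1 U$ sitting between $V^TV$ instead of the two outer $X_2$ factors. Writing $S_1 = \sum_{(l,\gamma) \in \Xi} Z_{1,(l,\gamma)}$ with $Z_{1,(l,\gamma)} = \xi_{1,(l,\gamma)} e_{\mu_{1,(l,\gamma)}} e_l^T$, I would first expand
\begin{align*}
    X_3^T X_1 V^T V X_1^T X_4 = \sum_{(l,\gamma),(l',\gamma') \in \Xi} \xi_{1,(l,\gamma)}\xi_{1,(l',\gamma')}\,(u_l^T V^T V u_{l'})\,(X_3^T e_{\mu_{1,(l,\gamma)}})(e_{\mu_{1,(l',\gamma')}}^T X_4),
\end{align*}
and split this double sum into a diagonal part $D$ consisting of the indices with $(l,\gamma)=(l',\gamma')$ and an off-diagonal remainder.

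For the off-diagonal part, after conditioning on $X_3, X_4$, I would apply the standard degree-two matrix decoupling inequality (the same tool used inside Lemma \ref{lem:decoupfirstlayer} and formalized in Appendix \ref{sec:decoup}) to replace the two coupled copies of $S_1$ by two fresh independent copies $S_5, S_6$. This yields
\begin{align*}
    \Big\| \sum_{(l,\gamma)\neq(l',\gamma')} X_3^T Z_{1,(l,\gamma)} U V^T V U^T Z_{1,(l',\gamma')}^T X_4 \Big\|_{2q} \le 4\, \norm{X_3^T X_5 V^T V X_6^T X_4}_{2q},
\end{align*}
which is exactly the off-diagonal contribution in the target bound.

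For the diagonal part, since $\xi_{1,(l,\gamma)}^2 = 1$, it simplifies to $D = \sum_{(l,\gamma)} \norm{V u_l}^2\,(X_3^T e_{\mu_{1,(l,\gamma)}})(e_{\mu_{1,(l,\gamma)}}^T X_4)$. I would split $D = \E[D\mid X_3, X_4] + (D - \E[D\mid X_3, X_4])$. Using $\sum_{\gamma} \E[e_{\mu_{1,(l,\gamma)}} e_{\mu_{1,(l,\gamma)}}^T] = p\,I_m$ together with $\sum_l \norm{Vu_l}^2 = \norm{V}_F^2$, the conditional mean collapses to $p\,\norm{V}_F^2\, X_3^T X_4$. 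Since $X_3^T X_4$ has the distribution of $Y_1$ and $\norm{V}_F^2 \le d\,\norm{V}^2$, this is precisely the $pd\,\norm{V}^2 \norm{Y_1}_{2q}$ contribution. Conditionally on $X_3, X_4$ the fluctuation $D - \E[D\mid X_3, X_4]$ is a sum of independent mean-zero rank-one matrices, so a matrix Rosenthal-type estimate (in the same spirit as Lemma \ref{lem:rosenthal1}) gives control in terms of a variance proxy and a per-summand maximum, which after taking outer expectations over $X_3, X_4$ via Lemmas \ref{lem:rowbounds} and \ref{lem:rownormbound} should produce the $pmpd\,\norm{V}^2$ and $(pm)^{1/q} q(pd+q)\,\norm{V}^2$ contributions, exactly paralleling the diagonal-term calculation in Lemma \ref{lem:decoupfirstlayer}.

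The main obstacle will be the fluctuation step: assembling the precise combination $pmpd + (pm)^{1/q} q(pd+q)$ requires tracking (i) the Frobenius-type matrix variance, which after averaging over the block-uniform $\mu$-variables involves $\norm{X_3^T e_{\mu_{1,(l,\gamma)}}}^2 \norm{e_{\mu_{1,(l,\gamma)}}^T X_4}^2$ and is controlled by $\norm{V}^2 pmpd$ via Lemma \ref{lem:xfrobnorm} together with a Hanson-Wright-type bound like Lemma \ref{lem:subgaussian-chaos}, and (ii) the maximum summand norm, where a $2q$-norm of a maximum over the $pm$-sized index set produces the hallmark $(pm)^{1/q}$ factor, which combined with Lemma \ref{lem:rownormbound} and the linear-in-$q$ Rosenthal tail term yields $(pm)^{1/q} q(pd+q)$. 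Matching constants and validating the order of conditioning is delicate but follows the same template as the proof of Lemma \ref{lem:decoupfirstlayer}.
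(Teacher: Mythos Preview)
Your decoupling step for the off-diagonal part is exactly right and matches the paper: conditioning on $X_3,X_4$ and invoking the degree-two decoupling of Lemma~\ref{decgeneral} yields the $4\norm{X_3^TX_5V^TVX_6^TX_4}_{2q}$ term.

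The treatment of the diagonal term, however, takes a different route from the paper and runs into a real difficulty. Your plan is to split $D=\E[D\mid X_3,X_4]+(D-\E[D\mid X_3,X_4])$; the mean indeed equals $p\|V\|_F^2\,X_3^TX_4$ and gives the $pd\|V\|^2\|Y_1\|_{2q}$ contribution. But the fluctuation $D-\E[D\mid X_3,X_4]$ is a sum of \emph{non-self-adjoint} centered matrices $\|Vu_l\|^2\bigl(X_3^Te_\mu e_\mu^TX_4-pX_3^TP_\gamma X_4\bigr)$, so the nonnegative-summand Rosenthal (Corollary~7.4 of \cite{mackey2014matrix}) does not apply. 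Using the mean-zero version instead introduces a $\sqrt{q}$ times the conditional matrix-variance term, and bounding that variance forces either a $\max_j\|e_j^TX_3\|$ factor (hence an $m^{1/(2q)}$, not $(pm)^{1/q}$, after taking the outer $L_{2q}$) or a trace-type bound yielding $\|V\|^2 pd\sqrt{pmd}$, which can exceed $\|V\|^2 pmpd$ when $d>pm$. In short, it is not clear your fluctuation step produces the clean $pmpd$ term claimed in the statement.

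The paper avoids this altogether by exploiting positive semidefiniteness. Writing the inner diagonal as $M:=\sum_{(l,\gamma)}\|Vu_l\|^2 e_{\mu_{1,(l,\gamma)}}e_{\mu_{1,(l,\gamma)}}^T=\Lambda^T\Lambda$, one has $D=(\Lambda X_3)^T(\Lambda X_4)$, and the H\"older inequality of Lemma~\ref{lem:holdervanhandel} gives
\[
\|X_3^T\Lambda^T\Lambda X_4\|_{2q}^{2}\le \|X_3^T\Lambda^T\Lambda X_3\|_{2q}\,\|X_4^T\Lambda^T\Lambda X_4\|_{2q}.
\]
Each factor is now a sum of \emph{PSD} summands $\|Vu_l\|^2\,X_4^Te_\mu e_\mu^TX_4$, so the nonnegative matrix Rosenthal applies directly. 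Its expectation term is exactly $p\|V\|_F^2\|X_4^TX_4\|_{2q}\le pd\|V\|^2\|X_4^TX_4\|_{2q}$, and then $\|X_4^TX_4\|_{2q}\le pm+4\|Y_1\|_{2q}$ (Lemma~\ref{lem:decoup}) delivers both the $pmpd$ and the $pd\|Y_1\|_{2q}$ pieces in one stroke; the Rosenthal moment term gives $(pm)^{1/(2q)}q(pd+q)$ just as in Lemma~\ref{lem:decoupfirstlayer}. This PSD--H\"older symmetrization is the missing ingredient in your outline.
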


Combining the estimates from Lemma \ref{lem:decoupfirstlayer} and Lemma \ref{lem:decoupsecondlayer}, we get,
\begin{align*}
    &\norm{Y_1^TV^TVY_1}_{2q} \\ \le&
    \norm{V}^2 (c_{\ref{lem:decoupfirstlayer}}+ 4c_{\ref{lem:decoupfirstlayer}})(pmpd+q^2+(pm)^{1/(q)}q(pd+q))\\&+4c_{\ref{lem:decoupfirstlayer}}pd\norm{V}^2\norm{Y_1}_{2q} +16\norm{X_3^TX_5V^TVX_6^TX_4}_{2q}
\end{align*}

Using this estimate, we can iteratively improve the moment estimate for $VY$.

\begin{lemma}[Iteratively Improved Moment Estimate]\label{lem:iterdec}
Let $S$ be an $m \times n$ matrix distributed according to the fully independent unscaled OSNAP distribution with parameter $p$ and standard independent summands $\{Z_{(l,\gamma)}\}_{(l,\gamma) \in \Xi}$. Let $\{(S_{\lambda},\{Z_{\lambda,(l,\gamma)}\}_{(l,\gamma) \in \Xi})\}_{\lambda=0,1,2,...}$ be independent copies of $(S,\{Z_{(l,\gamma)}\}_{(l,\gamma) \in \Xi})$. Let $U$ be an arbitrary $n \times d$ deterministic matrix such that $U^TU=I_d$.
Let $X=SU$. Let $Y_i=X_{2i-1}^TX_{2i}$.
For any $d \times d$ deterministic matrix $V$ and positive integer $r$, let $R_{1,2r}(V)$ and $R_{2,2r}(V)$ be defined as in Lemma \ref{lem:initialuniv}.
Let
\begin{align*}
    K_{q,r}=&K(m,d,p,q,r)\\=&\left(p\max\{m,q\}p\max\{d,q\}+(pm)^{1/(q)}r(pd+r) \right)^{1/2}
\end{align*}
Then there exist constants $c_{\ref{lem:iterdec}.1}>1$ and $c_{\ref{lem:iterdec}.2}>1$ such that for any $q \ge \log(d)$, $\alpha_1>0,...,\alpha_7>0$, and $\kappa > c_{\ref{lem:iterdec}.1}$, the following statement is true.

Assume that
    $\alpha_3 \le 1$ and $\alpha_1+\alpha_3=1$
and for any $n \times d$ deterministic matrix $V$ and $r \ge q$, we have
\begin{equation}\label{eq:iterassum}
    \begin{aligned}
    &\norm{VY}_{2q} \\\le& \kappa \Bigg((K_{q,r}+\sqrt{pd\norm{Y}_{2q}})\norm{V} \\&+ \left(R_{1,2r}(V)+R_{2,2r}(V)\right)^{\alpha_1}K_{q,r}^{\alpha_2}\norm{V}^{\alpha_3} \norm{Y}_{2q}^{\alpha_4}q^{\alpha_5}\Bigg)
\end{aligned}
\end{equation}
then for any $n \times d$ deterministic matrix $V$ and $r \ge 2q$, we have
\begin{equation}\label{eq:iterconclu}
    \begin{aligned}
    &\norm{VY}_{4q} \\\le& c_{\ref{lem:iterdec}.2}\kappa \Bigg((K_{q,r}+\sqrt{pd\norm{Y}_{4q}})\norm{V} \\&+ \left(R_{1,2r}(V)+R_{2,2r}(V)\right)^{\alpha_1/2}K_{q,r}^{(\alpha_2+\min\{\frac{1}{2},\alpha_1\})/2}\norm{V}^{\alpha_3+\alpha_1/2} \norm{Y}_{4q}^{(\alpha_4+\max\{\frac{1}{2},\alpha_3\})/2}q^{\alpha_5/2}\Bigg)
\end{aligned}
\end{equation}

\end{lemma}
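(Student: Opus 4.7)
The plan is to carry out one step of the iterative-decoupling scheme sketched in Section~\ref{subsubsec:overview-osnap}. First, I would rewrite
\begin{equation*}
    \norm{VY}_{4q}^{2}=\norm{Y^{T}V^{T}VY}_{2q}=\norm{X_{2}^{T}X_{1}V^{T}VX_{1}^{T}X_{2}}_{2q},
\end{equation*}
and apply Lemma~\ref{lem:decoupfirstlayer} to decouple the outer $X_{2}$ factors (introducing fresh copies $X_{3},X_{4}$), followed by Lemma~\ref{lem:decoupsecondlayer} to decouple the inner $X_{1}$ factors (introducing $X_{5},X_{6}$). Combining these gives
\begin{equation*}
    \norm{Y^{T}V^{T}VY}_{2q}\le c\,\norm{V}^{2}\big(K_{q,r}^{2}+pd\,\norm{Y}_{2q}\big)+16\,\norm{Y_{3}^{T}V^{T}VY_{4}}_{2q},
\end{equation*}
where $Y_{3}:=X_{3}^{T}X_{5}$ and $Y_{4}:=X_{6}^{T}X_{4}$ are two fresh independent copies of $Y$, and the diagonal contributions from both decouplings are absorbed into the first summand.

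For the off-diagonal term I would condition on $Y_{3}$ and set $W:=Y_{3}^{T}V^{T}V$. Conditionally, $W$ is deterministic, so applying the inductive hypothesis \eqref{eq:iterassum} at moment $2q$ with $V$ replaced by $W$ gives
\begin{equation*}
    \norm{WY_{4}\mid Y_{3}}_{2q}\le\kappa\Big((K_{q,r}+\sqrt{pd\,\norm{Y}_{2q}})\norm{W}+R(W)^{\alpha_{1}}K_{q,r}^{\alpha_{2}}\norm{W}^{\alpha_{3}}\norm{Y}_{2q}^{\alpha_{4}}q^{\alpha_{5}}\Big),
\end{equation*}
where I abbreviate $R(\cdot):=R_{1,2r}(\cdot)+R_{2,2r}(\cdot)$. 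Taking the $L_{2q}(\Pb)$-norm in $Y_{3}$ and applying Minkowski reduces the task to estimating $\norm{\norm{W}}_{L_{2q}}$ and $\norm{R(W)^{\alpha_{1}}\norm{W}^{\alpha_{3}}}_{L_{2q}}$. The first is bounded via $\norm{W}\le\norm{Y_{3}}\norm{V}^{2}$ together with $(\E\norm{Y_{3}}^{2q})^{1/(2q)}\le d^{1/(2q)}\norm{Y_{3}}_{2q}\le e^{1/2}\norm{Y_{3}}_{2q}$ (using $q\ge\log d$), yielding $\norm{\norm{W}}_{L_{2q}}\le C\norm{Y}_{2q}\norm{V}^{2}$. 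The second rests on the $R$-simplification estimate $\norm{R(W)}_{L_{2q}}\le CK_{q,r}\norm{V}R(V)$ referenced in the overview, combined with H\"older's inequality.

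When $\alpha_{1}\le 1/2$ (equivalently $\alpha_{3}\ge 1/2$), applying H\"older with conjugate exponents $1/\alpha_{1}$ and $1/\alpha_{3}$ (valid since $\alpha_{1}+\alpha_{3}=1$) yields
$\norm{R(W)^{\alpha_{1}}\norm{W}^{\alpha_{3}}}_{L_{2q}}\le C(K_{q,r}\norm{V}R(V))^{\alpha_{1}}(\norm{Y}_{2q}\norm{V}^{2})^{\alpha_{3}}$.
Substituting this back, upgrading $\norm{Y}_{2q}$ to $\norm{Y}_{4q}$ via $\norm{Y}_{2q}\le\norm{Y}_{4q}$, and taking the square root via $\sqrt{a+b}\le\sqrt{a}+\sqrt{b}$ produces exponents $(\alpha_{1}+\alpha_{2})/2$ on $K_{q,r}$ and $(\alpha_{3}+\alpha_{4})/2$ on $\norm{Y}_{4q}$, which exactly match the target \eqref{eq:iterconclu} in this regime since $\min\{1/2,\alpha_{1}\}=\alpha_{1}$ and $\max\{1/2,\alpha_{3}\}=\alpha_{3}$.

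The main obstacle will be the complementary case $\alpha_{1}>1/2$, where the target demands the smaller $K_{q,r}$ exponent $1/4+\alpha_{2}/2$ (in place of $\alpha_{1}/2+\alpha_{2}/2$) and the correspondingly larger $\norm{Y}_{4q}$ exponent $1/4+\alpha_{4}/2$ (in place of $\alpha_{3}/2+\alpha_{4}/2$). To achieve this, I would use the refined splitting $R(W)^{\alpha_{1}}=R(W)^{1/2}\cdot R(W)^{\alpha_{1}-1/2}$, bounding the second factor via a crude $R(W)\lesssim K_{q,r}\norm{W}$-type control so that its $K$-content collapses to $K^{1/4}$ after the square root while its $\norm{W}$-content feeds back, after the $L_{2q}$ step, into an additional $\norm{Y}_{4q}^{(\alpha_{1}-1/2)/2}$ factor. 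The delicate part is ensuring this swap preserves both the $R(V)^{\alpha_{1}/2}$ exponent and the sharp $\norm{V}^{\alpha_{3}+\alpha_{1}/2}$ exponent required by the target; the identity $\min\{1/2,\alpha_{1}\}+\max\{1/2,\alpha_{3}\}=1$ guarantees that the swap is dimensionally consistent, but the precise bookkeeping of the $\norm{V}$, $R(V)$, and $q$ factors through this trade is the technical heart of the argument.
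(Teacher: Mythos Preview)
Your outline captures the overall decoupling scheme, but there is a genuine gap in how you handle the ``first term'' $(K_{q,r}+\sqrt{pd\|Y\|_{2q}})\|W\|$ after applying the inductive hypothesis conditionally on $W=Y_3^TV^TV$. You bound $\|\|W\|_{op}\|_{L_{2q}}$ by the crude estimate $\|W\|_{op}\le\|Y_3\|_{op}\|V\|^2$, giving $\|\|W\|\|_{L_{2q}}\le C\|Y\|_{2q}\|V\|^2$. After the square root this produces a contribution of order $\sqrt{\kappa}\, K_{q,r}^{1/4}\|Y\|_{4q}^{3/4}\|V\|$ (in the relevant regime $\|Y\|_{4q}>K_{q,r}$, which you also need to separate out as a trivial case). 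This term carries $\|V\|^1$ but \emph{no} $R(V)$ factor, so it cannot be absorbed into the second target term of \eqref{eq:iterconclu} (whose $\|V\|$-exponent is $1-\alpha_1/2<1$ and which could vanish for certain $V$); and since its $\|Y\|_{4q}$-exponent is $3/4$ it is not controlled by the first target term $(K_{q,r}+\sqrt{pd\|Y\|_{4q}})\|V\|\le 2K_{q,r}^{1/2}\|Y\|_{4q}^{1/2}\|V\|$ either. The bound simply does not fit the target shape.

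The paper's fix is to apply the inductive hypothesis \eqref{eq:iterassum} a \emph{second} time, now to $\|V^TVY_3\|_{2q}$ (using $R_{i,2r}(V^TV)\le\|V\|R_{i,2r}(V)$). This yields $\|W\|_{2q}\le\kappa\big((K_{q,r}+\sqrt{pd\|Y\|_{4q}})\|V\|^2+R(V)^{\alpha_1}K_{q,r}^{\alpha_2}\|V\|^{2\alpha_3+\alpha_1}\|Y\|_{4q}^{\alpha_4}q^{\alpha_5}\big)$. Multiplying by the prefactor $(K_{q,r}+\sqrt{pd\|Y\|_{4q}})$ now gives one piece $\kappa^2(K_{q,r}+\sqrt{pd\|Y\|_{4q}})^2\|V\|^2$ that squares-roots cleanly to the first target term, and a second piece $\kappa^2 R(V)^{\alpha_1}K_{q,r}^{\alpha_2+1/2}\|V\|^{2\alpha_3+\alpha_1}\|Y\|_{4q}^{\alpha_4+1/2}q^{\alpha_5}$ (after using $K_{q,r}+\sqrt{pd\|Y\|_{4q}}\le 2\sqrt{K_{q,r}\|Y\|_{4q}}$). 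This second piece, together with the $K_{q,r}^{\alpha_2+\alpha_1}\|Y\|_{4q}^{\alpha_4+\alpha_3}$ piece from your $R$-simplification step, is exactly what produces the $\min\{1/2,\alpha_1\}/\max\{1/2,\alpha_3\}$ dichotomy: one simply takes the larger of $K_{q,r}^{1/2}\|Y\|_{4q}^{1/2}$ and $K_{q,r}^{\alpha_1}\|Y\|_{4q}^{\alpha_3}$, using $\alpha_1+\alpha_3=1$. Your proposed $R(W)^{1/2}\cdot R(W)^{\alpha_1-1/2}$ splitting is therefore unnecessary, and as you yourself note, the bookkeeping it entails does not obviously preserve the $R(V)^{\alpha_1/2}$ and $\|V\|^{\alpha_3+\alpha_1/2}$ exponents.
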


\begin{proof}[Proof of Lemma \ref{lem:iterdec}]

Assume that (\ref{eq:iterassum}) holds for some $\alpha_1>0,...,\alpha_5>0$.

\textbf{Step 1: decoupling.}

Consider an arbitrary $n \times d$ deterministic matrix $V$ and $r \ge q$. If $\norm{Y}_{4q} \le K_{q,r}$, then we have
\begin{align*}
    &\norm{VY}_{4q} \\\le& 
    \norm{V}\norm{Y}_{4q}
    \\ \le & K_{q,2r}\norm{V}
    \\ \le &\kappa \Bigg((K_{q,r}+\sqrt{pd\norm{Y}_{4q}})\norm{V} \\&+ \left(R_{1,2r}(V)+R_{2,2r}(V)\right)^{\alpha_1/2}K_{q,r}^{(\alpha_2+\min\{\frac{1}{2},\alpha_1\})/2}\norm{V}^{\alpha_3+\alpha_1/2} \norm{Y}_{4q}^{(\alpha_4+\max\{\frac{1}{2},\alpha_3\})/2}q^{\alpha_5/2}\Bigg)
\end{align*}
where we use $\kappa>1$, and therefore the desired statement (\ref{eq:iterconclu}) holds automatically.

Therefore, it suffices to only consider the case when $\norm{Y}_{4q} > K_{q,r}$.

We have
\begin{align*}
    \norm{VY}_{4q}=&(\E\tr((Y^TV^TVY)^{2q}))^{\frac{1}{4q}}
\\=&\norm{Y^TV^TVY}_{2q}^{1/2}
\end{align*}

By Lemma \ref{lem:decoupfirstlayer} and Lemma \ref{lem:decoupsecondlayer}, we know that
\begin{align*}
    \norm{Y_1^TV^TVY_1}_{2q} \le& c_1 K^2 \norm{V}^2 + 4pd\norm{V}^2 \norm{Y_1}_{2q} + 16 \norm{X_3^TX_5V^TVX_6^TX_4}_{2q}
    \\=& c_1 K^4 \norm{V}^2 + 4pd\norm{V}^2 \norm{Y_1}_{2q}+ 16 \norm{Y_3^TV^TVY_4}_{2q}
    \\ \le & c_1 K^4 \norm{V}^2 + 4pd\norm{V}^2 \norm{Y_1}_{4q}+ 16 \norm{Y_3^TV^TVY_4}_{2q}
\end{align*}

\textbf{Step 2: conditioning on $W=Y_3^TV^TV$.}

Now we will condition on $W=Y_3^TV^TV$ and use our assumption to bound $\norm{WY_4}_{2q}$ for fixed $W$.

Since $r \ge 2q \ge q$, by our assumption (\ref{eq:iterassum}), we have
    \begin{align*}
    \norm{WY}_{2q} \le & \kappa \Bigg((K_{q,r}+\sqrt{pd\norm{Y}_{2q}})\norm{W} \\&+ \left(R_{1,2r}(W)+R_{2,2r}(W)\right)^{\alpha_1}K_{q,r}^{\alpha_2}\norm{W}^{\alpha_3} \norm{Y}_{2q} ^{\alpha_4}q^{\alpha_5}\Bigg)
    \\\le & \kappa \Bigg((K_{q,r}+\sqrt{pd\norm{Y}_{4q}})\norm{W} \\&+ \left(R_{1,2r}(W)+R_{2,2r}(W)\right)^{\alpha_1}K_{q,r}^{\alpha_2}\norm{W}^{\alpha_3} \norm{Y}_{4q} ^{\alpha_4}q^{\alpha_5}\Bigg)
\end{align*}
where we use the fact $\norm{Y}_{2q} \le \norm{Y}_{4q}$ by H\"older inequality in $L_q(S_q^d)$.

Since $q \ge \log(d)$, we have $\norm{\norm{W}}_{L_{2q}(\Pb)}  \le d^{\frac{1}{\log d}} \norm{W}_{2q}= e \norm{W}_{2q}$. Therefore, by triangle inequality for $\norm{\cdot}_{L_{4q}(\Pb)}$, we have
    \begin{align*}
    &\norm{Y_3^TV^TVY}_{2q} \\\le& e\kappa \Bigg((K_{q,r}+\sqrt{pd\norm{Y}_{4q}})\norm{Y_3^TV^TV}_{2q} \\&+ \norml{\left(R_{1,2r}(Y_3^TV^TV)+R_{2,2r}(Y_3^TV^TV)\right)^{\alpha_1}K_{q,r}^{\alpha_2}\norm{Y_3^TV^TV}^{\alpha_3} \norm{Y}_{4q} ^{\alpha_4}q^{\alpha_5}}_{L_{2q}(\Pb)}\Bigg)
\end{align*}

We first estimate the first term
\begin{align*}
    (K_{q,r}+\sqrt{pd\norm{Y}_{4q}})\norm{Y_3^TV^TV}_{2q}
\end{align*}

By using assumption (\ref{eq:iterassum}) again, we have
\begin{align*}
    \norm{Y_3^TV^TV}_{2q}=&\norm{V^TVY_3}_{2q} \\\le& \kappa \Bigg((K_{q,r}+\sqrt{pd\norm{Y}_{2q}})\norm{V^TV} \\&+ \left(R_{1,2r}(V^TV)+R_{2,2r}(V^TV)\right)^{\alpha_1}K_{q,r}^{\alpha_2}\norm{V^TV}^{\alpha_3} \norm{Y}_{2q}^{\alpha_4}q^{\alpha_5}\Bigg)
    \\\le& \kappa \Bigg((K_{q,r}+\sqrt{pd\norm{Y}_{4q}})\norm{V}^2 \\&+ \left(R_{1,2r}(V)+R_{2,2r}(V)\right)^{\alpha_1}K_{q,r}^{\alpha_2}\norm{V}^{2\alpha_3+\alpha_1} \norm{Y}_{4q}^{\alpha_4}q^{\alpha_5}\Bigg)
\end{align*}
where we use the fact that
\begin{align*}
    R_{1,2r}(V^TV)=&(\sum\limits_{(l,\gamma) \in \Xi}(\norm{V^TVX_1^TZ_{2,(l,\gamma)}U}_{2r}^{2r}))^{1/(2r)}
    \\ \le &(\sum\limits_{(l,\gamma) \in \Xi}((\norm{V}\norm{VX_1^TZ_{2,(l,\gamma)}U}_{2r})^{2r}))^{1/(2r)}
    \\ = &\norm{V}(\sum\limits_{(l,\gamma) \in \Xi}(\norm{VX_1^TZ_{2,(l,\gamma)}U}_{2r}^{2r}))^{1/(2r)}
    \\=& \norm{V}R_{1,2r}(V)
\end{align*}
and
\begin{align*}
    R_{2,2r}(V^TV) \le & \norm{V}R_{2,2r}(V)
\end{align*}
by the same calculation.

Next, we bound the second term
\begin{align*}
    \norml{\left(R_{1,2r}(Y_3^TV^TV)+R_{2,2r}(Y_3^TV^TV)\right)^{\alpha_1}K_{q,r}^{\alpha_2}\norm{Y_3^TV^TV}^{\alpha_3} \norm{Y}_{4q} ^{\alpha_4}q^{\alpha_5}}_{L_{2q}(\Pb)}
\end{align*}

We have
\begin{align*}
    &\norml{\left(R_{1,2r}(Y_3^TV^TV)+R_{2,2r}(Y_3^TV^TV)\right)^{\alpha_1}K_{q,r}^{\alpha_2}\norm{Y_3^TV^TV}^{\alpha_3} \norm{Y}_{4q} ^{\alpha_4}q^{\alpha_5}}_{L_{2q}(\Pb)}
    \\=&K_{q,r}^{\alpha_2}\norm{Y}_{4q}^{\alpha_4}q^{\alpha_5}\norml{\left(R_{1,2r}(Y_3^TV^TV)+R_{2,2r}(Y_3^TV^TV)\right)^{\alpha_1}\norm{Y_3^TV^TV}^{\alpha_3}  }_{L_{2q}(\Pb)}
    \\ \le &K_{q,r}^{\alpha_2}\norm{Y}_{4q}^{\alpha_4}q^{\alpha_5}\norml{\left(R_{1,2r}(Y_3^TV^TV)+R_{2,2r}(Y_3^TV^TV)\right)^{\alpha_1}\norm{Y_3}^{\alpha_3} \norm{V}^{2 \alpha_3}  }_{L_{2q}(\Pb)}
    \\ \le &K_{q,r}^{\alpha_2}\norm{Y}_{4q}^{\alpha_4}\norm{V}^{2 \alpha_3}q^{\alpha_5}\norml{\left(R_{1,2r}(Y_3^TV^TV)+R_{2,2r}(Y_3^TV^TV)\right)^{\alpha_1}\norm{Y_3}^{\alpha_3}   }_{L_{2q}(\Pb)}
    \\ \le &K_{q,r}^{\alpha_2}\norm{Y}_{4q}^{\alpha_4}\norm{V}^{2 \alpha_3}q^{\alpha_5}\norml{\left(R_{1,2r}(Y_3^TV^TV)+R_{2,2r}(Y_3^TV^TV)\right)^{\alpha_1}}_{L_{4q}(\Pb)}\norml{\norm{Y_3}^{\alpha_3}   }_{L_{4q}(\Pb)}
\end{align*}
where in the last step, we use Holder's inequality.

We also have two observations. Since $r \ge 2q$ and $\alpha_1 \le 1$, we have
\begin{align*}
    &\norml{\left(R_{1,2r}(Y_3^TV^TV)+R_{2,2r}(Y_3^TV^TV)\right)^{\alpha_1}}_{L_{4q}(\Pb)}
    \\ \le & \norml{\left(R_{1,2r}(Y_3^TV^TV)+R_{2,2r}(Y_3^TV^TV)\right)^{\alpha_1}}_{L_{2r}(\Pb)}
    \\ \le & \norml{\left(R_{1,2r}(Y_3^TV^TV)+R_{2,2r}(Y_3^TV^TV)\right)}_{L_{2r}(\Pb)}^{\alpha_1}
\end{align*}
And since $\alpha_3 \le 1$, we have
\begin{align*}
    \norml{\norm{Y_3}^{\alpha_3}   }_{L_{4q}(\Pb)} \le c_2\norm{Y_3}_{4q}^{\alpha_3}
\end{align*}
for some constant $c_2$ by the fact that $q \ge \log(d)$ and Jensen's inequality.

To estimate $\norml{\left(R_{1,2r}(Y_3^TV^TV)+R_{2,2r}(Y_3^TV^TV)\right)}_{L_{2r}(\Pb)}^{\alpha_1}$, we need the following lemma. This step is main reason why this iterative method can improve the estimate on $\norm{Y}_{4q}$. Intuitively, in this step, we can take out a copy of $Y$ from $R_{1,2r}(Y_3^TV^TV)+R_{2,2r}(Y_3^TV^TV)$ and replace it by the correct factor $K_{q,r}$.

\begin{lemma}\label{lem:Rsimplify}
Let $R_{1,2r}(V), R_{2,2r}(V)$ and $K_{q,r}$ be as in Lemma \ref{lem:iterdec} with $r \ge2q$, then we have
\begin{align*}
    \norm{R_{1,2r}(Y_3^TV^TV)}_{L_{2r}(\Pb)}  
    \le &c_{\ref{lem:Rsimplify}}(R_{1,2r}(V))\norm{V}K_{q,r}
\end{align*}
and
\begin{align*}
    \norm{R_{2,2r}(Y_3^TV^TV)}_{L_{2r}(\Pb)}  
    \le &c_{\ref{lem:Rsimplify}}(R_{2,2r}(V))\norm{V}K_{q,r}
\end{align*}
\end{lemma}

\begin{proof}
By definition, we have
\begin{align*}
    R_{1,2r}(Y_3^TV^TV)=(\sum\limits_{(l,\gamma) \in \Xi}(\E_{X_1,X_2} \tr(|Y_3^TV^TVX_1^TZ_{2,(l,\gamma)}U|^{2r})))^{1/(2r)}
\end{align*}

Therefore, we have
\begin{align*}
    \norm{R_{1,2r}(Y_3^TV^TV)}_{L_{2r}(\Pb)}^{2r}  
    = & \E_{Y_3} \sum\limits_{(l,\gamma) \in \Xi}(\E_{X_1,X_2} \tr(|Y_3^TV^TVX_1^TZ_{2,(l,\gamma)}U|^{2r}))
    \\= &  \sum\limits_{(l,\gamma) \in \Xi}(\E_{X_1,X_2} \E_{Y_3}\tr(|Y_3^TV^TVX_1^TZ_{2,(l,\gamma)}U|^{2r}))
\end{align*}

Recall that $Z_{k,(l,\gamma)}=\xi_{k,(l,\gamma)}e_{\mu_{k,(l,\gamma)}}e_l^T$. Also, define the vector $u_l$ such that $u_l^T$ is the $l$th row of the matrix $U$. Therefore, we observe that
\begin{align*}
    &\tr(|Y_3^TV^TVX_1^TZ_{2,\eta_2}U|^{2r})
\\=&\tr(|Y_3^TV^TVX_1^T\xi_{2,\eta_2}e_{\mu_{2,\eta_2}}e_{\eta_2(1)}^TU|^{2r})
\\=&\tr(|Y_3^TV^TVX_1^T\xi_{2,\eta_2}e_{\mu_{2,\eta_2}}u_{\eta_2(1)}^T|^{2r})
\\=&\frac 1 d\norm{Y_3^TV^TVX_1^Te_{\mu_{2,\eta_2}}}_{l_2([d])}^{2r}\norm{u_{\eta_2(1)}}_{l_2([d])}^{2r}
\end{align*}

Therefore, we have
\begin{align*}
    &E_{Y_3}\tr(|Y_3^TV^TVX_1^TZ_{2,\eta_2}U|^{2r})
\\=&\frac 1 dE_{Y_3}\norm{Y_3^TV^TVX_1^Te_{\mu_{2,\eta_2}}}_{l_2([d])}^{2r}\norm{u_{\eta_2(1)}}_{l_2([d])}^{2r}
\\=&\frac 1 d\norm{u_{\eta_2(1)}}_{l_2([d])}^{2r}E_{Y_3}\norm{Y_3^TV^TVX_1^Te_{\mu_{2,\eta_2}}}_{l_2([d])}^{2r}\\=&\frac 1 d\norm{u_{\eta_2(1)}}_{l_2([d])}^{2r}E_{X_6}\E_{X_5}\norm{X_5^TX_6V^TVX_1^Te_{\mu_{2,\eta_2}}}_{l_2([d])}^{2r}
\end{align*}

By Lemma \ref{lem:x1x2unorm}, we have
\begin{align*}
&\E_{X_5,X_6}\norm{X_5^TX_6V^TVX_1^Te_{\mu_{2,\eta_2}}}_{l_2([d])}^{2r} \\\le& \left(\norm{V^TVX_1^Te_{\mu_{2,\eta_2}}}_{l_2([m])}c_{\ref{lem:x1x2unorm}}\paren*{\sqrt{pmpd}+ (pmd)^\frac{1}{q} (\sqrt{pd\cdot 2r} + 2r)}\right)^{2r}
\\\le&\norm{V^TVX_1^Te_{\mu_{2,\eta_2}}}_{l_2([m])}^{4q}c_1K_{q,r}^{2r}
\end{align*}
where we use
\begin{align*}
    \sqrt{pmpd}+ (pmd)^\frac{1}{q} (\sqrt{pdq} + q) \le c_1K_{q,r}
\end{align*}
for some constant $c_1$ just by definition of $K_{q,r}$.

Therefore, we have
\begin{align*}
    &E_{Y_3}\tr(|Y_3^TV^TVX_1^TZ_{2,\eta_2}U|^{2r})
\\=&\frac 1 d\norm{u_{\eta_1(1)}}_{l_2([d])}^{2r}\norm{V^TVX_1^Te_{\mu_{2,\eta_2}}}_{l_2([m])}^{2r}(c_{1}K_{q,r})^{2r}
\\ \le & \frac 1 d\norm{V}^{2r} \norm{u_{\eta_2(1)}}_{l_2([d])}^{2r}\norm{VX_1^Te_{\mu_{2,\eta_2}}}_{l_2([m])}^{2r}(c_{1}K_{q,r})^{2r}
\\=&\norm{V}^{2r} (c_{1}K_{q,r})^{2r} \tr(|VX_1^Te_{\mu_{2,\eta_2}}u_{\eta_2(1)}|^{2r})
\end{align*}

Therefore, we have
\begin{align*}
    \norm{R_{1,2r}(Y_3^TV^TV)}_{L_{2r}(\Pb)}^{2r}  
    = &  \sum\limits_{(l,\gamma) \in \Xi}(\E_{X_1,X_2} \E_{Y_3}\tr(|Y_3^TV^TVX_1^TZ_{2,(l,\gamma)}U|^{2r}))
    \\ \le &\sum\limits_{(l,\gamma) \in \Xi}( \norm{V}^{2r} (c_{1}K_{q,r})^{2r} \E_{X_1,X_2}\tr(|VX_1^Te_{\mu_{2,\eta_2}}u_{\eta_2(1)}|^{2r}))
\end{align*}

Recall that, by definition, we have
\begin{align*}
    (R_{1,2q}(V))^{2r}=&\sum\limits_{(l,\gamma) \in \Xi}(\norm{VX_1^TZ_{2,(l,\gamma)}U}_{2r}^{2r})
    \\=&\sum\limits_{(l,\gamma) \in \Xi}\E_{X_1,X_2}\tr(|VX_1^Te_{\mu_{2,\eta_2}}u_{\eta_2(1)}|^{2r})
\end{align*}

Taking $2r$-th root of both sides, we have
\begin{align*}
    &\norm{R_{1,2r}(Y_3^TV^TV)}_{L_{2r}(\Pb)}  
    \\ \le &\norm{V} (c_{1}K_{q,r})R_{1,2r}(V)
\end{align*}
which is what we want.

The proof for the inequality for $\norm{R_{2,2r}(Y_3^TV^TV)}_{L_{2r}(\Pb)}  $ follows exactly the same argument.

\end{proof}

Therefore, by Lemma \ref{lem:Rsimplify} and triangle inequality, we have
\begin{align*}
    &\norml{\left(R_{1,2r}(Y_3^TV^TV)+R_{2,2r}(Y_3^TV^TV)\right)^{\alpha_1}K_{q,r}^{\alpha_2}\norm{Y_3^TV^TV}^{\alpha_3} \norm{Y}_{4q} ^{\alpha_4}q^{\alpha_5}}_{L_{2q}(\Pb)}
    \\ \le &c_2K_{q,r}^{\alpha_2}\norm{Y}_{4q}^{\alpha_4+\alpha_3}\norm{V}^{2 \alpha_3}q^{\alpha_5}\norml{\left(R_{1,2r}(Y_3^TV^TV)+R_{2,2r}(Y_3^TV^TV)\right)}_{L_{2r}(\Pb)}^{\alpha_1}
    \\ \le & c_2 K_{q,r}^{\alpha_2+\alpha_1}\norm{Y}_{4q}^{\alpha_4+\alpha_3}\norm{V}^{2 \alpha_3+\alpha_1}q^{\alpha_5}(\left(R_{1,2r}(V)+R_{2,2r}(V)\right)^{\alpha_1})
\end{align*}

Therefore, we have
    \begin{align*}
    &\norm{Y_3^TV^TVY}_{2q} \\\le& e\kappa \Bigg((K_{q,r}+\sqrt{pd\norm{Y}_{4q}})\norm{Y_3^TV^TV}_{2q} \\&+ \norml{\left(R_{1,2r}(Y_3^TV^TV)+R_{2,2r}(Y_3^TV^TV)\right)^{\alpha_1}K_{q,r}^{\alpha_2}\norm{Y_3^TV^TV}^{\alpha_3} \norm{Y}_{4q} ^{\alpha_4}q^{\alpha_5}}_{L_{2q}(\Pb)}\Bigg)
    \\ \le & e\kappa \Bigg((K_{q,r}+\sqrt{pd\norm{Y}_{4q}}) \\& \cdot \kappa \Bigg((K_{q,r}+\sqrt{pd\norm{Y}_{4q}})\norm{V}^2 + \left(R_{1,2r}(V)+R_{2,2r}(V)\right)^{\alpha_1}K_{q,r}^{\alpha_2}\norm{V}^{2\alpha_3+\alpha_1} \norm{Y}_{4q}^{\alpha_4}q^{\alpha_5}\Bigg) \\&+ c_2 K_{q,r}^{\alpha_2+\alpha_1}\norm{Y}_{4q}^{\alpha_4+\alpha_3}\norm{V}^{2 \alpha_3+\alpha_1}q^{\alpha_5}(\left(R_{1,2r}(V)+R_{2,2r}(V)\right)^{\alpha_1})\Bigg)
    \\ = & e\kappa^2 (K_{q,r}+\sqrt{pd\norm{Y}_{4q}})^2 \norm{V}^2 \\&+ \kappa^2 (K_{q,r}+\sqrt{pd\norm{Y}_{4q}})\left(R_{1,2r}(V)+R_{2,2r}(V)\right)^{\alpha_1}K_{q,r}^{\alpha_2}\norm{V}^{2\alpha_3+\alpha_1} \norm{Y}_{4q}^{\alpha_4}q^{\alpha_5} \\& + c_3 \kappa K_{q,r}^{\alpha_2+\alpha_1}\norm{Y}_{4q}^{\alpha_4+\alpha_3}\norm{V}^{2 \alpha_3+\alpha_1}q^{\alpha_5}(\left(R_{1,2r}(V)+R_{2,2r}(V)\right)^{\alpha_1})
\\ \le  & e\kappa^2 (K_{q,r}+\sqrt{pd\norm{Y}_{4q}})^2 \norm{V}^2 \\&+ \kappa^2 \left(R_{1,2r}(V)+R_{2,2r}(V)\right)^{\alpha_1}K_{q,r}^{\alpha_2+\frac{1}{2}}\norm{V}^{2\alpha_3+\alpha_1} \norm{Y}_{4q}^{\alpha_4+\frac{1}{2}}q^{\alpha_5} \\& + c_3 \kappa K_{q,r}^{\alpha_2+\alpha_1}\norm{Y}_{4q}^{\alpha_4+\alpha_3}\norm{V}^{2 \alpha_3+\alpha_1}q^{\alpha_5}(\left(R_{1,2r}(V)+R_{2,2r}(V)\right)^{\alpha_1})
\\ \le  & c_3\kappa^2 \big((K_{q,r}+\sqrt{pd\norm{Y}_{4q}})^2 \norm{V}^2 \\&+ \left(R_{1,2r}(V)+R_{2,2r}(V)\right)^{\alpha_1}K_{q,r}^{\alpha_2+\min\{\frac{1}{2},\alpha_1\}}\norm{V}^{2\alpha_3+\alpha_1} \norm{Y}_{4q}^{\alpha_4+\max\{\frac{1}{2},\alpha_3\}}q^{\alpha_5} \big)
\end{align*}
for some constant $c_3>0$,
where we use the assumption that $\norm{Y}_{4q}>K_{q,r}$ and the fact that $pd \le K_{q,r}$ to get
\begin{align*}
    K_{q,r}+\sqrt{pd\norm{Y}_{4q}} \le (\sqrt{K_{q,r}} + \sqrt{pd})\sqrt{\norm{Y}_{4q}} \le 2 \sqrt{K_{q,r}}\sqrt{\norm{Y}_{4q}}
\end{align*}
and we combine this with the fact that $\alpha_1+\alpha_3=1$ to conclude that the factor $\max \{ K_{q,r}^{\frac{1}{2}} \norm{Y}_{4q}^{\frac{1}{2}}, K_{q,r}^{\alpha_1} \norm{Y}_{4q}^{\alpha_3} \}$
is the first term when $\alpha_3<1/2 \text{ and } \alpha_1 >1/2$ and the second term otherwise.

\textbf{Step 3: final estimates.}

Therefore, we have
\begin{align*}
    &\norm{Y_1^TV^TVY_1}_{2q}  \\\le & c_1 K_{q,r}^4 \norm{V}^2 + 4pd\norm{V}^2 \norm{Y_1}_{4q}+ 16 \norm{Y_3^TV^TVY_4}_{2q}
    \\ \le & c_4 \kappa^2 \big((K_{q,r}+\sqrt{pd\norm{Y}_{4q}})^2 \norm{V}^2 \\&+ \left(R_{1,2r}(V)+R_{2,2r}(V)\right)^{\alpha_1}K_{q,r}^{\alpha_2+\min\{\frac{1}{2},\alpha_1\}}\norm{V}^{2\alpha_3+\alpha_1} \norm{Y}_{4q}^{\alpha_4+\max\{\frac{1}{2},\alpha_3\}}q^{\alpha_5} \big)
\end{align*}
for some constant $c_4$.

Therefore, we have
\begin{align*}
    &\norm{VY}_{4q}
\\=&\norm{Y^TV^TVY}_{2q}^{1/2}
 \\ \le & c_5 \kappa \big((K_{q,r}+\sqrt{pd\norm{Y}_{4q}}) \norm{V} \\&+ \left(R_{1,2r}(V)+R_{2,2r}(V)\right)^{\alpha_1/2}K_{q,r}^{(\alpha_2+\min\{\frac{1}{2},\alpha_1\})/2}\norm{V}^{\alpha_3+\alpha_1/2} \norm{Y}_{4q}^{(\alpha_4+\max\{\frac{1}{2},\alpha_3\})/2}q^{\alpha_5/2} \big)
\end{align*}
for some constant $c_5$.

\end{proof}

By applying Lemma \ref{lem:iterdec} iteratively, we can obtain the following sequence of improved moment estimates.

\begin{lemma}[Sequence of Moment Estimates]\label{lem:univseq}
Let $S$, $p$, $\{Z_{(l,\gamma)}\}_{(l,\gamma) \in \Xi}$, $\{(S_{\lambda},\{Z_{\lambda,(l,\gamma)}\}_{(l,\gamma) \in \Xi})\}_{\lambda=0,1,2,...}$, $U$, $X$, $X_1,X_2,...$, $Y_1,Y_2,...$, $R_{1,2r}(V)$, $R_{1,2r}(V)$, and $K_{q,r}$ be as in Lemma \ref{lem:iterdec}
Let $q_0 > \log(d)$.
There exists a constant $c_{\ref{lem:univseq}}$.
such that for any integers $k \ge 1$, an $n \times d$ deterministic matrix $V$, and $r>2q_0\cdot 2^k$, we have
\begin{equation}\label{eq:univseq}
    \begin{aligned}
    &\norm{VY}_{2q_0\cdot 2^k} \\\le& c_{\ref{lem:univseq}}^k \Bigg((K_{q_0 \cdot 2^k,r}+\sqrt{pd\norm{Y}_{2q_0\cdot 2^k}})\norm{V} \\&+ \left(R_{1,2r}(V)+R_{2,2r}(V)\right)^{\alpha_1(k)}K_{q_0 \cdot 2^k,r}^{\alpha_2(k)}\norm{V}^{\alpha_3(k)} \norm{Y}_{2q_0 \cdot 2^k}^{\alpha_4(k)}q^{\alpha_5(k)}\Bigg)
\end{aligned}
\end{equation}
where $\alpha_1(k)=\frac{1}{2^k}$, $\alpha_2(k)=\frac{2k-1}{2^{k+1}}$, $\alpha_3(k)=1-\frac{1}{2^k}$, $\alpha_4(k)=1-\frac{2k+1}{2^{k+1}}$, $\alpha_5(2k)=\frac{2}{2^k}$.

\end{lemma}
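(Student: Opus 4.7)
The plan is to prove \eqref{eq:univseq} by induction on $k\ge 1$, where each step amounts to a single application of Lemma \ref{lem:iterdec} to the estimate produced by the previous step. For the base case $k=1$, I would take the initial moment estimate of Lemma \ref{lem:initialuniv} as input: after observing that $\sqrt{p\max\{m,q_0\}p\max\{d,q_0\}}\le K_{q_0,r}$ and adding the nonnegative term $\sqrt{pd\norm{Y}_{2q_0}}\,\norm{V}$, the estimate takes exactly the form of the hypothesis of Lemma \ref{lem:iterdec} with initial exponents $(\alpha_1,\alpha_2,\alpha_3,\alpha_4,\alpha_5)=(1,0,0,0,2)$, which satisfies $\alpha_1+\alpha_3=1$ and $\alpha_3\le 1$. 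One application of Lemma \ref{lem:iterdec} then yields the desired estimate at level $4q_0$ with updated exponents $(1/2,1/4,1/2,1/4,1) = (\alpha_i(1))_{i=1}^{5}$, establishing the claim for $k=1$.

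For the inductive step, suppose the estimate holds at level $2q_0\cdot 2^k$ with exponents $\alpha_i(k)$. Apply Lemma \ref{lem:iterdec} with $q := q_0\cdot 2^k$; the hypothesis is satisfied because $\alpha_1(k)+\alpha_3(k) = 1/2^k + (1-1/2^k) = 1$ and $\alpha_3(k) \le 1$. The resulting estimate at level $4q = 2q_0\cdot 2^{k+1}$ has exponents obtained via the recursions
\[
\alpha_1\mapsto \tfrac{\alpha_1}{2},\ \ \alpha_2\mapsto \tfrac{\alpha_2+\min\{1/2,\alpha_1\}}{2},\ \ \alpha_3\mapsto \alpha_3+\tfrac{\alpha_1}{2},\ \ \alpha_4\mapsto \tfrac{\alpha_4+\max\{1/2,\alpha_3\}}{2},\ \ \alpha_5\mapsto \tfrac{\alpha_5}{2}.
\]
For every $k\ge 1$, $\alpha_1(k) = 1/2^k \le 1/2$ and $\alpha_3(k) = 1-1/2^k \ge 1/2$, so the $\min$ and $\max$ simplify and the recursions become linear. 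A direct telescoping yields $\alpha_1(k+1)=1/2^{k+1}$, $\alpha_2(k+1)=(2k+1)/2^{k+2}$, $\alpha_3(k+1)=1-1/2^{k+1}$, $\alpha_4(k+1)=1-(2k+3)/2^{k+2}$, and $\alpha_5(k+1)=2/2^{k+1}$, matching the claimed closed forms.

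Finally, after each application the $K_{q,r}$ and the standalone $q$-factor produced by Lemma \ref{lem:iterdec} are indexed by the smaller half-moment $q_0\cdot 2^k$; both can be enlarged to $K_{q_0\cdot 2^{k+1},r}$ and $(q_0\cdot 2^{k+1})^{\alpha_5(k+1)}$ respectively by monotonicity of $K_{q,r}$ in $q$ and of $q^\alpha$ in $q$. The constant grows as $c_{\ref{lem:iterdec}.2}^{k}$, which I absorb into $c_{\ref{lem:univseq}}^{k}$ with $c_{\ref{lem:univseq}}:=\max\{c_{\ref{lem:iterdec}.1},c_{\ref{lem:iterdec}.2}\}$, simultaneously guaranteeing that the coefficient $\kappa$ remains above $c_{\ref{lem:iterdec}.1}$ at every iteration. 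The assumption $r > 2q_0\cdot 2^k$ in the statement ensures $r\ge 2q_0\cdot 2^{j}$ for all $j<k$, covering the requirement $r\ge 2q$ of each intermediate application of Lemma \ref{lem:iterdec}.

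The main obstacle is purely algebraic bookkeeping: verifying that the explicit formulas $\alpha_i(k)$ are the trajectories of the recursion induced by Lemma \ref{lem:iterdec}, and that the invariants $\alpha_1+\alpha_3=1$, $\alpha_3\le 1$ are self-reproducing so Lemma \ref{lem:iterdec} may be reapplied at every step. No additional probabilistic input beyond Lemmas \ref{lem:initialuniv} and \ref{lem:iterdec} is required.
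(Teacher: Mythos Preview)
Your proposal is correct and follows exactly the paper's argument: induction on $k$, with Lemma~\ref{lem:initialuniv} (recast with exponents $(\alpha_1,\dots,\alpha_5)=(1,0,0,0,2)$) as the base input and one application of Lemma~\ref{lem:iterdec} per step, followed by verification that the resulting recursion closes to the stated formulas $\alpha_i(k)$. Two small bookkeeping points: your choice $c_{\ref{lem:univseq}}=\max\{c_{\ref{lem:iterdec}.1},c_{\ref{lem:iterdec}.2}\}$ forgets the seed constant $c_{\ref{lem:initialuniv}}$ from the base case (the paper takes $c_{\ref{lem:univseq}}=c_{\ref{lem:initialuniv}}\cdot c_{\ref{lem:iterdec}.2}$); and the assertion that $K_{q,r}$ is monotone in $q$ is not true when $pm>1$, since the factor $(pm)^{1/q}$ decreases, so enlarging $K_{q_0\cdot 2^k,r}$ to $K_{q_0\cdot 2^{k+1},r}$ does not follow from monotonicity. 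The paper's own proof is equally loose on this $K$-index, and the slack is immaterial because downstream (Lemma~\ref{lem:averuniv}) only the crude comparison $K_{q\cdot 2^k,q\cdot 2^k}\le 2^kK_{q,q}$ is invoked.
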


\begin{proof}

Define the sequence $\alpha_1(k),...,\alpha_5(k)$ with the initial condition $\alpha_1(0)=1,\alpha_5(0)=2$, $\alpha_2(0)=\alpha_3(0)=\alpha_4(0)=0$ and the recurrence relations
\begin{align*}
\alpha_1(k+1)&=\frac{\alpha_1(k)}{2}, \\[0.5em]
\alpha_2(k+1)&=\frac{\alpha_2(k)+\min\{\tfrac12,\;\alpha_1(k)\}}{2}, \\[0.5em]
\alpha_3(k+1)&=\alpha_3(k)+\frac{\alpha_1(k)}{2}, \\[0.5em]
\alpha_4(k+1)&=\frac{\alpha_4(k)+\max\{\tfrac12,\;\alpha_3(k)\}}{2}, \\[0.5em]
\alpha_5(k+1)&=\frac{\alpha_5(k)}{2}.
\end{align*}
for $k \ge 0$.

Standard calculation shows that the closed form solution for $\alpha_1(k),...,\alpha_5(k)$ is $\alpha_1(k)=\frac{1}{2^k}$, $\alpha_2(k)=\frac{2k-1}{2^{k+1}}$, $\alpha_3(k)=1-\frac{1}{2^k}$, $\alpha_4(k)=1-\frac{2k+1}{2^{k+1}}$, $\alpha_5(2k)=\frac{2}{2^k}$ for $k \ge 1$.

    The proof of the inequality \ref{eq:univseq} follows from using Lemma \ref{lem:iterdec} with induction.
    
    More specifically, let $c_{\ref{lem:univseq}}=c_{\ref{lem:initialuniv}} \cdot c_{\ref{lem:iterdec}.2}$ and then we start from Lemma \ref{lem:initialuniv} to conclude that the statement \ref{eq:univseq} holds for $\alpha_1(0)=1,\alpha_5(0)=2$ and $\alpha_2(0)=\alpha_3(0)=\alpha_4(0)=0$.

    Next, assume that the statement \ref{eq:univseq} holds for $k$, then we have
    \begin{align*}
        &\norm{VY}_{2q_0\cdot 2^k} \\\le& c_{\ref{lem:univseq}}^k \Bigg((K_{q_0,r}+\sqrt{pd\norm{Y}_{2q_0\cdot 2^k}})\norm{V} \\&+ \left(R_{1,2r}(V)+R_{2,2r}(V)\right)^{\alpha_1(k)}K_{q_0,r}^{\alpha_2(k)}\norm{V}^{\alpha_3(k)} \norm{Y}_{2q_0 \cdot 2^k}^{\alpha_4(k)}q^{\alpha_5(k)}\Bigg)
    \end{align*}

    Then, we use Lemma \ref{lem:iterdec} with $q=q_0 \cdot 2^k$ and $\kappa=2q_0\cdot 2^{k}$ to claim for any $r>2q=q_0 \cdot 2^{k+1}$
    \begin{align*}
    &\norm{VY}_{2q_0\cdot 2^{k+1}}\\=&\norm{VY}_{4q} \\\le& c_{\ref{lem:iterdec}.2}c_{\ref{lem:univseq}}^k \Bigg((K_{q_0 \cdot 2^k,r}+\sqrt{pd\norm{Y}_{4q}})\norm{V} \\&+ \left(R_{1,2r}(V)+R_{2,2r}(V)\right)^{\alpha_1(k)/2} \\& \cdot K_{q,r}^{(\alpha_2(k)+\min\{\frac{1}{2},\alpha_1(k)\})/2}\norm{V}^{\alpha_3(k)+\alpha_1/2} \norm{Y}_{4q}^{(\alpha_4(k)+\max\{\frac{1}{2},\alpha_3(k)\})/2}q^{\alpha_5(k)/2}\Bigg)
    \\ \le & c_{\ref{lem:univseq}}^{k+1} \Bigg((K_{q_0 \cdot 2^k,r}+\sqrt{pd\norm{Y}_{2q_0\cdot 2^{k+1}}})\norm{V} \\&+ \left(R_{1,2r}(V)+R_{2,2r}(V)\right)^{\alpha_1(k)/2} \\& \cdot K_{q,r}^{(\alpha_2(k)+\min\{\frac{1}{2},\alpha_1(k)\})/2}\norm{V}^{\alpha_3(k)+\alpha_1/2} \norm{Y}_{2q_0\cdot 2^{k+1}}^{(\alpha_4(k)+\max\{\frac{1}{2},\alpha_3(k)\})/2}q^{\alpha_5(k)/2}\Bigg)
    \\ \le & c_{\ref{lem:univseq}}^{k+1} \Bigg((K_{q_0,r}+\sqrt{pd\norm{Y}_{2q_0\cdot 2^{k+1}}})\norm{V} \\&+ \left(R_{1,2r}(V)+R_{2,2r}(V)\right)^{\alpha_1(k+1)}K_{q_0,r}^{\alpha_2(k+1)}\norm{V}^{\alpha_3(k+1)} \norm{Y}_{2q_0 \cdot 2^{k+1}}^{\alpha_4(k+1)}q^{\alpha_5(k+1)}\Bigg)
\end{align*}

    So the statement holds for $k+1$ because $\alpha_1(k+1), \cdots ,\alpha_5(k+1)$ are determined by the recurrence relation.
\end{proof}

In the next lemma, we plug in $V=I_d$ into the final moment estimate for $VY$ (\ref{eq:univseq}). Since $\norm{Y}_{2q} \le \norm{Y}_{2q \cdot 2^k}$, we can cancel out some powers of $\norm{Y}_{2q \cdot 2^k}$ on both sides and then get an absolute bound for $\norm{Y}_{2q \cdot 2^k}$.

\begin{lemma}[Averaged Moment Estimates]\label{lem:averuniv}
Let $S$, $p$, $\{Z_{(l,\gamma)}\}_{(l,\gamma) \in \Xi}$, $\{(S_{\lambda},\{Z_{\lambda,(l,\gamma)}\}_{(l,\gamma) \in \Xi})\}_{\lambda=0,1,2,...}$, $U$, $X$, $X_1,X_2,...$, $Y_1,Y_2,...$, $R_{1,2r}(V)$, $R_{1,2r}(V)$, and $K_{q,r}$ be as in Lemma \ref{lem:iterdec}.
Then there exist  constants $c_{\ref{lem:averuniv}.1},c_{\ref{lem:averuniv}.2}$
such that for any integers $k \ge 1$, $q > \log(d)$, and an $n \times d$ deterministic matrix $V$, we have
\begin{equation}\label{eq:averuniv}
    \begin{aligned}
    \norm{Y}_{2q \cdot 2^k} \le \exp(\exp(c_{\ref{lem:averuniv}.1}k))\left(R_{1,2q \cdot 2^k}(I_d)+R_{2,2q \cdot 2^k}(I_d)\right)^{\frac{2}{2k+1}}K_{q,q}^{\frac{2k-1}{2k+1}} q^{\frac{4}{2k+1}}+c_{\ref{lem:averuniv}.2}^k \cdot K_{q,q}
\end{aligned}
\end{equation}

\end{lemma}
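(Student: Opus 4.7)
The natural plan is to apply Lemma \ref{lem:univseq} with $V=I_d$ (so $\|V\|=1$ and $\|V\|^{\alpha_3(k)}=1$) and then solve the resulting implicit inequality for $A:=\|Y\|_{2q\cdot 2^k}$. Choosing $r$ such that $R_{1,2r}(I)$ and $R_{2,2r}(I)$ match the subscript $2q\cdot 2^k$ in the statement (by reparametrising with $q_0$ slightly below $q$ so that $r=q\cdot 2^k>2q_0\cdot 2^k$ is permissible), Lemma \ref{lem:univseq} yields
\[
A \;\le\; c_{\ref{lem:univseq}}^k\Bigl[\,K + \sqrt{pd\cdot A} \;+\; R^{\alpha_1(k)}\,K^{\alpha_2(k)}\,A^{\alpha_4(k)}\,q^{\alpha_5(k)}\,\Bigr],
\]
where $R:=R_{1,2q\cdot 2^k}(I)+R_{2,2q\cdot 2^k}(I)$ and $K$ is the corresponding value of $K_{q\cdot 2^k,r}$.

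Next, I would absorb the $\sqrt{pd\cdot A}$ term via the weighted AM--GM bound $c_{\ref{lem:univseq}}^k\sqrt{pd\cdot A}\le \frac{1}{8}A + 2(c_{\ref{lem:univseq}}^k)^2\,pd$, and use $pd\le K$ (which follows from $K^2\ge pmpd$ together with $pm\ge 1$ and $m\ge d$, both of which are harmless in the regime we care about). This moves the square-root term into a multiple of $K$ on the right-hand side together with an $A/8$ that is absorbed on the left, giving an inequality of the shape
\[
A \;\le\; C_k^{(1)}\,K \;+\; C_k^{(2)}\,R^{\alpha_1(k)}\,K^{\alpha_2(k)}\,A^{\alpha_4(k)}\,q^{\alpha_5(k)}
\]
with $C_k^{(1)},C_k^{(2)}=O(c^{k})$ for some absolute constant $c>1$.

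Now the key step: since $\alpha_4(k)<1$, a clean case analysis extracts $A$ from $A^{\alpha_4(k)}$. Either $A\le 2C_k^{(1)}K$ (first branch), or $A\le 2C_k^{(2)}R^{\alpha_1(k)}K^{\alpha_2(k)}A^{\alpha_4(k)}q^{\alpha_5(k)}$, in which case $A^{1-\alpha_4(k)}\le 2C_k^{(2)}R^{\alpha_1(k)}K^{\alpha_2(k)}q^{\alpha_5(k)}$ and hence
\[
A\;\le\;\bigl(2C_k^{(2)}\bigr)^{\frac{1}{1-\alpha_4(k)}}\,R^{\frac{\alpha_1(k)}{1-\alpha_4(k)}}\,K^{\frac{\alpha_2(k)}{1-\alpha_4(k)}}\,q^{\frac{\alpha_5(k)}{1-\alpha_4(k)}}.
\]
Adding the two branches gives the claimed form, and it only remains to identify the exponents. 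From $\alpha_4(k)=1-(2k+1)/2^{k+1}$ one computes $1/(1-\alpha_4(k))=2^{k+1}/(2k+1)$, and the values $\alpha_1(k)=1/2^k$, $\alpha_2(k)=(2k-1)/2^{k+1}$, $\alpha_5(k)=2/2^{k}$ immediately give
\[
\frac{\alpha_1(k)}{1-\alpha_4(k)}=\frac{2}{2k+1},\qquad \frac{\alpha_2(k)}{1-\alpha_4(k)}=\frac{2k-1}{2k+1},\qquad \frac{\alpha_5(k)}{1-\alpha_4(k)}=\frac{4}{2k+1},
\]
matching the exponents in \eqref{eq:averuniv}.

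The one genuinely delicate point is bookkeeping on the constant. The factor $(2C_k^{(2)})^{1/(1-\alpha_4(k))}$ equals $(O(c^k))^{2^{k+1}/(2k+1)}$; taking logarithms gives $O(k\cdot 2^{k+1}/(2k+1))=O(2^k)=\exp(O(k))$, so this constant is of order $\exp(\exp(c_{\ref{lem:averuniv}.1} k))$ as required. The linear-in-$K$ branch contributes only a singly-exponential factor $c_{\ref{lem:averuniv}.2}^k$. The main obstacle is the possible concern that $\alpha_4(k)\uparrow 1$ forces a blow-up as $k$ grows, but the case-analysis/Young trick converts this into the controlled double-exponential above, which is exactly why one can afford to let $k$ grow very slowly with $d/\delta$ in the subsequent corollaries.
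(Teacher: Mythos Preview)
Your proposal is correct and follows essentially the same line as the paper: apply Lemma~\ref{lem:univseq} with $V=I_d$, then solve the implicit inequality for $A=\|Y\|_{2q\cdot 2^k}$ by case analysis, using $pd\le K$ and the explicit values of $\alpha_i(k)$ to recover the exponents $\tfrac{2}{2k+1},\tfrac{2k-1}{2k+1},\tfrac{4}{2k+1}$ and the double-exponential constant. The only cosmetic difference is that you first absorb the $\sqrt{pd\cdot A}$ term via AM--GM before splitting into two cases, whereas the paper splits directly into three cases (the middle case $A\le 3c_1^k\sqrt{pd\cdot A}$ yielding $A\le(3c_1^k)^2pd\le c^{2k}K$); both routes give the same bounds.
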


\begin{proof}
For convenience, let $K=K_{q,q}$. By direct calculation, we have
\begin{align*}
    K_{q \cdot 2^k,q \cdot 2^k} \le 2^k K_{q,q}=2^k K
\end{align*}
By Lemma \ref{lem:univseq}, choosing $r=q \cdot 2^k $, we have
\begin{align*}
    &\norm{VY}_{2q \cdot 2^k} \\\le& c_{\ref{lem:univseq}}^k \Bigg((K_{q \cdot 2^k,q \cdot 2^k}+\sqrt{pd\norm{Y}_{2q \cdot 2^k}})\norm{V} \\&+ \left(R_{1,2q \cdot 2^k}(V)+R_{2,2q \cdot 2^k}(V)\right)^{\alpha_1(k)}K_{q \cdot 2^k,q \cdot 2^k}^{\alpha_2(k)}\norm{V}^{\alpha_3(k)} \norm{Y}_{2q \cdot 2^k}^{\alpha_4(k)}q^{\alpha_5(k)}\Bigg)
    \\ \le & c_1^k \Bigg((K+\sqrt{pd\norm{Y}_{2q \cdot 2^k}})\norm{V} \\&+ \left(R_{1,2q \cdot 2^k}(V)+R_{2,2q \cdot 2^k}(V)\right)^{\alpha_1(k)}K^{\alpha_2(k)}\norm{V}^{\alpha_3(k)} \norm{Y}_{2q \cdot 2^k}^{\alpha_4(k)}q^{\alpha_5(k)}\Bigg)
\end{align*}
for some new constant $c_1>1$ where $\alpha_1(k),...,\alpha_5(k)$ is $\alpha_1(k)=\frac{1}{2^k}$, $\alpha_2(k)=\frac{2k-1}{2^{k+1}}$, $\alpha_3(k)=1-\frac{1}{2^k}$, $\alpha_4(k)=1-\frac{2k+1}{2^{k+1}}$, $\alpha_5(2k)=\frac{2}{2^k}$ for $k \ge 1$.

Now, we set $V=I_d$ and therefore the above inequality becomes
\begin{align*}
    &\norm{Y}_{2q \cdot 2^k} \\\le& c_{1}^k \Bigg((K+\sqrt{pd\norm{Y}_{2q \cdot 2^k}}) \\&+ \left(R_{1,2q \cdot 2^k}(I_d)+R_{2,2q \cdot 2^k}(I_d)\right)^{\alpha_1(k)}K^{\alpha_2(k)}\norm{Y}_{2q \cdot 2^k}^{\alpha_4(k)}q^{\alpha_5(k)}\Bigg)
\end{align*}

Therefore, we have
\begin{align*}
    \norm{Y}_{2q\cdot 2^k} \le 3c_{1}^k K
\end{align*}
or
\begin{align*}
    \norm{Y}_{2q\cdot 2^k} \le 3c_{1}^k\sqrt{pd\norm{Y}_{2q\cdot 2^k}}
\end{align*}
or
\begin{align*}
    \norm{Y}_{2q \cdot 2^k} \le 3c_{1}^k\left(R_{1,2r}(I_d)+R_{2,2r}(I_d)\right)^{\alpha_1(k)}K^{\alpha_2(k)} \norm{Y}_{2q \cdot 2^k}^{\alpha_4(k)}q^{\alpha_5(k)}
\end{align*}

In the first and second cases, the desired result holds directly.

It suffices to consider the third case
\begin{align*}
    \norm{Y}_{2q\cdot 2^k} \le 3c_{1}^k\left(R_{1,2q \cdot 2^k}(I_d)+R_{2,2q \cdot 2^k}(I_d)\right)^{\frac{1}{2^k}}K^{\frac{2k-1}{2^{k+1}}} \norm{Y}_{2q\cdot 2^k}^{1-\frac{2k+1}{2^{k+1}}}q^{\frac{2}{2^k}}
\end{align*}

Therefore, we have
\begin{align*}
    \norm{Y}_{2q \cdot 2^k}^{\frac{2k+1}{2^{k+1}}} \le 3c_{1}^k\left(R_{1,2q \cdot 2^k}(I_d)+R_{2,2q \cdot 2^k}(I_d)\right)^{\frac{1}{2^k}}K^{\frac{2k-1}{2^{k+1}}} q^{\frac{2}{2^k}}
\end{align*}

Taking power $2^{k+1}$ on both sides, we have
\begin{align*}
    \norm{Y}_{2q \cdot 2^k}^{2k+1} \le (3c_{1}^k)^{2^{k+1}}\left(R_{1,2q \cdot 2^k}(I_d)+R_{2,2q \cdot 2^k}(I_d)\right)^{2}K_{q,r}^{2k-1} q^{4}
\end{align*}

Taking $2k+1$th root on both sides, we have
\begin{align*}
    \norm{Y}_{2q \cdot 2^k} \le &(3c_{1}^k)^{\frac{2^{k+1}}{2k+1}}\left(R_{1,2q \cdot 2^k}(I_d)+R_{2,2q \cdot 2^k}(I_d)\right)^{\frac{2}{2k+1}}K^{\frac{2k-1}{2k+1}} q^{\frac{4}{2k+1}}
    \\\le & \exp(\exp(c_2k))\left(R_{1,2q \cdot 2^k}(I_d)+R_{2,2q \cdot 2^k}(I_d)\right)^{\frac{2}{2k+1}}K^{\frac{2k-1}{2k+1}} q^{\frac{4}{2k+1}}
\end{align*}
for some constant $c_2$.

\end{proof}

\section{Final Results for OSE.} \label{sec:finalres}

We can combine the above final moment estimate with Lemma \ref{lem:decoup} to obtain the bound for the trace moments of the embedding error for OSNAP.

\begin{theorem}[Trace Moments of Embedding Error for OSNAP] \label{prop:momestdecoupmatrix}
Let $S$, $p$, $\{Z_{(l,\gamma)}\}_{(l,\gamma) \in \Xi}$,\\ $\{(S_{\lambda},\{Z_{\lambda,(l,\gamma)}\}_{(l,\gamma) \in \Xi})\}_{\lambda=0,1,2,...}$, $U$, $X$, $X_1,X_2,...$, $Y_1,Y_2,...$, $R_{1,2r}(V)$, and $R_{1,2r}(V)$ be as in Lemma \ref{lem:iterdec}.
Let
\begin{align*}
    K=&K(m,d,p,q)\\=&\left(p\max\{m,q\}p\max\{d,q\}+(pm)^{1/(q)}q(pd+q) \right)^{1/2}
\end{align*}
Then, there exist constants $c_{\ref{prop:momestdecoupmatrix}.1}> 0, c_{\ref{prop:momestdecoupmatrix}.2}>0$ such that for $k \in \N$, $q \in \N$, and $\varepsilon>0$ satisfying $\log (d) \le q \le m$ we have
\begin{align*}
    \E[\tr(\frac{1}{pm}X^TX - I_d)^{2q}]^\frac{1}{2q} \leq  \varepsilon
\end{align*}
when either of the following conditions are satisfied

(1) $k=0$ and
\begin{align*}
c_{\ref{prop:momestdecoupmatrix}.1}\left(\left(R_{1,2q }(I_d)+R_{2,2q }(I_d)\right)q^{2}+K\right) \le pm \varepsilon
\end{align*} 

(2) $k \ge 1$ and 
and \begin{align*}
\exp(\exp(c_{\ref{prop:momestdecoupmatrix}.2}k))\left(\left(R_{1,2q \cdot 2^k}(I_d)+R_{2,2q \cdot 2^k}(I_d)\right)^{\frac{2}{2k+1}}K^{\frac{2k-1}{2k+1}} q^{\frac{4}{2k+1}}+K\right) \le pm \varepsilon
\end{align*} 

\end{theorem}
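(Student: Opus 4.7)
The plan is to reduce the trace moment on the left-hand side to a bound on $\norm{Y}_{2q}$ for $Y = X_1^T X_2$, and then to invoke the moment estimates developed earlier in the paper, choosing between the weak initial estimate of Lemma \ref{lem:initialuniv} (for $k=0$) and the averaged iterative estimate of Lemma \ref{lem:averuniv} (for $k \ge 1$). The first step is to note that $\frac{1}{pm}X^TX - I_d = \frac{1}{pm}(X^TX - pm I_d)$, and to apply the standard decoupling argument of \cite{cohen2016nearly, chenakkod2025optimal} (formalized as Lemma \ref{lem:decoup} in the appendix) to produce an absolute constant $C_0$ with
\begin{equation*}
\left(\E\left[\tr\left(\tfrac{1}{pm}X^TX - I_d\right)^{2q}\right]\right)^{1/(2q)} = \frac{1}{pm}\norm{X^TX - pm I_d}_{2q} \le \frac{C_0}{pm}\norm{Y}_{2q}.
\end{equation*}
It thus suffices in both cases to show $\norm{Y}_{2q} \le pm\varepsilon/C_0$.

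For case (1) with $k=0$, I would apply Lemma \ref{lem:initialuniv} with $V = I_d$ and $r = q$. Since $\norm{I_d} = 1$ and $\log(d) \le q$ by hypothesis, the lemma yields
\begin{equation*}
\norm{Y}_{2q} \le c_{\ref{lem:initialuniv}}\bigl(K + q^{2}(R_{1,2q}(I) + R_{2,2q}(I))\bigr),
\end{equation*}
after absorbing $\sqrt{p\max\{m,q\}\,p\max\{d,q\}}$ into $K$ by the definition of $K$. Choosing $c_{\ref{prop:momestdecoupmatrix}.1} \ge C_0\, c_{\ref{lem:initialuniv}}$, the assumed inequality gives $\norm{Y}_{2q} \le pm\varepsilon/C_0$, as desired.

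For case (2) with $k \ge 1$, I would apply Lemma \ref{lem:averuniv} with $V = I_d$, which produces
\begin{equation*}
\norm{Y}_{2q\cdot 2^k} \le \exp(\exp(c_{\ref{lem:averuniv}.1}k))\bigl(R_{1,2q\cdot 2^k}(I) + R_{2,2q\cdot 2^k}(I)\bigr)^{\frac{2}{2k+1}} K^{\frac{2k-1}{2k+1}} q^{\frac{4}{2k+1}} + c_{\ref{lem:averuniv}.2}^{k}K.
\end{equation*}
Since the $L_p(S_p^d)$-norms are monotone in $p$ (a direct consequence of Jensen applied once inside the normalized trace and once outside in the expectation), we have $\norm{Y}_{2q} \le \norm{Y}_{2q\cdot 2^k}$. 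Choosing $c_{\ref{prop:momestdecoupmatrix}.2}$ large enough so that $\exp(\exp(c_{\ref{prop:momestdecoupmatrix}.2}k))$ simultaneously dominates $C_0\exp(\exp(c_{\ref{lem:averuniv}.1}k))$ and $C_0\, c_{\ref{lem:averuniv}.2}^k$ uniformly in $k$, the assumed inequality then implies $\norm{Y}_{2q} \le pm\varepsilon/C_0$, again as desired.

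The main obstacle in executing this plan is not mathematical but organizational: one must ensure that the constants from the three inputs --- the decoupling lemma, the weak initial estimate, and the averaged iterative estimate --- combine uniformly in $k$ into the two cleanly stated constants $c_{\ref{prop:momestdecoupmatrix}.1}$ and $c_{\ref{prop:momestdecoupmatrix}.2}$. This is tractable precisely because the $k$-dependence in condition (2) already appears as a double exponential $\exp(\exp(c_{\ref{prop:momestdecoupmatrix}.2}k))$, which safely absorbs both the $\exp(\exp(c_{\ref{lem:averuniv}.1}k))$ factor and the ordinary exponential $c_{\ref{lem:averuniv}.2}^k$ produced by iterating Lemma \ref{lem:iterdec} in the proof of Lemma \ref{lem:averuniv}.
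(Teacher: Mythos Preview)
Your proposal is correct and follows essentially the same route as the paper: reduce via the decoupling lemma (Lemma~\ref{lem:decoup}) to a bound on $\norm{X_1^TX_2}_{2q}$, then invoke Lemma~\ref{lem:initialuniv} for $k=0$ and Lemma~\ref{lem:averuniv} together with the monotonicity $\norm{Y}_{2q}\le\norm{Y}_{2q\cdot 2^k}$ for $k\ge 1$, absorbing all constants into a single double-exponential in $k$. The paper phrases the $k\ge 1$ case as a small either/or (either $\norm{Y}_{2q}\le K$ already, or pass to the higher moment), but this is equivalent to your direct use of monotonicity since Lemma~\ref{lem:averuniv} already carries the additive $c_{\ref{lem:averuniv}.2}^k K$ term.
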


\begin{proof}
The desired bound
\begin{align*}
    \E[\tr(\frac{1}{pm}X^TX - I_d)^{2q}]^\frac{1}{2q} \leq  \varepsilon
\end{align*}
is equivalent to
\begin{align*}
    \E[\tr(X^TX - pmI_d)^{2q}]^\frac{1}{2q} \leq  pm\varepsilon
\end{align*}

By Lemma \ref{lem:decoup}, we know that
\begin{align*}
    \E[\tr(X^TX - pmI_d)^{2q}]^\frac{1}{2q} \le c_1 \norm{X_1^TX_2}_{2q}
\end{align*}
for some constant $c_1$ where $X_1,X_2$ are i.i.d. copies of $X$.

When $k=0$, we can use Lemma \ref{lem:initialuniv} to get
\begin{align*}
    \E[\tr(X^TX - pmI_d)^{2q}]^\frac{1}{2q} \le &c_1 \norm{X_1^TX_2}_{2q} \\\le&  
    c_{2}(\sqrt{p\max\{m,q\}p\max\{d,q\}} + q^{2}R_{2,2r}(I_d) + q^{2}R_{1,2r}(I_d))
    \\\le&  
    c_{2}((R_{2,2q}(I_d) + R_{1,2q}(I_d))q^2+K)
\end{align*}
for some constant $c_2$.

Therefore, it suffices to require
\begin{align*}
    c_{2}((R_{2,2q}(I_d) + R_{1,2q}(I_d))q^2+K) \le pm \varepsilon
\end{align*}

When $k \ge 1$, by Lemma \ref{lem:averuniv}, we know that either $\norm{X_1^TX_2}_{2q} \le K$
or we have $\norm{X_1^TX_2}_{4q} \ge \norm{X_1^TX_2}_{2q} > K$ and therefore we have
\begin{align*}
    &\norm{X_1^TX_2}_{2q} \\\le& \norm{X_1^TX_2}_{2q \cdot 2^k} \\ \le & \exp(\exp(c_3k))\left(R_{1,2q \cdot 2^k}(I_d)+R_{2,2q \cdot 2^k}(I_d)\right)^{\frac{2}{2k+1}}K^{\frac{2k-1}{2k+1}} q^{\frac{4}{2k+1}}+c_4^k \cdot K
\end{align*}
for some constant $c_3$ and $c_4$.

In both cases, we have
\begin{align*}
    &\E[\tr(X^TX - pmI_d)^{2q}]^\frac{1}{2q} 
    \\\le& c_1 \norm{X_1^TX_2}_{2q}\\\le& \exp(\exp(c_5k))(\left(R_{1,2q \cdot 2^k}(I_d)+R_{2,2q \cdot 2^k}(I_d)\right)^{\frac{2}{2k+1}}K^{\frac{2k-1}{2k+1}} q^{\frac{4}{2k+1}}+K)
\end{align*}
for some constant $c_5$.

Therefore, it suffices to require
\begin{align*}
    \exp(\exp(c_5k))\left(\left(R_{1,2q \cdot 2^k}(I_d)+R_{2,2q \cdot 2^k}(I_d)\right)^{\frac{2}{2k+1}}K^{\frac{2k-1}{2k+1}} q^{\frac{4}{2k+1}}+K\right) \le pm \varepsilon
\end{align*}

\end{proof}

Combining the above result with Lemma \ref{lem:R1R2}, we can get explicit requirements on embedding dimension and sparsity to ensure OSE moment bound to hold.

\begin{corollary}[Sparsity Requirement for Moment Bound]\label{cor:momenttheta}
Let $S$ be an $m \times n$ matrix distributed according to the fully independent unscaled OSNAP distribution with parameter $p$. Let $0 < \delta,  \varepsilon < 1$ and $d>10$. Let $U$ be an arbitrary $n \times d$ deterministic matrix such that $U^TU=I_d$.
Let $X=SU$.
There exists a constants $c_{\ref{cor:momenttheta}.1},c_{\ref{cor:momenttheta}.2}$ such that the following holds. Let $q=\log(d/\delta)$ and $k \ge 1$ be an arbitrary integer.
Assume that we choose $m=\theta\frac{d+\log(d/\delta)}{\varepsilon^2}$ with $\theta\ge 16\exp(\exp(c_{\ref{cor:momenttheta}.2}k))^2$. Then we have
\begin{equation}\label{eq:trmomose}
    \begin{aligned}
    \E[\tr(\frac{1}{pm}X^TX - I_d)^{2q}]^\frac{1}{2q} \leq  \varepsilon
\end{aligned}
\end{equation}
as long as
\begin{align*}
    \begin{aligned}
    {pm } \ge {\exp(\exp(c_{\ref{cor:momenttheta}.1}k))}\left(\frac{1}{\varepsilon^{1+\frac{1}{2q-1}}}(\theta q+\frac{q^{5/2}}{\theta^{k/2-1/4}})+\frac{ q^{4}}{\theta^{k+1/2}}\right)
\end{aligned}
\end{align*}

In particular, there are two special cases of interest.

(1) We can choose $\theta=c_{\ref{cor:momenttheta}.3} q^{\frac{5}{k-1/2}}$ such that $m=c_{\ref{cor:momenttheta}.3}\frac{\log(d/\delta)^{\frac{5}{k-1/2}}(d+\log(d/\delta))}{\varepsilon^2}$ and the OSE moment bound (\ref{eq:trmomose}) holds when
\begin{align*}
     pm \ge c_{\ref{cor:momenttheta}.4} \frac{q^{1+{\frac{5}{k - 1/2} + \frac{1}{k}}}}{\varepsilon^{1+\frac{1}{2q-1}}}
\end{align*}
where we define the function $h(x)=\max\{\log(x),1\}$, and let
\begin{align*}
    k=k(d/\delta)=\lceil c_{\ref{cor:momenttheta}.5} h \circ h \circ h (d/\delta)\rceil
\end{align*}

(2) We can choose $\theta={O(1)}$ such that $m=\frac{c_{\ref{cor:momenttheta}.6}(d+\log(d/\delta))}{\varepsilon^2}$ and the OSE moment bound (\ref{eq:trmomose}) holds 
when 
\begin{align*}
     pm \ge c_{\ref{cor:momenttheta}.7}\left(\frac{q^{5/2}}{\varepsilon^{1+\frac{1}{q}}}+q^4\right)
\end{align*}
for some absolute constants $c_{\ref{cor:momenttheta}.3}$, $c_{\ref{cor:momenttheta}.4}$, $c_{\ref{cor:momenttheta}.5}$, $c_{\ref{cor:momenttheta}.6}$ and $c_{\ref{cor:momenttheta}.7}$.
\end{corollary}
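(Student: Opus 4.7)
The plan is to substitute the explicit bounds from Lemma \ref{lem:R1R2} into condition (2) of Theorem \ref{prop:momestdecoupmatrix}, then algebraically reduce the resulting inequality under the parameterization $m = \theta(d+q)/\varepsilon^2$. Applying Lemma \ref{lem:R1R2} with $r = q \cdot 2^k$ gives $R_{1,2q \cdot 2^k}(I) + R_{2,2q \cdot 2^k}(I) \le C(pm)^{1/(2q \cdot 2^k)}\sqrt{pd + q \cdot 2^{k+1}}$. Since $q = \log(d/\delta)$ and we may assume without loss of generality that $pm \le d^{O(1)}$ (as otherwise the stated sparsity bound is trivially satisfied by an even larger margin), the factor $(pm)^{1/(2q \cdot 2^k)}$ is bounded by $e^{O(1)/2^k}$ and can be absorbed into the $\exp(\exp(ck))$ prefactor. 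Similarly, by the definition of $K$, I bound $K \le C(\sqrt{pmpd} + (pm)^{1/(2q)}\sqrt{q(pd+q)})$, where the factor $(pm)^{1/(2q)}$ will ultimately contribute the exponent adjustment $\varepsilon^{-1/(2q-1)}$ via a rearrangement of the form $(pm)^{1-1/(2q)} \ge X/\varepsilon \Rightarrow pm \ge (X/\varepsilon)^{2q/(2q-1)}$.

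Next, the bound $\exp(\exp(c'k)) K \le pm\varepsilon/2$ splits into two subconditions. Using $pd \le pm\varepsilon^2/\theta$ (which follows from $m = \theta(d+q)/\varepsilon^2 \ge \theta d/\varepsilon^2$), the $\sqrt{pmpd}$ piece requires $\theta \ge \exp(\exp(c'k))^2$, absorbed by the assumption $\theta \ge 16\exp(\exp(c_{\ref{cor:momenttheta}.2}k))^2$. The $(pm)^{1/(2q)}\sqrt{q(pd+q)}$ piece, after substituting $pd \le pm\varepsilon^2/\theta$ and applying the rearrangement above, reduces to $pm \ge \exp(\exp(c''k))\theta q/\varepsilon^{1+1/(2q-1)}$, the first summand. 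For the mixed term, raising
\[
\exp(\exp(c'k))(pd+q)^{1/(2k+1)} K^{(2k-1)/(2k+1)}q^{4/(2k+1)} \le pm\varepsilon/2
\]
to the $(2k+1)$th power and splitting $K^{2k-1}$ via $K \le C\sqrt{pmpd} + C(pm)^{1/(2q)}\sqrt{q(pd+q)}$ yields two further subconditions. After isolating $pm$, the $\sqrt{pmpd}$ contribution produces the second summand $\exp(\exp(c''k))q^{5/2}/(\theta^{k/2-1/4}\varepsilon^{1+1/(2q-1)})$, where the $\theta^{k/2-1/4}$ denominator arises from applying $pd \le pm\varepsilon^2/\theta$ to a $(k-1/2)$th power; the $\sqrt{q(pd+q)}$ contribution, which carries no $pm$ factor from $K$, produces $\exp(\exp(c''k))q^4/\theta^{k+1/2}$.

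For the two special cases: in Case (1) I choose $\theta = C\log(d/\delta)^{5/(k-1/2)}$, which is large enough that $q^{5/2}/\theta^{k/2-1/4} \le q$ and $q^4/\theta^{k+1/2} \le q$, so the $\theta q/\varepsilon^{1+1/(2q-1)}$ term dominates; choosing $k = \Omega(\log\log\log\log(d/\delta))$ as in Corollary \ref{cor:osesubpolylogopt} makes $\theta = \log(d/\delta)^{o(1)}$. Case (2) follows by taking $k = 1$ and the smallest $\theta$ satisfying $\theta \ge 16\exp(\exp(c'))^2$; the stated bound on $pm$ then reads $pm \ge C(q^{5/2}/\varepsilon^{1+1/q} + q^4)$ as required.

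The main obstacle is the careful bookkeeping of the small $(pm)^{1/(2q)}$ and $(pm)^{1/(2q \cdot 2^k)}$ factors: these must be converted into the specific form $\varepsilon^{-1/(2q-1)}$ appearing in the final sparsity bound and tracked through the $(2k+1)$th power step, while simultaneously ensuring that the a posteriori assumption $pm \le d^{O(1)}$ used to absorb them into constants is consistent with the sparsity bound we are proving. A secondary technical point is ensuring that the constant $c_{\ref{cor:momenttheta}.2}$ is chosen large enough that $\theta^{k/2-1/4}$ and $\theta^{k+1/2}$ dominate the $\exp(\exp(ck))$ factors propagating from Theorem \ref{prop:momestdecoupmatrix}, so that the three summands in the stated bound can be uniformly prefixed by a single $\exp(\exp(c_{\ref{cor:momenttheta}.1}k))$ factor.
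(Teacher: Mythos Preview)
Your approach matches the paper's: start from condition (2) of Theorem~\ref{prop:momestdecoupmatrix}, substitute the bounds of Lemma~\ref{lem:R1R2}, use the parameterization $m=\theta(d+q)/\varepsilon^2$ to replace $pd$ by $pm\varepsilon^2/\theta$, and convert residual $(pm)^{1/(2q)}$ factors into the $\varepsilon^{-1/(2q-1)}$ exponent adjustment via the rearrangement $(pm)^{1-1/(2q)}\ge X \Rightarrow pm \ge X^{2q/(2q-1)}$. The handling of the two special cases (choice of $\theta$ and $k$) is also the same.

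There is one wrinkle worth fixing. Your ``without loss of generality $pm \le d^{O(1)}$'' reduction for absorbing the $(pm)^{1/(2q\cdot 2^k)}$ factor from the $R$-terms is not justified as stated: the corollary does not assume $\varepsilon \ge d^{-O(1)}$, and for very small $\varepsilon$ the value of $pm$ at which the condition of Theorem~\ref{prop:momestdecoupmatrix} becomes automatic can exceed any fixed polynomial in $d$, so the circular ``otherwise the bound is trivially satisfied'' argument does not close. The paper avoids this entirely with the trivial inequality $(pm)^{1/(2q\cdot 2^k)} \le (pm)^{1/(2q)}$ (valid since $pm \ge 1$), which merges this factor into the same $(pm)^{1/(2q)}$ you already track from $K$ and handle by rearrangement. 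With that one-line replacement your proof goes through.
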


\begin{proof}

We first prove the general result. Let
\begin{align*}
    K=&K(m,d,p,q)\\=&\left(p\max\{m,q\}p\max\{d,q\}+(pm)^{1/(q)}q(pd+q) \right)^{1/2}
\end{align*}
as in Theorem \ref{prop:momestdecoupmatrix}.
When $k \ge 1$, by Theorem \ref{prop:momestdecoupmatrix}, it is sufficient to require \begin{align*}
\exp(\exp(c_{\ref{prop:momestdecoupmatrix}.2}k))(\left(R_{1,2q \cdot 2^k}(I_d)+R_{2,2q \cdot 2^k}(I_d)\right)^{\frac{2}{2k+1}}K^{\frac{2k-1}{2k+1}} q^{\frac{4}{2k+1}}+K) \le pm \varepsilon
\end{align*}

Assume that $m=16\exp(\exp(c_{\ref{prop:momestdecoupmatrix}.2}k))^2\phi(d)\frac{d+q}{\varepsilon^2}$ where $\phi(d) \ge 1$. Then we need
\begin{align*}
\left(K+\left(R_{1,2q \cdot 2^k}(V)+R_{2,2q \cdot 2^k}(V)\right)^{\frac{2}{2k+1}}K^{\frac{2k-1}{2k+1}} q^{\frac{4}{2k+1}}\right) \le 4\sqrt{\phi(d)}\sqrt{pmp(d+q)}
\end{align*}

Then, it suffices to require
\begin{align*}
    K \le 2\sqrt{pmp(d+q)}
\end{align*}
and
\begin{align*}
    \left(R_{1,2q \cdot 2^k}(V)+R_{2,2q \cdot 2^k}(V)\right)^{\frac{2}{2k+1}}K^{\frac{2k-1}{2k+1}} q^{\frac{4}{2k+1}}\le 2K \sqrt{\phi(d)}
\end{align*}

These requirements are equivalent to
\begin{align*}
    p\max\{m,q\}p\max\{d,q\}+(pm)^{1/(q)}q(pd+q)  &\le 4pmp(d+q) \\
    \text{or, } pmp\max\{d,q\}+(pm)^{1/(q)}q(pd+q)  &\le 4pmp(d+q) \quad \text{ (since $m \ge q$)} \\
    \text{or, } (pm)^{1/(q)}pdq + (pm)^{1/(q)}q^2  &\le 3pmp(d+q) 
\end{align*}
and
\begin{align*}
    \left(R_{1,2q \cdot 2^k}(V)+R_{2,2q \cdot 2^k}(V)\right)^{\frac{2}{2k+1}} q^{\frac{4}{2k+1}}\le 2K^{\frac{2}{2k+1}}\sqrt{\phi(d)}
\end{align*}

Using the identity $m\varepsilon^2=16\exp(\exp(c_{\ref{lem:averuniv}}k))^2\phi(d){(d+q)}$, it suffices to require
\begin{align*}
    (pm)^{\frac{q-1}{q}} \ge q &\quad \text{( to get }  (pm)^{1/(q)}pdq \le pmpd \le pmp(d+q) \text{)}\\
    (pm\varepsilon)^2 \ge 16\exp(\exp(c_{\ref{lem:averuniv}}k))^2\phi(d)(pm)^{1/(q)}q^2 &\quad \text{( to get } (pm)^{1/(q)}q^2 \le pmp(d+q) \text{)}
\end{align*}
and
\begin{align*}
    2^{2k+1}K^2 \ge \frac{\left(R_{1,2q \cdot 2^k}(V)+R_{2,2q \cdot 2^k}(V)\right)^2 q^{4}}{\phi(d)^{k+1/2}}
\end{align*}

After simplification and using the fact that $K^2 \ge \frac{1}{2}pmp(d+q)=\frac{(pm \varepsilon)^2}{32\exp(\exp(c_{\ref{lem:averuniv}}k))^2\phi(d)}$, it suffices to require
\begin{align*}
    pm \ge q^{1+\frac{1}{q-1}}, \quad pm \ge (4\exp(\exp(c_{\ref{lem:averuniv}}k))\sqrt{\phi(d)}\frac{q}{\varepsilon})^{1+\frac{1}{2q-1}}
\end{align*}
and
\begin{align*}
    \frac{(pm \varepsilon)^2}{32\exp(\exp(c_{\ref{lem:averuniv}}k))^2\phi(d)} \ge \frac{\left(R_{1,2q \cdot 2^k}(V)+R_{2,2q \cdot 2^k}(V)\right)^2 q^{4}}{(\phi(d))^{k+1/2}}
\end{align*}

After further simplification, it suffices to require
\begin{align*}
    pm \ge \exp(\exp(c_{2}k))(\phi(d))^{1/2+\frac{1}{4q-2}}\frac{q^{1+\frac{1}{2q-1}}}{\varepsilon^{1+\frac{1}{2q-1}}}
\end{align*}
and
\begin{align*}
    {pm } \ge {\exp(\exp(c_3k))}\frac{\left(R_{1,2q \cdot 2^k}(V)+R_{2,2q \cdot 2^k}(V)\right) q^{2}}{(\phi(d))^{k/2-1/4}\varepsilon}
\end{align*}
for some constants $c_2$ and $c_3$.

When $m=\theta(d/\delta)\frac{d+q}{\varepsilon^2}$ with $16\exp(\exp(c_{\ref{lem:averuniv}}k))^2 \le \theta(d/\delta)$, we can choose $\phi(d)=\frac{\theta(d/\delta)}{16\exp(\exp(c_{\ref{lem:averuniv}}k))^2}$ and we have $\phi(d) \ge 1$ by direct calculation. 

Therefore, it suffices to require
\begin{align*}
    pm \ge \exp(\exp(c_{4}k))(\theta(d/\delta))^{1/2+\frac{1}{4q-2}}\frac{q^{1+\frac{1}{2q-1}}}{\varepsilon^{1+\frac{1}{2q-1}}}
\end{align*}
and
\begin{align*}
    {pm } \ge {\exp(\exp(c_4k))}\frac{\left(R_{1,2q \cdot 2^k}(V)+R_{2,2q \cdot 2^k}(V)\right) q^{2}}{(\theta(d/\delta))^{k/2-1/4}\varepsilon}
\end{align*}
for some constant $c_4>0$. 

Recall that
\begin{align*}
    R_{1,2q \cdot 2^k}(I_d) \le (pm)^{1/(2q \cdot 2^k)}(c_{\ref{lem:R1R2}.1}\sqrt{pd+2q })
    \le 2^k(pm)^{1/(2q)}(c_{\ref{lem:R1R2}.1}\sqrt{pd+2q })
\end{align*}
and
\begin{align*}
    R_{2,2q \cdot 2^k}(I_d) \le (pm)^{1/(2q \cdot 2^k)}(c_{\ref{lem:R1R2}.2}\sqrt{pd+2q \cdot 2^k})\le 2^k(pm)^{1/(2q)}(c_{\ref{lem:R1R2}.1}\sqrt{pd+2q})
\end{align*}
by Lemma \ref{lem:R1R2}.

Therefore, it suffices to require
\begin{align*}
    pm \ge \exp(\exp(c_{5}k))(\theta(d/\delta))\frac{q}{\varepsilon^{1+\frac{1}{2q-1}}},
\end{align*}
\begin{align*}
    {pm } \ge {\exp(\exp(c_5k))}\frac{(pm)^{1/(2q)}(\sqrt{2q}) q^{2}}{(\theta(d/\delta))^{k/2-1/4}\varepsilon},
\end{align*}
and
\begin{align*}
    {pm } \ge {\exp(\exp(c_5k))}\frac{(pm)^{1/(2q)}(\sqrt{pd}) q^{2}}{(\theta(d/\delta))^{k/2-1/4}\varepsilon}.
\end{align*}
for some constant $c_5>0$. 

After simplification and using the identity $m=\theta(d/\delta)\frac{d}{\varepsilon^2}$ to substitute $\sqrt{pd} = \frac{\sqrt{pm}\varepsilon}{\sqrt{\theta(d/\delta)}}$, it suffices to require
\begin{align*}
    pm \ge \exp(\exp(c_{6}k))(\theta(d/\delta))\frac{q}{\varepsilon^{1+\frac{1}{2q-1}}},
\end{align*}
\begin{align*}
    {pm } \ge \frac{\exp(\exp(c_6k))}{\varepsilon^{1+\frac{1}{2q-1}}} \paren*{\frac{q^{5/2}}{(\theta(d/\delta))^{k/2-1/4}}}^\frac{2q}{2q-1},
\end{align*}
and
\begin{align*}
    {pm } \ge ({\exp(\exp(c_6k))}\frac{ q^{2}}{(\theta(d/\delta))^{k/2+1/4}})^{\frac{1}{1/2-1/(2q)}}.
\end{align*}
for some constant $c_6$, where we use the fact that $\theta(d/\delta)>1$. 

After further simplification, it suffices to require
\begin{equation}\label{eq: sparsityreq}
\begin{aligned}
    {pm } \ge {\exp(\exp(c_7k))}\left(\frac{1}{\varepsilon^{1+\frac{1}{2q-1}}}\left(\theta(d/\delta)q+\paren*{\frac{q^{5/2}}{(\theta(d/\delta))^{k/2-1/4}}}^\frac{2q}{2q-1} \right)+\frac{ q^{4}}{(\theta(d/\delta))^{k+1/2}}\right)
\end{aligned}
\end{equation}

Now we move on to prove the two special cases of interest. To prove special case (1), we define the function $h(x)=\max\{\log(x),1\}$, and let
\begin{align*}
    k=k(d/\delta)=\lceil\frac{1}{2\max \{c_7, c_{\ref{lem:averuniv}}, 1\}}h \circ h \circ h (d/\delta)\rceil
\end{align*}
and therefore we have
\begin{align*}
    c_7 k &\le \frac{1}{2} \log\log\log (d/\delta)+c_7
\end{align*}
for $d/\delta >4000000 >\exp(e^e)$.

When $d/\delta $ is large enough, we also have
\begin{align*}
    \frac{1}{2} \log\log\log (d/\delta)+c_7  
    &\le \log\log\log(d/\delta) - \log\log\log\log(d/\delta)
\end{align*}
because
\begin{align*}
    \lim_{x\to +\infty}\frac{\log\log\log(x)}{\log\log\log\log(x)}=\infty
\end{align*}

Therefore, in summary, when $d/\delta $ is large enough, we have,
\begin{align*}
    c_7 k &\le \frac{1}{2} \log\log\log (d/\delta)+c_7  \\
    &\le \log\log\log(d/\delta) - \log\log\log\log(d/\delta)  \\
    &= \log \paren*{ \frac{ \log\log(d/\delta)}{\log \log \log (d/\delta)}} \\
    &= \log \paren*{ \log \paren*{ \paren*{\log(d/\delta)}^\frac{1}{\log\log\log(d/\delta)} } } \end{align*}
Taking exponential twice on both sides, for large enough $d/\delta$, we have
\begin{align*}  {\exp(\exp(c_7k))} \le q^{\frac{1}{\log\log\log(d/\delta)}} \le q^{\frac{1}{k}} = q^{o(1)}
\end{align*}
where $O( \cdot)$ and $o(\cdot)$ correspond to the asymptotic behavior when $d/\delta \to +\infty$, and we use the fact that
\begin{align*}
    k \le \frac{1}{2}\log\log\log\!\left(\frac{d}{\delta}\right)+1
	< \log\log\log\!\left(\frac{d}{\delta}\right)
\end{align*}
for large enough $d/\delta$.

In summary, there exists an absolute constant $c_8$ such that, for all $d/\delta>c_8$, we have
\begin{align*}  {\exp(\exp(c_7k))} \le q^{\frac{1}{\log\log\log(d/\delta)}} \le q^{\frac{1}{k}} = q^{o(1)}
\end{align*}

To prove the result also for any $d/\delta \le c_8$, we observe that, on the compact domain $10\le d/\delta \le c_8$, the continuous function ${\exp(\exp(c_7k))}={\exp(\exp(c_7k(d/\delta)))}$ is bounded, so have
\begin{align*}
    {\exp(\exp(c_7k))} \le c_9
\end{align*}
for some constant $c_9>1$.

Therefore, for any $10<d/\delta $, we have
\begin{align*}  {\exp(\exp(c_7k))} \le c_9q^{\frac{1}{k}} = c_9q^{o(1)}
\end{align*}

Then we choose $\theta(d/\delta)=\rho q^{\frac{5}{k-\frac{1}{2}}}$ for some constant $\rho$ to be chosen later. We claim that
\begin{align*}
    \theta(d/\delta) \ge 16\exp(\exp(c_{\ref{lem:averuniv}}k))^2 \\
    \text{or, } \sqrt{\rho} q^{\frac{2.5}{k-\frac{1}{2}}} \ge 4 \exp(\exp(c_{\ref{lem:averuniv}}k)) 
\end{align*}
so that the choice of $\theta$ and $k$ are valid when the constant $\rho$ is large enough.

By similar calculations as before, for $d/\delta \ge c_{10}$ where $c_{10}$ is some absolute constant, we have
\begin{align*}  {\exp(\exp(c_{\ref{lem:averuniv}}k))} &\le q^{\frac{1}{\log\log\log(d/\delta)}} \\
&\le  q^{\frac{5}{\log\log\log(d/\delta) + 1}} \\
&\le  q^{\frac{2.5}{k-\frac{1}{2}}}
\end{align*}
where we use the fact that, 
\begin{align*}
    k &< \frac{\log\log\log (d/\delta)}{2}+1 \\
    \text{so, } k - \frac{1}{2} &< \frac{\log\log\log (d/\delta) + 1}{2}
\end{align*}
For $d/\delta < c_{10}$, we have $k \le c_{11}$ so $4 \exp(\exp(c_{\ref{lem:averuniv}}k)) \le 4 \exp(\exp(c_{12})) $. Thus, by letting $\rho =  16\exp(\exp(c_{12}))^2$, we have, 
\begin{align*}
    \theta(d/\delta)= \rho q^{\frac{5}{k-\frac{1}{2}}}\ge 16\exp(\exp(c_{\ref{lem:averuniv}}k))^2
\end{align*}
is satisfied.

By this choice of $\theta(d/\delta)$, we have
\begin{align*}
    \frac{q^{5/2}}{(\theta(d/\delta))^{k/2-1/4}}=O(1)
\end{align*}
and
\begin{align*}
    (\frac{ q^{4}}{(\theta(d/\delta))^{k+1/2}})^{1+\frac{1}{1-q}}=O(1)
\end{align*}

Therefore, the requirement (\ref{eq: sparsityreq}) becomes
\begin{align*}
    {pm } \ge c_{13}{\exp(\exp(c_{7}k))}\left(\frac{q \cdot \theta(\frac{d}{\delta})}{\varepsilon^{1+\frac{1}{2q-1}}}\right) = c_{13} \frac{q^{1+o(1)}}{\varepsilon^{1+\frac{1}{2q-1}}}
\end{align*}
for some new constants $c_{12}$ and $c_{13}$ and it will be satisfied by requiring
\begin{align*}
    pm \ge c_{13} c_9q^{\frac{1}{k}}\left(\frac{q \cdot \rho q^{\frac{5}{k-\frac{1}{2}}}}{\varepsilon^{1+\frac{1}{2q-1}}}\right) =c_{13} c_9q^{\frac{1}{k}}\left(\frac{q \cdot 16\exp(\exp(c_{12}))^2 q^{\frac{5}{k-\frac{1}{2}}}}{\varepsilon^{1+\frac{1}{2q-1}}}\right) 
\end{align*}
where we recall
\begin{align*}  {\exp(\exp(c_7k))} \le c_9q^{\frac{1}{k}} 
\end{align*}

To prove special case (2), we simply choose $k=1$ and \begin{align*}
    \theta(d/\delta)=16\exp(\exp(c_{\ref{lem:averuniv}}k))^2=16\exp(\exp(c_{\ref{lem:averuniv}}))^2
\end{align*}

\end{proof}

\begin{remark}
In the proof of special case (2), we choose $k=1$ which means we do one step of iterative decoupling. However, this choice is just for convenience and if we don't do iterative decoupling and use Theorem \ref{prop:momestdecoupmatrix} with $k=0$ (which comes from initial moment estimate), we will get exactly the same result. In general, when $\theta=O(1)$, we always get the same result for any constant choice of $k$.
\end{remark}

Theorem \ref{thm:osedecoup} now easily follows from Corollary \ref{cor:momenttheta}.

\begin{theorem}[Theorem~\ref{thm:osedecoup} Restated]
    Let $\Pi$ be an $m \times n$ matrix distributed according to the OSNAP distribution with parameter $p$ as in Definition \ref{def:osnap} and let $U$ be an arbitrary $n \times d$ deterministic matrix such that $U^TU=I_d$. Let $0 < \delta, \varepsilon < 1$ and $d>10$. Then, there exist constants $c_{\ref{thm:osedecoup}.1}, c_{\ref{thm:osedecoup}.2}, c_{\ref{thm:osedecoup}.3}, c_{\ref{thm:osedecoup}.4}, c_{\ref{thm:osedecoup}.5} >0$ such that when,
\begin{itemize}
    \item $k$ is an arbitrary positive integer.
    \item $\theta$ is a positive parameter such that $\theta \ge c_{\ref{thm:osedecoup}.1} \exp(\exp(c_{\ref{thm:osedecoup}.2}k))^2$. 
\end{itemize}
we have,
\begin{equation}
    \begin{aligned}
\Pb \left( 1 - \varepsilon  \leq s_{\min}(\Pi U)   \leq s_{\max}(\Pi U) \leq 1 + \varepsilon \right) \geq 1-\delta
\end{aligned}
\end{equation}
when $m=\theta\cdot(d+\log(d/\delta))/\varepsilon^2$ and the number of non-zeros per column of $\Pi$ satisfies:
\begin{align}
    \begin{aligned} 
    {pm } \ge {\exp(\exp(c_{\ref{thm:osedecoup}.3}k))}\left(\frac{1}{\varepsilon^{1+\frac{1}{\log(d/\delta)}}}\Big(\theta \log(d/\delta)+\frac{\log(d/\delta)^{5/2}}{\theta ^{k/2-1/4}}\Big)+\frac{ \log(d/\delta)^{4}}{\theta ^{k+1/2}}\right)
\end{aligned}
\end{align}

\end{theorem}

\begin{proof}[Proof of Theorem \ref{thm:osedecoup}]
The claim \eqref{osepro} follows from the moment bound \eqref{eq:trmomose} in exactly the same way as the proof of Theorem 3.2 in \cite{chenakkod2025optimal} (Full version). 
\end{proof}

\section{Bound for diagonal terms: Proof of Lemma \ref{lem:decoupfirstlayer} and Lemma \ref{lem:decoupsecondlayer}.} \label{sec:diagterms}

In this section, we provide the proofs of Lemma \ref{lem:decoupfirstlayer} and Lemma \ref{lem:decoupsecondlayer}. 

\subsection{Proof of Lemma \ref{lem:decoupfirstlayer}.}

\begin{proof}[Proof of Lemma \ref{lem:decoupfirstlayer}]

We write
\begin{align*}
    \norm{Y_1^TV^TVY_1}_{2q}=\norm{X_2^TX_1V^TVX_1^TX_2}_{2q}
\end{align*}

The goal of this lemma is to replace $X_2$ by two i.i.d. copies of it, namely $X_3$ and $X_4$. To this end, we will condition on $X_1$. More precisely, we do decoupling for $\norm{X_2^TW^TWX_2}_{2q}$ for fixed matrix $W$, and then plug in $W=VX_1^T$ and uncondition on $X_1$.

\textbf{Step 1: conditioning on $W=VX_1^T$ and obtaining a general bound in terms of $W$.}

By Lemma \ref{decgeneral}, we have
\begin{align*}
    &\norm{X_2^TW^TWX_2}_{2q}\\ \le &\norm{\sum \limits_{(l,\gamma) \in \Xi}(WZ_{2,(l,\gamma)}U)^T(WZ_{2,(l,\gamma)}U)}_{2q}+4\norm{(WX_3)^T(WX_{4})}_{2q}
\end{align*}

Direct calculation shows that after plugging in $W=VX_1^T$ and uncondition on $X_1$, the off-diagonal term $\norm{(WX_3)^T(WX_{4})}_{2q}$ becomes $\norm{X_3^TX_1V^TVX_1^TX_4}_{2q}$ which is exactly what we want. Therefore, it suffices to bound the diagonal term
\begin{align*}
    \norm{\sum \limits_{(l,\gamma) \in \Xi}(WZ_{2,(l,\gamma)}U)^T(WZ_{2,(l,\gamma)}U)}_{2q}
\end{align*}

By Corollary 7.4 (Matrix Rosenthal inequality). in \cite{mackey2014matrix} and Cauchy Schwartz, we have
\begin{align*}
    &\norm{\sum \limits_{(l,\gamma) \in \Xi}(WZ_{2,(l,\gamma)}U)^T(WZ_{2,(l,\gamma)}U)}_{2q} \\\le& c_1 (\norm{\sum \limits_{(l,\gamma) \in \Xi}\E(WZ_{2,(l,\gamma)}U)^T(WZ_{2,(l,\gamma)}U)}_{2q}\\&+(2q)\cdot (\sum \limits_{(l,\gamma) \in \Xi} \norm{(WZ_{2,(l,\gamma)}U)^T(WZ_{2,(l,\gamma)}U)}_{2q}^{2q})^{1/(2q)})
\end{align*}

To estimate $\norm{\sum \limits_{(l,\gamma) \in \Xi}\E(WZ_{2,(l,\gamma)}U)^T(WZ_{2,(l,\gamma)}U)}_{2q}$, we observe
\begin{align*}
    &\sum \limits_{(l,\gamma) \in \Xi}(WZ_{2,(l,\gamma)}U)^T(WZ_{2,(l,\gamma)}U)\\=&\sum \limits_{(l,\gamma) \in \Xi}U^TZ_{2,(l,\gamma)}^TW^TWZ_{2,(l,\gamma)}U\\=& \sum \limits_{(l,\gamma) \in \Xi}\xi_{2,(l,\gamma)}^2u_le_{\mu_{2,(l,\gamma)}}^TW^TWe_{\mu_{2,(l,\gamma)}}u_l^T
    \\=&\sum \limits_{(l,\gamma) \in \Xi}u_le_{\mu_{2,(l,\gamma)}}^TW^TWe_{\mu_{2,(l,\gamma)}}u_l^T
    \\=&\sum \limits_{l}(\sum \limits_{\gamma}e_{\mu_{2,(l,\gamma)}}^TW^TWe_{\mu_{2,(l,\gamma)}})u_lu_l^T
\end{align*}

Therefore, we have
\begin{align*}
    \sum \limits_{(l,\gamma) \in \Xi}\E U^TZ_{2,(l,\gamma)}^TW^TWZ_{2,(l,\gamma)}U=&\sum \limits_{l}(\sum \limits_{\gamma}\E e_{\mu_{2,(l,\gamma)}}^TW^TWe_{\mu_{2,(l,\gamma)}})u_lu_l^T
    \\=&\sum \limits_{l}(\frac{pm}{m}\sum \limits_{i \in [m]}e_{{i}}^TW^TWe_{i})u_lu_l^T
    \\=&(\frac{pm}{m}\sum \limits_{i \in [m]}e_{{i}}^TW^TWe_{{i}}) \sum \limits_{l}u_lu_l^T
    \\=&(p\sum \limits_{i \in [m]}\norm{We_{{i}}}_{l_2([m])}^2) \sum \limits_{l}u_lu_l^T
    \\=& (p\sum \limits_{i \in [m]}\norm{We_{{i}}}_{l_2([d])}^2) I_d
\end{align*}

Therefore, we have
\begin{align*}
    \norm{\sum \limits_{(l,\gamma) \in \Xi}\E(WZ_{2,(l,\gamma)}U)^T(WZ_{2,(l,\gamma)}U)}_{2q}=&(p\sum \limits_{i \in [m]}\norm{We_{i}}^2_{l_2([d])})
    \\=&(p\norm{W}_{F}^2)
\end{align*}

Now we have finished bounding the first terms in the RHS of the matrix Rosenthal inequality. We will move on to bound the second term
\begin{align*}
    (2q)\cdot (\E \sum \limits_{(l,\gamma) \in \Xi} \norm{(WZ_{2,(l,\gamma)}U)^T(WZ_{2,(l,\gamma)}U)}_{2q}^{2q})^{1/(2q)}
\end{align*}

To this end, we observe
\begin{align*}
    &(\sum \limits_{(l,\gamma) \in \Xi} \norm{(WZ_{2,(l,\gamma)}U)^T(WZ_{2,(l,\gamma)}U)}_{2q}^{2q})^{1/(2q)}
    \\ = &(\sum\limits_{(l,\gamma) \in \Xi}(\norm{(e_{\mu_{2,(l,\gamma)}}^TW^TWe_{\mu_{2,(l,\gamma)}})u_lu_l^T}_{2q}^{2q}))^{1/(2q)}
    \\ = & (\sum\limits_{(l,\gamma) \in \Xi}\frac{1}{d}\E_{\mu_{2,(l,\gamma)}}((\norm{We_{\mu_{2,(l,\gamma)}}}_{l_2([d])}^2)\norm{u_l}_{l_2([d])}^{2})^{2q})^{1/(2q)}
\end{align*}

In summary, we have
\begin{align*}
    &\norm{\sum \limits_{(l,\gamma) \in \Xi}(WZ_{2,(l,\gamma)}U)^T(WZ_{2,(l,\gamma)}U)}_{2q} \\\le& c_1 (p\norm{W}_{F}^2+(2q)\cdot (\sum\limits_{(l,\gamma) \in \Xi}\frac{1}{d}\E_{\mu_{2,(l,\gamma)}}((\norm{We_{\mu_{2,(l,\gamma)}}}_{l_2([d])}^2)\norm{u_l}_{l_2([d])}^{2})^{2q})^{1/(2q)}
\end{align*}
for any fixed matrix $W$.

\textbf{Step 2: plugging in $W=VX_1^T$ and unconditioning on $X_1$.}

Plugging in $W=VX_1^T$ and unconditioning on $X_1$ and using triangle inequality for $\norm{\cdot}_{L_{2q}(X_1)}$, we have
\begin{align*}
    &\norm{\sum \limits_{(l,\gamma) \in \Xi}(VX_1^TZ_{2,(l,\gamma)}U)^T(VX_1^TZ_{2,(l,\gamma)}U)}_{2q} \\\le& c_1 (p\norm{\norm{VX_1^T}_{F}^2}_{2q}+(2q)\cdot (\sum\limits_{(l,\gamma) \in \Xi}\frac{1}{d}\E_{X_1,\mu_{2,(l,\gamma)}}(\norm{VX_1^Te_{\mu_{2,(l,\gamma)}}}_{l_2([d])}^2\norm{u_l}_{l_2([d])}^{2})^{2q})^{1/(2q)})
\end{align*}

To bound the first term $p\norm{\norm{VX_1^T}_{F}^2}_{2q}$, we observe that
\begin{align*}
    \norm{\norm{VX_1^T}_{F}^2}_{2q} \le \norm{\norm{V}^2\norm{X_1^T}_{F}^2}_{2q} \le \norm{V}^2\norm{\norm{X_1^T}_{F}^2}_{2q} 
\end{align*}

By Lemma \ref{lem:xfrobnorm}, we have
\begin{align*}
    \norm{(\norm{X_1^T}_{F}^2)}_{L_{2q}(\Pb)}=&\norm{\norm{X_1^T}_{F}}_{L_{4q}(\Pb)}^{2} \\\le& (c_2\sqrt{pmd+4q})^2
\end{align*}
for some constant $c_2$.

Therefore, for the first term, we have the bound
\begin{align*}
    p\norm{\norm{VX_1^T}_{F}^2}_{2q}\le c_2^2\norm{V}^2p(pmd+4q) \le 4c_2^2\norm{V}^2(pmpd+q^2)
\end{align*}
since $p<1$.

To bound the second term
\begin{align*}
    (2q)\cdot (\sum\limits_{(l,\gamma) \in \Xi}\frac{1}{d}\E_{X_1,\mu_{2,(l,\gamma)}}(\norm{VX_1^Te_{\mu_{2,(l,\gamma)}}}_{l_2([d])}^2\norm{u_l}_{l_2([d])}^{2})^{2q})^{1/(2q)}
\end{align*}
we observe
\begin{align*}
    &(\sum\limits_{(l,\gamma) \in \Xi}\frac{1}{d}\E_{X_1,\mu_{2,(l,\gamma)}}(\norm{VX_1^Te_{\mu_{2,(l,\gamma)}}}_{l_2([d])}^2\norm{u_l}_{l_2([d])}^{2})^{2q})^{1/(2q)}
    \\ \le & (\sum\limits_{(l,\gamma) \in \Xi}\frac{1}{d}\E_{X_1,\mu_{2,(l,\gamma)}}(\norm{V}^2\norm{X_1^Te_{\mu_{2,(l,\gamma)}}}_{l_2([d])}^2\norm{u_l}_{l_2([d])}^{2})^{2q})^{1/(2q)}
    \\ \le & \norm{V}^2(\sum\limits_{(l,\gamma) \in \Xi}\frac{1}{d}\E_{X_1,\mu_{2,(l,\gamma)}}(\norm{X_1^Te_{\mu_{2,(l,\gamma)}}}_{l_2([d])}^2\norm{u_l}_{l_2([d])}^{2})^{2q})^{1/(2q)}
\end{align*}

Conditioning on $\mu_{2,(l,\gamma)}$, by Lemma \ref{lem:rowbounds}, we have
\begin{align*}
    \E_{X_1}(\norm{X_1^Te_{\mu_{2,(l,\gamma)}}}_{l_2([d])}^2)^{2q} \le (c_3(pd+q))^{2q}
\end{align*}
for some constant $c_3$. And then by iterated integration, we have
\begin{align*}
    \E_{X_1,\mu_{2,(l,\gamma)}}(\norm{X_1^Te_{\mu_{2,(l,\gamma)}}}_{l_2([d])}^2)^{2q} \le (c_3(pd+q))^{2q}
\end{align*}
for any $(l,\gamma) \in \Xi$.

Therefore, we have
\begin{align*}
    &(\sum\limits_{(l,\gamma) \in \Xi}\frac{1}{d}\E_{X_1,\mu_{2,(l,\gamma)}}(\norm{VX_1^Te_{\mu_{2,(l,\gamma)}}}_{l_2([d])}^2\norm{u_l}_{l_2([d])}^{2})^{2q})^{1/(2q)}
    \\ \le & \norm{V}^2(\sum\limits_{(l,\gamma) \in \Xi}\frac{1}{d}\E_{X_1,\mu_{2,(l,\gamma)}}(\norm{X_1^Te_{\mu_{2,(l,\gamma)}}}_{l_2([d])}^2\norm{u_l}_{l_2([d])}^{2})^{2q})^{1/(2q)}
    \\ \le & c_3(pd+q)\norm{V}^2(\sum\limits_{(l,\gamma) \in \Xi}\frac{1}{d}(\norm{u_l}_{l_2([d])}^{2})^{2q})^{1/(2q)}
    \\ \le & c_3(pd+q)\norm{V}^2(pm)^{1/(2q)}
\end{align*}

\end{proof}

\subsection{Proof of Lemma \ref{lem:decoupsecondlayer}}

\begin{proof}[Proof of Lemma \ref{lem:decoupsecondlayer}.]

The goal of this lemma is to replace $X_1$ in the expression
\begin{align*}
    \norm{X_3^TX_1V^TVX_1^TX_4}_{2q}
\end{align*}
by two i.i.d. copies of it, namely $X_5$ and $X_6$. To this end, we will condition on $X_3$ and $X_4$. More precisely, we do decoupling for $\norm{W_3^TX_1V^TVX_1^TW_4}_{2q}$ for fixed matrix $W_3$ and $W_4$, and then plug in $W_3=VX_3^T$ and $W_4=VX_4^T$, and uncondition on $X_3$ and $X_4$.

\textbf{Step 1: conditioning on $W_3=X_3^T$ and $W_4=X_4^T$ and obtain a general bound in terms of $W_3$ and $W_4$.}

By Lemma \ref{decgeneral}, we have
\begin{align*}
    &\norm{W_3^TX_1^TV^TVX_1W_4}_{2q}\\ \le &\norm{\sum \limits_{(l,\gamma) \in \Xi}W_3^T(Z_{1,(l,\gamma)}U)^TV^TV(Z_{1,(l,\gamma)}U)W_4}_{2q}+4\norm{W_3^TX_5^TV^TVX_6W_4}_{2q}
\end{align*}

Direct calculation shows that after plugging in $W_3=X_3^T$ and $W_4=X_4^T$ and unconditioning on $X_3$ and $X_4$, the off-diagonal term $\norm{(W_3X_5)^T(W_4X_{6})}_{2q}$ becomes $\norm{X_3^TX_5V^TVX_6^TX_4}_{2q}$ which is exactly what we want. Therefore, it suffices to bound the diagonal term
\begin{align*}
    \norm{\sum \limits_{(l,\gamma) \in \Xi}W_3^T(Z_{1,(l,\gamma)}U)^TV^TV(Z_{1,(l,\gamma)}U)W_4}_{2q}
\end{align*}

We observe that $\sum \limits_{(l,\gamma) \in \Xi}(Z_{1,(l,\gamma)}U)^TV^TV(Z_{1,(l,\gamma)}U)$ is a sum of positive semi-definite matrices, so it is also positive semi-definite. Therefore, there exists a random matrix $\Lambda$ such that
\begin{align*}
    \sum \limits_{(l,\gamma) \in \Xi}(Z_{1,(l,\gamma)}U)^TV^TV(Z_{1,(l,\gamma)}U)=\Lambda^T\Lambda
\end{align*}

Therefore, we have
\begin{align*}
    &\norm{\sum \limits_{(l,\gamma) \in \Xi}W_3^T(Z_{1,(l,\gamma)}U)^TV^TV( Z_{1,(l,\gamma)}U)W_4}_{2q}^{2q}
    \\=&\norm{W_3^T\Lambda^T\Lambda W_4}_{2q}^{2q}
    \\=&\E\tr(|W_3^T\Lambda^T\Lambda W_4|^{2q})
    \\=&\E\tr(|(\Lambda W_3)^T\Lambda W_4|^{2q})
    \\=&\E\tr(((\Lambda W_4)^T\Lambda W_3(\Lambda W_3)^T\Lambda W_4)^{2q})
    \\=&\E\tr(\Lambda W_4(\Lambda W_4)^T\Lambda W_3(\Lambda W_3)^T \cdots \Lambda W_4(\Lambda W_4)^T\Lambda W_3(\Lambda W_3)^T)
    \\ \le & \norm{\Lambda W_4(\Lambda W_4)^T}_{2q}^q\norm{\Lambda W_3(\Lambda W_3)^T}_{2q}^q
\end{align*}
where we use the Holder inequality from Lemma \ref{lem:holdervanhandel}.

To estimate $\norm{\Lambda W_4(\Lambda W_4)^T}_{2q}$, we observe that
\begin{align*}
    \norm{\Lambda W_4(\Lambda W_4)^T}_{2q}=&\norm{(\Lambda W_4)^T \Lambda W_4}_{2q}
    \\=&\norm{\sum \limits_{(l,\gamma) \in \Xi}W_4^T(Z_{1,(l,\gamma)}U)^TV^TV(Z_{1,(l,\gamma)}U)W_4}_{2q}
\end{align*}
because $\Lambda W_4$ is a square matrix.

By Corollary 7.4 (Matrix Rosenthal inequality). in \cite{mackey2014matrix} and Cauchy Schwartz, we have
\begin{align*}
    &\norm{\sum \limits_{(l,\gamma) \in \Xi}W_4^T(Z_{1,(l,\gamma)}U)^TV^TV(Z_{1,(l,\gamma)}U)W_4}_{2q} \\\le& c_1 (\norm{\sum \limits_{(l,\gamma) \in \Xi}\E W_4^T(Z_{1,(l,\gamma)}U)^TV^TV(Z_{1,(l,\gamma)}U)W_4}_{2q}\\&+(2q)\cdot (\sum \limits_{(l,\gamma) \in \Xi} \norm{W_4^T(Z_{1,(l,\gamma)}U)^TV^TV(Z_{1,(l,\gamma)}U)W_4}_{2q}^{2q})^{1/(2q)})
\end{align*}

To estimate $\norm{\sum \limits_{(l,\gamma) \in \Xi}\E W_4^T(Z_{1,(l,\gamma)}U)^TV^TV(Z_{1,(l,\gamma)}U)W_4}_{2q}$, we observe
\begin{align*}
    &\sum \limits_{(l,\gamma) \in \Xi}\E W_4^T(Z_{1,(l,\gamma)}U)^TV^TV(Z_{1,(l,\gamma)}U)W_4\\=& \sum \limits_{(l,\gamma) \in \Xi}\xi_{2,(l,\gamma)}^2W_4^Tu_le_{\mu_{1,(l,\gamma)}}^TV^TVe_{\mu_{1,(l,\gamma)}}u_l^TW_4
    \\=&\sum \limits_{(l,\gamma) \in \Xi}W_4^Tu_le_{\mu_{1,(l,\gamma)}}^TV^TVe_{\mu_{1,(l,\gamma)}}u_l^TW_4
    \\=&\sum \limits_{l}(\sum \limits_{\gamma}e_{\mu_{1,(l,\gamma)}}^TV^TVe_{\mu_{1,(l,\gamma)}})W_4^Tu_lu_l^TW_4
\end{align*}

Therefore, we have
\begin{align*}
    &\sum \limits_{(l,\gamma) \in \Xi}\E W_4^T(Z_{1,(l,\gamma)}U)^TV^TV(Z_{1,(l,\gamma)}U)W_4\\=&\sum \limits_{l}(\sum \limits_{\gamma}\E e_{\mu_{1,(l,\gamma)}}^TV^TVe_{\mu_{1,(l,\gamma)}})W_4^Tu_lu_l^TW_4
    \\=&\sum \limits_{l}(\frac{pm}{m}\sum \limits_{i \in [m]}e_{{i}}^TV^TVe_{\mu_{i}})W_4^Tu_lu_l^TW_4
    \\=&(\frac{pm}{m}\sum \limits_{i \in [m]}e_{{i}}^TV^TVe_{{i}}) \sum \limits_{l}W_4^Tu_lu_l^TW_4
    \\=&(p\sum \limits_{i \in [m]}\norm{Ve_{{i}}}_{l_2([m])}^2) \sum \limits_{l}W_4^Tu_lu_l^TW_4
    \\=& (p\sum \limits_{i \in [m]}\norm{Ve_{{i}}}_{l_2([d])}^2) W_4^T I_d W_4
\end{align*}

Therefore, we have
\begin{align*}
    &\norm{\sum \limits_{(l,\gamma) \in \Xi}\E W_4^T(Z_{1,(l,\gamma)}U)^TV^TV(Z_{1,(l,\gamma)}U)W_4}_{2q}\\=&(p\sum \limits_{i \in [m]}\norm{Ve_{i}}^2_{l_2([d])})\norm{W_4^T I_d W_4}_{2q}
    \\=&(p\norm{V}_{F}^2)\norm{W_4^TW_4}_{2q}
    \\ \le &(p\norm{V}^2d)\norm{W_4^TW_4}_{2q}
\end{align*}

Now we have finished bounding the first terms in the RHS of the matrix Rosenthal inequality. We will move on to bound the second term
\begin{align*}
    (2q)\cdot (\E \sum \limits_{(l,\gamma) \in \Xi} \norm{W_4^T(Z_{1,(l,\gamma)}U)^TV^TV(Z_{1,(l,\gamma)}U)W_4}_{2q}^{2q})^{1/(2q)}
\end{align*}

To this end, we observe
\begin{align*}
    &(\sum \limits_{(l,\gamma) \in \Xi} \norm{W_4^T(Z_{1,(l,\gamma)}U)^TV^TV(Z_{1,(l,\gamma)}U)W_4}_{2q}^{2q})^{1/(2q)}
    \\ = &(\sum \limits_{(l,\gamma) \in \Xi} \norm{e_{\mu_{1,(l,\gamma)}}^TV^TVe_{\mu_{1,(l,\gamma)}}W_4^Tu_lu_l^TW_4}_{2q}^{2q})^{1/(2q)}
    \\ = & (\sum\limits_{(l,\gamma) \in \Xi}\frac{1}{d}\E_{\mu_{1,(l,\gamma)}}((\norm{Ve_{\mu_{1,(l,\gamma)}}}_{l_2([d])}^2)\norm{W_4^Tu_l}_{l_2([d])}^{2})^{2q})^{1/(2q)}
    \\ \le & (\sum\limits_{(l,\gamma) \in \Xi}\frac{1}{d}(\norm{V}^2\norm{W_4^Tu_l}_{l_2([d])}^{2})^{2q})^{1/(2q)}
\end{align*}

In summary, we have
\begin{align*}
    &\norm{\sum \limits_{(l,\gamma) \in \Xi}W_4^T(Z_{1,(l,\gamma)}U)^TV^TV(Z_{1,(l,\gamma)}U)W_4}_{2q}  \\\le& c_1 ((pd\norm{V}^2)\norm{W_4^TW_4}_{2q}+(2q)\cdot (\sum\limits_{(l,\gamma) \in \Xi}\frac{1}{d}(\norm{V}^2\norm{W_4^Tu_l}_{l_2([d])}^{2})^{2q})^{1/(2q)}
\end{align*}
for any fixed matrix $W_4$.

Similarly, we have
\begin{align*}
    &\norm{\sum \limits_{(l,\gamma) \in \Xi}W_3^T(Z_{1,(l,\gamma)}U)^TV^TV(Z_{1,(l,\gamma)}U)W_3}_{2q}  \\\le& c_1 ((pd\norm{V}^2)\norm{W_3^TW_3}_{2q}+(2q)\cdot (\sum\limits_{(l,\gamma) \in \Xi}\frac{1}{d}(\norm{V}^2\norm{W_3^Tu_l}_{l_2([d])}^{2})^{2q})^{1/(2q)}
\end{align*}
for any fixed matrix $W_3$.

Therefore, we have
\begin{align*}
    &\norm{\sum \limits_{(l,\gamma) \in \Xi}W_3^T(Z_{1,(l,\gamma)}U)^TV^TV( Z_{1,(l,\gamma)}U)W_4}_{2q}^{2q}
    \\ \le & \norm{\Lambda W_4(\Lambda W_4)^T}_{2q}^q\norm{\Lambda W_3(\Lambda W_3)^T}_{2q}^q
    \\ \le & \left(c_1 ((pd\norm{V}^2)\norm{W_4^TW_4}_{2q}+(2q)\cdot (\sum\limits_{(l,\gamma) \in \Xi}\frac{1}{d}(\norm{V}^2\norm{W_4^Tu_l}_{l_2([d])}^{2})^{2q})^{1/(2q)}\right)^q \\ & \cdot \left(c_1 ((pd\norm{V}^2)\norm{W_3^TW_3}_{2q}+(2q)\cdot (\sum\limits_{(l,\gamma) \in \Xi}\frac{1}{d}(\norm{V}^2\norm{W_3^Tu_l}_{l_2([d])}^{2})^{2q})^{1/(2q)}\right)^q
\end{align*}

\textbf{Step 2: plugging in $W_3=X_3$ and $W_4=X_4$ and uncondition on $X_3$ and $X_4$.}

plugging in $W_3=X_3$ and $W_4=X_4$, by Cauchy Schwarz inequality on the probability space, we have
\begin{align*}
    &\E_{X_3,X_4}\norm{\sum \limits_{(l,\gamma) \in \Xi}X_3^T(Z_{1,(l,\gamma)}U)^TV^TV( Z_{1,(l,\gamma)}U)X_4}_{2q}^{2q}
    \\ \le & \left(\E_{X_4}\left(c_1 ((pd\norm{V}^2)\norm{X_4^TX_4}_{S_{2q}}+(2q)\cdot (\sum\limits_{(l,\gamma) \in \Xi}\frac{1}{d}(\norm{V}^2\norm{X_4^Tu_l}_{l_2([d])}^{2})^{2q})^{1/(2q)}\right)^{2q}\right)^{1/2} \\ & \cdot \left(\E_{X_3}\left(c_1 ((pd\norm{V}^2)\norm{X_3^TX_3}_{S_{2q}}+(2q)\cdot (\sum\limits_{(l,\gamma) \in \Xi}\frac{1}{d}(\norm{V}^2\norm{X_3^Tu_l}_{l_2([d])}^{2})^{2q})^{1/(2q)}\right)^{2q}\right)^{1/2}
    \\=&\E_{X_4}\left(c_1 ((pd\norm{V}^2)\norm{X_4^TX_4}_{S_{2q}}+(2q)\cdot (\sum\limits_{(l,\gamma) \in \Xi}\frac{1}{d}(\norm{V}^2\norm{X_4^Tu_l}_{l_2([d])}^{2})^{2q})^{1/(2q)}\right)^{2q}
\end{align*}
where we use the fact that $X_3$ and $X_4$ have the same distribution.

Using triangle inequality for $\norm{\cdot}_{L_{2q}(X_4)}$, we have
\begin{align*}
    &\E_{X_3,X_4}\norm{\sum \limits_{(l,\gamma) \in \Xi}X_3^T(Z_{1,(l,\gamma)}U)^TV^TV( Z_{1,(l,\gamma)}U)X_4}_{2q} \\\le& c_1 \left((pd\norm{V}^2)\norm{X_4^TX_4}_{2q}+(2q)\cdot (\sum\limits_{(l,\gamma) \in \Xi}\frac{1}{d}\E_{X_4}(\norm{V}^2\norm{X_4^Tu_l}_{l_2([d])}^{2})^{2q})^{1/(2q)}\right)
    \\=& c_1 \norm{V}^2\left(pd\norm{X_4^TX_4}_{2q}+(2q)\cdot (\sum\limits_{(l,\gamma) \in \Xi}\frac{1}{d}\E_{X_4}(\norm{X_4^Tu_l}_{l_2([d])}^{2})^{2q})^{1/(2q)}\right)
\end{align*}

To bound the first term $pd\norm{X_4^TX_4}_{2q}$, we observe that
\begin{align*}
\norm{X_4^TX_4}_{2q} \le \norm{X_4^TX_4-pm I_d }_{2q}+\norm{pm I_d }_{2q}
\end{align*}

Recall that by Lemma \ref{lem:decoup}, we have
\begin{align*}
    \norm{X_4^TX_4-pm I_d }_{2q} \le 4\norm{X_3^TX_4}_{2q}=4\norm{Y_1}_{2q}
\end{align*}

Therefore, for the first term, we have
\begin{align*}
    pd\norm{X_4^TX_4}_{2q} \le pmpd + 4pd\norm{Y_1}_{2q}
\end{align*}

To bound the second term
\begin{align*}
    (2q)\cdot (\sum\limits_{(l,\gamma) \in \Xi}\frac{1}{d}\E_{X_4}(\norm{X_4^Tu_l}_{l_2([d])}^{2})^{2q})^{1/(2q)}
\end{align*}
we use Lemma \ref{lem:rowbounds} and obtain
\begin{align*}
    &(\sum\limits_{(l,\gamma) \in \Xi}\frac{1}{d}\E_{X_4}(\norm{X_4^Tu_l}_{l_2([d])}^{2})^{2q})^{1/(2q)}\\ \le &(\sum\limits_{(l,\gamma) \in \Xi}\frac{1}{d}(c_3 (pd+q)\norm{u_l}_{l_2([d])}^{2})^{2q})^{1/(2q)}
    \\ = &(\sum\limits_{(l,\gamma) \in \Xi}\frac{1}{d}(c_3 (pd+q))^{2q}\norm{u_l}_{l_2([d])}^{4q})^{1/(2q)}
    \\ \le &(\sum\limits_{(l,\gamma) \in \Xi}\frac{1}{d}(c_3 (pd+q))^{2q}\norm{u_l}_{l_2([d])}^{2})^{1/(2q)}
    \\ = &(\frac{1}{d}(c_3 pd)^{2q}(pmd))^{1/(2q)}
    \\=&c_3(pd+q)(pm)^{1/(2q)}
\end{align*}

\end{proof}

\section{Moment Bound for Vector Norm.}\label{sec:vecmombd}

In this section, we bound $\norm{\norm{(S_1 U)^T(S_2 U)u}_2}_{L_{2q}(\Pb)}$ for an arbitrary fixed vector $u$ which is required in the proof of Lemma \ref{lem:Rsimplify}. 

For convenience, let $X_i=S_iU$ for $i=1,2,...$. We will use the iterative method used in 
Lemma 8.9 (Row Norm for LESS-IC) in \cite{chenakkod2025optimal}(Full version). More precisely, we observe that
\begin{align*}
    \norm{\norm{X_1^TX_2u}_2}_{L_{2q}(\Pb)} &= \paren*{ \E \sqbr*{ \norm{X_1^TX_2u}_2^{2q} } }^{1/2q} \\ 
    &= \paren*{ \E \sqbr*{ \paren*{\ip{X_1^TX_2v,X_1^TX_2v}}^{q} } }^{1/2q}\\
    &= \norm{\ip{X_1^TX_2v,X_1^TX_2v}}_{L_{q}(\Pb)}^{1/2}
\end{align*}
and then by using decoupling we decompose
\begin{align} \label{eq:vecdecoup}
    \norm{\ip{X_1^TX_2v,X_1^TX_2v}}_{L_{q}(\Pb)} \le C(\norm{\text{diagonal terms})}_{L_{q}(\Pb)}+\norm{\ip{X_1^TX_2v,X_3^TX_4v}}_{L_{q}(\Pb)})
\end{align}
We shall soon provide the details of this step. For now, we focus on the latter term. We have
\begin{align*}
    \ip{X_1^TX_2v,X_3^TX_4v}=\ip{X_3X_1^TX_2v,X_4v}=\ip{S_3(UX_1^TX_2v),S_4(Uv)}
\end{align*}

By independence, we can condition on $X_1$ and $X_2$. More precisely, we can fix the vectors $u=(UX_1^TX_2v)$ and $w=Uv$ and consider the conditional moments
\begin{align*}
    \E_{S_3,S_4}\ip{S_3(UX_1^TX_2v),S_4(Uv)}=\E_{S_3,S_4}\ip{S_3u,S_4w}
\end{align*}
This motivates the next lemma.

\begin{lemma}\label{lem:scalarnorm}
Let $S$ be an $m \times n$ matrix distributed according to the fully independent unscaled OSNAP distribution with parameter $p$. Let $S_1$ and $S_2$ be i.i.d. copies of $S$. Let $u \in \R^n$ and $v \in \R^n$ be a vector with $\norm{u}_{l_2([n])}=\norm{v}_{l_2([n])}=1$. Let $q \ge 1$ be an integer. Then there exists a constant $c_{\ref{lem:scalarnorm}}$ such that
\begin{align*}
    \norm{\ip{S_1u,S_2v}}_{L_{2q}(\Pb)} \le c_{\ref{lem:scalarnorm}}q
\end{align*}
\end{lemma}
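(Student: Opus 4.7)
The plan is to condition on the index variables of $S_1$ and $S_2$, reducing $\ip{S_1u, S_2v}$ to a decoupled bilinear Rademacher chaos, and then apply the Hanson--Wright-type inequality of Lemma~\ref{lem:subgaussian-chaos}. Expanding each $S_i$ using Definition~\ref{def:osnap} gives
\[\ip{S_1u, S_2v} = \sum_{\gamma = 1}^{pm}\sum_{l_1, l_2 \in [n]} \xi_{1,(l_1, \gamma)}\xi_{2,(l_2, \gamma)}\, u_{l_1}v_{l_2}\,\mathbbm{1}[\mu_{1,(l_1,\gamma)} = \mu_{2,(l_2,\gamma)}],\]
where cross-block pairs $\gamma_1 \ne \gamma_2$ do not contribute because the blocks of $[m]$ are disjoint. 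Conditionally on the $\sigma$-algebra $\mathcal{F}_\mu$ generated by the $\mu$'s, this is a bilinear form $\xi_1^T M \xi_2$ in independent Rademacher vectors $\xi_1, \xi_2$, with $M$ block-diagonal and blocks $M^{(\gamma)}_{l_1, l_2} = u_{l_1}v_{l_2}\,\mathbbm{1}[\mu_{1,(l_1,\gamma)} = \mu_{2,(l_2,\gamma)}]$. Lemma~\ref{lem:subgaussian-chaos} then yields, conditionally on $\mathcal{F}_\mu$, the bound $c(\sqrt{q}\,\norm{M}_F + q\,\norm{M})$; taking an $L_{2q}$ norm over the $\mu$'s reduces the lemma to estimating $\norm{\norm{M}_F}_{L_{2q}(\Pb)}$ and $\norm{\norm{M}}_{L_{2q}(\Pb)}$.

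For the Frobenius norm, I would write $\norm{M}_F^2 = \sum_\gamma N_\gamma$ with $N_\gamma := \sum_{l_1, l_2} u_{l_1}^2 v_{l_2}^2\,\mathbbm{1}[\mu_{1,(l_1,\gamma)} = \mu_{2,(l_2,\gamma)}]$. The summands are independent across $\gamma$ (different blocks involve disjoint $\mu$'s), nonnegative, bounded above by $\norm{u}^2\norm{v}^2 = 1$, and of mean $p$. Applying Lemma~\ref{lem:rosenthal1} to $\sum_\gamma N_\gamma$ gives $\norm{\sum_\gamma N_\gamma}_{L_q(\Pb)} \le c(p^2 m + q)$, hence $\norm{\norm{M}_F}_{L_{2q}(\Pb)} \le c(p\sqrt{m} + \sqrt{q})$. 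For the operator norm I would use the partition structure: with $I_i^{(\gamma)} := \{l : \mu_{1,(l,\gamma)} = i\}$ and $J_i^{(\gamma)} := \{l : \mu_{2,(l,\gamma)} = i\}$, the sets $\{I_i^{(\gamma)}\}_i$ and $\{J_i^{(\gamma)}\}_i$ partition $[n]$, so a direct computation yields
\[M^{(\gamma)}(M^{(\gamma)})^T = \sum_i \norm{v_{J_i^{(\gamma)}}}^2\, u_{I_i^{(\gamma)}} u_{I_i^{(\gamma)}}^T,\]
a sum of mutually orthogonal rank-one matrices (cross terms vanish because the $v_{J_i^{(\gamma)}}$ have disjoint support). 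This gives $\norm{M^{(\gamma)}} = \max_i \norm{u_{I_i^{(\gamma)}}}\norm{v_{J_i^{(\gamma)}}} \le 1$ and therefore $\norm{M} \le 1$ almost surely.

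Combining these estimates yields $\norm{\ip{S_1u, S_2v}}_{L_{2q}(\Pb)} \le c(\sqrt{q}(p\sqrt{m} + \sqrt{q}) + q) \le c'(q + p\sqrt{mq})$. In the OSNAP parameter regime used throughout the paper, $m = \Theta(d/\varepsilon^2)$ and $pm = \tilde{O}(q/\varepsilon)$, whence $p^2 m = (pm)^2/m = \tilde{O}(q^2/d) \le O(q^2)$ and the extra term $p\sqrt{mq}$ is absorbed into the $q$ term, producing the asserted bound $\norm{\ip{S_1u,S_2v}}_{L_{2q}(\Pb)} \le c_{\ref{lem:scalarnorm}}\, q$.

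The main obstacle I foresee is the operator-norm estimate: the bound $\norm{M} \le 1$ is correct but loose, and obtaining a clean $cq$ bound without appealing to any parameter regime would require either a sharper control on $\norm{M}$ or a higher-moment Bernstein bound for $\sum_\gamma N_\gamma$ that exploits $\E N_\gamma = p$ rather than only $N_\gamma \le 1$. An alternative route that I would attempt in parallel is to condition on $S_2$ first, reducing to a linear Rademacher sum in $\xi_1$ (with coefficients $u_{l} w_{\mu_{1,(l,\gamma)}}$ for $w = S_2 v$), applying Khintchine, and then controlling the resulting $\norm{S_2 v}$ and $\norm{S_2 v}_\infty$ moments via Lemmas~\ref{lem:xfrobnorm} and~\ref{lem:rowbounds}; this route handles the operator-norm issue more directly but requires careful bookkeeping over the iterated conditioning.
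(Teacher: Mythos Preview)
Your approach is essentially the paper's: both write $\ip{S_1u,S_2v}$ as a decoupled bilinear Rademacher chaos, condition on the $\mu$'s, apply Lemma~\ref{lem:subgaussian-chaos}, and bound the operator norm of the coefficient matrix by $1$ almost surely (the paper indexes the block-diagonal structure by rows $r\in[m]$, making each block a rank-one outer product; your $\gamma$-indexed partition argument reaches the same conclusion). For the Frobenius norm the paper takes the triangle inequality over $(j_1,j_2)$ first and then applies a Binomial-moment bound (Lemma~2 of \cite{cohen2018simple}) to each $\sum_{\gamma=1}^{pm}\mathbbm{1}[\mu_{1,(j_1,\gamma)}=\mu_{2,(j_2,\gamma)}]\sim\mathrm{Bin}(pm,p)$ to get $\|\|A\|_F^2\|_{L_q}\le c_2 q$ directly --- this is precisely the ``higher-moment Bernstein bound for $\sum_\gamma N_\gamma$ that exploits $\E N_\gamma=p$'' that you flag as the needed refinement in your obstacle paragraph.
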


\begin{proof}
By definition, we have
\begin{align*}
    S = \sum_{l=1}^n \sum_{\gamma=1}^s \xi_{(l,\gamma)} e_{\mu_{(l, \gamma)}} e_l ^\top
\end{align*}
We can rewrite $S$ as
\begin{align*}
    S = \sum_{j=1}^n \sum_{i=1}^m \zeta_{(j,i)} (\sum_{\gamma=1}^s\mathbbm{1}_{\{i\}}(\mu_{(j, \gamma)}))e_{i} e_j ^\top
=\sum_{j=1}^n \sum_{\gamma=1}^s \sum_{i \in [(m/s)(\gamma-1)+1:(m/s)\gamma]} \zeta_{(j,i)} (\mathbbm{1}_{\{i\}}(\mu_{(j, \gamma)}))e_{i} e_j ^\top
\end{align*}
where $\zeta_{(j,i)}$ are i.i.d. random signs.

In fact, we can set $\xi_{(l,\gamma)} = \zeta_{(l, \mu_{(l, \gamma)})}$ and recover the original definition.

Define $\eta_{i,j}=\sum_{\gamma=1}^s\mathbbm{1}_{\{i\}}(\mu_{(j, \gamma)})$. 

Therefore, we have
\begin{align*}
    \ip{S_1u,S_2v}=&\sum_r (\sum_{j_1} (S_1)_{r,j_1}u_{j_1})(\sum_{j_2} (S_2)_{r,j_2}v_{j_2})
    \\=&\sum_r (\sum_{j_1,j_2} (S_1)_{r,j_1}\cdot u_{j_1}\cdot(S_2)_{r,j_2}\cdot v_{j_2})
    \\=&\sum_r (\sum_{j_1,j_2} \zeta_{1,r,j_1}\eta_{1,r,j_1}\cdot u_{j_1}\cdot \zeta_{2,r,j_2}\eta_{2,r,j_2}\cdot v_{j_2})
\end{align*}

Let $\iota$ be the default embedding 
\[
\iota : [m] \times [n] \to [mn]
\]
defined by
\[
\iota(i,j) = n\,(i-1) + j.
\]

We define random vectors $W_1 \in \R^{mn}$ and $W_2 \in \R^{mn}$ such that
\begin{align*}
    W_1(\iota(i,j))=\zeta_{1,i,j}
\end{align*}
and
\begin{align*}
    W_2(\iota(i,j))=\zeta_{2,i,j}
\end{align*}

Let $A$ be an $mn \times mn$ block diagonal matrix whose $r$th block along the diagonal is
\begin{align*}
A_r=&\diag(\eta_{1,r,1},...,\eta_{1,r,n})u \cdot (\diag(\eta_{2,r,1},...,\eta_{2,r,n})v)^T
    \\=&\begin{pmatrix} \eta_{1,r,1} u_1 \\ \eta_{1,r,2} u_2 \\ \vdots \\ \eta_{1,r,n} u_n \end{pmatrix} \cdot  \begin{pmatrix} \eta_{2,r,1} v_1 \\ \eta_{2,r,2} v_2 \\ \vdots \\ \eta_{2,r,n} v_n \end{pmatrix}^T
    \\=&\begin{pmatrix}
\eta_{1,r,1}u_1 \, \eta_{2,r,1}v_1 & \eta_{1,r,1}u_1 \, \eta_{2,r,2}v_2 & \cdots & \eta_{1,r,1}u_1 \, \eta_{2,r,n}v_n \\
\eta_{1,r,2}u_2 \, \eta_{2,r,1}v_1 & \eta_{1,r,2}u_2 \, \eta_{2,r,2}v_2 & \cdots & \eta_{1,r,2}u_2 \, \eta_{2,r,n}v_n \\
\vdots & \vdots & \ddots & \vdots \\
\eta_{1,r,n}u_n \, \eta_{2,r,1}v_1 & \eta_{1,r,n}u_n \, \eta_{2,r,2}v_2 & \cdots & \eta_{1,r,n}u_n \, \eta_{2,r,n}v_n
\end{pmatrix}
\end{align*}

Therefore, direct calculation shows that
\begin{align*}
    \ip{S_1u,S_2v}=W_1^TAW_2
\end{align*}

In fact, since \(A\) is block diagonal, we may write
	\begin{align*}
		W_1^T A W_2 = \sum_{r \in [m]} \; \bigl(W_1^r\bigr)^T A_r W_2^r,
	\end{align*}
	where \(W_1^r \in \mathbb{R}^n\) is the subvector of \(W_1\) corresponding to the indices \(\iota(r,1), \dots, \iota(r,n)\) and similarly for \(W_2^r\).
	
	For a fixed \(r\), we have
	\begin{align*}
		\bigl(W_1^r\bigr)^T A_r W_2^r 
		&= \sum_{j_1=1}^n \sum_{j_2=1}^n \Bigl[W_1^r\Bigr]_{j_1} \, \Bigl[A_r\Bigr]_{j_1,j_2} \, \Bigl[W_2^r\Bigr]_{j_2} \\
		&= \sum_{j_1=1}^n \sum_{j_2=1}^n \zeta_{1,r,j_1} \Bigl( \eta_{1,r,j_1} u_{j_1} \, \eta_{2,r,j_2} v_{j_2} \Bigr) \zeta_{2,r,j_2} \\
		&= \sum_{j_1,j_2} \zeta_{1,r,j_1}\,\eta_{1,r,j_1}\, u_{j_1} \, \zeta_{2,r,j_2}\,\eta_{2,r,j_2}\, v_{j_2}.
	\end{align*}
	
	Summing over all \(r\) gives
	\begin{align*}
		W_1^T A W_2 
		&= \sum_r \; \bigl(W_1^r\bigr)^T A_r W_2^r \\
		&= \sum_r \sum_{j_1,j_2} \zeta_{1,r,j_1}\,\eta_{1,r,j_1}\, u_{j_1} \, \zeta_{2,r,j_2}\,\eta_{2,r,j_2}\, v_{j_2} \\
		&= \langle S_1 u, S_2 v \rangle.
	\end{align*}

Conditioning on $\mu_{1,(l,\gamma)}$ and $\mu_{2,(l,\gamma)}$ (and therefore conditioning on $A$) and using iterated expectation, Lemma \ref{lem:subgaussian-chaos} implies
\begin{align*}
    \norm{W_1^T A W_2}_{L_{2q}(\Pb)} \le c_1(\sqrt{2q} \norm{\norm{A}_F}_{L_{2q}(\Pb)}+2q \norm{\norm{A}}_{L_{2q}(\Pb)})
\end{align*}
for some constant $c_1$.

To estimate $\norm{\norm{A}_F}_{L_{2q}(\Pb)}$, we observe that
\begin{align*}
    \norm{A}_F^2=&\sum_{r,j_1,j_2} (\eta_{1,r,j_1}u_{j_1} \, \eta_{2,r,j_2}v_{j_2})^2\\=&\sum_{j_1,j_2} \sum_{r} (\eta_{1,r,j_1}u_{j_1} \, \eta_{2,r,j_2}v_{j_2})^2=\sum_{j_1,j_2} u_{j_1}^2 v_{j_2}^2(\sum_{r} \eta_{1,r,j_1}^2  \eta_{2,r,j_2}^2)
\end{align*}

We further observe that
\begin{align*}
    (\sum_{r} \eta_{1,r,j_1}^2  \eta_{2,r,j_2}^2)=&\sum_{\gamma=1}^s \sum_{r \in [(m/s)(\gamma-1)+1:(m/s)\gamma]} \eta_{1,r,j_1}^2  \eta_{2,r,j_2}^2
    \\=&\sum_{\gamma=1}^s \mathbbm{1}_{\{\mu_{1,r,j_1}=\mu_{2,r,j_1}\}}
\end{align*}

Since $\{\mathbbm{1}_{\{\mu_{1,r,j_1}=\mu_{2,r,j_1}\}}\}_{r=1}^2$ are independent Bernoulli-$p$ random variables, by Lemma 2 in \cite{cohen2018simple}, we have
\begin{align*}
    \norm{\sum_{r} \eta_{1,r,j_1}^2  \eta_{2,r,j_2}^2}_{L_{q}}=\norm{\sum_{\gamma=1}^s \mathbbm{1}_{\{\mu_{1,r,j_1}=\mu_{2,r,j_1}\}}}_{L_{q}} \le c_2(q)
\end{align*}
for some constant $c_2$.

Therefore, by triangle inequality, we have
\begin{align*}
\norm{\norm{A}_F}_{L_{2q}}^2=&\norm{\norm{A}_F^2}_{L_{q}} 
\\ \le & \sum_{j_1,j_2} u_{j_1}^2 v_{j_2}^2\norm{(\sum_{r} \eta_{1,r,j_1}^2  \eta_{2,r,j_2}^2)}_{L_{q}} 
\\ \le& c_2\sum_{j_1,j_2} u_{j_1}^2 v_{j_2}^2 q
\\=&c_2q
\end{align*}

Next, we have
\begin{align*}
    \norm{A}=&\max \limits_{r} \norm{A_r}
    \\=&\max \limits_{r} \norm*{\begin{pmatrix} \eta_{1,r,1} u_1 \\ \eta_{1,r,2} u_2 \\ \vdots \\ \eta_{1,r,n} u_n \end{pmatrix}  }_2 \cdot \norm*{\begin{pmatrix} \eta_{2,r,1} v_1 \\ \eta_{2,r,2} v_2 \\ \vdots \\ \eta_{2,r,n} v_n \end{pmatrix}}_2
    \\ \le & \norm{u} \cdot \norm{v}
    \\=& 1
\end{align*}
almost surely.

Therefore, we have
\begin{align*}
    \norm{W_1^T A W_2}_{L_{2q}(\Pb)} \le c_1(\sqrt{2q} \sqrt{c_2q}+2q \cdot 1) \le c_3 q
\end{align*}

\end{proof}

We now proceed to bound the moment $\norm{\norm{(S_1 U)^T(S_2 U)u}_2}_{L_{2q}(\Pb)}$ or equivalently, \begin{align*}
    \norm{\ip{X_1^TX_2v,X_1^TX_2v}}_{L_{q}(\Pb)}
\end{align*} for $X_i = S_iU$ by formally describing the decoupling steps used to bound the latter.

\begin{lemma}\label{lem:x1x2unorm}
    Let $S$ be an $m \times n$ matrix distributed according to the fully independent unscaled OSNAP distribution with parameter $p$. Let $U$ be an arbitrary $n \times d$ deterministic matrix such that $U^TU=\operatorname{Id}$. Let $X=SU$. Let $X_1$ and $X_2$ be i.i.d. copies of $X$. Let $u \in \R^d$ be a vector with $\norm{u^T}_{l_2([d])}=1$. Let $q \ge 1$ be an integer. Then there exists a constant $c_{\ref{lem:x1x2unorm}}$ such that
\begin{align*}
    \norm{\norm{X_1^TX_2u}_2}_{L_{2q}(\Pb)} \le c_{\ref{lem:x1x2unorm}}\paren*{\sqrt{pmpd}+ (pmd)^\frac{1}{q} (\sqrt{pdq} + q)}
\end{align*}
\end{lemma}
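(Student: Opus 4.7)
The plan is to reduce the vector moment to an inner product, apply the decoupling inequality \eqref{eq:vecdecoup} stated just before the lemma, and then exploit Lemma \ref{lem:scalarnorm} together with the self-bounding structure that arises. Concretely, I would begin with the identity
\begin{align*}
\norm{\|X_1^TX_2u\|_2}_{L_{2q}(\Pb)}^{2} = \norm{\ip{X_1^TX_2u,X_1^TX_2u}}_{L_q(\Pb)},
\end{align*}
and invoke \eqref{eq:vecdecoup} to split the right-hand side into a ``diagonal'' collision contribution $D$ coming from the symbolic expansion of $\ip{X_1^TX_2u,X_1^TX_2u}$ in the OSNAP summands $\xi_{\cdot,(l,\gamma)},\mu_{\cdot,(l,\gamma)}$, plus the fully decoupled off-diagonal term $\norm{\ip{X_1^TX_2u,X_3^TX_4u}}_{L_q(\Pb)}$.

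For the off-diagonal piece I would use the rewriting $\ip{X_1^TX_2u,X_3^TX_4u}=\ip{S_3a,S_4b}$ with $a=UX_1^TX_2u$ and $b=Uu$; since $U^TU=I$, one has $\|a\|=\|X_1^TX_2u\|_2$ and $\|b\|=1$. Conditioning on $(S_1,S_2)$ makes $a,b$ deterministic for $(S_3,S_4)$, so Lemma \ref{lem:scalarnorm} (combined with its obvious homogeneity under rescaling) yields
\begin{align*}
\norm{\ip{S_3a,S_4b}\mid S_1,S_2}_{L_{2q}(\Pb)} \le c_{\ref{lem:scalarnorm}}\, q\,\|a\|\,\|b\| = c_{\ref{lem:scalarnorm}}\, q\,\|X_1^TX_2u\|_2.
\end{align*}
Taking an outer $L_{2q}$-norm and applying Jensen's inequality to descend from $L_{2q}$ to $L_q$ then gives $\norm{\ip{X_1^TX_2u,X_3^TX_4u}}_{L_q(\Pb)}\le Cq\,N$ with $N:=\norm{\|X_1^TX_2u\|_2}_{L_{2q}(\Pb)}$. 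Plugging this back into the decoupled bound yields the self-bounding inequality $N^2\le C_1\norm{D}_{L_q(\Pb)}+C_2qN$, and the quadratic formula delivers $N\le C_3q+\sqrt{C_1\norm{D}_{L_q(\Pb)}}$.

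It remains to prove $\norm{D}_{L_q(\Pb)}\le C(pmpd+(pmd)^{2/q}pdq)$, after which taking square roots and using $(pmd)^{1/q}\ge 1$ to absorb the solitary $q$ term produces the claim. Writing $w=Uu$ and expanding
\begin{align*}
(X_1^TX_2u)_d=\sum_{(l,\gamma),(l',\gamma')}\xi_{1,(l,\gamma)}\xi_{2,(l',\gamma')}\,U_{l,d}\,w_{l'}\,\mathbbm{1}_{\{\mu_{1,(l,\gamma)}=\mu_{2,(l',\gamma')}\}},
\end{align*}
the diagonal $D$ decomposes into three pieces indexed by which pair of $\xi_1$-indices and which pair of $\xi_2$-indices collide: the purely collisional term $D_{\mathrm{AB}}=\sum_{(l,\gamma),(l',\gamma')}\|u_l\|^2 w_{l'}^2\,\mathbbm{1}_{\{\mu_{1,(l,\gamma)}=\mu_{2,(l',\gamma')}\}}$ with $\E[D_{\mathrm{AB}}]=pmpd$, plus two mean-zero Rademacher-chaos pieces. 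I would bound $D_{\mathrm{AB}}$ using Lemma \ref{lem:rosenthal1} applied to the decomposition into $s=pm$ independent per-$\gamma$ blocks, each of mean $pd$ and almost surely bounded by $d$, yielding the desired $pmpd+(pm)^{1/q}q(pd+q)\le pmpd+(pmd)^{2/q}pdq$ scaling. For each chaos piece I would first condition on the $\mu$'s, apply Lemma \ref{lem:subgaussian-chaos} to the resulting Rademacher quadratic form, and then bound the moments of the resulting random Frobenius and operator norms by Rosenthal together with the row-moment bounds of Lemma \ref{lem:rowbounds} / Corollary \ref{lem:rownormbound}. The main obstacle lies precisely in this last step: the sub-Gaussian chaos bound produces $\sqrt{q}\|A\|_F+q\|A\|$ of a random coefficient matrix $A$ whose entries encode the pattern of $\mu_1,\mu_2$ collisions, and the block-independence of OSNAP must be exploited carefully so that the final contribution comes out at the $(pmd)^{2/q}pdq$ level rather than at a polylogarithmically larger size.
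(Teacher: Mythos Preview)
Your overall strategy matches the paper's: write $\|X_1^TX_2u\|_2^2$ as an inner product, decouple to extract $\ip{X_1^TX_2u,X_3^TX_4u}$ plus diagonal remainders, bound the decoupled piece by $Cq\,\|X_1^TX_2u\|$ via Lemma \ref{lem:scalarnorm} to get a self-bounding relation, and then control the diagonals. Your direct quadratic self-bound $N^2\le C_1\|D\|_{L_q}+C_2qN$ is in fact a slight streamlining of the paper's argument, which instead runs an induction from the $L_q$ moment to the $L_{2q}$ moment with carefully tracked constants.

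Where you diverge is the diagonal term $D$. You propose splitting $D$ into the purely-collisional $D_{AB}$ plus two mean-zero Rademacher-chaos pieces, and you rightly flag the chaos pieces as the obstacle. The paper sidesteps this completely by grouping differently: rather than separating ``both collide'' from ``only one collides'', it writes $D\le Y_1+Y_2$ with
\[
Y_1=\sum_{(l,\gamma)\in\Xi}\ip{u_l,v}^2\,\bigl\|e_{\mu_{2,(l,\gamma)}}^TX_1\bigr\|_2^2,\qquad Y_2=\sum_{(l,\gamma)\in\Xi}\|u_l\|^2\,(X_2v)_{\mu_{1,(l,\gamma)}}^2.
\]
Each $Y_i$ is a sum of \emph{nonnegative} summands that become independent after conditioning on the complementary $X_j$, so scalar Rosenthal applies immediately; the mean is computed exactly and the max-term moment comes straight from Lemmas \ref{lem:xfrobnorm} and \ref{lem:rowbounds}. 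This yields $\|Y_1\|_{L_q}^{1/2}\le C\bigl(\sqrt{pmpd}+(pm)^{1/2q}(\sqrt{pdq}+q)\bigr)$ and analogously for $Y_2$, with no chaos analysis needed. The point is that $Y_1$ already \emph{contains} one of your chaos pieces together with part of $D_{AB}$; by not pulling them apart you never face a sign-indefinite sum or a random coefficient matrix. Your route would likely work, but the paper's grouping removes exactly the step you identified as hardest.
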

\begin{proof}

Recall that for a matrix $S$ with the fully independent unscaled OSNAP distribution we have $S= \sum_{(l,\gamma) \in \Xi} Z_{(l,\gamma)}$, for independent summands $Z_{(l,\gamma)}$ as described in Definition \ref{def:osnap}. Then,
\begin{align*}
    \ip{X_1^TX_2v,X_1^TX_2v} &=\sum_{(l,\gamma) \in \Xi} \ip{X_1^TZ_{2,(l,\gamma)}Uv,X_1^TZ_{2,(l,\gamma)}Uv} \\
    & \qquad + \sum_{(l,\gamma)\in \Xi} \ip{U^T Z^T_{1,(l,\gamma)}X_2v,U^TZ^T_{1,(l,\gamma) }X_2v} \\
    &\qquad \qquad +\sum_{\substack{(l_1, \gamma_1) \neq (l_2, \gamma_2) \\ (l_3, \gamma_3) \neq (l_4, \gamma_4)}} \ip{U^TZ^T_{1,(l_3,\gamma_3) }Z_{2,(l_1,\gamma_1)}Uv,U^TZ^T_{1,(l_4,\gamma_4) }Z_{2,(l_2, \gamma_2)}Uv} \\
    &\qquad \qquad \qquad - \sum_{(l,\gamma) \in \Xi} \ip{U^TZ^T_{1,(l,\gamma) }Z_{2,(l,\gamma)}Uv,U^TZ^T_{1,(l,\gamma) }Z_{2,(l, \gamma)}Uv} \\
    &=: Y_1 + Y_2 + Y_3 - Y_4 \\
    &\le Y_1 + Y_2 + Y_3
\end{align*}
since $Y_4 = \sum_{(l,\gamma) \in \Xi} \norm{U^TZ^T_{1,(l,\gamma) }Z_{2,(l, \gamma)}Uv}_2^2 \ge 0$. We use this decomposition by noting that,
\begin{align*}
    \norm{\norm{X_1^TX_2u}_2}_{L_{2q}(\Pb)} &= \paren*{ \E \sqbr*{ \norm{X_1^TX_2u}_2^{2q} } }^{1/2q} \\ 
    &= \paren*{ \E \sqbr*{ \paren*{\ip{X_1^TX_2v,X_1^TX_2v}}^{q} } }^{1/2q}\\
    &\le \paren*{ \E \sqbr*{ \paren*{Y_1 + Y_2 + Y_3 }^{q} } }^{1/2q}\\
    &\le \paren*{\E \sqbr*{Y_1^q}}^{1/2q} + \paren*{\E \sqbr*{Y_2^q}}^{1/2q} + \paren*{\E \sqbr*{Y_3^q}}^{1/2q}
\end{align*}

First we consider the $Y_1$ term, 
\begin{align*}
    Y_1 &= \sum_{(l,\gamma) \in \Xi} v^TU^T Z^T_{2,(l,\gamma)}X_1 X_1^TZ_{2,(l,\gamma)}Uv \\
    &= \sum_{(l,\gamma) \in \Xi} v^TU^T e_l e_{\mu_{2,(l, \gamma)}}^T X_1 X_1^T e_{\mu_{2,(l, \gamma)}}e_l^TUv \\
    &= \sum_{(l,\gamma) \in \Xi} \norm{e^T_{\mu_{2,(l, \gamma)}}X_1}_2^2 \ip{u_l, v}^2 \\
\end{align*}
where $\{u_l\}_{l \in [n]}$ are the rows of $U$.

We condition on $X_1$ and evaluate $\E \sqbr*{ Y_1^q \vert X_1}$. After conditioning on $X_1$, the summands of $Y_1$ are independent, so we can use the following version of Rosenthal's inequality,

\begin{lemma}[Theorem 14.10 from \cite{boucheron2013concentration}]\label{lem:rosenthal}
Let $R=\sum_{j=1}^nZ_i$ where $R_1,...,R_n$ are independent and nonnegative random variables. Then the exists a constant $c_{\ref{lem:rosenthal}}$ such that, for all integers $q \ge 1$, we have
\begin{align*}
    (\E (R^q))^{1/q} \le 2 \E( R)+ c_{\ref{lem:rosenthal}}q  \paren{\sum_{j \in [n]} \E \sqbr{R_j^q} }^{1/q}
\end{align*}
Hence,
\begin{align*}
    (\E [R^q]) &\le 2^q \paren*{2^q (\E[R])^q+ c_{\ref{lem:rosenthal}}^qq^q  \paren{\sum_{j \in [n]} \E \sqbr{R_j^q} }}
\end{align*}
\end{lemma}

In our case,
\begin{align*}
    \E[Y_1 \vert X_1] &= \sum_{(l,\gamma) \in \Xi} \ip{u_l, v}^2 \E \sqbr*{\norm{e^T_{\mu_{2,(l, \gamma)}}X_1}_2^2}  \\
    &= \sum_{(l,\gamma) \in \Xi} \ip{u_l, v}^2 (s/m)\sum_{j \in [(m/s)(\gamma-1):(m/s)\gamma]} \norm{e^T_{j}X_1}_2^2  \\
    &= \sum_{l \in [n]} p \cdot \ip{u_l, v}^2 \norm{X_1}_F^2 \\
    &= p\norm{X_1}_F^2
\end{align*}
since $\sum_{l \in [n]} \ip{u_l, v}^2 = \norm{Uv}_2^2 = 1$.

Similarly,
\begin{align*}
    \sum_{(l,\gamma) \in \Xi} \ip{u_l, v}^{2q} \E \sqbr*{\norm{e^T_{\mu_{2,(l, \gamma)}}X_1}_2^{2q} \big\vert X_1 } &= \sum_{l \in [n], j \in [m]} p \ip{u_l, v}^{2q} \norm{e^T_{j}X_1}_2^{2q} \\
    &\le pm \norm{e^T_I X_1}^{2q}
\end{align*}
where $I$ is a integer valued random variable uniformly distributed in $[m]$ independent of all existing random variables (the reason for expressing the sum in this way is to later use an existing result to control the moments of this quantity after unconditioning on $X_1$)

By Lemma \ref{lem:rosenthal}, 
\begin{align*}
    \E \sqbr*{Y_1^q} &= \E \sqbr*{ \E \sqbr*{ Y_1^q \vert X_1 }} \\
    &\le \E \sqbr*{ 4^q (p)^q \norm{X_1}_F^{2q} + (2c_{\ref{lem:rosenthal}}q)^q pm \norm{e^T_I X_1}^{2q} } \\
    \text{and, } \E \sqbr*{Y_1^q}^{1/2q} &\le 2 \sqrt{p} \E \sqbr*{\norm{X_1}_F^{2q}}^\frac{1}{2q} + \sqrt{2c_{\ref{lem:rosenthal}}q} (pm)^{1/2q} \E \sqbr*{\norm{e_I^TX_1}^{2q}}^\frac{1}{2q} 
\end{align*}
Since the rows of $X_1$ are identically distributed, $\E \sqbr*{\norm{e_I^TX_1}^{2q}}^\frac{1}{2q} = \E \sqbr*{\norm{e_1^TX_1}^{2q}}^\frac{1}{2q}$.
By Lemma \ref{lem:xfrobnorm}, $\E \sqbr*{\norm{X_1}_F^{2q}}^\frac{1}{2q} \le c_{\ref{lem:xfrobnorm}.1}(\sqrt{pmd} + \sqrt{q})$,  and by Lemma \ref{lem:rowbounds}, $\E \sqbr*{\norm{e_1^TX_1}_2^{2q}}^\frac{1}{2q} \le c_{\ref{lem:rowbounds}.1} \paren{\sqrt{pd} + \sqrt{q}}$. Finally we get,
\begin{align*}
    \E \sqbr*{Y_1^q}^\frac{1}{2q} &\le C_1 \paren*{\sqrt{pmpd} + (pm)^{1/2q} (\sqrt{pdq} + q) }
\end{align*}
Here we have absorbed the $\sqrt{pq}$ term coming from $2 \sqrt{p} \E \sqbr*{\norm{X_1}_F^{2q}}^\frac{1}{2q}$ into the $\sqrt{pdq} + q$ term by adjusting $C_1$.

Moving on to the $Y_2$ term, 
\begin{align*}
    Y_2 &= \sum_{(l,\gamma)\in \Xi} \ip{U^T Z^T_{1,(l,\gamma)}X_2v,U^TZ^T_{1,(l,\gamma) }X_2v} \\
    &= \sum_{(l,\gamma)\in \Xi} v^T X_2^T  e_{\mu_{2,(l, \gamma)}} u_l^T u_l e_{\mu_{2,(l, \gamma)}}^T X_2 v \\
    &= \sum_{(l,\gamma) \in \Xi} \norm{u_l}^2 (X_2 v)^2_{\mu_{1, (l,\gamma)}} \\
\end{align*}

As in the previous case we condition on $X_2$ and evaluate $\E \sqbr{ Y_2^q \vert X_2 }$ using Rosenthal's inequality.
\begin{align*}
    \E[Y_2 \vert X_2] &= \sum_{(l,\gamma) \in \Xi} \norm{u_l}^2 \E \sqbr*{\paren{X_2 v}_{\mu_{1, (l,\gamma)}}^2}  \\
    &= \sum_{(l,\gamma) \in \Xi} \norm{u_l}^2 (s/m)\sum_{j \in [(m/s)(\gamma-1):(m/s)\gamma]} \paren{X_2 v}_{j}^2  \\
    &= \sum_{l \in [n]} p \cdot \norm{u_l}^2 \norm{X_2v}_2^2 \\
    &= pd \norm{X_2v}_2^2
\end{align*}
since $\sum_{l \in [n]} \norm{u_l}^2 = d$, and, 
\begin{align*}
    \sum_{(l,\gamma) \in \Xi} \norm{u_l}^{2q} \E \sqbr*{\paren{X_2 v}_{\mu_{1, (l,\gamma)}}^{2q} \vert X_2 } &= \sum_{l \in [n], j \in [m]} p \norm{u_l}^{2q} \paren{X_2 v}_{j}^{2q} \\
    &\le dpm \paren{X_2 v}_{I}^{2q}
\end{align*}
where $I$ is a integer valued random variable uniformly distributed in $[m]$ independent of all existing random variables.
By Lemma \ref{lem:rosenthal}, 
\begin{align*}
    \E \sqbr*{Y_2^q} &= \E \sqbr*{ \E \sqbr*{ Y_2^q \vert X_2 }} \\
    &\le \E \sqbr*{ 4^q (pd)^q \norm{X_2v}_2^{2q} + (2c_{\ref{lem:rosenthal}}q)^q pmd \paren{X_2v}_I^{2q} } \\
    \text{and, } \E \sqbr*{Y_2^q}^{1/2q} &\le 2 \sqrt{pd} \E \sqbr*{\norm{X_2v}_2^{2q}}^\frac{1}{2q} + \sqrt{2c_{\ref{lem:rosenthal}}q} (pmd)^{1/2q} \E \sqbr*{\paren{X_2v}_I^{2q}}^\frac{1}{2q} 
\end{align*}
Since the rows of $X_2$ are identically distributed, $\E \sqbr*{\paren{X_2v}_I^{2q}}^\frac{1}{2q} = \E \sqbr*{\paren{X_2v}_1^{2q}}^\frac{1}{2q}$. By Lemma \ref{lem:xfrobnorm}, $\E \sqbr*{\norm{X_2v}^{2q}}^\frac{1}{2q} \le c_{\ref{lem:xfrobnorm}.2}(\sqrt{pm} + \sqrt{q})$, and by Lemma \ref{lem:rowbounds}, $\E \sqbr*{\paren{X_2v}_I^{2q}}^\frac{1}{2q} \le 2 c_{\ref{lem:rowbounds}.2} \sqrt{q}$. Finally we get,
\begin{align*}
    \E \sqbr*{Y_2^q}^\frac{1}{2q} &\le C_2(\sqrt{pmpd} + \sqrt{pdq} + (pmd)^{1/2q} \cdot q )
\end{align*}

Now recall the $Y_3$ term,
\[ Y_3 = \sum_{\substack{(l_1, \gamma_1) \neq (l_2, \gamma_2) \\ (l_3, \gamma_3) \neq (l_4, \gamma_4)}} \ip{U^TZ^T_{1,(l_3,\gamma_3) }Z_{2,(l_1,\gamma_1)}Uv,U^TZ^T_{1,(l_4,\gamma_4) }Z_{2,(l_2, \gamma_2)}Uv}\]

We wish to look at $\E \sqbr*{Y_3^q}^\frac{1}{2q}$, and we evaluate it by first conditioning on $\mathcal{Z}_1 := \{ Z_{1, (l,\gamma} \}_{l \in [n], \gamma \in [pm]}$, i.e. as $\E \sqbr*{\E\sqbr*{Y_3 \vert \mathcal{Z}_1 }^q}^\frac{1}{2q}$ . 

For fixed $(l_1, \gamma_1), (l_2, \gamma_2)$, let \[ h_2(Z_{2,(l_1,\gamma_1)}, Z_{2,(l_2,\gamma_2)}) =  \sum_{ (l_3, \gamma_3) \neq (l_4, \gamma_4)} \ip{U^TZ^T_{1,(l_3,\gamma_3) }Z_{2,(l_1,\gamma_1)}Uv,U^TZ^T_{1,(l_4,\gamma_4) }Z_{2,(l_2, \gamma_2)}Uv} \] 
Clearly $h_2(\cdot, \cdot)$ is a deterministic multilinear function after conditioning on $\mathcal{Z}_1$. Then, 
\[ Y_3 = \sum_{ (l_1, \gamma_1) \neq (l_2, \gamma_2)} h_2(Z_{2,(l_1,\gamma_1)}, Z_{2,(l_2,\gamma_2)}) \]
By Lemma \ref{lem:masterdecoup}, 
\begin{align*}
    \E \sqbr{\abs{Y_3}^{q}} &\le \E \sqbr*{\E \sqbr*{\abs*{2 \sum_{ (l_1, \gamma_1), (l_2, \gamma_2) \in \Xi} h_2(Z'_{2,(l_1,\gamma_1)}, Z_{2,(l_2,\gamma_2)}) + h_2(Z_{2,(l_1,\gamma_1)}, Z'_{2,(l_2,\gamma_2)}) }^{q} \Bigg\vert \mathcal{Z}_1} } \\
    &\le \E \sqbr*{\abs*{2 \sum_{ (l_1, \gamma_1), (l_2, \gamma_2) \in \Xi} h_2(Z'_{2,(l_1,\gamma_1)}, Z_{2,(l_2,\gamma_2)}) + h_2(Z_{2,(l_1,\gamma_1)}, Z'_{2,(l_2,\gamma_2)}) }^{q} }  \\
\end{align*}

Now, 
\begin{align*}
    &\sum_{ (l_1, \gamma_1), (l_2, \gamma_2) \in \Xi} h_2(Z'_{2,(l_1,\gamma_1)}, Z_{2,(l_2,\gamma_2)})\\  =& \sum_{ (l_3, \gamma_3) \neq (l_4, \gamma_4)} \ip{U^TZ^T_{1,(l_3,\gamma_3) } \paren*{ \sum_{ (l_1, \gamma_1)\in \Xi} Z'_{2,(l_1,\gamma_1)}} Uv,U^TZ^T_{1,(l_4,\gamma_4) }\paren*{ \sum_{  (l_2, \gamma_2) \in \Xi}Z_{2,(l_2, \gamma_2)}}Uv} \\
    =& \sum_{ (l_3, \gamma_3) \neq (l_4, \gamma_4)} \ip{U^TZ^T_{1,(l_3,\gamma_3) } X_3  v,U^TZ^T_{1,(l_4,\gamma_4) } X_2 v}
\end{align*}
where $X_3$ is an independent copy of $X_2$. So we get, 
\begin{align*}
    \E \sqbr{\abs{Y_3}^{q}} &\le \E \sqbr*{ \abs*{2 \sum_{ (l_1, \gamma_1), (l_2, \gamma_2) \in \Xi} h_2(Z'_{2,(l_1,\gamma_1)}, Z_{2,(l_2,\gamma_2)}) + h_2(Z_{2,(l_1,\gamma_1)}, Z'_{2,(l_2,\gamma_2)}) }^{q}  } \\
    &= \E \sqbr*{\abs*{2 \sum_{ (l_3, \gamma_3) \neq (l_4, \gamma_4)} \ip{U^TZ^T_{1,(l_3,\gamma_3) } X_3  v,U^TZ^T_{1,(l_4,\gamma_4) } X_2 v} + \ip{U^TZ^T_{1,(l_3,\gamma_3) } X_2  v,U^TZ^T_{1,(l_4,\gamma_4) } X_3 v} }^{q}  } \\
    &= \E \sqbr*{\abs*{ \sum_{ (l_3, \gamma_3) \neq (l_4, \gamma_4)} h_1(Z_{1,(l_3,\gamma_3)}, Z_{1,(l_4,\gamma_4)}) }^{q} } \\
\end{align*}
where, 
\[ h_1(Z_{1,(l_3,\gamma_3)}, Z_{1,(l_4,\gamma_4)}) = 2\ip{U^TZ^T_{1,(l_3,\gamma_3) } X_3  v,U^TZ^T_{1,(l_4,\gamma_4) } X_2 v} + 2\ip{U^TZ^T_{1,(l_3,\gamma_3) } X_2  v,U^TZ^T_{1,(l_4,\gamma_4) } X_3 v} \] 
is a deterministic multilinear function in its inputs after conditioning on $X_2$ and $X_3$. Applying Lemma \ref{lem:masterdecoup} again and proceeding similarly, we get,
\begin{align*}
    \E \sqbr{\abs{Y_3}^{q}} &\le \E \sqbr*{\abs*{ 4 \paren*{ \ip{X_4^T X_3  v, X_1^T X_2 v} + \ip{X_1^T X_3  v, X_4^T X_2 v} + \ip{X_4^T X_2  v, X_1^T X_3 v} + \ip{X_1^T X_2  v, X_4^T X_3 v} } }^{q}} \\
\end{align*}
Thus,
\begin{align*}
    \E \sqbr{Y_3^{q}}^\frac{1}{2q} &\le 4 \E \sqbr{\ip{X_1^TX_2v,X_3^TX_4v}^{q}}^\frac{1}{2q} \\
    &=  4 \E \sqbr{ \E \sqbr{\ip{X_1^TX_2v,X_3^TX_4v}^{q} \vert X_1, X_2} }^\frac{1}{2q} \\
    &= 4 \E \sqbr{ \norm{X_1^TX_2v}^q}^\frac{1}{2q} \E \sqbr{\ip{u,X_3^TX_4v}^{q}}^\frac{1}{2q} \\
    &=  4 \E \sqbr{ \norm{X_1^TX_2v}^q}^\frac{1}{2q} \E \sqbr{\ip{X_3u,X_4v}^{q}}^\frac{1}{2q} \\
    &\le C_3 \sqrt{q} \E \sqbr{ \norm{X_1^TX_2v}^q}^\frac{1}{2q}
\end{align*}
where $u \in \R^d$ is some fixed unit vector and the last inequality follows from Lemma \ref{lem:scalarnorm}.

Putting the estimates for the $Y_1, Y_2$ and $Y_3$ terms together, we get,
\begin{align*}
    \norm{\norm{X_1^TX_2u}_2}_{L_{2q}(\Pb)} &\le C_1(\sqrt{pmpd} + (pm)^{1/2q} (\sqrt{pdq} + q) ) \\
    &\qquad +  C_2(\sqrt{pmpd} + \sqrt{pdq} + (pmd)^{1/2q} \cdot q ) \\
    &\qquad \qquad + C_3 \sqrt{q} \E \sqbr{ \norm{X_1^TX_2v}_2^q}^\frac{1}{2q} \\
    &\le C_4(\sqrt{pmpd} + (pmd)^\frac{1}{2q} (\sqrt{pdq} + q) ) + C_3 \sqrt{q} \E \sqbr{ \norm{X_1^TX_2v}_2^q}^\frac{1}{2q}
\end{align*}

Let $C_5>0$ be a constant which satisfies $C_4 + C_3\sqrt{C_5} < C_5$ and after fixing such a $C_5$ let $C_6>0$ be chosen so that it satisfies $C_4 + C_3\sqrt{C_5} + C_3\sqrt{C_6} \le C_6$. 

Assume that $\E \sqbr{ \norm{X_1^TX_2v}_2^q}^\frac{1}{q} \le C_5 \sqrt{pmpd} + C_6(pmd)^{1/q}(\sqrt{pdq} + q)$. This is true when $q=2$ for $C_5 \ge 1$ since $\E \sqbr{X_2^T X_1 X_1^T X_2} = pmpd I_d$. Then,
\begin{align*}
    \norm{\norm{X_1^TX_2u}_2}_{L_{2q}(\Pb)} &\le C_4(\sqrt{pmpd} + (pmd)^\frac{1}{2q} (\sqrt{pdq} + q) ) \\
    & \qquad +  C_3 \sqrt{q}\sqrt{C_5} \paren{pmpd}^{1/4} + C_3 \sqrt{q} \sqrt{C_6}(pmd)^{1/2q}(\sqrt{pdq} + q)^\frac{1}{2} \\
\end{align*}
Note that $q^\frac{1}{2}(\sqrt{pdq} + q)^\frac{1}{2} \le (\sqrt{pdq} + q)$. So, 
\begin{align*}
    \norm{\norm{X_1^TX_2u}_2}_{L_{2q}(\Pb)} &\le C_4\sqrt{pmpd} +  C_3 \sqrt{q}\sqrt{C_5} \paren{pmpd}^{1/4}  \\
    & \qquad  + (C_3 \sqrt{C_6} + C_4)(pmd)^{1/2q}(\sqrt{pdq} + q) \\
\end{align*}
If $q\le \sqrt{pmpd} $, then $ C_4\sqrt{pmpd} +  C_3 \sqrt{q}\sqrt{C_5} \paren{pmpd}^{1/4} \le (C_4 + C_3\sqrt{C_5})\sqrt{pmpd} \le C_5\sqrt{pmpd}$ by our choice of $C_5$ and $(C_3 \sqrt{C_6} + C_4) \le C_6$ by our choice of $C_6$, so,
\begin{align*}
    \norm{\norm{X_1^TX_2u}_2}_{L_{2q}(\Pb)} &\le C_5\sqrt{pmpd}  + C_6 (pmd)^{1/2q}(\sqrt{2pdq} + 2q) \\
\end{align*}
If $q>\sqrt{pmpd}$, then $C_3 \sqrt{q}\sqrt{C_5} \paren{pmpd}^{1/4} < C_3\sqrt{C_5}(pmd)^\frac{1}{2q}(\sqrt{pdq} + q)$. In this case, 
\begin{align*}
    \norm{\norm{X_1^TX_2u}_2}_{L_{2q}(\Pb)} &\le C_4\sqrt{pmpd} \\ 
    & \qquad + (C_3 \sqrt{C_6} + C_4 + C_3\sqrt{C_5})(pmd)^{1/2q}(\sqrt{pdq} + q) \\
    &\le C_5\sqrt{pmpd}  + C_6 (pmd)^{1/2q}(\sqrt{2pdq} + 2q) \\
\end{align*}
by our choice of $C_5$ and $C_6$. 

\end{proof}

\printbibliography
\appendix
\section*{Appendix}
\section{Decoupling.} \label{sec:decoup}

In this section, we provide formal proofs of the decoupling steps mentioned in Section \ref{sec:overview}. We start with a general decoupling lemma that will later be specialized to the specific cases required.

\subsection{General Decoupling Lemma.}

\begin{lemma}[General Decoupling]\label{lem:masterdecoup}
    For a natural number $n$, let $ \mathcal{Z} := \{ Z_i \}_{i \in [n]}$ be a collection of zero mean independent random variables taking values in some Euclidean space $\R^N$. Let $ \mathcal{Z}' := \{ Z'_i \}_{i \in [n]}$ be an independent copy of $\mathcal{Z}$. Let $h:\R^N \times \R^N \to \R^M$ be a multilinear function, and let $\norm{ \cdot}$ be any norm on $\R^M$ such that $\E \sqbr*{\norm*{h(Z_i, Z_j)}} < \infty$ for $i \neq j$. Let $\Phi: [0, \infty) \to [0, \infty)$ be a convex nondecreasing function such that $\E \sqbr*{\Phi\paren*{\norm*{ h(Z_i, Z_j)}}} < \infty$ for $i \neq j$. Then, 
    \[ \E \sqbr*{ \Phi\paren*{\norm*{ \sum_{i,j \in [n] , i \neq j} h(Z_i, Z_j)}} } \le  \E \sqbr*{\Phi\paren*{\norm*{ 2 \sum_{i,j \in [n]} h(Z_i, Z'_j) + h(Z'_i, Z_j)}}} \]
\end{lemma}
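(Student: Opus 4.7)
The plan is to prove the inequality via three successive reductions, each an application of Jensen's inequality to the convex function $x \mapsto \Phi(\|x\|)$. The overall strategy is classical for decoupling: Bernoulli selectors turn a pair-index restriction into marginal probabilities; exchangeability lets us swap $Z_j$ for its independent copy $Z'_j$ at selected indices; and multilinearity together with the zero-mean hypothesis extends a restricted decoupled sum to the full decoupled sum.

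For the first reduction, I would introduce i.i.d.\ Bernoulli selectors $\eta_1, \ldots, \eta_n \sim \mathrm{Ber}(1/2)$, independent of $(\mathcal{Z}, \mathcal{Z}')$, and set $I := \{i : \eta_i = 1\}$. Since $\Pb(\eta_i \neq \eta_j) = \tfrac{1}{2}$ for $i \neq j$, a direct computation yields the identity
\[ \sum_{i \neq j} h(Z_i, Z_j) \;=\; 2\,\E_\eta\!\left[\, \sum_{i \in I,\, j \in I^c}\! h(Z_i, Z_j) \;+\; \sum_{i \in I^c,\, j \in I}\! h(Z_i, Z_j) \right]. \]
Jensen then converts this into the inequality
\[ \E\,\Phi\!\left(\Big\|\sum_{i \neq j} h(Z_i, Z_j)\Big\|\right) \;\le\; \E\,\Phi\!\left(\Big\| 2\!\sum_{i \in I,\, j \in I^c}\! h(Z_i, Z_j) \;+\; 2\!\sum_{i \in I^c,\, j \in I}\! h(Z_i, Z_j)\Big\|\right). \]

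For the second reduction, I would define the shuffled collection $Y_i := Z_i \mathbf{1}_{i \in I} + Z'_i \mathbf{1}_{i \in I^c}$. Conditional on $\eta$, the $Y_i$'s are independent with the marginal of $Z_i$, so $(Y_i)_{i \in [n]} \stackrel{d}{=} (Z_i)_{i \in [n]}$ jointly. Substituting $Y$ for $Z$ in the previous display and observing that $h(Y_i, Y_j) = h(Z_i, Z'_j)$ on $I \times I^c$ and $h(Y_i, Y_j) = h(Z'_i, Z_j)$ on $I^c \times I$ shows that the right-hand side equals
\[ \E\,\Phi\!\left(\Big\| 2\!\sum_{i \in I,\, j \in I^c}\! h(Z_i, Z'_j) \;+\; 2\!\sum_{i \in I^c,\, j \in I}\! h(Z'_i, Z_j)\Big\|\right). \]

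For the third reduction, I would upgrade this partial decoupled sum to the full sum over $[n] \times [n]$. Let $\mathcal{F} := \sigma\!\big(\eta,\, \{Z_i\}_{i \in I},\, \{Z'_j\}_{j \in I^c}\big)$, so that the unobserved variables $\{Z_i\}_{i \in I^c}$ and $\{Z'_j\}_{j \in I}$ are independent of $\mathcal{F}$ and mean zero. Multilinearity of $h$ implies term by term that $\E[h(Z_i, Z'_j) \mid \mathcal{F}]$ equals $h(Z_i, Z'_j)$ when $(i,j) \in I \times I^c$ and vanishes otherwise, and symmetrically $\E[h(Z'_i, Z_j) \mid \mathcal{F}]$ equals $h(Z'_i, Z_j)$ when $(i,j) \in I^c \times I$ and vanishes otherwise. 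Therefore
\[ \sum_{i \in I,\, j \in I^c}\! h(Z_i, Z'_j) \;+\; \sum_{i \in I^c,\, j \in I}\! h(Z'_i, Z_j) \;=\; \E\!\left[\, \sum_{i,j \in [n]}\! \big( h(Z_i, Z'_j) + h(Z'_i, Z_j) \big) \,\Big|\, \mathcal{F}\right], \]
and a final application of Jensen to $\Phi(\|2(\cdot)\|)$ combines with the two preceding steps to yield the claimed bound.

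The only delicate point is the third reduction: standard Rademacher-style symmetrization goes in the opposite direction, enlarging rather than shrinking the range of summation. Here, by contrast, one realizes the restricted partial sum as a conditional expectation of the full symmetric sum, so that Jensen operates the right way. Both the multilinearity of $h$ and the zero-mean assumption on $\mathcal{Z}, \mathcal{Z}'$ are essential for this identification; without either, the conditional expectation of the unwanted terms would not vanish.
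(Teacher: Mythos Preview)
Your proof is correct and follows essentially the same approach as the paper's: Bernoulli selectors plus Jensen to restrict to $I \times I^c$, an exchangeability swap to insert the primed copies, and a zero-mean conditional-expectation argument plus Jensen to extend to the full index set. The only cosmetic differences are that you phrase the swap via the shuffled variables $Y_i$ and the extension via an explicit $\sigma$-algebra $\mathcal{F}$, whereas the paper does the swap by direct replacement of $\mathcal{Z}_J$ and the extension by ``adding zero-mean terms''; the logic is identical.
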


\begin{proof}
    Let $\mathbf{w}^T = (w_1, \ldots, w_n)$ be a vector of independent random variables such that $\Pb(w_i=1)=\Pb(w_i=0)=1/2$ and $\mathbf{w}$ is independent from $\{ Z_i \}_{i \in [n]}$  and $\{ Z'_i \}_{i \in [n]}$. Then, whenever $i\neq j$, $\E[\mathbf{1}_{w_i \neq w_{j}}] = 1/2$. So we have, 
    \begin{align*}
        \sum_{i,j =1, i \neq j}^n h(Z_i, Z_j) = 2\E_{\mathbf{w}} \left[  \sum_{i,j =1}^n \mathbf{1}_{w_i \neq w_{j}} h(Z_i, Z_j) \right]
    \end{align*}
    By Jensen's inequality \cite[Lemma 4.5, p.86]{kallenberg2021foundations}
    \begin{align*}
        \E \left[ \Phi\paren*{\norm*{ \sum_{i,j \in [n] , i \neq j} h(Z_i, Z_j)}} \right] &= \E_{\cZ} \left[ \Phi\paren*{ \norm*{ 2\E_{\mathbf{w}} \left[  \sum_{i,j =1}^n \mathbf{1}_{w_i \neq w_{j}} h(Z_i, Z_j) \right] }} \right] \\
        &\le \E_{\cZ} \E_{\mathbf{w}} \left[ \Phi\paren*{ \norm*{  \sum_{i,j =1}^n 2 \cdot\mathbf{1}_{w_i \neq w_{j}} h(Z_i, Z_j)  }} \right] \\
        &\le  \E_{\mathbf{w}} \E_{\cZ} \left[ \Phi\paren*{ \norm*{  \sum_{i,j =1}^n 2 \cdot\mathbf{1}_{w_i \neq w_{j}} h(Z_i, Z_j)  }} \right] \\
    \end{align*}
Fix $\mathbf{w}$, and let $J = \{ j \in [n] | w_j=1 \}$. Then,
\begin{align*}
    &\E_{\cZ} \left[ \Phi\paren*{ \norm*{  \sum_{i,j =1}^n 2 \cdot\mathbf{1}_{w_i \neq w_{j}} h(Z_i, Z_j)  }} \right]  \\ 
    = &\E_{\cZ} \left[ \Phi\paren*{ \norm*{  \sum_{i \in J, j \in [n]\backslash J } 2 h(Z_i, Z_j) +  \sum_{ i \in [n]\backslash J, j \in J } 2 h(Z_i, Z_j) }} \right] \\
    = &\E_{\cZ} \Biggl[ \Phi\paren*{ \norm*{  \sum_{i \in J, j \in J^C } 2 \paren*{h(Z_i, Z_j)+h(Z_j,Z_i)} }}\Biggr] \\
\end{align*}
The key observation for decoupling is that the collection of random variables $ \mathcal{Z}_J := \{ Z_j \}_{ j\in J}$ is independent of the collection $\mathcal{Z}_{J^C} := \{ Z_j \}_{j\in J^C}$
, so the above expectation does not change if the collection $ \mathcal{Z}_J$ is replaced by $\mathcal{Z}'_J :=\{Z' \}_{ j\in J}$ thought of as a subset of the collection $\cZ'$, i.e.,
\begin{align*}
    &\E_{\mathcal{Z}_J, \mathcal{Z}_{J^C} } \Biggl[ \Phi\paren*{ \norm*{  \sum_{(i,j) \in J \times J^C } 2 (h(Z_i, Z_j)+h(Z_j,Z_i)) }}\Biggr] \\
    =&\E_{\mathcal{Z'}_J, \mathcal{Z}_{J^C} } \Biggl[ \Phi\paren*{ \norm*{  \sum_{(i,j) \in J \times J^C } 2 (h(Z_i', Z_j)+h(Z_j,Z_i')) }}\Biggr] \\
    =&\E_{\mathcal{Z'}_J, \mathcal{Z}_{J^C} } \Biggl[ \Phi\Biggl( \Biggl\|  \sum_{(i,j) \in J \times J^C } 2 (h(Z_i', Z_j)+h(Z_j,Z_i')) \\
    &\quad+ \E_{\mathcal{Z}_J, \mathcal{Z'}_{J^C} }\biggl[\sum_{(i,j) \notin J \times J^C} 2 (h(Z_i', Z_j)+h(Z_j,Z_i'))\biggr]\Biggr\| \Biggr) \Biggr] \\
    \le & \E_{\mathcal{Z'}_J, \mathcal{Z}_{J^C} } \E_{\mathcal{Z}_J, \mathcal{Z'}_{J^C} }\Biggl[ \Phi\Biggl( \Biggl\|  \sum_{(i,j) \in J \times J^C } 2 (h(Z_i', Z_j)+h(Z_j,Z_i')) \\
    &\quad+ \sum_{(i,j) \notin J \times J^C} 2 (h(Z_i', Z_j)+h(Z_j,Z_i'))\Biggr\| \Biggr) \Biggr] \\
    =&\E_{\mathcal{Z'}, \mathcal{Z} } \Biggl[ \Phi\Biggl( \Biggl\|  \sum_{i \in [n], j \in [n] } 2 (h(Z_i', Z_j)+h(Z_j,Z_i')) \Biggr\| \Biggr) \Biggr]
\end{align*}
where in the third equality the expectations we add are all $0$ by multilinearity of $h$ and the zero mean property of $Z_i$, and the inequality step follows by taking the expectations outside using Jensen's inequality. 
\end{proof}

\subsection{Decoupling for General Random Matrices.}

Next, we use Lemma \ref{lem:masterdecoup} to bound $\norm{X^TX}_{2q}$ when $X$ is a sum of independent random matrices.

\begin{lemma}[Decoupling]\label{lem:decgen}
Let $X=\sum \limits_{l=1}^L Z_l$ where $Z_1,...,Z_L$ are independent $m \times d$ random matrices.
 Let $X_1$ and $X_2$ be independent copies of $X$. Let $q$ be a positive integer. Then we have
 \begin{align*}
     \norm{X^TX}_{2q} \le \norm{\sum \limits_{l \in [L]}(Z_l)^TZ_l}_{2q}+2\norm{X_1^TX_2+X_2^TX_1}_{2q}
 \end{align*}
\end{lemma}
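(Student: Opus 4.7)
My approach is to expand $X^T X$ using the representation $X=\sum_l Z_l$ and reduce the problem to a direct invocation of the general decoupling lemma (Lemma~\ref{lem:masterdecoup}). Expanding the product gives
\[
X^T X \;=\; \sum_{l\in[L]} Z_l^T Z_l \;+\; \sum_{l\neq k} Z_l^T Z_k,
\]
so the triangle inequality in the $L_{2q}(S_{2q}^d)$ norm yields
\[
\norm{X^T X}_{2q} \;\le\; \Big\|\sum_l Z_l^T Z_l\Big\|_{2q} \;+\; \Big\|\sum_{l\neq k} Z_l^T Z_k\Big\|_{2q}.
\]
The remaining task is to show that the off-diagonal bilinear chaos is controlled by $2\norm{X_1^T X_2 + X_2^T X_1}_{2q}$.

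For that step I apply Lemma~\ref{lem:masterdecoup} with the bilinear map $h(A,B)=A^T B$ from $\R^{m\times d}\times\R^{m\times d}$ into $\R^{d\times d}$, the normalized Schatten norm $\norm{\cdot}_{S_{2q}}$ on the target space, and the convex nondecreasing function $\Phi(t)=t^{2q}$. Since $\norm{M}_{2q}^{2q}=\E[\norm{M}_{S_{2q}}^{2q}]$ by the definition of the $L_{2q}(S_{2q}^d)$-norm, the output of the lemma reads
\[
\Big\|\sum_{l\neq k} Z_l^T Z_k\Big\|_{2q}^{2q}
\;\le\; \E\Big[\Big\|2\sum_{l,k\in[L]}\bigl(Z_l^T Z_k' + (Z_l')^T Z_k\bigr)\Big\|_{S_{2q}}^{2q}\Big],
\]
where $\{Z_l'\}_l$ is an independent copy of $\{Z_l\}_l$. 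Collapsing the unrestricted double sums back into factored products gives $\sum_{l,k} Z_l^T Z_k' = X^T X'$ and $\sum_{l,k} (Z_l')^T Z_k = (X')^T X$, where $X'=\sum_l Z_l'$ is an independent copy of $X$; identifying $(X,X')$ in distribution with $(X_1,X_2)$ turns the right-hand side into $2^{2q}\norm{X_1^T X_2+X_2^T X_1}_{2q}^{2q}$. Taking a $2q$-th root and combining with the earlier triangle inequality yields the claim.

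The only subtlety is that Lemma~\ref{lem:masterdecoup} requires the summands $Z_l$ to be centered, a hypothesis implicit throughout the paper's invocations of Lemma~\ref{lem:decgen}: the OSNAP rank-one increments $\xi_{(l,\gamma)}e_{\mu_{(l,\gamma)}}e_l^T$ are mean zero by construction, so the assumption is automatically satisfied in every use. Beyond this bookkeeping, the argument is routine and presents no genuine analytic obstacle; the whole proof is essentially a clean repackaging of Lemma~\ref{lem:masterdecoup} specialized to the bilinear form $(A,B)\mapsto A^T B$.
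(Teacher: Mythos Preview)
Your proof is correct and follows essentially the same approach as the paper: expand $X^TX$ into diagonal and off-diagonal parts, apply the triangle inequality, and invoke Lemma~\ref{lem:masterdecoup} with $h(A,B)=A^TB$, $\Phi(t)=t^{2q}$, and the normalized Schatten norm to decouple the off-diagonal chaos into $2(X_1^TX_2+X_2^TX_1)$. Your remark that the zero-mean hypothesis on the $Z_l$ is implicitly assumed (and satisfied in all applications) is apt, since the paper's statement of Lemma~\ref{lem:decgen} omits it even though Lemma~\ref{lem:masterdecoup} requires it.
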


\begin{proof}
By definition, for any symmetric matrix $M$, we have
\begin{align*}
     \norm{M}_{2q}=&\big(\E(\tr |M|^{2q})\big)^{\frac{1}{2q}}
     \\=&\big(\E(\tr (\sqrt{(M)^2})^{2q})\big)^{\frac{1}{2q}}
     \\=&\big(\E(\tr ((M)^2)^{q})\big)^{\frac{1}{2q}}
     \\=&\big(\E(\tr ((M)^{2q})\big)^{\frac{1}{2q}}
 \end{align*}

Therefore, since $X^TX$, $\sum \limits_{l \in [L]}(Z_l)^TZ_l$, and $X_1^TX_2+X_2^TX_1$ are symmetric, we have $\norm{X^TX}_{2q}=\big(\E(\tr ((X^TX)^{2q})\big)^{\frac{1}{2q}}$, $\norm{\sum \limits_{l \in [L]}(Z_l)^TZ_l}_{2q}=\big(\E(\tr ((\sum \limits_{l \in [L]}(Z_l)^TZ_l)^{2q})\big)^{\frac{1}{2q}}$, and $\norm{X_1^TX_2+X_2^TX_1}_{2q}=\big(\E(\tr ((X_1^TX_2+X_2^TX_1)^{2q})\big)^{\frac{1}{2q}}$.

Since $X=\sum \limits_{l \in [L]}Z_l$. Then we have
\begin{align*}
    X^TX=&(\sum \limits_{l \in [L]}(Z_l)^T)(\sum \limits_{l \in [L]}Z_l)
    \\=&\sum \limits_{l \in [L]}(Z_l)^TZ_l + \sum \limits_{l_1 \ne l_2}(Z_{l_1})^TZ_{l_2}
\end{align*}

By triangle inequality for norms, we have
\begin{align*}
    \norm{X^TX}_{2q} \le \norm{\sum \limits_{l \in [L]}(Z_l)^TZ_l}_{2q}+ \norm{\sum \limits_{l_1 \ne l_2}(Z_{l_1})^TZ_{l_2}}_{2q}
\end{align*}

Now, $\norm*{\sum \limits_{l_1 \ne l_2}(Z_{l_1})^TZ_{l_2}}_{2q} = \E \sqbr*{ \norm*{\sum \limits_{l_1 \ne l_2}(Z_{l_1})^TZ_{l_2}}_{S_{2q}}^{2q} }^\frac{1}{2q}$, where $\norm{\cdot}_{S_{2q}}$ denotes the normalised Schatten norm (see Section \ref{subsec:notation}).

By Lemma \ref{lem:masterdecoup}, 
\begin{align*}
    \E \sqbr*{ \norm*{\sum \limits_{l_1 \ne l_2}(Z_{l_1})^TZ_{l_2}}_{S_{2q}}^{2q} } &\le \E \sqbr*{ \norm*{\sum \limits_{l_1, l_2 \in [L]} 2(Z_{l_1})^TZ'_{l_2} + 2(Z'_{l_2})^TZ_{l_1} }_{S_{2q} }^{2q} } \\
    &\le  \E \sqbr*{ \norm*{ 2\paren*{\sum \limits_{l \in [L]}Z_l}^T \paren*{\sum \limits_{l \in [L]}Z'_l} + 2\paren*{\sum \limits_{l \in [L]}Z'_l}^T \paren*{\sum \limits_{l \in [L]}Z_l} }_{S_{2q}}^{2q} } \\
    &\le  \E \sqbr*{ \norm*{ 2X_1^TX_2+2X_2^TX_1}^{2q}_{S_{2q}} } \\
\end{align*}
Taking the $2q$-th root gives, $\norm{\sum \limits_{l_1 \ne l_2}(Z_{l_1})^TZ_{l_2}}_{2q} \le 2\norm{X_1^TX_2+X_2^TX_1}_{2q}$.

\end{proof}

\begin{lemma}\label{decgeneral}
    Let $X=\sum \limits_{l=1}^L Z_l$ where $Z_1,...,Z_L$ are independent $m \times d$ random matrices.
 Let $X_1$ and $X_2$ be independent copies of $X$. Let $V_1,V_2$ be a $d \times m$ (deterministic) matrices, and $V_3$ be a $d \times d$ (deterministic) matrix. Let $q \ge 1$ be an integer. Then we have
    \begin{align*}
        \norm{V_1^TX^TV_3XV_2}_{2q} \le \norm{\sum \limits_{l \in [L]}V_1^TZ_{l}^TV_3Z_{l}V_2}_{2q}+4\norm{V_1^TX_1^TV_3X_2V_2}_{2q}
    \end{align*}
    where $X_1$ and $X_2$ are independent copies of $X$.
\end{lemma}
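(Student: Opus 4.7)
The plan is to parallel the proof of Lemma \ref{lem:decgen}, now carrying an extra sandwich of deterministic matrices $V_1, V_2, V_3$. First I would expand the product using $X = \sum_{l=1}^L Z_l$:
\[
V_1^T X^T V_3 X V_2 = \sum_{l=1}^L V_1^T Z_l^T V_3 Z_l V_2 \;+\; \sum_{l_1 \ne l_2} V_1^T Z_{l_1}^T V_3 Z_{l_2} V_2,
\]
and invoke the triangle inequality in $L_{2q}(S_{2q}^d)$ to peel off the diagonal sum, which is exactly the first term on the right-hand side of the claim. It then remains to bound the off-diagonal piece $T_{\mathrm{off}} := \sum_{l_1 \ne l_2} V_1^T Z_{l_1}^T V_3 Z_{l_2} V_2$.

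To bound $\|T_{\mathrm{off}}\|_{2q}$, I would apply the general decoupling estimate, Lemma \ref{lem:masterdecoup}, to the bilinear map
\[
h(Z_{l_1}, Z_{l_2}) := V_1^T Z_{l_1}^T V_3 Z_{l_2} V_2,
\]
using the normalized Schatten norm $\|\cdot\|_{S_{2q}}$ on $d \times d$ matrices together with the convex nondecreasing function $\Phi(x) = x^{2q}$. Provided the summands $Z_l$ are zero-mean (which is always the situation when Lemma \ref{decgeneral} is invoked in this paper, since the OSNAP building blocks $Z_{(l,\gamma)}$ are symmetric), Lemma \ref{lem:masterdecoup} yields
\[
\E \bigl[\|T_{\mathrm{off}}\|_{S_{2q}}^{2q}\bigr] \;\le\; \E \Bigl[\bigl\|\,2\!\sum_{l_1, l_2 \in [L]}\! h(Z_{l_1}, Z'_{l_2}) + h(Z'_{l_1}, Z_{l_2})\bigr\|_{S_{2q}}^{2q}\Bigr],
\]
where $\{Z'_l\}$ is an independent copy of $\{Z_l\}$.

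By bilinearity of $h$ together with $X_1 := \sum_l Z_l$ and $X_2 := \sum_l Z'_l$, the decoupled sums collapse to $h(X_1, X_2) = V_1^T X_1^T V_3 X_2 V_2$ and $h(X_2, X_1) = V_1^T X_2^T V_3 X_1 V_2$. Taking $2q$-th roots, applying the triangle inequality, and observing that $X_1$ and $X_2$ are exchangeable (so swapping them does not change the $L_{2q}(S_{2q}^d)$-norm), I obtain $\|T_{\mathrm{off}}\|_{2q} \le 4\,\|V_1^T X_1^T V_3 X_2 V_2\|_{2q}$. Combining with the diagonal bound produces the claim.

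The main technical point, and the place where care is needed, is verifying the hypotheses of Lemma \ref{lem:masterdecoup} for this specific $h$ and $\Phi$: one must confirm that $h$ is bilinear in its two matrix arguments (straightforward, since $V_1, V_2, V_3$ are deterministic) and that the $Z_l$ are zero-mean. Once these are in place, the argument reduces to the expansion above, one application of Lemma \ref{lem:masterdecoup}, and two invocations of the triangle inequality — no new probabilistic ingredient beyond what is already used in Lemma \ref{lem:decgen}.
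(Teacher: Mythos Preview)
Your proposal is correct and follows essentially the same route as the paper: expand into diagonal plus off-diagonal, apply Lemma \ref{lem:masterdecoup} with $h(Z_{l_1},Z_{l_2})=V_1^T Z_{l_1}^T V_3 Z_{l_2} V_2$ and $\Phi(x)=x^{2q}$, collapse the decoupled sums to $V_1^T X_1^T V_3 X_2 V_2$ and $V_1^T X_2^T V_3 X_1 V_2$, and then use the triangle inequality together with exchangeability of $X_1,X_2$ to get the factor $4$. Your remark that the zero-mean hypothesis on the $Z_l$ is needed (and is satisfied in every invocation in the paper) is a valid point that the paper's statement leaves implicit.
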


\begin{proof}
Since $X=\sum \limits_{l \in [L]}Z_l$. Then we have
\begin{align*}
    V_1^TX^TV_3XV_2=&V_1^T(\sum \limits_{l \in [L]}(Z_l)^T)V_3(\sum \limits_{l \in [L]}Z_l)V_2
    \\=&(\sum \limits_{l \in [L]}V_1^T(Z_l)^TV_3)(\sum \limits_{l \in [L]}Z_lV_2)
    \\=&\sum \limits_{l \in [L]}V_1^T(Z_l)^TV_3Z_lV_2 + \sum \limits_{l_1 \ne l_2}V_1^T(Z_{l_1})^TV_3Z_{l_2}V_2
\end{align*}

By triangle inequality for norms, we have
\begin{align*}
    \norm{V_1^TX^TV_3XV_2}_{2q} \le \norm{\sum \limits_{l \in [L]}V_1^T(Z_l)^TV_3Z_lV_2}_{2q}+ \norm{\sum \limits_{l_1 \ne l_2}V_1^T(Z_{l_1})^TV_3Z_{l_2}V_2}_{2q}
\end{align*}

Applying Lemma \ref{lem:masterdecoup} as in the proof of Lemma \ref{lem:decgen}, we see that 
\begin{align*}
\norm{\sum \limits_{l_1 \ne l_2}V_1^T(Z_{l_1})^TV_3Z_{l_2}V_2}_{2q} &\le 2\norm{V_1^TX_1^TV_3X_2V_2 + V_1^TX_2^TV_3X_1V_2}_{2q} \\
&\le 2\norm{V_1^TX_1^TV_3X_2V_2}_{2q} + 2\norm{V_1^TX_2^TV_3X_1V_2}_{2q} \\
&\le 4\norm{V_1^TX_1^TV_3X_2V_2}_{2q}
\end{align*}

\[ \]

\end{proof}

\subsection{Decoupling for OSNAP.}

In the case where $S$ has the fully independent unscaled OSNAP distribution as described in Definition \ref{def:osnap}, we have, 
\begin{align*}
    S &= \sum_{l=1}^n \sum_{\gamma=1}^{pm} \xi_{l,\gamma} e_{\mu_{(l, \gamma)}} e_l ^T \\
    &=: \sum_{l=1}^n \sum_{\gamma=1}^{pm} Z_{l,\gamma} 
\end{align*}
where $\{ \xi_{l,\gamma} \}_{l \in [n], \gamma \in [s]}$ is a collection of  independent random variables with $\Pb(\xi_{l,\gamma}=1)=\Pb(\xi_{l,\gamma}=-1)=1/2$, $\{ \mu_{l,\gamma} \}_{l \in [n], \gamma \in [s]}$ is a collection of  independent random variables such that each $\mu_{l,\gamma}$ is uniformly distributed in $[(m/s)(\gamma-1)+1:(m/s)\gamma]$ and $e_{\mu_{(l, \gamma)}}$ and $e_l$ represent basis vectors in $\R^m$ and $\R^n$ respectively. 

Recalling that our goal is to look at the moments of $(SU)^T(SU) - pm I_d$, we observe that,
\begin{align*}
    U^TS^TSU - pm\cdot I_d &= U^T \paren*{ \sum_{l=1}^n \sum_{\gamma=1}^{pm} Z_{l,\gamma} }^T \paren*{ \sum_{l=1}^n \sum_{\gamma=1}^{pm} Z_{l,\gamma} } U - pm\cdot I_d
\end{align*}
Observe that for $\gamma, \gamma' \in [pm]$ with $\gamma \neq \gamma'$, $e_{\mu_{(l, \gamma)}}$ and $e_{\mu_{(l, \gamma')}}$ have disjoint supports, which means $e_{\mu_{(l, \gamma)}}^T e_{\mu_{(l, \gamma')}} = 0$. Thus, we also have $Z_{l,\gamma}^T Z_{l',\gamma'} = 0$ for any $l, l' \in [n]$. So we get,
\begin{align*}
    U^TS^TSU - pm\cdot I_d &= \sum_{l, l'=1}^n \sum_{\gamma=1}^{pm} U^T Z_{l',\gamma} ^T Z_{l,\gamma}  U - pm\cdot I_d
\end{align*}
Separating the cases where $l=l'$ and $l \neq l'$,
\begin{equation}\label{eq:diagoffdiag}
    \begin{aligned}
    U^TS^TSU - pm\cdot I_d &= \sum_{l=1}^n \sum_{\gamma=1}^{pm} U^T Z_{l,\gamma} ^T Z_{l,\gamma}  U - pm\cdot I_d + \sum_{\substack{l, l'=1 \\ l\neq l'}}^n \sum_{\gamma=1}^{pm} U^T Z_{l',\gamma} ^T Z_{l,\gamma}  U 
\end{aligned}
\end{equation}

Notice that $Z_{l,\gamma} ^T Z_{l,\gamma} = e_l e_l^T$, so,
\begin{align*}
    \sum_{l=1}^n \sum_{\gamma=1}^{pm} U^T Z_{l,\gamma} ^T Z_{l,\gamma}  U - pm\cdot I_d &= \sum_{l=1}^n \sum_{\gamma=1}^{pm} U^T e_l e_l^T  U - pm\cdot I_d \\
    &= \sum_{\gamma=1}^{pm} U^T \paren*{ \sum_{l=1}^n e_l e_l^T}  U - pm\cdot I_d \\
    &= \sum_{\gamma=1}^{pm} I_d - pm\cdot I_d = 0\\
\end{align*}
where we used the fact that $\sum_{l=1}^n e_l e_l^T = I_n$ and $U^T I_n U = U^TU = I_d$.

To analyze the off-diagonal term, we use the decoupling results developed previously, 

\begin{lemma}[Decoupling] \label{lem:decoup}
When $S$ has the fully independent unscaled OSNAP distribution, we have
\begin{align*}
    \E [ \tr (U^TS^TSU - pm\cdot I_d)^{2q} ] &= \E \left[ \tr \left( \sum_{\substack{l, l'=1 \\ l\neq l'}}^n \sum_{\gamma=1}^{pm} U^T Z_{l',\gamma} ^T Z_{l,\gamma}  U  \right)^{2q} \right] \\
\end{align*}
Consequently, we have
\begin{align*}
    \E [ \tr (U^TS^TSU - pm\cdot I_d)^{2q} ] &\le \E_{S,S'} \left[ \tr \left(  2\paren*{(S'U)^TSU + (SU)^TS'U} \right)^{2q} \right]
\end{align*}
where $S'$ is an independent copy of $S$.
\end{lemma}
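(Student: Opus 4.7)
The lemma has two parts: an equality that strips off the $pm\cdot I_d$ subtraction, and an inequality obtained by a decoupling step. For the equality, my plan is to invoke the identity \eqref{eq:diagoffdiag} derived in the paragraph immediately preceding the statement, which splits $U^T S^T S U - pm \cdot I_d$ into a ``diagonal'' contribution ($l = l'$, all $\gamma$) minus $pm\cdot I_d$, plus the off-diagonal contribution ($l \neq l'$, all $\gamma$). Since $Z_{l,\gamma}^T Z_{l,\gamma} = e_l e_l^T$, the diagonal contribution equals $\sum_{\gamma=1}^{pm} U^T \paren*{\sum_{l=1}^n e_l e_l^T} U = pm\cdot I_d$, which exactly cancels the subtracted $pm\cdot I_d$. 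This reduces $U^T S^T S U - pm\cdot I_d$ to the off-diagonal sum pathwise, and the first identity then follows upon taking traces and $2q$-th powers.

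For the inequality, my plan is to apply the general decoupling Lemma \ref{lem:masterdecoup}. Reindex the summands of $S$ by $i=(l,\gamma)\in [n]\times [pm]$, so that $\cZ := \{Z_i\}$ is an independent, zero-mean collection (immediate from the OSNAP definition, since the Rademacher signs $\xi_{l,\gamma}$ are independent of the placement indices $\mu_{(l,\gamma)}$). Define the bilinear map $h(A,B) = U^T B^T A U$ on pairs of $m\times n$ matrices, taking values in the $d\times d$ matrices equipped with the normalized Schatten $2q$-norm $\norm{\cdot}_{S_{2q}}$. A key preliminary step is to rewrite the off-diagonal sum $\sum_{l\neq l'}\sum_\gamma U^T Z_{l',\gamma}^T Z_{l,\gamma}U$ as $\sum_{i\neq j} h(Z_i, Z_j)$ over all pairs of distinct $i,j \in [n]\times[pm]$; the extra terms introduced by this extension (those with $l=l'$ but $\gamma\neq\gamma'$, as well as $l\neq l'$ with $\gamma\neq\gamma'$) all vanish because $e_{\mu_{(l,\gamma)}}$ and $e_{\mu_{(l',\gamma')}}$ lie in the disjoint row-blocks $[(m/s)(\gamma-1)+1:(m/s)\gamma]$ and $[(m/s)(\gamma'-1)+1:(m/s)\gamma']$, forcing $Z_{l',\gamma'}^T Z_{l,\gamma}=0$ whenever $\gamma\neq\gamma'$.

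With this reformulation in place, Lemma \ref{lem:masterdecoup} applied with $\Phi(x)=x^{2q}$ yields a bound on $\E[\norm{\sum_{i\neq j} h(Z_i,Z_j)}_{S_{2q}}^{2q}]$ in terms of $\E[\norm{2\sum_{i,j}(h(Z_i,Z_j')+h(Z_i',Z_j))}_{S_{2q}}^{2q}]$. The double sums collapse by multilinearity: $\sum_{i,j} h(Z_i,Z_j') = (S'U)^T(SU)$ and $\sum_{i,j} h(Z_i',Z_j) = (SU)^T(S'U)$, so the argument on the right becomes $2((S'U)^T SU + (SU)^T S'U)$. To translate the Schatten-norm inequality back to traces, note that both the off-diagonal sum (symmetric under $l\leftrightarrow l'$) and the decoupled matrix (symmetric because its two summands are mutual transposes) are self-adjoint, so for any symmetric $d\times d$ matrix $N$ one has $\tr(N^{2q}) = \tr(|N|^{2q}) = \norm{N}_{S_{2q}}^{2q}$. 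This converts the Schatten bound furnished by Lemma \ref{lem:masterdecoup} into the trace inequality claimed. The argument is essentially bookkeeping---the only point requiring a touch of care is the index-extension step, which relies on the OSNAP-specific disjoint-support observation noted above.
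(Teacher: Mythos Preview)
Your proof is correct and follows essentially the same strategy as the paper---use the preceding identity \eqref{eq:diagoffdiag} for the equality, then apply Lemma \ref{lem:masterdecoup} for the inequality---but the indexing differs in a way worth noting. The paper groups the summands into tuples $\widetilde{Z}_l = (Z_{l,1},\ldots,Z_{l,pm})$ indexed by $l\in[n]$, so that the off-diagonal condition $l\neq l'$ in the target sum matches the decoupling condition $i\neq j$ in Lemma \ref{lem:masterdecoup} directly; the disjoint-support fact $Z_{l',\gamma'}^T Z_{l,\gamma}=0$ for $\gamma\neq\gamma'$ is invoked only at the end, to rewrite $\sum_{l,l'} h(\widetilde{Z}_{l'},\widetilde{Z}_l)$ as $(S'U)^T(SU)$. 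You instead index by individual pairs $(l,\gamma)\in[n]\times[pm]$ and use the disjoint-support fact up front to extend the target sum from $\{l\neq l'\}$ to $\{(l,\gamma)\neq(l',\gamma')\}$. Both routes land on the same decoupled expression; yours avoids working with tuple-valued random variables at the cost of the extra index-extension step, while the paper's grouping makes the match with Lemma \ref{lem:masterdecoup} more immediate but defers the disjoint-support argument.
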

\begin{proof}

Consider the independent tuples 
\begin{align*}
   \{ (Z_{1,1}, Z_{1,2} \etc Z_{1, pm} ) \etc (Z_{n,1}, Z_{n,2} \etc Z_{n, pm} )\} = \{ \widetilde{Z}_1 \etc \widetilde{Z}_n \}
\end{align*}
Then, 
\[ \sum_{\substack{l, l'=1 \\ l\neq l'}}^n \sum_{\gamma=1}^{pm} U^T Z_{l',\gamma} ^T Z_{l,\gamma}  U = \sum_{\substack{l, l'=1 \\ l\neq l'}}^n h(\widetilde{Z}_{l'}, \widetilde{Z}_{l}) \]
for a multilinear function $h$. Using Lemma \ref{lem:masterdecoup},
\[ \E \left[ \tr \left( \sum_{\substack{l, l'=1 \\ l\neq l'}}^n \sum_{\gamma=1}^{pm} U^T Z_{l',\gamma} ^T Z_{l,\gamma}  U  \right)^{2q} \right] \le \E \left[ \tr \left( \sum_{l, l' = 1}^n 2h(\widetilde{Z}_{l'}, \widetilde{Z}_{l}) + 2h(\widetilde{Z}_{l}, \widetilde{Z}_{l'})  \right)^{2q} \right] \]

Observing that, 
\begin{align*}
    \sum_{l, l' = 1}^n h(\widetilde{Z}_{l'}, \widetilde{Z}_{l}) &=  \sum_{l, l'=1 }^n \sum_{\gamma=1}^{pm} U^T Z_{l',\gamma} ^T Z_{l,\gamma}  U \\
    &= \paren*{ \sum_{l'=1 }^n \sum_{\gamma'=1}^{pm} U^T Z_{l',\gamma'} ^T } \paren*{ \sum_{l=1}^n \sum_{\gamma=1}^{pm} Z_{l,\gamma} U } \\
    &= (S'U)^T(SU)
\end{align*}
since $Z_{l,\gamma}^T Z_{l',\gamma'} = 0$ whenever $\gamma \neq \gamma'$ concludes the proof.
\end{proof}

\end{document}